\documentclass[12pt,a4paper,reqno]{amsart}
\usepackage{amsmath}
\usepackage{amsthm}
\usepackage{amsfonts}
\usepackage{amssymb}
\usepackage{color}
\usepackage{mathrsfs}
\usepackage{graphicx}

\setlength{\textwidth}{15truecm}
\setlength{\hoffset}{-1.2truecm}
\setlength{\textheight}{25truecm}
\setlength{\voffset}{-1.9cm}

\setcounter{section}{6}

\numberwithin{equation}{subsection}

\theoremstyle{plain}
\newtheorem{thm}{Theorem}[subsection]

\theoremstyle{plain}
\newtheorem{lem}{Lemma}[subsection]

\theoremstyle{plain}
\newtheorem{cor}{Corollary}[subsection]

\theoremstyle{plain}
\newtheorem*{propa}{Property A}

\theoremstyle{plain}
\newtheorem*{propb}{Property B}

\theoremstyle{plain}
\newtheorem*{theorema}{Theorem A}

\theoremstyle{plain}
\newtheorem*{theoremb}{Theorem B}

\theoremstyle{plain}
\newtheorem*{theoremc}{Theorem C}

\theoremstyle{plain}
\newtheorem*{theoremd}{Theorem D}

\theoremstyle{plain}
\newtheorem*{theoreme}{Theorem E}

\theoremstyle{plain}
\newtheorem*{theorem811c}{Theorem \ref{thm8.1.1}}

\theoremstyle{remark}
\newtheorem*{remark}{Remark}

\theoremstyle{remark}
\newtheorem*{iremark}{Important remark}

\theoremstyle{remark}

\theoremstyle{remark}
\newtheorem{example}{Example}
\numberwithin{example}{subsection}

\theoremstyle{remark}
\newtheorem*{example721c}{Continuation of Example \emph{\ref{ex7.2.1}}}

\theoremstyle{definition}
\newtheorem*{part0}{The shortcut-ancestor process}

\theoremstyle{definition}
\newtheorem*{part1}{Finding a relevant $\bfitM$-invariant subspace and the edge cutting lemma}

\theoremstyle{definition}
\newtheorem*{part2}{An important subspace}

\theoremstyle{definition}
\newtheorem*{part3}{Algorithm to find the street-spreading matrix}

\theoremstyle{definition}
\newtheorem*{2maze}{Definition of the 2-power square-maze translation surface}

%
%

\def\bfu{\mathbf{u}}
\def\bfv{\mathbf{v}}
\def\bfw{\mathbf{w}}

\def\bfz{\mathbf{z}}

\def\bfA{\mathbf{A}}

\def\bfD{\mathbf{D}}

\def\bfS{\mathbf{S}}

\def\bfU{\mathbf{U}}
\def\bfV{\mathbf{V}}

\def\bfitM{\boldsymbol{M}}

\def\ii{\mathrm{i}}

\def\eps{\varepsilon}

\def\Cc{\mathbb{C}}

\def\Rr{\mathbb{R}}

\def\Zz{\mathbb{Z}}

\def\PPP{\mathcal{P}}

\def\VVV{\mathcal{V}}

\def\SSSS{\mathscr{S}}

\def\WWWW{\mathscr{W}}

\DeclareMathOperator{\length}{length}

\DeclareMathOperator{\Bil}{Bil}
\DeclareMathOperator{\LCM}{LCM}
\DeclareMathOperator{\ANC}{ANC}
\DeclareMathOperator{\image}{Im}
\DeclareMathOperator{\cp}{CP}
\DeclareMathOperator{\HS}{HS}
\DeclareMathOperator{\VS}{VS}

\renewcommand{\le}{\leqslant}
\renewcommand{\ge}{\geqslant}

\def\nrightarrow{{+\hspace{-11pt}\rightarrow}}
\def\nuparrow{{\hspace{-0.5pt}{\scriptstyle-}\hspace{-8.5pt}\uparrow}}

%
%

\title[Non-integrable systems (IV)]
{Quantitative behavior\\
of non-integrable systems (IV)}


\author[Beck]{J. Beck}

\address{Department of Mathematics, Rutgers University, Hill Center for the Mathematical Sciences, Piscataway NJ 08854, USA}

\email{jbeck@math.rutgers.edu}

\author[Chen]{W.W.L. Chen}

\address{Department of Mathematics and Statistics, Macquarie University, Sydney NSW 2109, Australia}

\email{william.chen@mq.edu.au}

\author[Yang]{Y. Yang}

\address{Department of Mathematics, Rutgers University, Hill Center for the Mathematical Sciences, Piscataway NJ 08854, USA}

\email{yy458@math.rutgers.edu}

\keywords{geodesics, billiards, time-quantitative equidistribution, superdensity}

\subjclass[2010]{11K38, 37E35}

\begin{document}

\begin{abstract}
In this paper, there are two sections.
In Section~7, we simplify the eigenvalue-based surplus shortline method for arbitrary finite polysquare translation surfaces.
This makes it substantially simpler to determine the irregularity exponents of some infinite orbits, and quicker to find the escape rate to infinity of some orbits in some infinite models.
In Section~8, our primary goal is to extend the surplus shortline method, both this eigenvalue-based version as well as the eigenvalue-free version, for application to a large class of $2$-dimensional flat dynamical systems beyond polysquares, including all Veech surfaces, and establish time-quantitative equidistribution and time-quantitative superdensity of some infinite orbits in these new systems.
\end{abstract}

\maketitle

\thispagestyle{empty}

%
%

\section{More on the eigenvalue-based shortline method}\label{sec7}

%
%

\subsection{The shortline method and the edge cutting lemma}\label{sec7.1}

Here in part (IV) we assume that the reader is more or less familiar with the earlier parts \cite{BDY1,BDY2,BCY}.
The eigenvalue-based surplus shortline method has been developed in the special case of the L-surface in \cite[Section~3]{BDY1} and \cite[Section~4]{BDY2}.

For a $4$-direction billiard flow in a general finite polysquare region, we can apply \textit{unfolding} introduced in \cite{BDY1} to convert the problem to one concerning a $1$-direction geodesic flow on some related finite polysquare translation surface.
We concentrate therefore on $1$-direction geodesic flow on a finite polysquare translation surface.

We show that the surplus shortcut-ancestor process, introduced earlier in \cite{BDY1,BDY2} in connection with the L-surface, can be adapted to \textit{every} finite polysquare translation surface with $1$-direction geodesic flow. 
For a quick introduction, we first illustrate the method by applying it to a $1$-direction geodesic flow with a particular quadratic irrational slope on the surface $S_2$ given in the picture on the left in Figure~7.1.1.

\begin{displaymath}
\begin{array}{c}
\includegraphics[scale=0.8]{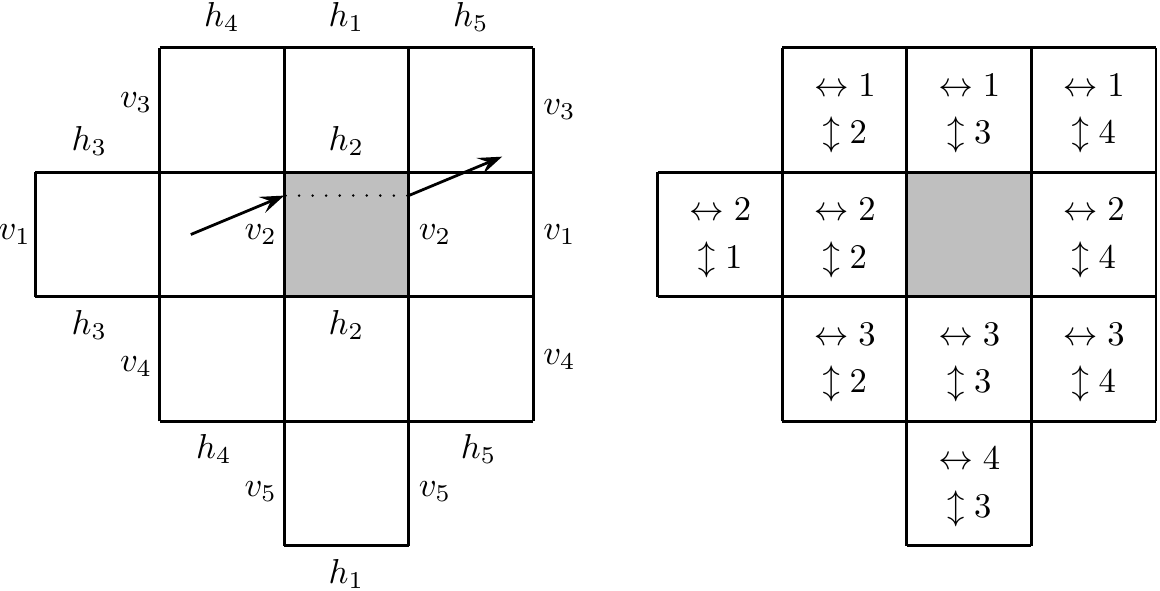}
\vspace{3pt}\\
\mbox{Figure 7.1.1: the surface $S_2$ with a gap}
\end{array}
\end{displaymath}

We compute the $2$-step transition matrix, and also determine the irregularity exponent for this particular quadratic irrational slope.
We illustrate the fact that this $2$-step transition matrix is usually \textit{very redundant}.
This in turn motivates the so-called edge cutting lemma which we formulate later.

We consider $1$-direction geodesic flow on~$S_2$.
Here the boundary pairings come from simple perpendicular translations.
However, the pair of vertical edges $v_2$ and the pair of horizontal edges $h_2$ around the missing square represent a slightly less straightforward form of perpendicular translation.

The surface $S_2$ has $1$ vertical street of length~$1$, $3$ vertical streets of length~$3$, $1$ horizontal street of length~$1$, and $3$ horizontal streets of length~$3$, so the street-LCM is equal to~$3$.
The picture on the right in Figure~7.1.1 shows the streets where, for instance, the entries $\leftrightarrow2$ and $\updownarrow4$ in a square indicates that the square face is on the $2$-nd horizontal and $4$-th vertical street.

To apply the surplus shortline method, we have to restrict our study to slopes $\alpha$ or $\alpha^{-1}$ for which the continued fraction of $\alpha$ has the special form
\begin{equation}\label{eq7.1.1}
\alpha=[3c_0;3c_1,3c_2,3c_3,\ldots]
=3c_0+\frac{1}{3c_1+\frac{1}{3c_2+\frac{1}{3c_3+\cdots}}},
\end{equation}
where the digits $c_0,c_1,c_2,c_3,\ldots$ are positive integers.

For illustration we consider here the simplest slope satisfying \eqref{eq7.1.1}, namely
\begin{equation}\label{eq7.1.2}
\alpha=[3;3,3,3,\ldots]=3+\frac{1}{3+\frac{1}{3+\frac{1}{3+\cdots}}}=\frac{3+\sqrt{13}}{2},
\end{equation}
and its reciprocal~$\alpha^{-1}$.
We study the long-term behavior of two particular geodesics $V(t)$ and $H(t)$ on $S_2$ that start from the origin, which is some chosen vertex of one of the square faces of~$S_2$.
Crucially, this guarantees that $V(t)$ and $H(t)$ are \textit{surplus shortlines} of each other.
The almost vertical geodesic $V(t)$ has slope~$\alpha$, while the almost horizontal geodesic $H(t)$ has slope $\alpha^{-1}$.
Following \cite[Section~3]{BDY1}, we briefly elaborate on the details of the surplus shortline method in this particular case.

We distinguish the $20$ types of almost vertical units $a_i$, $1\le i\le20$, in the picture on the left in Figure~7.1.2.

\begin{displaymath}
\begin{array}{c}
\includegraphics[scale=0.8]{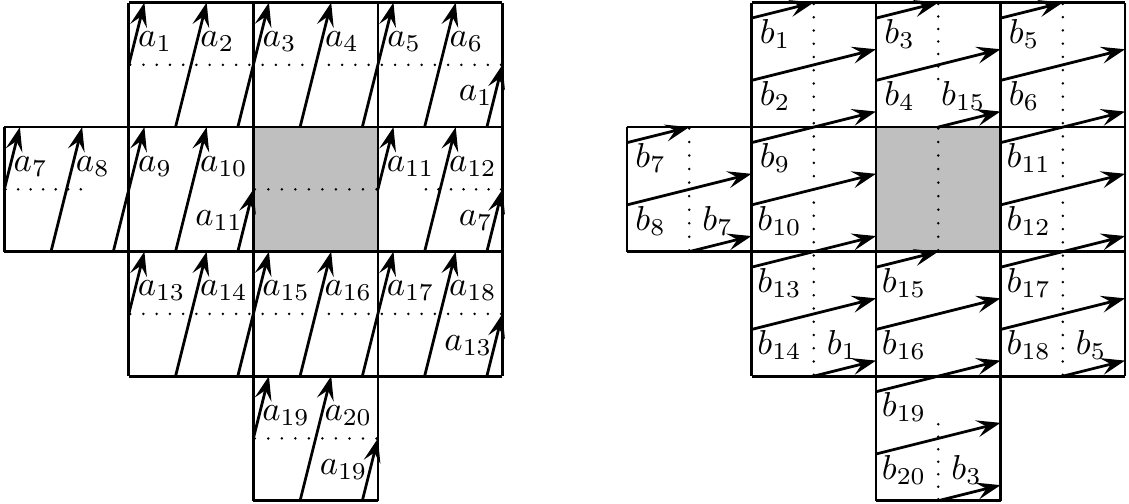}
\vspace{3pt}\\
\mbox{Figure 7.1.2: the $20$ types of almost vertical units and the $20$ types}
\\
\mbox{of almost horizontal units in the surface $S_2$}
\end{array}
\end{displaymath}

We say that each of the almost vertical units
\begin{displaymath}
a_2,a_4,a_6,a_8,a_{10},a_{12},a_{14},a_{16},a_{18},a_{20}
\end{displaymath}
is of type~$\uparrow$, and starts at the bottom edge of a square face and ends at the top edge of the same square face, while each of the almost vertical units
\begin{displaymath}
a_1,a_3,a_5,a_7,a_9,a_{11},a_{13},a_{15},a_{17},a_{19}
\end{displaymath}
is of type $\nuparrow$, and starts at the bottom edge of a square face and ends at the top edge of an adjoining square face.
Of particular interest is the unit $a_{11}$ which hits the left edge of the gap at some point and continues from the corresponding point on the identified right edge of the gap.

Similarly, we distinguish the $20$ types of almost horizontal units $b_j$, $1\le j\le20$, in the picture on the right in Figure~7.1.2.

Note that the surface $S_2$ exhibits a $45$-degree reflection symmetry. 
But symmetry is not necessary, and indeed totally irrelevant, for the success of the shortline method.
To emphasize this point, we have deliberately used labelling in the picture on the right in Figure~7.1.2 which is not a 45-degree reflection of the labelling in the picture on the left in Figure~7.1.2.

Applying a straightforward adaptation of the surplus shortline method discussed in \cite[Section~3]{BDY1}, we can determine the surplus ancestor units of~$a_i$,
$1\le i\le20$.
Similarly, we can determine the surplus ancestor units of $b_j$, $1\le j\le20$.

Consider first the almost vertical unit~$a_1$.
It is not difficult to see from Figure~7.1.3, which shows only the top horizontal street of~$S_2$, that $a_1$ is the shortcut of an almost horizontal detour crossing of the this horizontal street, made up of a fractional almost horizontal unit $b_{11}$, full almost horizontal units $b_2$, $b_4$ and~$b_6$, and then a fractional almost horizontal unit~$b_1$.
We apply the delete end rule, meaning that we keep the initial fractional almost horizontal unit $b_{11}$ as a full unit and discard the final fractional almost horizontal unit~$b_1$, and call $b_{11},b_2,b_4,b_6$ the ancestor units of~$a_1$.

\begin{displaymath}
\begin{array}{c}
\includegraphics[scale=0.8]{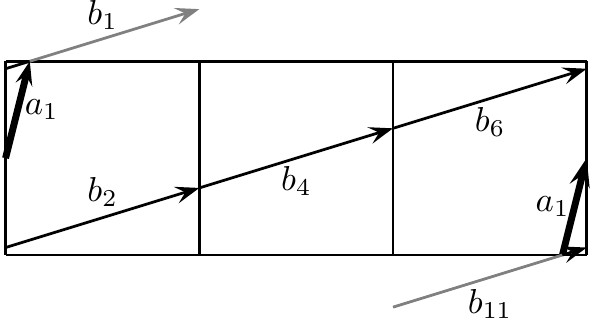}
\vspace{3pt}\\
\mbox{Figure 7.1.3: almost horizontal detour crossing for which $a_1$ is the shortcut}
\end{array}
\end{displaymath}

We do likewise for $a_i$, $2\le i\le20$, and can summarize in the form
\begin{align}\label{eq7.1.3}
a_1&\rightharpoonup b_{11},b_2,b_4,b_6,
\nonumber
\\
a_2&\rightharpoonup b_9,b_4,b_6,
\nonumber
\\
a_3&\rightharpoonup b_9,b_4,b_6,b_2,
\nonumber
\\
&\ \ \vdots
\nonumber
\\
a_{19}&\rightharpoonup b_3,b_{20},b_{20},b_{20},
\nonumber
\\
a_{20}&\rightharpoonup b_3,b_{20},b_{20}.
\end{align}
This leads to a transition matrix $M_1(S_2)$, where the $i$-th row captures the information in the $i$-th ancestor relation in \eqref{eq7.1.3}, with the entry on the $j$-th column displaying the multiplicity of~$b_j$.

A similar exercise with the roles of the horizontal and vertical interchanged, again using the delete end rule, results in
\begin{align}\label{eq7.1.4}
b_1&\rightharpoonup a_1,a_{14},a_{10},a_2,
\nonumber
\\
b_2&\rightharpoonup a_1,a_{14},a_{10},
\nonumber
\\
b_3&\rightharpoonup a_3,a_{20},a_{16},a_4,
\nonumber
\\
&\ \ \vdots
\nonumber
\\
b_{19}&\rightharpoonup a_{19},a_{16},a_4,a_{20},
\nonumber
\\
b_{20}&\rightharpoonup a_{19},a_{16},a_4.
\end{align}
This leads to an analogous transition matrix $M_2(S_2)$.

\begin{remark}
Instead of using the delete end rule, we may also use the keep end rule, meaning that we keep the final fractional unit as a full unit and discard the initial fractional unit.
Both rules are for bookkeeping purposes only, as we do not want to count any unit twice.
Depending on which rule we use, \eqref{eq7.1.3} and \eqref{eq7.1.4}, and hence also the matrices $M_1(S_1)$ and $M_2(S_2)$, may be a little different, but this will not affect the subsequent argument.
\end{remark}

Composition of the two transition matrices arising from \eqref{eq7.1.3} and \eqref{eq7.1.4} leads to the product matrix $M_1(S_2)M_2(S_2)$.
The transpose of this product matrix happens to be the $20\times20$ matrix
\begin{equation}\label{eq7.1.5}
\bfA(S_2)=\begin{pmatrix}
1\ \ 0\ \ 1\ \ 1\ \ 1\ \ 1\ \ 0\ \ 0\ \ 0\ \ 0\ \ 0\ \ 0\ \ 0\ \ 1\ \ 1\ \ 0\ \ 0\ \ 0\ \ 0\ \ 0\\
0\ \ 1\ \ 1\ \ 0\ \ 0\ \ 0\ \ 1\ \ 1\ \ 1\ \ 1\ \ 2\ \ 1\ \ 1\ \ 1\ \ 1\ \ 1\ \ 1\ \ 1\ \ 0\ \ 0\\        
1\ \ 1\ \ 1\ \ 0\ \ 1\ \ 1\ \ 0\ \ 0\ \ 0\ \ 0\ \ 0\ \ 0\ \ 0\ \ 0\ \ 0\ \ 0\ \ 0\ \ 0\ \ 1\ \ 1\\
0\ \ 0\ \ 0\ \ 1\ \ 1\ \ 0\ \ 0\ \ 0\ \ 0\ \ 0\ \ 0\ \ 0\ \ 1\ \ 1\ \ 1\ \ 1\ \ 2\ \ 1\ \ 4\ \ 3\\
1\ \ 1\ \ 1\ \ 1\ \ 1\ \ 0\ \ 0\ \ 0\ \ 0\ \ 0\ \ 0\ \ 0\ \ 1\ \ 0\ \ 0\ \ 0\ \ 0\ \ 1\ \ 0\ \ 0\\
1\ \ 0\ \ 0\ \ 0\ \ 0\ \ 1\ \ 2\ \ 1\ \ 1\ \ 1\ \ 1\ \ 1\ \ 2\ \ 1\ \ 1\ \ 1\ \ 1\ \ 1\ \ 0\ \ 0\\
0\ \ 0\ \ 0\ \ 0\ \ 0\ \ 0\ \ 1\ \ 1\ \ 2\ \ 1\ \ 1\ \ 1\ \ 0\ \ 0\ \ 0\ \ 0\ \ 0\ \ 0\ \ 0\ \ 0\\
0\ \ 0\ \ 0\ \ 0\ \ 0\ \ 0\ \ 2\ \ 3\ \ 5\ \ 2\ \ 2\ \ 2\ \ 0\ \ 0\ \ 0\ \ 0\ \ 0\ \ 0\ \ 0\ \ 0\\
0\ \ 1\ \ 1\ \ 0\ \ 0\ \ 0\ \ 1\ \ 1\ \ 1\ \ 0\ \ 1\ \ 1\ \ 0\ \ 0\ \ 0\ \ 0\ \ 0\ \ 0\ \ 0\ \ 0\\
1\ \ 1\ \ 2\ \ 1\ \ 1\ \ 1\ \ 0\ \ 0\ \ 0\ \ 1\ \ 1\ \ 0\ \ 1\ \ 1\ \ 2\ \ 1\ \ 1\ \ 1\ \ 0\ \ 0\\
1\ \ 0\ \ 0\ \ 0\ \ 0\ \ 1\ \ 1\ \ 1\ \ 1\ \ 1\ \ 1\ \ 0\ \ 0\ \ 0\ \ 0\ \ 0\ \ 0\ \ 0\ \ 0\ \ 0\\
2\ \ 1\ \ 1\ \ 1\ \ 1\ \ 1\ \ 1\ \ 0\ \ 0\ \ 0\ \ 0\ \ 1\ \ 2\ \ 1\ \ 1\ \ 1\ \ 1\ \ 1\ \ 0\ \ 0\\
0\ \ 0\ \ 0\ \ 0\ \ 0\ \ 0\ \ 0\ \ 0\ \ 0\ \ 1\ \ 1\ \ 0\ \ 1\ \ 0\ \ 1\ \ 1\ \ 1\ \ 1\ \ 0\ \ 0\\
1\ \ 1\ \ 2\ \ 1\ \ 1\ \ 1\ \ 1\ \ 1\ \ 1\ \ 1\ \ 2\ \ 1\ \ 0\ \ 1\ \ 1\ \ 0\ \ 0\ \ 0\ \ 0\ \ 0\\
0\ \ 0\ \ 0\ \ 1\ \ 1\ \ 0\ \ 0\ \ 0\ \ 0\ \ 0\ \ 0\ \ 0\ \ 1\ \ 1\ \ 1\ \ 0\ \ 1\ \ 1\ \ 0\ \ 0\\
1\ \ 1\ \ 1\ \ 1\ \ 2\ \ 1\ \ 0\ \ 0\ \ 0\ \ 0\ \ 0\ \ 0\ \ 0\ \ 0\ \ 0\ \ 1\ \ 1\ \ 0\ \ 4\ \ 3\\
0\ \ 0\ \ 0\ \ 0\ \ 0\ \ 0\ \ 1\ \ 0\ \ 0\ \ 0\ \ 0\ \ 1\ \ 1\ \ 1\ \ 1\ \ 1\ \ 1\ \ 0\ \ 0\ \ 0\\
2\ \ 1\ \ 1\ \ 1\ \ 1\ \ 1\ \ 2\ \ 1\ \ 1\ \ 1\ \ 1\ \ 1\ \ 1\ \ 0\ \ 0\ \ 0\ \ 0\ \ 1\ \ 0\ \ 0\\
0\ \ 0\ \ 0\ \ 0\ \ 0\ \ 0\ \ 0\ \ 0\ \ 0\ \ 0\ \ 0\ \ 0\ \ 0\ \ 0\ \ 0\ \ 1\ \ 1\ \ 0\ \ 3\ \ 2\\
1\ \ 1\ \ 1\ \ 1\ \ 2\ \ 1\ \ 0\ \ 0\ \ 0\ \ 0\ \ 0\ \ 0\ \ 1\ \ 1\ \ 1\ \ 1\ \ 2\ \ 1\ \ 1\ \ 1
\end{pmatrix}.
\end{equation}
We use the transpose, because in the edge cutting lemma, to be formulated later, we are interested in eigenvectors as column vectors of~$\bfA(S_2)$.
And it helps that MATLAB, like any other linear algebra computer program, automatically computes right or column eigenvectors.

We start with the $3$ eigenvalues of \eqref{eq7.1.5} with the largest absolute values.
By MATLAB, the largest eigenvalue is
\begin{equation}\label{eq7.1.6}
\lambda_1= \lambda_1(\bfA(S_2))=\frac{11+3\sqrt{13}}{2}=\left(\frac{3+\sqrt{13}}{2}\right)^2=\alpha^2,
\end{equation}
in view of \eqref{eq7.1.2}, with corresponding eigenvector
\begin{align}\label{eq7.1.7}
\bfv_1
&
=\bfv_1(\bfA(S_2))
=(v_1(1),v_1(2),v_1(3),\ldots,v_1(20))^T
\nonumber
\\
&
=(c,1,c,1,c,1,c,1,c,1,c,1,c,1,c,1,c,1,c,1)^T,
\end{align}
where
\begin{displaymath}
c=\frac{\sqrt{13}-1}{6}.
\end{displaymath}
The second largest eigenvalue is
\begin{equation}\label{eq7.1.8}
\lambda_2= \lambda_2(\bfA(S_2))=3+2\sqrt{2}=(1+\sqrt{2})^2,
\end{equation}
with corresponding eigenvector
\begin{align}\label{eq7.1.9}
\bfv_2
&
=\bfv_2(\bfA(S_2))
=(v_2(1),v_2(2),v_2(3),\ldots,v_2(20))^T
\nonumber
\\
&
=(c_1,c_2,c_1,c_3,c_1,c_2,c_4,c_5,c_4,c_6,c_4,c_6,c_1,c_2,c_1,c_3,c_1,c_2,c_7,1)^T,
\end{align}
where
\begin{align}
&
c_1=\frac{2\sqrt{2}+1}{14},
\quad
c_2=-\frac{5\sqrt{2}+6}{14},
\quad
c_3=\frac{10\sqrt{2}+19}{14},
\quad
c_4=-\frac{4\sqrt{2}+2}{7},
\nonumber
\\
&
c_5=-\frac{5\sqrt{2}+13}{7},
\quad
c_6=\frac{5\sqrt{2}-1}{14},
\quad
c_7=\frac{6\sqrt{2}+3}{7}.
\nonumber
\end{align}
The third largest eigenvalue is
\begin{equation}\label{eq7.1.10}
\lambda_3= \lambda_3(\bfA(S_2))=\frac{3+\sqrt{5}}{2},
\end{equation}
with corresponding eigenvector
\begin{align}\label{eq7.1.11}
\bfv_3
&
=\bfv_3(\bfA(S_2))
=(v_3(1),v_3(2),v_3(3),\ldots,v_3(20))^T
\nonumber
\\
&
=(c_1^*,c_2^*,c_1^*,c_3^*,c_1^*,c_2^*,c_4^*,c_5^*,c_4^*,c_6^*,c_4^*,c_6^*,c_1^*,c_2^*,c_1^*,c_3^*,c_1^*,c_2^*,c_7^*,1)^T,
\end{align}
where
\begin{align}
&
c_1^*=\frac{\sqrt{5}+5}{10},
\quad
c_2^*=\frac{\sqrt{5}}{5},
\quad
c_3^*=-\frac{2\sqrt{5}+5}{5},
\quad
c_4^*=-c_1^*,
\nonumber
\\
&
c_5^*=-\frac{4\sqrt{5}+5}{5},
\quad
c_6^*=-c_3^*,
\quad
c_7^*=-\frac{3\sqrt{5}+15}{10}.
\nonumber
\end{align}

There are $3$ other eigenvalues which are the algebraic conjugates of $\lambda_i$, $1\le i\le3$, and these are between $0$ and~$1$.
There are $14$ other eigenvalues with absolute value~$1$.
We shall see that $\lambda_i$, $1\le i\le3$, are the only relevant eigenvalues of $\bfA(S_2)$, and the other $17$ eigenvalues are irrelevant.
We comment that the Jordan normal form of the matrix $\bfA(S_2)$ is simple, as the matrix has $20$ different eigenvectors.

The irregularity exponent of this special slope $\alpha$ given by \eqref{eq7.1.2} is, by definition, equal to
\begin{displaymath}
\kappa_0(\alpha)
=\frac{\log\vert\lambda_2\vert}{\log\vert\lambda_1\vert}
=\frac{\log (1+\sqrt{2})}{\log\frac{3+\sqrt{13}}{2}}.
\end{displaymath}

Before discussing the substantial redundancy and repetition of the coordinates in \eqref{eq7.1.7}, \eqref{eq7.1.9} and \eqref{eq7.1.11}, we make a short detour and state two general theorems.

What we are doing here with the surface $S_2$ and have done with the L-surface in \cite{BDY1,BDY2} can be adapted for any finite polysquare translation surface, and we shall give some details in the next section.
In particular, we obtain the following result.

\begin{thm}\label{thm7.1.1}
Let $\PPP$ be a finite polysquare translation surface, with street-LCM, \textit{i.e.}, the least common multiple of the lengths of the horizontal and vertical streets of~$\PPP$,  denoted by $\LCM(\PPP)$.
Consider a geodesic flow with a quadratic irrational slope where the ordinary continued fraction digits are all divisible by $\LCM(\PPP)$.

Using the eigenvalue-based version of the surplus shortline method developed in \cite{BDY1,BDY2}, we can explicitly compute the irregularity exponent for such a slope.

Combining the irregularity exponent with the method of zigzagging introduced in \cite[Section~3.3]{BDY1}, we can also describe, for a geodesic flow on $\PPP$ with such a slope, the time-quantitative behavior of the edge cutting and face crossing numbers, as well as equidistribution relative to all convex sets.
\end{thm}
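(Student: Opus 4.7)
The plan is to carry out in full generality the construction illustrated on $S_2$ in this subsection (and previously on the L-surface in \cite{BDY1,BDY2}), and then feed the resulting spectral data into the existing zigzagging machinery. Let $L=\LCM(\PPP)$. First I would classify the finitely many \emph{types} of almost vertical units and almost horizontal units in $\PPP$ for a geodesic of slope $\alpha$: a unit type records which square face is being traversed and which pair of opposite edges the geodesic enters and exits, refined by whether the exit is flush (type $\uparrow$) or uses a non-trivial identification (type $\nuparrow$). The list is finite, bounded by a quantity depending only on the edge-pairing combinatorics of~$\PPP$.

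Next, the hypothesis that every continued fraction digit of $\alpha$ is a multiple of $L$ is precisely what makes the surplus shortcut-ancestor process close: one step of the Gauss algorithm corresponds to an integer number of complete traversals of each horizontal and each vertical street, so every almost vertical unit is the shortcut of an almost horizontal detour crossing that decomposes into a finite list of complete almost horizontal units plus two fractional end units resolved by the delete-end (or keep-end) rule. This produces transition matrices $M_1(\PPP)$ and $M_2(\PPP)$, and I would take as the basic object of study the composite $\bfA(\PPP)=(M_1(\PPP)M_2(\PPP))^T$, whose entries depend only on the street structure of $\PPP$ and not on the individual digits. A Perron--Frobenius argument on the cone generated by unit types gives a unique largest real eigenvalue $\lambda_1=\lambda_1(\bfA(\PPP))$, which I expect to be an appropriate power of $\alpha$ reflecting the asymptotic ratio of horizontal to vertical unit counts, with Perron eigenvector proportional to the natural street-length weights (as seen already in \eqref{eq7.1.7}).

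The main obstacle will be the forthcoming \emph{edge cutting lemma}, which must isolate an explicit $\bfA(\PPP)$-invariant subspace carrying the arithmetically meaningful subdominant eigenvalue $\lambda_2$, and discard the large number of spurious eigenvalues (algebraic conjugates of $\lambda_1$, roots of unity, and other eigenvalues of modulus $1$) that arise from the redundancy of the unit-type encoding; in the $S_2$ example only $3$ of the $20$ eigenvalues turn out to be relevant. Once such a subspace is identified, the irregularity exponent is read off as $\kappa_0(\alpha)=\log|\lambda_2|/\log|\lambda_1|$ by the same argument as in \cite[Section~3]{BDY1}, because iterating the surplus shortline relation $k$ times converts a geodesic segment of length $\sim|\lambda_1|^k$ into a list of shortcut segments whose cumulative error is controlled by $|\lambda_2|^k$.

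Finally, for the time-quantitative statements, I would feed the spectral gap $|\lambda_1|/|\lambda_2|$ into the zigzagging construction of \cite[Section~3.3]{BDY1}. Zigzagging converts the nested hierarchy of shortcut geodesics into a polynomial-order discrepancy estimate for the edge cutting and face crossing numbers along the geodesic. Equidistribution relative to arbitrary convex sets then follows by a standard approximation argument: convex sets in $\PPP$ are sandwiched between axis-aligned polysquare unions obtained by refining the square tiling of $\PPP$, so the discrepancy bound for elementary test sets, together with an isoperimetric bound on the boundary error, propagates to arbitrary convex test sets.
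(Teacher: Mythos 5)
Your overall plan follows the same route as the paper: classify finitely many unit types, form transition matrices $M_1, M_2$ via the shortcut-ancestor process, observe that the divisibility hypothesis guarantees that a detour crossing decomposes into an integer number of complete street traversals, isolate a relevant invariant subspace (the role of the edge cutting lemma, Theorem~\ref{thm7.2.1}, and of Lemmas \ref{lem7.2.1}--\ref{lem7.2.3}), and read off $\kappa_0 = \log|\lambda_2|/\log|\lambda_1|$, after which zigzagging yields the time-quantitative statements. That matches the paper's strategy.

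There is, however, a concrete error in your proposal. You assert that the composite matrix $\bfA(\PPP)=(M_1(\PPP)M_2(\PPP))^T$ has entries that ``depend only on the street structure of $\PPP$ and not on the individual digits.'' This is false, and it matters. The entries of $M_1$ and $M_2$ depend essentially on the digits: for a slope $\alpha=[n;m,n,m,\ldots]$ as in \eqref{eq7.2.1}, each edge in the ancestor sets \eqref{eq7.2.3}--\eqref{eq7.2.6} is counted with multiplicity $m|J_{i_0}|^{-1}$ or $n|I_{j_1}|^{-1}$, so the resulting matrices scale with $m$ and $n$. Concretely, the street-spreading matrix $\bfS(\PPP;m,n)$ in \eqref{eq7.2.36} changes when $m,n$ change, and in Examples \ref{ex7.3.1}--\ref{ex7.4.3} replacing $m,n$ by $mk,nk$ multiplies $\bfS$ by $k^2$. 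The matrix $\bfA(S_2)$ in \eqref{eq7.1.5} is built specifically for the slope $[3;3,3,\ldots]$ and would be a different matrix for $[6;6,6,\ldots]$. Moreover, a general quadratic irrational with all digits divisible by $\LCM(\PPP)$ has an eventually periodic continued fraction of some period $r$, and the digits need not be equal; the correct object is then a \emph{product} $\prod_{i=1}^{r} M(k_i)$ of single-step transition matrices, one per digit, as the paper does explicitly for the octagon and golden L-surface in Theorems \ref{thm8.2.1} and \ref{thm8.5.1}. Your final passage (``iterating the surplus shortline relation $k$ times...'') gestures at this, but as written your proposal only covers the constant-digit (period~$1$) or period-$2$ case, not the full hypothesis of Theorem~\ref{thm7.1.1}. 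A secondary minor point: your claim that the Perron eigenvector is ``proportional to the natural street-length weights'' does not match the $S_2$ example, where the entries in \eqref{eq7.1.7} alternate between two values $c$ and $1$ tied to the slope rather than to street lengths.
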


We remark that for the slopes in Theorem~\ref{thm7.1.1}, \cite[Theorem~6.4.1]{BCY} applies, and not only guarantees superdensity, but establishes superdensity for \textit{more} slopes than Theorem~\ref{thm7.1.1} can provide the explicit values of the irregularity exponents.
On top of the arithmetic condition that the continued fraction digits are divisible by the street-LCM, for superdensity in Theorem~6.4.1, we also need the boundedness of the continued fraction digits given by the badly approximable slopes.
Furthermore, we need in Theorem~\ref{thm7.1.1} the stronger condition of periodicity of the tail of the sequence of continued fraction digits, so that the slope is a quadratic irrational.

The irregularity exponent can be computed from the two eigenvalues with the largest absolute values of an appropriate $2d\times2d$ matrix, where $d$ is the number of square faces of the polysquare translation surface~$\PPP$. 

Since for every non-integrable polysquare translation surface the street-LCM is at least~$2$, the surplus shortline method does not have a chance of determining the irregularity exponent for every quadratic irrational slope.

The reason why in the special case of the L-surface we are able to determine the irregularity exponent for every quadratic irrational slope is two-fold.
First, the street-LCM is equal to~$2$.
More importantly, we combine the surplus and deficit versions of the shortline method according to the $\pm$-even type continued fraction of the slope. 

If the street-LCM of a polysquare region or translation surface is equal to~$2$, then we call it a \textit{$2$-polysquare} region or translation surface.
One such surface with the simplest boundary identification via perpendicular translation is called a \textit{$2$-polysquare snake surface}.
The special class of $2$-polysquare snake surfaces is surprisingly large, and Figure~7.1.4 gives an example.

\begin{displaymath}
\begin{array}{c}
\includegraphics[scale=0.8]{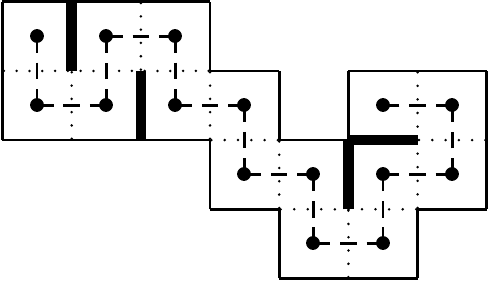}
\vspace{3pt}\\
\mbox{Figure 7.1.4: $2$-polysquare snake surface with capitals}
\\
\mbox{and its strictly alternating path}
\end{array}
\end{displaymath}

We can easily characterize every $2$-polysquare snake surface by making use of the simple graph-theoretic term of \textit{path}.

Given a $2$-polysquare snake surface, we can put a point at the center of every square face of the underlying polysquare region, and call it the \textit{capital} of the square face.
We join two capitals by a dashed line-segment if and only if the corresponding square faces share a common edge.
In this way we obtain a non self-intersecting strictly alternating h-v-path (or v-h-path) of the capitals, where ``h'' stands for a horizontal edge of unit length and ``v'' stands for a vertical edge of unit length.
The term \textit{strictly alternating} means that h is always preceded and followed by~v, and v is always preceded and followed by~h, unless the path stops.

We also have the converse, that every non self-intersecting strictly alternating h-v-path (or v-h-path) corresponds to a $2$-polysquare snake surface.
Indeed, we cover every vertex of the path with a unit size square
such that the vertex is the capital of the square, and
we may place some extra walls, denoted by boldface line-segments, if necessary.

It can be shown that the special treatment of the L-surface in \cite{BDY1,BDY2} can be adapted to \textit{every} $2$-polysquare translation surface.
The idea, as for the L-surface, is to combine the surplus and deficit versions of the eigenvalue-based shortline method according to the $\pm$-even type continued fraction expansion of the slope. 

\begin{thm}\label{thm7.1.2}
Let $\PPP$ be a finite $2$-polysquare translation surface. 
Then the LCM-divisibility condition in Theorem~\ref{thm7.1.1} may be dropped, so that the conclusion of Theorem~\ref{thm7.1.1} can be extended to every quadratic irrational slope.
\end{thm}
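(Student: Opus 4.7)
The plan is to replicate, in the generality of an arbitrary finite $2$-polysquare translation surface $\PPP$, the strategy used for the L-surface in \cite{BDY1,BDY2}. The first ingredient is the $\pm$-\emph{even continued fraction expansion} of the quadratic irrational slope $\alpha$: every real irrational admits an expansion $\alpha=[e_0;e_1,e_2,\ldots]$ in which each digit $e_i$ is an \emph{even} integer (allowed to be positive or negative), and when $\alpha$ is a quadratic irrational this expansion is eventually periodic. Since $\LCM(\PPP)=2$, each $\vert e_i\vert$ is automatically divisible by the street-LCM, so the arithmetic hypothesis of Theorem~\ref{thm7.1.1} is restored at the level of absolute values; the signs will be absorbed by switching between the surplus and the deficit versions of the shortline method.

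The second ingredient is the deficit shortline method, developed for an arbitrary $2$-polysquare translation surface in exact parallel with the surplus method described above. For each almost vertical unit $a_i$, instead of expressing it as the shortcut of an almost horizontal detour crossing, one expresses it as the shortcut of an almost horizontal \emph{deficit} crossing of the ambient horizontal street; interchanging the roles of horizontal and vertical gives the analogous process for~$b_j$. This yields deficit transition matrices $M_1^-(\PPP)$ and $M_2^-(\PPP)$ in perfect analogy with the surplus transition matrices $M_1(\PPP)$ and $M_2(\PPP)$, and compatible with them under the same delete-end bookkeeping rule. Both constructions go through whenever the relevant continued fraction digit is an even integer, which is exactly the divisibility condition guaranteed by $\LCM(\PPP)=2$.

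The third step is to assemble these four elementary matrices along the $\pm$-even continued fraction of~$\alpha$. Each digit $e_i$ contributes $\vert e_i\vert/2$ factors of the surplus transition matrix if $e_i>0$, or of the deficit transition matrix if $e_i<0$, with the subscript alternating between $1$ and $2$ along the continued fraction. Eventual periodicity of the digit sequence reduces the long-term asymptotics of the resulting alternating product to the spectral analysis of a single period-product matrix $\bfB(\alpha,\PPP)$. As in Theorem~\ref{thm7.1.1}, the irregularity exponent is then read off as $\log\vert\lambda_2\vert/\log\vert\lambda_1\vert$ from the top two eigenvalues of $\bfB(\alpha,\PPP)$, and combining this with the zigzagging method of \cite[Section~3.3]{BDY1} yields the edge-cutting, face-crossing and convex-set equidistribution statements.

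I expect the main difficulty to be the deficit construction itself: one must verify that it goes through uniformly for \emph{every} $2$-polysquare translation surface (including snake surfaces whose boundary identifications involve extra walls rather than purely perpendicular translations), that the edge cutting lemma simultaneously reduces all four matrices $M_1^\pm(\PPP)$ and $M_2^\pm(\PPP)$ to a common relevant invariant subspace, and that the resulting period-product $\bfB(\alpha,\PPP)$ has simple top spectrum with $\vert\lambda_2\vert<\vert\lambda_1\vert$. Once these structural facts are in place, the reduction to Theorem~\ref{thm7.1.1} is essentially formal, since a single Gauss-type step of the method requires nothing more than an even continued fraction digit, which the $\pm$-even expansion supplies automatically.
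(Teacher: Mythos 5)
Your proposal matches the paper's approach exactly: the paper indicates (just before the theorem statement) that the proof consists of adapting the L-surface treatment from \cite{BDY1,BDY2} by combining the surplus and deficit versions of the eigenvalue-based shortline method according to the $\pm$-even type continued fraction expansion of the slope, which is precisely the strategy you lay out. The paper gives no further detail, so your fleshed-out version (alternating surplus/deficit transition matrices along the digit sequence, reducing to the spectral analysis of the period-product via eventual periodicity, then reading off the irregularity exponent and invoking zigzagging) is a faithful expansion of the intended argument.
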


To explain the substantial redundancy or repetition of the coordinates in the eigenvectors \eqref{eq7.1.7}, \eqref{eq7.1.9} and \eqref{eq7.1.11}, the trick is to visualize and represent the surface $S_2$ in Figure~7.1.1 as the \textit{period} of a doubly periodic infinite polysquare translation surface $S_2(\infty)$ in Figure~7.1.5 below.

\begin{displaymath}
\begin{array}{c}
\includegraphics[scale=0.8]{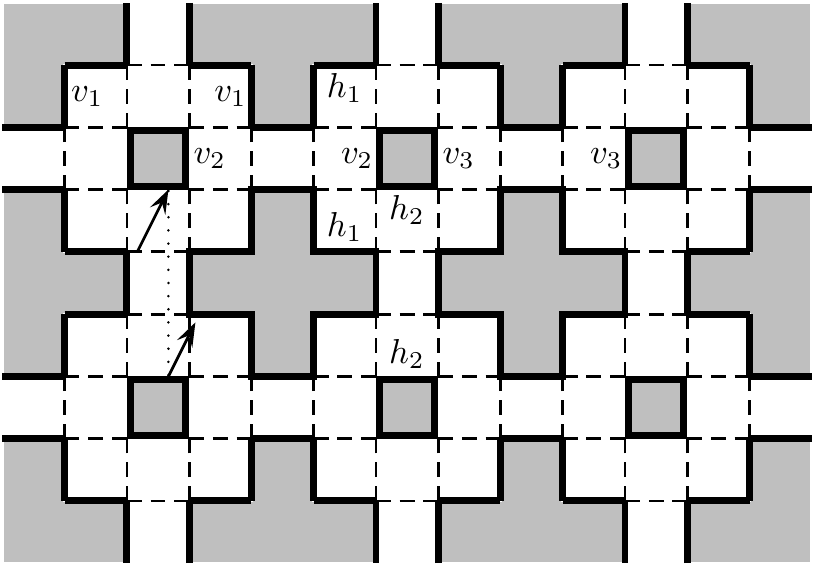}
\vspace{3pt}\\
\mbox{Figure 7.1.5: the doubly periodic polysquare translation surface $S_2(\infty)$}
\end{array}
\end{displaymath}

The infinite polysquare translation surface $S_2(\infty)$ is essentially a square lattice of copies of~$S_2$ with a noticeable difference -- observe carefully the edge pairings in~$S_2(\infty)$.
For convenience, we say that $S_2$ is the \textit{period surface} of~$S_2(\infty)$.

Following the surplus shortline method, we study two special geodesics on the infinite polysquare translation surface~$S_2(\infty)$.
The first geodesic $V_*(t)$, $t\ge0$, starts from the origin, namely a vertex of one of the square faces, and has slope $\alpha$ defined in \eqref{eq7.1.2}.
The second geodesic $H_*(t)$, $t\ge0$, also starts from the origin and has reciprocal slope~$\alpha^{-1}$.
Note that the projections of $V_*(t)$ and $H_*(t)$ on the period surface $S_2$
give back the geodesics $V(t)$ and $H(t)$ defined earlier.

A particle traveling on the almost vertical geodesic $V_*(t)$, $t\ge0$, of the doubly periodic infinite translation surface $S_2(\infty)$ moves from one copy of the period $S_2$ to another copy of $S_2$, back and forth, up and down, and generally wanders around.

Note that in~$S_2(\infty)$, each copy of $S_2$ has $4$ immediate neighbors: we shall refer to them as the North, South, East and West neighbors.
Traveling along the geodesic $V_*(t)$, $t\ge0$, is somewhat similar to a symmetric random walk on the $2$-dimensional doubly periodic
lattice~$\Zz^2$.
Escaping from a copy of $S_2$ to the North means following a unit of type $a_3$ or $a_4$ in Figure~7.1.2, while escaping to the South means a following a unit of type $a_{15}$ or~$a_{16}$.
Similarly, escaping to the East means following a unit of type~$a_7$, while escaping to the West means following a unit of type~$a_{11}$. 

To describe an arbitrary finite segment of the geodesic $V_*(t)$, $t\ge0$, of slope $\alpha$ on the infinite polysquare translation surface~$S_2(\infty)$, it suffices to consider its projection on the period surface~$S_2$.
The difference between number of North-escapes and the number of South-escapes gives the vertical change on~$S_2(\infty)$, while the difference between the number of East-escapes and the number of West-escapes gives the horizontal change on~$S_2(\infty)$.
This is how we can determine the \textit{escape rate to infinity} of the special geodesic $V_*(t)$, $t\ge0$, on $S_2(\infty)$ from the behavior of its projection $V(t)$, $t\ge0$, on the period surface~$S_2$.

Since the two special geodesics $V(t)$ and $H(t)$ are surplus shortlines of each other on the period surface~$S_2$, the time evolution of
$V(t)$, $t\ge0$, of slope $\alpha$ and starting from the origin is described by powers of the $2$-step transition matrix $\bfA(S_2)$ given by \eqref{eq7.1.5}.
Since $\bfA(S_2)$ is diagonalizable, it is particularly easy to express the number of types of units by using the relevant eigenvalues and their eigenvectors.
Since the number of types of units is expressed as a linear combination of the \textit{powers} of the eigenvalues, in case of large powers the contribution of the \textit{irrelevant} eigenvalues is clearly negligible.

Recall that escaping to the North means following a unit of type $a_3$ or $a_4$ in Figure~7.1.2, and escaping to the South means following a unit of type $a_{15}$ or~$a_{16}$.
It follows that for the vertical change, we need to know the $3$-rd, $4$-th, $15$-th and $16$-th coordinates of the relevant eigenvectors $\bfv_i$, $i=1,2,3$, in \eqref{eq7.1.7}, \eqref{eq7.1.9} and \eqref{eq7.1.11}.
Escaping to the East means following a unit of type~$a_7$, and escaping to the West means following a unit of type~$a_{11}$.
It follows that for the horizontal change, we need to know the $7$-th and $11$-th coordinates of the relevant eigenvectors. 

Recall that $v_i(j)$ denotes the $j$-th coordinate of $\bfv_i$, $i=1,2,3$ and $j=1,2,\ldots,20$.
It is easily seen from \eqref{eq7.1.7}, \eqref{eq7.1.9} and \eqref{eq7.1.11} that
\begin{equation}\label{eq7.1.12}
v_i(3)+v_i(4)=v_i(15)+v_i(16),
\quad
i=1,2,3,
\end{equation}
and
\begin{equation}\label{eq7.1.13}
v_i(7)=v_i(11),
\quad
i=1,2,3.
\end{equation}

The reader may find \eqref{eq7.1.12} and \eqref{eq7.1.13} a surprising coincidence.
However, there is a simple explanation why these equalities must hold, namely, that the geodesic $V_*(t)$, $t\ge0$, of $S_2(\infty)$ satisfies the conditions of \cite[Theorem~6.5.1]{BCY} and exhibits super-slow logarithmic escape rate to infinity; see the Remark after the proof there.
Meanwhile, a violation of \eqref{eq7.1.12} and \eqref{eq7.1.13} would imply a power-size escape rate to infinity, exponentially larger than logarithmic escape rate to infinity.

\begin{remark}
The equalities \eqref{eq7.1.12} and \eqref{eq7.1.13} remain true even if in obtaining \eqref{eq7.1.3} and \eqref{eq7.1.4} we replace the delete end rule by the keep end rule, or mix the two rules arbitrarily. 
It follows from the fact that the \textit{relevant} eigenvalues and eigenvectors remain the same under these changes.
\end{remark}

The edge cutting lemma, which we shall formulate later, is simply a far-reaching generalization of the equalities \eqref{eq7.1.12} and \eqref{eq7.1.13}.

It can be seen from \eqref{eq7.1.7}, \eqref{eq7.1.9} and \eqref{eq7.1.11} that \eqref{eq7.1.13} can be extended to include the $9$-th coordinate, so that
\begin{equation}\label{eq7.1.14}
v_i(7)=v_i(9)=v_i(11),
\quad
i=1,2,3.
\end{equation}
Note from Figure~7.1.2 that the entries in \eqref{eq7.1.14} represent almost vertical units of type $\nuparrow$ on the second horizontal street of~$S_2$, of length~$3$.
For the first horizontal street and the third horizontal street, both also of length~$3$, we have respectively the analogs
\begin{equation}\label{eq7.1.15}
v_i(1)=v_i(3)=v_i(5),
\quad
i=1,2,3,
\end{equation}
and
\begin{equation}\label{eq7.1.16}
v_i(13)=v_i(15)=v_i(17),
\quad
i=1,2,3,
\end{equation}
both of which hold, as can be seen from \eqref{eq7.1.7}, \eqref{eq7.1.9} and \eqref{eq7.1.11}.

It can also be seen from \eqref{eq7.1.7}, \eqref{eq7.1.9} and \eqref{eq7.1.11} that \eqref{eq7.1.12} can be extended to
\begin{equation}\label{eq7.1.17}
v_i(3)+v_i(4)=v_i(15)+v_i(16)=v_i(19)+v_i(20),
\quad
i=1,2,3.
\end{equation}
Note from Figure~7.1.2 that the entries in \eqref{eq7.1.17} represent almost vertical units of type $\uparrow$ or $\nuparrow$ on the third vertical street of~$S_2$, of length~$3$.
Note that we have chosen almost vertical units of type $\nuparrow$ that intersect the left edge of each square face.
In view of \eqref{eq7.1.15} and \eqref{eq7.1.16}, it would have made no difference if we had chosen instead any that intersects the right edge of the same square face.
For the second vertical street and the fourth vertical street, both also of length~$3$, we have respectively the analogs
\begin{equation}\label{eq7.1.18}
v_i(1)+v_i(2)=v_i(9)+v_i(10)=v_i(13)+v_i(14),
\quad
i=1,2,3,
\end{equation}
and
\begin{equation}\label{eq7.1.19}
v_i(5)+v_i(6)=v_i(11)+v_i(12)=v_i(17)+v_i(18),
\quad
i=1,2,3,
\end{equation}
both of which hold, as can be seen from \eqref{eq7.1.7}, \eqref{eq7.1.9} and \eqref{eq7.1.11}.

The edge cutting lemma basically says that for \textit{any} finite polysquare translation surface with $1$-direction geodesic flow, analogs of the equalities
\eqref{eq7.1.14}--\eqref{eq7.1.19} hold.
Section~\ref{sec7.2} contains a proof of this fact.
In fact, we prove even more; see Theorem~\ref{thm7.2.2}.

We conclude this section by giving a simple proof of the edge cutting lemma in the simplest special case of the L-surface. 
We include it, because this simple proof already illustrates quite well the method we use in Section~\ref{sec7.2}.

We go back to \cite[Section~3]{BDY1}, and consider the L-surface with a $1$-direction geodesic flow with slope $\alpha=1+\sqrt{2}$.
Figure~7.1.6 below shows the edges of the L-surface as well as the almost vertical units.

\begin{displaymath}
\begin{array}{c}
\includegraphics[scale=0.8]{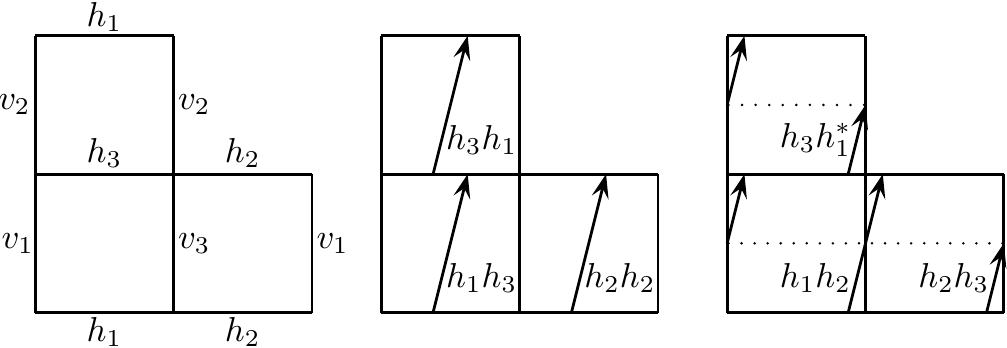}
\vspace{3pt}\\
\mbox{Figure 7.1.6: the L-surface and almost vertical units}
\end{array}
\end{displaymath}

Corresponding to the matrix $\bfA(S_2)$ for the polysquare translation surface~$S_2$, we have the $2$-step transition matrix
\begin{equation}\label{eq7.1.20}
\bfA=M_2^TM_1^T=\begin{pmatrix}
2&1&1&1&0&0\\
1&1&0&0&2&3\\
1&1&2&3&0&0\\
1&1&1&2&0&0\\
2&1&1&1&1&1\\
0&0&0&0&2&3
\end{pmatrix}.
\end{equation}
In \eqref{eq7.1.20}, we have used the lexicographic order for the almost vertical units
\begin{equation}\label{eq7.1.21}
h_1h_2,h_1h_3,h_2h_2,h_2h_3,h_3h_1,h_3h_1^*.
\end{equation}
The L-surface has precisely one vertical and one horizontal street of length greater than~$1$.
If $\bfv=(v_1,v_2,v_3,v_4,v_5,v_6)$ is a relevant eigenvector of~$\bfA$, then the coordinates correspond to the $6$ almost vertical units in \eqref{eq7.1.21}.
It follows that the L-surface analog of the equalities \eqref{eq7.1.14}--\eqref{eq7.1.16} is
\begin{equation}\label{eq7.1.22}
v_1=v_4,
\end{equation}
and the L-surface analog of the inequalities \eqref{eq7.1.17}--\eqref{eq7.1.19} is
\begin{equation}\label{eq7.1.23}
v_1+v_2=v_5+v_6.
\end{equation}
These two equations together give an invariant subspace of the matrix~$\bfA$.
To see this, suppose that
\begin{displaymath}
\bfA\bfv^T=\bfw^T=(w_1,w_2,w_3,w_4,w_5,w_6)^T.
\end{displaymath}
Then
\begin{align}
w_1&=2v_1+v_2+v_3+v_4,
\nonumber
\\
w_2&=v_1+v_2+2v_5+3v_6,
\nonumber
\\
w_3&=v_1+v_2+2v_3+3v_4,
\nonumber
\\
w_4&=v_1+v_2+v_3+2v_4,
\nonumber
\\
w_5&=2v_1+v_2+v_3+v_4+v_5+v_6,
\nonumber
\\
w_6&=2v_5+3v_6.
\nonumber
\end{align}
It is easy to see that $w_1=w_4$ follows from $v_1=v_4$.
On the other hand, we have
\begin{align}
w_1+w_2&=3v_1+2v_2+v_3+v_4+2v_5+3v_6,
\nonumber
\\
w_5+w_6&=2v_1+v_2+v_3+v_4+3v_5+4v_6,
\nonumber
\end{align}
so that
\begin{displaymath}
(w_1+w_2)-(w_5+w_6)=(v_1+v_2)-(v_5+v_6),
\end{displaymath}
and so $w_1+w_2=w_5+w_6$ follows from $v_1+v_2=v_5+v_6$.

The $2$-step transition matrix $\bfA$ has $6$ eigenvalues and $6$ independent eigenvectors in~$\Rr^6$.
The eigenvalues, in order of decreasing absolute value, are
\begin{align}\label{eq7.1.24}
\lambda_1=3+2\sqrt{2},
\quad
\lambda_2=\frac{3+\sqrt{5}}{2},
\quad
\lambda_3=1,
\nonumber
\\
\lambda_4=1,
\quad
\lambda_5=\frac{3-\sqrt{5}}{2},
\quad
\lambda_6=3-2\sqrt{2}.
\end{align}
Suppose that the corresponding eigenvectors of $\bfA$ are respectively $\bfv_i$, $1\le i\le6$.
Then
\begin{align}
&
\lambda_1\to\bfv_1
=(1,\sqrt{2},\sqrt{2},1,\sqrt{2},1)^T,
\nonumber
\\
&
\lambda_2\to\bfv_2
=\left(-\frac{1}{2},\frac{\sqrt{5}+3}{4},-\frac{\sqrt{5}}{2},-\frac{1}{2},\frac{\sqrt{5}-3}{4},1\right)^T,
\nonumber
\\
&
\lambda_3\to\bfv_3
=(0,-1,1,0,0,0)^T,
\nonumber
\\
&
\lambda_4\to\bfv_4
=(-1,1,0,0,-1,1)^T,
\nonumber
\\
&
\lambda_5\to \bfv_5
=\left(-\frac{1}{2},-\frac{\sqrt{5}-3}{4},\frac{\sqrt{5}}{2},-\frac{1}{2},-\frac{\sqrt{5}+3}{4},1\right)^T,
\nonumber
\\
&
\lambda_6\to\bfv_6
=(1,-\sqrt{2},-\sqrt{2},1,-\sqrt{2},1)^T.
\nonumber
\end{align}
Note now that the $4$-dimensional invariant subspace of the matrix $\bfA$ given by the equations \eqref{eq7.1.22} and \eqref{eq7.1.23} contains the 
eigenvectors $\bfv_1$ and $\bfv_2$ corresponding to the two relevant eigenvalues $\lambda_1$ and~$\lambda_2$.
It also contains the eigenvectors $\bfv_5$ and $\bfv_6$ of the two conjugate eigenvalues $\lambda_5$ and~$\lambda_6$.

This simple example illustrates what we call an \textit{invariant subspace} argument. 
In the next section we use a similar but more sophisticated version of this to prove the edge cutting lemma and more.

Finally observe that the edge cutting lemma has the following vague intuitive meaning.
If the surplus shortline method works for a $1$-direction geodesic on a finite polysquare translation surface with some fixed slope, then the edge cutting numbers of the edges in the \textit{same} street are \textit{nearly the same}.
We shall clarify this later.

%
%

\subsection{Reducing the 2-step transition matrix: the street-spreading matrix}\label{sec7.2}

We see from the examples in Section~\ref{sec7.1} that the $2$-step transition matrix of the shortline method is quite redundant.
Here we elaborate on this.
What we are interested in is the long-term behavior of the geodesics, and when the shortline method works, it suffices to consider the relevant eigenvalues, those with absolute value greater than~$1$, and the corresponding eigenvectors.
For example, while the $2$-step transition matrix $\bfA(S_2)$ in \eqref{eq7.1.5} has $20$ eigenvalues, only $3$ of them are relevant.
Moreover, each of the corresponding $20$-dimensional eigenvectors \eqref{eq7.1.7}, \eqref{eq7.1.9} and \eqref{eq7.1.11} has at most $8$ distinct coordinates, with some multiplicities.

Accordingly, we introduce a general reduction process that, roughly speaking, eliminates the irrelevant part of the $2$-step transition matrix~$\bfA$.
We shall give a recipe of how we can read out the relevant eigenvalues and their eigenvectors from a smaller matrix called the \textit{street-spreading matrix}.

We consider a vector space $W$ with basis made up of all the distinct types of almost vertical units of a finite polysquare translation surface~$\PPP$, of dimension equal to twice the number of distinct square faces of~$\PPP$.
We shall show that the equations in the edge cutting lemma define a subspace of~$W$, and we want to show that this subspace is $\bfA$-invariant and contains all the relevant eigenvectors.
Thus the basic idea is to find such a \textit{relevant} $\bfA$-invariant subspace, and a convenient basis of it, leading to the usually substantially smaller street-spreading matrix.
A proof of the edge cutting lemma will come as a byproduct of the reduction process.
Theorem~\ref{thm7.2.2}, the main result, will be formulated towards the end of this section.

\begin{part0}
Let $\PPP$ be a given finite polysquare translation surface, with $d$ square faces.
Let $m$ be any fixed integer multiple of the lengths of the horizontal streets of~$\PPP$, and let $n$ be any fixed integer multiple of the lengths of the vertical streets of~$\PPP$.

We adopt the following convention.

A typical square face of~$\PPP$, on the $i$-th horizontal street and $j$-th vertical street, is denoted by~$S_{i,j}$.
For any fixed~$i$, there is a finite set $J_i$ of indices $j$ such that $j\in J_i$ if and only if $S_{i,j}\subset\PPP$, so that
\begin{displaymath}
\bigcup_{j\in J_i}S_{i,j}
\end{displaymath}
is the $i$-th horizontal street of~$\PPP$.
For any fixed~$j$, there is a finite set $I_j$ of indices $i$ such that $i\in I_j$ if and only if $S_{i,j}\subset\PPP$, so that
\begin{displaymath}
\bigcup_{i\in I_j}S_{i,j}
\end{displaymath}
is the $j$-th vertical street of~$\PPP$.

\begin{remark}
Our convention here assumes that if a horizontal street and a vertical street of a polysquare translation surface intersect, then the intersection is a unique square face.
This is an over-simplification of the general situation, as it is in fact possible for two or more distinct square faces of a polysquare translation surface to lie on the same horizontal street and the same vertical street simultaneously, so that the notation $S_{i,j}$ does not allow us to distinguish between two distinct square faces that lie on the $i$-th horizontal street and the $j$-th vertical street.
However, our results remain valid in the general situation, but we have chosen to adopt the present simplification and convention here as the notation is somewhat simpler and more convenient for us to study many of the polysquare translation surfaces of interest.

Indeed, the first occasion in this paper where our notation becomes inadequate is at the end of Section~\ref{sec8.2} when we discuss the regular octagon surface, and we shall deal with the problem on a case by case basis when it arises.

Nevertheless, we make some further comments on the general case after the proof of Theorem~\ref{thm7.2.2}.
\end{remark}

We consider two types of almost vertical units in $S_{i,j}$.
Both types start from the bottom edge of~$S_{i,j}$.
However, type $\uparrow_{i,j}$ ends on the top edge of~$S_{i,j}$, whereas type $\nuparrow_{i,j}$ exits $S_{i,j}$ through the right edge.
We also consider two types of almost horizontal units in $S_{i,j}$.
Both types end on the right edge of~$S_{i,j}$.
However, type $\rightarrow_{i,j}$ starts from the left edge of~$S_{i,j}$, whereas type $\nrightarrow_{i,j}$ enters $S_{i,j}$ through the bottom edge.

\begin{displaymath}
\begin{array}{c}
\includegraphics[scale=0.8]{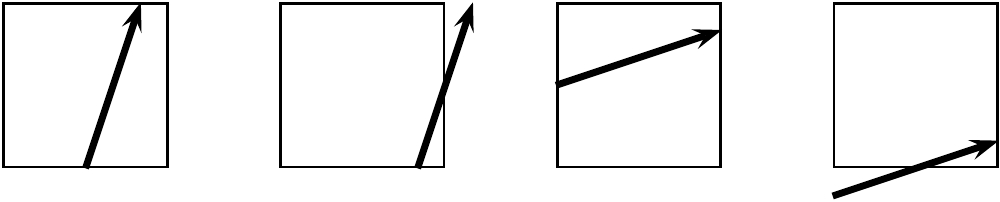}
\vspace{3pt}\\
\mbox{Figure 7.2.1: the square face $S_{i,j}$, with units of types $\uparrow_{i,j},\nuparrow_{i,j},\rightarrow_{i,j},\nrightarrow_{i,j}$}
\end{array}
\end{displaymath}

We now consider a $1$-direction almost vertical geodesic $V_0$ on~$\PPP$, starting at a vertex of $\PPP$ and with slope
\begin{equation}\label{eq7.2.1}
\alpha=[n;m,n,m,\ldots]=n+\frac{1}{m+\frac{1}{n+\frac{1}{m+\cdots}}},
\end{equation}
so that $n\le\alpha<n+1$.
The geodesic $V_0$ is made up of almost vertical units of type $\uparrow_{i,j}$ and $\nuparrow_{i,j}$, all with slope~$\alpha$.
Thus we may consider a vector space~$W$, with basis
\begin{equation}\label{eq7.2.2}
\WWWW=\{\uparrow_{i,j},\nuparrow_{i,j}:S_{i,j}\subset\PPP\}.
\end{equation}
The shortline of $V_0$ is a $1$-direction almost horizontal geodesic $H_0$ on~$\PPP$, starting at the same vertex of $\PPP$ and with
slope~$\alpha_1^{-1}$, where
\begin{displaymath}
\alpha_1=[m;n,m,\ldots]=m+\frac{1}{n+\frac{1}{m+\cdots}},
\end{displaymath}
so that $m\le\alpha_1<m+1$.
The geodesic $H_0$ is made up of almost horizontal units of type $\rightarrow_{i,j}$ and $\nrightarrow_{i,j}$, all with slope~$\alpha_1^{-1}$.
Thus we may consider a vector space~$W'$, with basis
\begin{displaymath}
\WWWW'=\{\rightarrow_{i,j},\nrightarrow_{i,j}:S_{i,j}\subset\PPP\}.
\end{displaymath}
The shortline of $H_0$ is the original $1$-direction almost vertical geodesic $V_0$ on~$\PPP$, with slope~$\alpha$.
Thus we return to the vector space~$W$, with basis~$\WWWW$.

\begin{remark}
Note that $V_0$ and $H_0$ are mutual shortlines.
Note the special property of~$\alpha$, that its continued fraction expansion has period~$2$.
\end{remark}

Our task is to start with $\WWWW$ and identify the $2$-step ancestors in $\WWWW$ of each of the almost vertical units.
Let $\ANC$ denote one step of this ancestor process.

\emph{Step~1}.\ We consider each of the almost vertical units in~$\WWWW$, and determine its ancestors in $\WWWW'$ using the delete end rule, in the same spirit as in \eqref{eq7.1.3}.
This leads to a transition matrix~$M_1$, where each row captures the information concerning the ancestors of a particular almost vertical unit, with the columns displaying the multiplicities of the individual ancestor almost horizontal units.
Indeed, $M_1^T$ is the matrix for this ancestor process $W\to W'$, where the coefficient vectors are taken as column vectors.
Note that $M_1^T$ is a $2d\times2d$ matrix.

Thus for every $i_0,j_0$ such that $S_{i_0,j_0}\subset\PPP$, using the delete end rule, we have the ancestor relationships
\begin{align}
\ANC(\{\uparrow_{i_0,j_0}\})
&
=\{\nrightarrow_{i_0,j_0}\}
\cup\{\rightarrow_{i_0,j}:j\in J_{i_0}^*\}
\setminus\{\rightarrow_{i_0,j_0}\},
\label{eq7.2.3}
\\
\ANC(\{\nuparrow_{i_0,j_0}\})
&
=\{\nrightarrow_{i_0,j_0}\}
\cup\{\rightarrow_{i_0,j}:j\in J_{i_0}^*\},
\label{eq7.2.4}
\end{align}
where the $*$ in $J_{i_0}^*$ denotes that each edge with indices $i_0,j$ with $j\in J_{i_0}$ is counted with multiplicity $m\vert J_{i_0}\vert^{-1}$, where $\vert J_{i_0}\vert$ denotes the number of distinct elements of the set~$J_{i_0}$.
We also adopt the convention that unions and complements are taken with appropriate multiplicities.

\emph{Step~2}.\ We consider each of the almost horizontal units in~$\WWWW'$, and determine its ancestors in $\WWWW$ using the keep end rule, so not quite in the same spirit as in \eqref{eq7.1.4}.
This leads to a transition matrix~$M_2$, where each row captures the information concerning the ancestors of a particular almost horizontal unit, with the columns displaying the multiplicities of the individual ancestor almost vertical units.
Indeed, $M_2^T$ is the matrix for this ancestor process $W'\to W$, where the coefficient vectors are taken as column vectors.
Note that $M_2^T$ is a $2d\times2d$ matrix.

Thus for every $i_1,j_1$ such that $S_{i_1,j_1}\subset\PPP$, using the keep end rule, we have the ancestor relationships
\begin{align}
\ANC(\{\rightarrow_{i_1,j_1}\})
&
=\{\nuparrow_{i_1,j_1}\}
\cup\{\uparrow_{i,j_1}:i\in I_{j_1}^*\}
\setminus\{\uparrow_{i_1,j_1}\},
\label{eq7.2.5}
\\
\ANC(\{\nrightarrow_{i_1,j_1}\})
&
=\{\nuparrow_{i_1,j_1}\}
\cup\{\uparrow_{i,j_1}:i\in I_{j_1}^*\},
\label{eq7.2.6}
\end{align}
where the $*$ in $I_{j_1}^*$ denotes that each edge of $I_{j_1}$ is counted with multiplicity $n\vert I_{j_1}\vert^{-1}$, where $\vert I_{j_1}\vert$ denotes the number of distinct elements of the set~$I_{j_1}$.
We also adopt the convention that unions and complements are taken with appropriate multiplicities.

\emph{Step~3}.\ Finally, we combine the two steps and end up with a $2$-step transition matrix $\bfA=(M_1M_2)^T$, of size $2d\times2d$, in the same spirit as in \eqref{eq7.1.5}.
Thus combining the ancestor relationships \eqref{eq7.2.3}--\eqref{eq7.2.6}, we deduce that
\begin{align}
&
\ANC(\ANC(\{\uparrow_{i_0,j_0}\}))
\nonumber
\\
&\quad
=\left(
\{\nuparrow_{i_0,j_0}\}
\cup\{\uparrow_{i,j_0}:i\in I_{j_0}^*\}
\right)
\nonumber
\\
&\quad\qquad
\cup\left(
\{\nuparrow_{i_0,j}:j\in J_{i_0}^*\}
\cup\{\uparrow_{i,j}:j\in J_{i_0}^*,i\in I_j^*\}
\setminus\{\uparrow_{i_0,j}:j\in J_{i_0}^*\}
\right)
\nonumber
\\
&\quad\qquad
\setminus\left(
\{\nuparrow_{i_0,j_0}\}
\cup\{\uparrow_{i,j_0}:i\in I_{j_0}^*\}
\setminus\{\uparrow_{i_0,j_0}\}
\right)
\nonumber
\\
&\quad
=\{\uparrow_{i_0,j_0}\}
\cup\{\uparrow_{i,j}:j\in J_{i_0}^*,i\in I_j^*\}
\cup\{\nuparrow_{i_0,j}:j\in J_{i_0}^*\}
\setminus\{\uparrow_{i_0,j}:j\in J_{i_0}^*\},
\label{eq7.2.7}
\\
&
\ANC(\ANC(\{\nuparrow_{i_0,j_0}\}))
\nonumber
\\
&\quad
=\left(
\{\nuparrow_{i_0,j_0}\}
\cup\{\uparrow_{i,j_0}:i\in I_{j_0}^*\}
\right)
\nonumber
\\
&\quad\qquad
\cup\left(
\{\nuparrow_{i_0,j}:j\in J_{i_0}^*\}
\cup\{\uparrow_{i,j}:j\in J_{i_0}^*,i\in I_j^*\}
\setminus\{\uparrow_{i_0,j}:j\in J_{i_0}^*\}
\right)
\nonumber
\\
&\quad
=\{\nuparrow_{i_0,j_0}\}\cup\{\uparrow_{i,j}:j\in J_{i_0}^*,i\in I_j^*\}
\nonumber
\\
&\quad\qquad
\cup\{\uparrow_{i,j_0}:i\in I_{j_0}^*\}
\cup\{\nuparrow_{i_0,j}:j\in J_{i_0}^*\}
\setminus\{\uparrow_{i_0,j}:j\in J_{i_0}^*\}.
\label{eq7.2.8}
\end{align}

\begin{remark}
We choose this particular convention regarding the delete end rule and keep end rule, as it makes the matrix reduction particularly simple.
For instance, if we use this particular convention for the surface~$S_2$, then in the $2$-step transition matrix~$\bfA$, the $14$ eigenvalues of absolute value $1$ are all the same and equal to~$1$.
Indeed, we shall give a good reason later, when we discuss a simpler approach to the process, why they are all equal to~$1$.
\end{remark}

Suppose that $\lambda_1,\ldots,\lambda_s$ are the eigenvalues of the $2$-step transition matrix~$\bfA$, with multiplicities $d_1,\ldots,d_s$ respectively, and where $\vert\lambda_1\vert\ge\ldots\ge\vert\lambda_s\vert$.
Then clearly $d_1+\ldots+d_s=2d$.
Furthermore, the space $\Cc^{2d}$ can be decomposed into a direct sum
\begin{displaymath}
\Cc^{2d}=W_1\oplus\ldots\oplus W_s,
\end{displaymath}
where each $W_i$, $i=1,\ldots,s$, is an $\bfA$-invariant subspace of $\Cc^{2d}$ and also contains an eigenvector $\Psi_i$ corresponding to the
eigenvalue~$\lambda_i$.
If $d_i=1$, then $\Psi_i$ generates $W_i$ and gives rise to a basis of~$W_i$.
If $d_i>1$, then we can find a basis $\Psi_{i,j}$, $j=1,\ldots,d_i$, of~$W_i$, with $\Psi_i=\Psi_{i,1}$.
Thus the collection
\begin{displaymath}
\Psi_{i,j},
\quad
i=1,\ldots,s,
\quad
j=1,\ldots,d_i,
\end{displaymath}
gives rise to a basis of~$\Cc^{2d}$.

The almost vertical geodesic $V_0$ starts at a vertex of~$\PPP$, and it starts with a finite succession of almost vertical units.
Let $\bfw_0$ denote the column coefficient vector of this finite collection of units with respect to the basis~$\WWWW$.
Then we can find coefficients $c_{i,j}\in\Cc$, $i=1,\ldots,s$, $j=1,\ldots,d_i$, such that
\begin{equation}\label{eq7.2.9}
\bfw_0=\sum_{i=1}^s\sum_{j=1}^{d_i}c_{i,j}\Psi_{i,j}.
\end{equation}
Suppose that $\bfw_r=\bfA^r\bfw_0$.
Then
\begin{equation}\label{eq7.2.10}
\bfw_r=\sum_{i=1}^s\sum_{j=1}^{d_i}c_{i,j}\bfA^r\Psi_{i,j}.
\end{equation}

We shall consider the special case where for every eigenvalue $\lambda_i$ of $\bfA$ with $\vert\lambda_i\vert>1$, the $\bfA$-invariant subspace $W_i$ has a basis consisting entirely of eigenvectors of $\bfA$ with eigenvalue~$\lambda_i$.
Let $\lambda_i$, $i=1,\ldots,s_0$, be these eigenvalues.
Then
\begin{displaymath}
\Psi_{i,j},
\quad
i=1,\ldots,s_0,
\quad
j=1,\ldots,d_i,
\end{displaymath}
are all eigenvectors of~$\bfA$.
Furthermore, $\vert\lambda_i\vert\le1$, $i=s_0+1,\ldots,s$.

In this case, we have
\begin{equation}\label{eq7.2.11}
\bfw_r=\sum_{i=1}^{s_0}\sum_{j=1}^{d_i}c_{i,j}\lambda_i^r\Psi_{i,j}+O(r^{D-1}),
\end{equation}
where $D=\max\{d_1,\ldots,d_s\}$, and the error term $O(r^{D-1})$ gives an upper bound on the absolute value of the coordinates of the missing vectors.
To see this, note that the matrix $\bfA$ is similar to a matrix of the form
\begin{displaymath}
\begin{pmatrix}
J(\tau_1,e_1)\\
&\ddots\\
&&J(\tau_u,e_u)
\end{pmatrix},
\end{displaymath}
where a typical Jordan block $J=J(\tau,e)$ is an $e\times e$ matrix of the form
\begin{displaymath}
J(\tau,e)=\begin{pmatrix}
\tau&1\\
&\ddots&\ddots\\
&&\ddots&1\\
&&&\tau
\end{pmatrix},
\end{displaymath}
where every entry on the diagonal is equal to an eigenvalue $\tau$ of~$\bfA$, every entry on the superdiagonal is equal to~$1$, and every other entry is equal to~$0$.

The contribution to the error term in \eqref{eq7.2.11} comes from those eigenvalues of $\bfA$ with absolute value at most~$1$.
These are related to Jordan blocks $J=J(\tau,e)$ with $\vert\tau\vert\le1$ and $e\le D$.
The matrix~$J^r$ is $e\times e$ and upper-triangular, and the entry on row $i$ and column~$j$, with $j\ge i$, is given by
\begin{displaymath}
\binom{r}{j-i}\tau^{r-(j-i)},
\end{displaymath}
where the binomial coefficient is equal to $0$ if $j-i>r$.
The bound $O(r^{D-1})$ follows on observing that
\begin{displaymath}
\binom{r}{j-i}\le r^{e-1}\le r^{D-1}
\quad\mbox{and}\quad
\vert\tau^{r-(j-i)}\vert\le1.
\end{displaymath}

\begin{iremark}
It is crucial to bring those eigenvalues with the largest absolute values into play.
The coefficient in \eqref{eq7.2.11} of the eigenvector corresponding to the largest eigenvalue is non-zero, since the left hand side of \eqref{eq7.2.11} contains terms of order of magnitude~$\lambda_1^r$.
For the coefficient of the eigenvector corresponding to the second largest eigenvalue, this is less obvious.
The trick here is to start with an almost vertical unit from the chosen vertex.
If the coefficient corresponding to the second eigenvector is non-zero, then we have a \textit{good} initial vector~$\bfw_0$.
If the coefficient is zero, then we add the next unit, and keep on doing so but stop as soon as we get a non-zero coefficient.
Take this as our starting succession of almost vertical units, and use the corresponding good~$\bfw_0$.
This process must stop after bounded time, depending only on $\PPP$ and the slope~$\alpha$, as soon every basis element in $\WWWW$ will come into play.
Of the $2d$ types of almost vertical units, it is clear that at least one of them gives rise to a non-zero coefficient.
\end{iremark}
\end{part0}

\begin{part1}
The above method in rather laborious.
Finding the eigenvalues and eigenvectors of the $2$-step transition matrix~$\bfA$, of size $2d\times2d$, is not a very pleasant task.

It is clear that the expression \eqref{eq7.2.11} is dominated by the terms arising from those eigenvalues $\lambda_i$, $i=1,\ldots,s$, with absolute values exceeding~$1$.
We call these the relevant eigenvalues, as the remaining eigenvalues make little to no contribution.
We next seek a simpler way of finding these relevant eigenvalues and their corresponding eigenvectors.

For any set $\SSSS$ of almost vertical units in~$\PPP$, counted with multiplicity, let $[\SSSS]$ denote the column coefficient vector of $\SSSS$ with respect to the
basis~$\WWWW$.
Note that $[\SSSS]\in\Cc^{2d}$, where $d$ is the number of square faces of~$\PPP$.

As the matrix $\bfA$ is the transition matrix of the $2$-step ancestor process with respect to the basis~$\WWWW$, \eqref{eq7.2.7} and
\eqref{eq7.2.8} can be rewritten in the form
\begin{align}
(\bfA-I)[\{\uparrow_{i_0,j_0}\}]
&
=[\{\uparrow_{i,j}:j\in J_{i_0}^*,i\in I_j^*\}]
\nonumber
\\
&\qquad
+[\{\nuparrow_{i_0,j}:j\in J_{i_0}^*\}]
-[\{\uparrow_{i_0,j}:j\in J_{i_0}^*\}],
\label{eq7.2.12}
\\
(\bfA-I)[\{\nuparrow_{i_0,j_0}\}]
&
=[\{\uparrow_{i,j}:j\in J_{i_0}^*,i\in I_j^*\}]
+[\{\uparrow_{i,j_0}:i\in I_{j_0}^*\}]
\nonumber
\\
&\qquad
+[\{\nuparrow_{i_0,j}:j\in J_{i_0}^*\}]
-[\{\uparrow_{i_0,j}:j\in J_{i_0}^*\}].
\label{eq7.2.13}
\end{align}
Write
\begin{align}
\bfu_{i_0}
&=[\{\uparrow_{i,j}:j\in J_{i_0}^*,i\in I_j^*\}],
\label{eq7.2.14}
\\
\bfv_{i_0}
&=[\{\nuparrow_{i_0,j}:j\in J_{i_0}^*\}]-[\{\uparrow_{i_0,j}:j\in J_{i_0}^*\}],
\label{eq7.2.15}
\\
\bfz_{j_0}
&=[\{\uparrow_{i,j_0}:i\in I_{j_0}^*\}].
\label{eq7.2.16}
\end{align}
Then $\bfu_{i_0},\bfv_{i_0},\bfz_{j_0}\in\Cc^{2d}$, and \eqref{eq7.2.12} and \eqref{eq7.2.13} can be expressed in the form
\begin{align}
(\bfA-I)[\{\uparrow_{i_0,j_0}\}]
&=\bfu_{i_0}+\bfv_{i_0},
\label{eq7.2.17}
\\
(\bfA-I)[\{\nuparrow_{i_0,j_0}\}]
&=\bfu_{i_0}+\bfv_{i_0}+\bfz_{j_0}.
\label{eq7.2.18}
\end{align}

Motivated by the equations \eqref{eq7.2.17} and \eqref{eq7.2.18}, we consider $\image(\bfA-I)$, the image of the matrix $\bfA-I$ in $\Cc^{2d}$.
Note that this is an $\bfA$-invariant subspace of~$\Cc^{2d}$.
Indeed, for every vector $\bfw\in\Cc^{2d}$, we have
\begin{displaymath}
\bfA((\bfA-I)\bfw)
=(\bfA^2-\bfA)\bfw
=(\bfA-I)(\bfA\bfw)
=(\bfA-I)\bfw'
\end{displaymath}
where $\bfw'\in\Cc^{2d}$ satisfies $\bfw'=\bfA\bfw$.
It is clear from the equations \eqref{eq7.2.17} and \eqref{eq7.2.18} that the subspace $\image(\bfA-I)$ is generated by the elements of the form
\begin{equation}\label{eq7.2.19}
\bfu_{i_0},\bfv_{i_0},\bfz_{j_0}\in\image(\bfA-I)\subset\Cc^{2d}.
\end{equation}

\begin{lem}\label{lem7.2.1}
All eigenvectors $\lambda\ne1$ of $\bfA$ belong to the subspace $\image(\bfA-I)\subset\Cc^{2d}$.
\end{lem}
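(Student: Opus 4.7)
The plan is to observe that this is essentially an immediate consequence of the definition of eigenvector, combined with the hypothesis $\lambda \neq 1$. Given any eigenvector $\bfv \in \Cc^{2d}$ of $\bfA$ with eigenvalue $\lambda$, I would simply compute
\begin{equation*}
(\bfA - I)\bfv = \bfA\bfv - \bfv = \lambda\bfv - \bfv = (\lambda - 1)\bfv.
\end{equation*}
This already witnesses $(\lambda-1)\bfv \in \image(\bfA - I)$, but since $\bfv$ itself is what we want in the image, the proof hinges on being able to divide by $\lambda - 1$.

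The hypothesis $\lambda \neq 1$ makes the scalar $\lambda - 1$ nonzero, and since $\image(\bfA - I)$ is a linear subspace of $\Cc^{2d}$, it is closed under multiplication by the scalar $(\lambda - 1)^{-1}$. Thus I conclude
\begin{equation*}
\bfv = \frac{1}{\lambda - 1}(\bfA - I)\bfv \in \image(\bfA - I),
\end{equation*}
as required.

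There is no real obstacle here; the content of the lemma is purely formal. The interesting point, which is why this lemma is being isolated for later use, is the contrast between eigenvectors with $\lambda \neq 1$ and those with $\lambda = 1$. In view of the earlier remark that the $2$-step transition matrix $\bfA$ produced by the chosen delete-end/keep-end convention has all of its unit-modulus eigenvalues equal to $1$, this lemma implies that every \emph{relevant} eigenvector (those with $|\lambda| > 1$), as well as every eigenvector with $|\lambda| < 1$, lies in $\image(\bfA - I)$. Combined with the generators \eqref{eq7.2.19} for that image arising from \eqref{eq7.2.17}--\eqref{eq7.2.18}, this is what will allow the reduction from the $2d \times 2d$ matrix $\bfA$ to the much smaller street-spreading matrix in the subsequent analysis.
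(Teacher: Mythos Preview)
Your proof is correct and is essentially identical to the paper's own proof: both compute $(\bfA-I)\bfv=(\lambda-1)\bfv$ and use $\lambda\ne1$ to conclude $\bfv\in\image(\bfA-I)$. The additional commentary you give about the role of the lemma is accurate and matches how it is used subsequently.
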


\begin{proof}
Suppose that $\bfw\in\Cc^{2d}$ satisfies $\bfA\bfw=\lambda\bfw$ with $\lambda\ne1$.
Then
\begin{displaymath}
(\lambda-1)\bfw=(\bfA-I)\bfw\in\image(\bfA-I),
\end{displaymath}
so that $\bfw\in\image(\bfA-I)$.
\end{proof}

\begin{thm}[``edge cutting lemma'']\label{thm7.2.1}
Let $\PPP$ be a finite polysquare translation surface.
Let $m$ be any fixed integer multiple of the lengths of the horizontal streets of~$\PPP$, and let $n$ be any fixed integer multiple of the lengths of the vertical streets of~$\PPP$.
Consider a $1$-direction almost vertical geodesic $V_0$ on~$\PPP$, with slope $\alpha$ given by \eqref{eq7.2.1}.
Let $\bfA$ denote the transition matrix of the $2$-step ancestor process with respect to the basis~$\WWWW$, where $\WWWW$ is given by \eqref{eq7.2.2}.
Then the following hold:

\emph{(i)}
For any two distinct almost vertical units $\nuparrow_{i_0,j_1}$ in a square face $S_{i_0,j_1}$ and $\nuparrow_{i_0,j_2}$ in a square face $S_{i_0,j_2}$ on the same horizontal street of $\PPP$ and every eigenvector corresponding to a relevant eigenvalue of~$\bfA$, the two entries in the eigenvector corresponding to $\nuparrow_{i_0,j_1}$ and $\nuparrow_{i_0,j_2}$ are equal.

\emph{(ii)}
For any two distinct pairs of almost vertical units $\uparrow_{i_1,j_0},\nuparrow_{i_1,j_0}$ in a square face $S_{i_1,j_0}$ and
$\uparrow_{i_2,j_0},\nuparrow_{i_2,j_0}$ in a square face $S_{i_2,j_0}$ on the same vertical street of $\PPP$ and every eigenvector corresponding to a relevant eigenvalue of~$\bfA$, the sum of the two entries in the eigenvector corresponding to $\uparrow_{i_1,j_0},\nuparrow_{i_1,j_0}$ in the square face $S_{i_1,j_0}$ is equal to the sum of the two entries in the eigenvector corresponding to $\uparrow_{i_2,j_0},\nuparrow_{i_2,j_0}$ in the square face $S_{i_2,j_0}$.
\end{thm}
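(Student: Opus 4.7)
The plan is to reduce both claims to linear statements about the subspace $\image(\bfA-I)\subset\Cc^{2d}$, which by Lemma~\ref{lem7.2.1} contains every eigenvector whose eigenvalue differs from~$1$. Since every relevant eigenvalue satisfies $|\lambda|>1$, and hence $\lambda\ne1$, every relevant eigenvector lies in $\image(\bfA-I)$, and it suffices to verify that both symmetries (i) and (ii) hold for arbitrary elements of that image.

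To exploit this, I would use \eqref{eq7.2.17}--\eqref{eq7.2.18}, which express $(\bfA-I)[\{\uparrow_{i_0,j_0}\}]$ and $(\bfA-I)[\{\nuparrow_{i_0,j_0}\}]$ as explicit combinations of the three families of vectors $\bfu_{i_0}$, $\bfv_{i_0}$, $\bfz_{j_0}$ defined in \eqref{eq7.2.14}--\eqref{eq7.2.16}, and therefore show that these three families span $\image(\bfA-I)$. Each of (i) and (ii) asserts the vanishing of a linear functional on $\Cc^{2d}$: (i) asks that the difference of the $\nuparrow_{i_0,j_1}$- and $\nuparrow_{i_0,j_2}$-coordinates vanish, and (ii) asks the same for the sum of the $\uparrow_{i_1,j_0}$- and $\nuparrow_{i_1,j_0}$-coordinates minus the sum of the $\uparrow_{i_2,j_0}$- and $\nuparrow_{i_2,j_0}$-coordinates. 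By linearity it is enough to check each functional against each generator $\bfu_{i_0'}$, $\bfv_{i_0'}$, $\bfz_{j_0'}$ separately.

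The verification is then a direct computation from the definitions \eqref{eq7.2.14}--\eqref{eq7.2.16} together with the $*$-multiplicity convention. For~(i), the vectors $\bfu_{i_0'}$ and $\bfz_{j_0'}$ carry no mass on any $\nuparrow$-unit and are annihilated trivially, while $\bfv_{i_0'}$ assigns the common value $m/|J_{i_0'}|$ to every $\nuparrow_{i_0',j}$ with $j\in J_{i_0'}$ and $0$ otherwise, so the $\nuparrow_{i_0,j_1}$- and $\nuparrow_{i_0,j_2}$-coordinates coincide in each generator since $j_1,j_2\in J_{i_0}$ by hypothesis. For~(ii), $\bfu_{i_0'}$ assigns the value $(m/|J_{i_0'}|)(n/|I_{j_0}|)$ to every $\uparrow_{i_k,j_0}$ with $i_k\in I_{j_0}$ and $j_0\in J_{i_0'}$ and places no mass on any $\nuparrow_{i_k,j_0}$, so its contribution to the sum depends only on $j_0$; the vector $\bfz_{j_0'}$ behaves analogously with value $n/|I_{j_0}|$ when $j_0'=j_0$; and in $\bfv_{i_0'}$ the coefficients of $\uparrow_{i,j_0}$ and $\nuparrow_{i,j_0}$ are exactly opposite ($\mp m/|J_{i_0'}|$ when $i=i_0'$ and $j_0\in J_{i_0'}$, and $0$ otherwise), so their sum vanishes for every $i_k\in I_{j_0}$.

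The proof is short; its substance lies in the choice of invariant subspace, because once $\image(\bfA-I)$ has been identified as the right object, both street-averaging symmetries fall out from a mechanical inspection of the three generator types. The main subtlety I would watch for is the $*$-bookkeeping and the degenerate situation flagged earlier, in which two distinct square faces of~$\PPP$ share both a horizontal and a vertical street simultaneously, so that the pair $(i,j)$ no longer uniquely determines a face; this does not affect the linear-algebraic skeleton but requires first distinguishing the coincident faces by additional labels before applying the generator formulas.
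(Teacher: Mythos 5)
Your proposal is correct and follows essentially the same route as the paper: restrict to $\image(\bfA-I)$ via Lemma~\ref{lem7.2.1}, identify $\bfu_{i_0},\bfv_{i_0},\bfz_{j_0}$ as generators of that image, and verify the two linear functionals on each generator type. Where the paper invokes informal language (``cyclic vertical symmetry,'' ``cyclic horizontal symmetry''), you supply the explicit $*$-multiplicity computations; this is the same argument spelled out in more detail.
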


\begin{proof}
In view of Lemma~\ref{lem7.2.1}, it is enough to concentrate our attention on the subspace $\image(\bfA-I)$.
Since this subspace is generated by the vectors in \eqref{eq7.2.19}, it suffices to check that these generating vectors all satisfy (i) and~(ii).
Our argument is reduced to a simple case study.

Note that $\bfu_{i_0}$ does not involve units of type $\nuparrow_{i,j}$, so (i) holds by default.
On the other hand, $\bfu_{i_0}$ satisfies~(ii), due to cyclic vertical symmetry.
A similar argument applies to~$\bfz_{j_0}$.

On the other hand, $\bfv_{i_0}$ satisfies~(i), due to cyclic horizontal symmetry.
Clearly it also satisfies~(ii), as the count for each $\nuparrow_{i_0,j}$ cancels the count for~$\uparrow_{i_0,j}$.

This completes the proof of the edge cutting lemma.
\end{proof}
\end{part1}

\begin{part2}
Let $h=h(\PPP)$ and $v=v(\PPP)$ denote respectively the number of horizontal and vertical streets in~$\PPP$.

Let $\VVV$ be the subspace of $\Cc^{2d}$ generated by $\bfu_i,\bfv_i$, $1\le i\le h$, defined in \eqref{eq7.2.14} and \eqref{eq7.2.15}, so that the dimension of $\VVV$ is at most~$2h$.

\begin{lem}\label{lem7.2.2}
We have $\image((\bfA-I)^2)\subset\VVV$.
\end{lem}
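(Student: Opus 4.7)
The plan is to exploit the factorization $(\bfA-I)^2 = (\bfA-I)\circ(\bfA-I)$ together with the explicit description of $\image(\bfA-I)$ already recorded in \eqref{eq7.2.19}. Since
\[
\image((\bfA-I)^2) = (\bfA-I)\bigl(\image(\bfA-I)\bigr),
\]
and $\image(\bfA-I)$ is generated by $\bfu_{i_0}$, $\bfv_{i_0}$, $\bfz_{j_0}$ as $i_0, j_0$ range over the admissible indices, the task reduces to verifying that each of $(\bfA-I)\bfu_{i_0}$, $(\bfA-I)\bfv_{i_0}$ and $(\bfA-I)\bfz_{j_0}$ lies in $\VVV$.

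I would dispatch the two easier generators first. Expanding $\bfu_{i_0}$ and $\bfz_{j_0}$ via \eqref{eq7.2.14} and \eqref{eq7.2.16} as formal sums of basis vectors $[\{\uparrow_{i,j}\}]$ and then applying \eqref{eq7.2.17} termwise yields
\[
(\bfA-I)\bfu_{i_0} = \sum_{\substack{j \in J_{i_0}^* \\ i \in I_j^*}}(\bfu_i + \bfv_i), \qquad (\bfA-I)\bfz_{j_0} = \sum_{i \in I_{j_0}^*}(\bfu_i + \bfv_i),
\]
and both right-hand sides lie in $\VVV$ by the very definition of $\VVV$.

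The subtle case, which I expect to be the main obstacle, is the image of $\bfv_{i_0}$. Subtracting \eqref{eq7.2.17} from \eqref{eq7.2.18} gives the clean identity $(\bfA-I)([\{\nuparrow_{i_0,j}\}] - [\{\uparrow_{i_0,j}\}]) = \bfz_j$, since the $\bfu_{i_0} + \bfv_{i_0}$ contribution cancels. Summing over $j \in J_{i_0}^*$ and invoking \eqref{eq7.2.15} then produces
\[
(\bfA-I)\bfv_{i_0} = \sum_{j \in J_{i_0}^*} \bfz_j,
\]
whose right-hand side is not manifestly inside $\VVV$, because $\bfz_j$ is not itself a generator of $\VVV$. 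The observation that rescues the argument is a direct rearrangement of definitions: by \eqref{eq7.2.14} and \eqref{eq7.2.16},
\[
\sum_{j \in J_{i_0}^*} \bfz_j = \sum_{j \in J_{i_0}^*}\sum_{i \in I_j^*}[\{\uparrow_{i,j}\}] = [\{\uparrow_{i,j} : j \in J_{i_0}^*,\, i \in I_j^*\}] = \bfu_{i_0} \in \VVV,
\]
so after all the image does land in $\VVV$. The one routine point to check is that the starred multiplicities $m|J_{i_0}|^{-1}$ and $n|I_j|^{-1}$ on the two levels of summation compose consistently; once that bookkeeping is confirmed, combining the three cases completes the proof.
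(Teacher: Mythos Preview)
Your proof is correct and follows essentially the same route as the paper: reduce to the three generators via the factorization $\image((\bfA-I)^2)=(\bfA-I)(\image(\bfA-I))$, dispose of $\bfu_{i_0}$ and $\bfz_{j_0}$ using \eqref{eq7.2.17} since they involve only $\uparrow$-type units, and handle $\bfv_{i_0}$ by computing $(\bfA-I)\bfv_{i_0}=\sum_{j\in J_{i_0}^*}\bfz_j=\bfu_{i_0}$, which is exactly the paper's identity \eqref{eq7.2.21}. Your write-up is slightly more explicit in giving the full expansions of $(\bfA-I)\bfu_{i_0}$ and $(\bfA-I)\bfz_{j_0}$, but the substance is identical.
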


\begin{proof}
Since $\image((\bfA-I)^2)=(\bfA-I)(\image(\bfA-I))$, and the subspace ${\rm Im}(\bfA-I)$ is generated by the vectors in \eqref{eq7.2.19}, it suffices to show that
\begin{equation}\label{eq7.2.20}
(\bfA-I)\bfu_{i_0}\in\VVV,
\quad
(\bfA-I)\bfv_{i_0}\in\VVV,
\quad
(\bfA-I)\bfz_{j_0}\in\VVV.
\end{equation}
From \eqref{eq7.2.14} and \eqref{eq7.2.16}, it is clear that both $\bfu_{i_0}$ and $\bfz_{j_0}$ involve only the almost vertical units of various types
$\uparrow_{i,j}$, so it follows from \eqref{eq7.2.17} that the first and third statements in \eqref{eq7.2.20} hold.
On the other hand, it follows from \eqref{eq7.2.14}--\eqref{eq7.2.18} that
\begin{equation}\label{eq7.2.21}
(\bfA-I)\bfv_{i_0}
=\sum_{j\in J_{i_0}^*}\bfz_j
=[\{\uparrow_{i,j}:i\in I_j^*,j\in J_{i_0}^*\}]
=\bfu_{i_0},
\end{equation}
and the second statement in \eqref{eq7.2.20} follows immediately.
\end{proof}

By definition $\VVV$ is a subspace of $\image(\bfA-I)$.
Thus the following lemma is an extension of Lemma~\ref{lem7.2.1}.

\begin{lem}\label{lem7.2.3}
The vector space $\VVV$ is an $\bfA$-invariant subspace of~$\Cc^{2d}$, and contains the eigenvectors corresponding to each of the eigenvalues
$\lambda\ne1$ of~$\bfA$.
\end{lem}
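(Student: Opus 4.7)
The plan is to deduce both assertions of Lemma~\ref{lem7.2.3} directly from Lemma~\ref{lem7.2.2}, treating that lemma as the real engine of the argument.

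For $\bfA$-invariance, since $\VVV$ is generated by the vectors $\bfu_{i_0}$ and $\bfv_{i_0}$, it suffices to check that $\bfA\bfu_{i_0}\in\VVV$ and $\bfA\bfv_{i_0}\in\VVV$ for every relevant~$i_0$. The key observation is the trivial identity
\begin{displaymath}
\bfA\bfw=(\bfA-I)\bfw+\bfw,
\end{displaymath}
so it is enough to show that $(\bfA-I)\bfu_{i_0}$ and $(\bfA-I)\bfv_{i_0}$ both lie in~$\VVV$; but this is precisely the content of the first two inclusions in \eqref{eq7.2.20}, established in the proof of Lemma~\ref{lem7.2.2}. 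The vectors $\bfu_{i_0}$ and $\bfv_{i_0}$ themselves lie in $\VVV$ by definition, so $\bfA$-invariance follows immediately.

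For the eigenvector statement, suppose $\bfw\in\Cc^{2d}$ satisfies $\bfA\bfw=\lambda\bfw$ with $\lambda\ne1$. Then applying $\bfA-I$ twice yields
\begin{displaymath}
(\bfA-I)^2\bfw=(\lambda-1)^2\bfw.
\end{displaymath}
The left side belongs to $\image((\bfA-I)^2)$, which by Lemma~\ref{lem7.2.2} is contained in~$\VVV$. Since $(\lambda-1)^2\ne0$, dividing by this nonzero scalar shows $\bfw\in\VVV$, as required.

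There is no serious obstacle here: the preparatory work has already been done in Lemmas~\ref{lem7.2.1} and~\ref{lem7.2.2}, and Lemma~\ref{lem7.2.3} is essentially a two-line consequence. The only step that requires a moment's thought is noticing that one must apply $\bfA-I$ \emph{twice}, not once, to push an arbitrary non-unit eigenvector into the smaller subspace~$\VVV$; applying it once only puts the eigenvector into $\image(\bfA-I)$, which is where Lemma~\ref{lem7.2.1} leaves us and which is generally strictly larger than~$\VVV$. The squaring trick exploits the fact that $\VVV$ was designed precisely to absorb $\image((\bfA-I)^2)$.
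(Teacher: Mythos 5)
Your proof is correct and takes essentially the same approach as the paper: $\bfA$-invariance follows from the inclusions $(\bfA-I)\bfu_i, (\bfA-I)\bfv_i \in \VVV$ already established in the proof of Lemma~\ref{lem7.2.2}, and the eigenvector statement follows by applying $(\bfA-I)^2$ to an eigenvector with $\lambda\ne1$ and invoking $\image((\bfA-I)^2)\subset\VVV$. Your closing remark correctly identifies why one must pass through the square of $\bfA-I$ rather than a single power, which is the only subtlety in the argument.
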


\begin{proof}
To prove that $\VVV$ is $\bfA$-invariant, it suffices to check that
\begin{displaymath}
\bfA\bfu_i\in\VVV
\quad\mbox{and}\quad
\bfA\bfv_i\in\VVV.
\end{displaymath}
This is trivial, since we know from \eqref{eq7.2.20} that $(\bfA-I)\bfu_i\in\VVV$ and $(\bfA-I)\bfv_i\in\VVV$.
Next suppose that $\bfw\in\Cc^{2d}$ satisfies $\bfA\bfw=\lambda\bfw$ for some $\lambda\ne1$.
Then
\begin{displaymath}
(\lambda-1)^2\bfw=(\bfA-I)^2\bfw\in\image((\bfA-I)^2),
\end{displaymath}
so that $\bfw\in\image((\bfA-I)^2)\subset\VVV$, 
where in the last step we have used Lemma~\ref{lem7.2.2}.
\end{proof}

Next we study how the matrix $\bfA$ acts on~$\VVV$.
First of all, the equation \eqref{eq7.2.21} implies
\begin{equation}\label{eq7.2.22}
\bfA\bfv_i=\bfu_i+\bfv_i.
\end{equation}
We already know that $\bfA\bfu_i\in\VVV$, so that $\bfA\bfu_i$ is a linear combination of the vectors $\bfu_i,\bfv_i$, $1\le i\le h$.
Next we shall find these coefficients explicitly, but this process is much more complicated than \eqref{eq7.2.22}, as the coefficients depend heavily on the combinatorial arrangement of the square faces of the given polysquare translation surface~$\PPP$.
In fact, they depend on the given triple $(\PPP;m,n)$.
This will eventually lead us to the desired $h\times h$ street-spreading matrix $\bfS=\bfS(\PPP;m,n)$ which contains all the \textit{relevant information}.

Combining \eqref{eq7.2.14} and \eqref{eq7.2.17}, one can determine the desired coefficients by using a particular algorithm.
The abstract/formal definition of this algorithm in the general case is somewhat complicated.
As with most algorithms, the best way to learn it is to study a few concrete examples with figures.
We strongly recommend the reader to study the next two examples, after which the abstract/formal definition becomes almost self-explanatory.
\end{part2}

\begin{part3}
The algorithm can easily be summarized in a nutshell: \textit{spread the horizontal streets vertically}.
We illustrate part of the recipe by two examples.

\begin{example}\label{ex7.2.1}
Consider again the polysquare translation surface $S_2$ in Figure~7.1.1, and consider a $1$-direction geodesic starting from some vertex of $S_2$ with slope $\alpha$ given by \eqref{eq7.2.1} with $m=n=3$.

We shall number the horizontal streets from top to bottom, and the vertical streets from left to right.

Consider the first horizontal street at the top, corresponding to~$\bfu_1$.
We now use \eqref{eq7.2.14}, and since $J_1^*=\{2,3,4\}$, it follows that $\bfu_1$ concerns almost vertical units of type $\uparrow$ on the vertical streets $2,3,4$.
In other words, we start with the first horizontal street, and spread along every vertical street that intersects this horizontal street.
The picture on the left in Figure~7.2.2 shows a listing of all the almost vertical units of type $\uparrow$ along these vertical streets.
Applying \eqref{eq7.2.17}, we have
\begin{align}\label{eq7.2.23}
(\bfA-I)\bfu_1
&
=(\bfA-I)[\{\uparrow_{1,2},\uparrow_{1,3},\uparrow_{1,4}\}]
+(\bfA-I)[\{\uparrow_{2,2},\uparrow_{2,4}\}]
\nonumber
\\
&\qquad
+(\bfA-I)[\{\uparrow_{3,2},\uparrow_{3,3},\uparrow_{3,4}\}]
+(\bfA-I)[\{\uparrow_{4,3}\}]
\nonumber
\\
&
=3(\bfu_1+\bfv_1)+2(\bfu_2+\bfv_2)+3(\bfu_3+\bfv_3)+(\bfu_4+\bfv_4).
\end{align}
\begin{displaymath}
\begin{array}{c}
\includegraphics[scale=0.8]{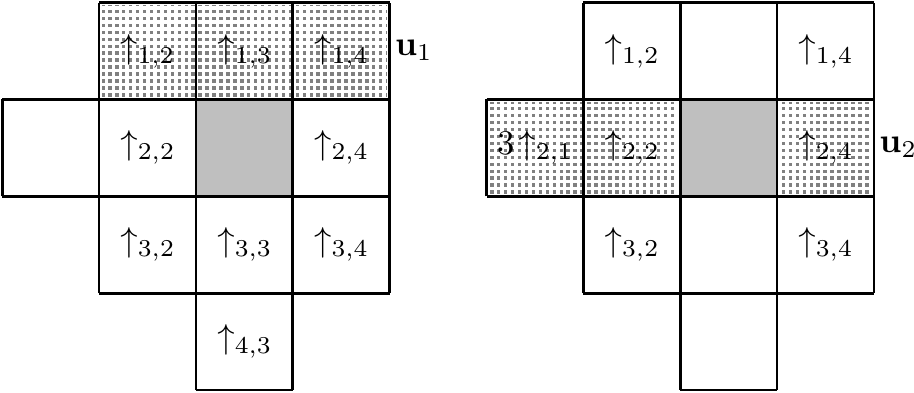}
\vspace{3pt}\\
\mbox{Figure 7.2.2: $S_2$ with $m=n=3$: almost vertical units of type $\uparrow$ in $\bfu_1,\bfu_2$}
\end{array}
\end{displaymath}

For the second horizontal street, corresponding to~$\bfu_2$, highlighted in the picture on the right in Figure~7.2.2, we have $J_2^*=\{1,2,4\}$, and a similar argument gives
\begin{align}\label{eq7.2.24}
(\bfA-I)\bfu_2
&
=(\bfA-I)[\{\uparrow_{1,2},\uparrow_{1,4}\}]
+(\bfA-I)[\{\uparrow_{2,1},\uparrow_{2,1},\uparrow_{2,1},\uparrow_{2,2},\uparrow_{2,4}\}]
\nonumber
\\
&\qquad
+(\bfA-I)[\{\uparrow_{3,2},\uparrow_{3,4}\}]
\nonumber
\\
&
=2(\bfu_1+\bfv_1)+5(\bfu_2+\bfv_2)+2(\bfu_3+\bfv_3).
\end{align}
The multiple $3$ for $\uparrow_{2,1}$ in the square face $S_{2,1}$ comes from the fact that $I_1^*=\{2,2,2\}$, since $n=3$ but the vertical street has length~$1$.

For the third horizontal street, corresponding to~$\bfu_3$, highlighted in the picture on the left in Figure~7.2.3, we have $J_3^*=\{2,3,4\}$, and a similar argument gives
\begin{align}\label{eq7.2.25}
(\bfA-I)\bfu_3
&
=(\bfA-I)[\{\uparrow_{1,2},\uparrow_{1,3},\uparrow_{1,4}\}]
+(\bfA-I)[\{\uparrow_{2,2},\uparrow_{2,4}\}]
\nonumber
\\
&\qquad
+(\bfA-I)[\{\uparrow_{3,2},\uparrow_{3,3},\uparrow_{3,4}\}]
+(\bfA-I)[\{\uparrow_{4,3}\}]
\nonumber
\\
&
=3(\bfu_1+\bfv_1)+2(\bfu_2+\bfv_2)+3(\bfu_3+\bfv_3)+(\bfu_4+\bfv_4).
\end{align}
\begin{displaymath}
\begin{array}{c}
\includegraphics[scale=0.8]{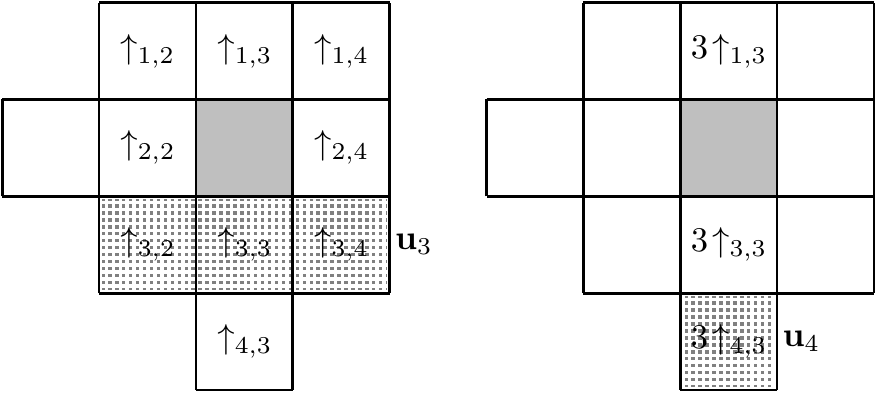}
\vspace{3pt}\\
\mbox{Figure 7.2.3: $S_2$ with $m=n=3$: almost vertical units of type $\uparrow$ in $\bfu_3,\bfu_4$}
\end{array}
\end{displaymath}

For the fourth horizontal street, corresponding to~$\bfu_4$, highlighted in the picture on the right in Figure~7.2.3, we have $J_4^*=\{3,3,3\}$, and a similar argument gives
\begin{align}\label{eq7.2.26}
(\bfA-I)\bfu_4
&
=(\bfA-I)[\{\uparrow_{1,3},\uparrow_{1,3},\uparrow_{1,3}\}]
+(\bfA-I)[\{\uparrow_{3,3},\uparrow_{3,3},\uparrow_{3,3}\}]
\nonumber
\\
&\qquad
+(\bfA-I)[\{\uparrow_{4,3},\uparrow_{4,3},\uparrow_{4,3}\}]
\nonumber
\\
&
=3(\bfu_1+\bfv_1)+3(\bfu_3+\bfv_3)+3(\bfu_4+\bfv_4).
\end{align}

We now define the street-spreading matrix by
\begin{equation}\label{eq7.2.27}
\bfS=\begin{pmatrix}
3&2&3&3\\
2&5&2&0\\
3&2&3&3\\
1&0&1&3
\end{pmatrix},
\end{equation}
where the $j$-th column comes from the coefficients of $(\bfu_i+\bfv_i)$, $i=1,2,3,4$, in the expression for $(\bfA-I)\bfu_j$, $j=1,2,3,4$, as given by
\eqref{eq7.2.23}--\eqref{eq7.2.26}.

Combining \eqref{eq7.2.22}--\eqref{eq7.2.26}, we obtain
\begin{align}
\bfA\bfu_1
&
=3(\bfu_1+\bfv_1)+2(\bfu_2+\bfv_2)+3(\bfu_3+\bfv_3)+(\bfu_4+\bfv_4)+\bfu_1,
\nonumber
\\
\bfA\bfu_2
&
=2(\bfu_1+\bfv_1)+5(\bfu_2+\bfv_2)+2(\bfu_3+\bfv_3)+\bfu_2,
\nonumber
\\
\bfA\bfu_3
&
=3(\bfu_1+\bfv_1)+2(\bfu_2+\bfv_2)+3(\bfu_3+\bfv_3)+(\bfu_4+\bfv_4)+\bfu_3,
\nonumber
\\
\bfA\bfu_4
&
=3(\bfu_1+\bfv_1)+3(\bfu_3+\bfv_3)+3(\bfu_4+\bfv_4)+\bfu_4,
\nonumber
\\
\bfA\bfv_1
&
=\bfu_1+\bfv_1,
\nonumber
\\
\bfA\bfv_2
&
=\bfu_2+\bfv_2,
\nonumber
\\
\bfA\bfv_3
&
=\bfu_3+\bfv_3,
\nonumber
\\
\bfA\bfv_4
&
=\bfu_4+\bfv_4.
\nonumber
\end{align}
For any $\bfw\in\VVV$, if we write
\begin{displaymath}
\bfw=a_1\bfu_1+a_2\bfu_2+a_3\bfu_3+a_4\bfu_4+b_1\bfv_1+b_2\bfv_2+b_3\bfv_3+b_4\bfv_4,
\end{displaymath}
and
\begin{displaymath}
\bfA\bfw=c_1\bfu_1+c_2\bfu_2+c_3\bfu_3+c_4\bfu_4+d_1\bfv_1+d_2\bfv_2+d_3\bfv_3+d_4\bfv_4,
\end{displaymath}
then
\begin{displaymath}
\begin{pmatrix}
c_1\\
c_2\\
c_3\\
c_4\\
d_1\\
d_2\\
d_3\\
d_4
\end{pmatrix}
=\begin{pmatrix}
4&2&3&3&1&0&0&0\\
2&6&2&0&0&1&0&0\\
3&2&4&3&0&0&1&0\\
1&0&1&4&0&0&0&1\\
3&2&3&3&1&0&0&0\\
2&5&2&0&0&1&0&0\\
3&2&3&3&0&0&1&0\\
1&0&1&3&0&0&0&1
\end{pmatrix}
\begin{pmatrix}
a_1\\
a_2\\
a_3\\
a_4\\
b_1\\
b_2\\
b_3\\
b_4
\end{pmatrix},
\end{displaymath}
so the $8\times8$ matrix in question is
\begin{displaymath}
\bfA\vert_\VVV=\begin{pmatrix}
\bfS+I&I\\
\bfS&I
\end{pmatrix}.
\end{displaymath}
We shall return to this example later.
\end{example}

\begin{example}\label{ex7.2.2}
Consider the polysquare translation surface~$S_3$, given in Figure~7.2.4.

\begin{displaymath}
\begin{array}{c}
\includegraphics[scale=0.8]{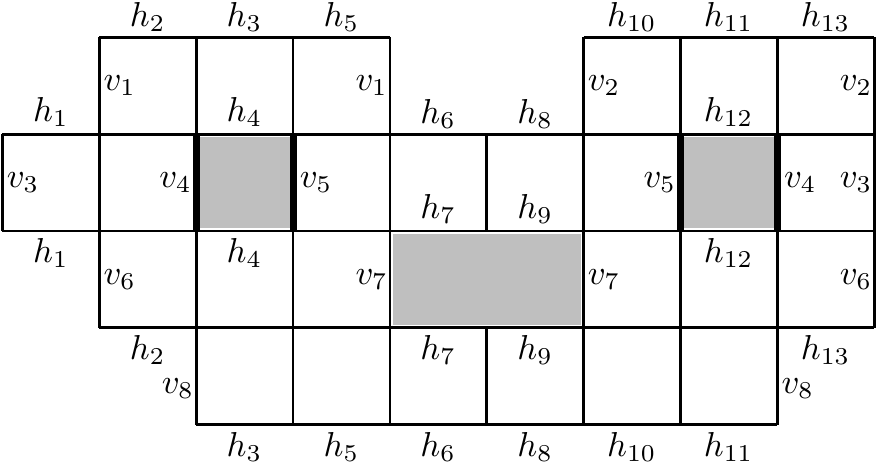}
\vspace{3pt}\\
\mbox{Figure 7.2.4: $S_3$ with edge pairings}
\end{array}
\end{displaymath}

This has $6$ horizontal streets of lengths $3,3,3,4,6,6$, and $9$ vertical streets of lengths $1,3,3,4,2,2,4,3,3$, as illustrated in Figure~7.2.5 where, for instance, $\leftrightarrow4$ and $\updownarrow6$ in a square face indicates that this square face is part of the $4$-th horizontal street and the $6$-th vertical street.

\begin{displaymath}
\begin{array}{c}
\includegraphics[scale=0.8]{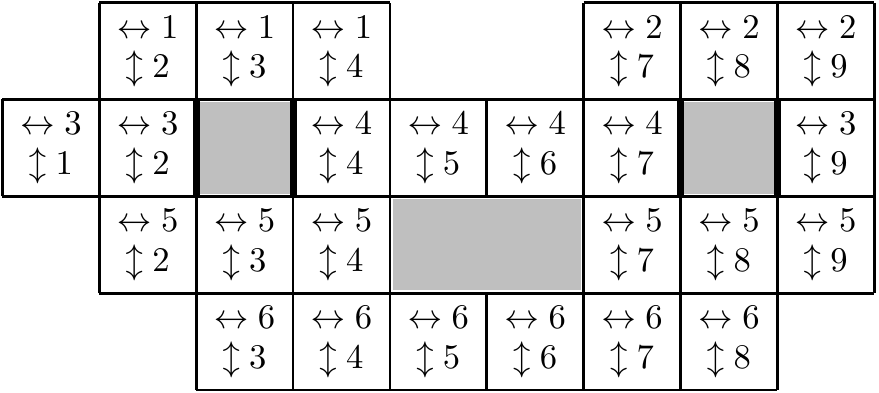}
\vspace{3pt}\\
\mbox{Figure 7.2.5: the horizontal and vertical streets of $S_3$}
\end{array}
\end{displaymath}

Consider a $1$-direction geodesic starting from some vertex of $S_3$ with slope $\alpha$ given by \eqref{eq7.2.1} with $m=n=12$.

The polysquare translation surface $S_3$ has $25$ square faces, so any $2$-step transition matrix $\bfA$ has size $50\times50$, making it an onerous task to find any eigenvalue, even with the help of MATLAB.
It is clearly much simpler to find the $6\times6$ street-spreading matrix.

First of all, we define the vectors $\bfu_i,\bfv_i$, $i=1,2,3,4,5,6$, according to \eqref{eq7.2.14} and \eqref{eq7.2.15}.

Consider the horizontal street corresponding to~$\bfu_1$, as highlighted in the left half of Figure~7.2.6.
We see that
\begin{displaymath}
J_1=\{2,3,4\}
\quad\mbox{and}\quad
J_1^*=\{2,2,2,2,3,3,3,3,4,4,4,4\},
\end{displaymath}
since $m=12$, and $\bfu_1$ concerns almost vertical units on the vertical streets $2,3,4$.
In other words, we start with the first horizontal street, and spread along every vertical street that intersects this horizontal street.
The left half of Figure~7.2.6 shows a listing of all the almost vertical units of type $\uparrow$ along these vertical streets.
Applying \eqref{eq7.2.17}, we have
\begin{align}\label{eq7.2.28}
(\bfA-I)\bfu_1
&
=(\bfA-I)[\{16\uparrow_{1,2},16\uparrow_{1,3},12\uparrow_{1,4}\}]
\nonumber
\\
&\qquad
+(\bfA-I)[\{16\uparrow_{3,2}\}]
+(\bfA-I)[\{12\uparrow_{4,4}\}]
\nonumber
\\
&\qquad
+(\bfA-I)[\{16\uparrow_{5,2},16\uparrow_{5,3},12\uparrow_{5,4}\}]
+(\bfA-I)[\{16\uparrow_{6,3},12\uparrow_{6,4}\}]
\nonumber
\\
&
=44(\bfu_1+\bfv_1)+16(\bfu_3+\bfv_3)+12(\bfu_4+\bfv_4)
\nonumber
\\
&\qquad
+44(\bfu_5+\bfv_5)+28(\bfu_6+\bfv_6).
\end{align}
\begin{displaymath}
\begin{array}{c}
\includegraphics[scale=0.8]{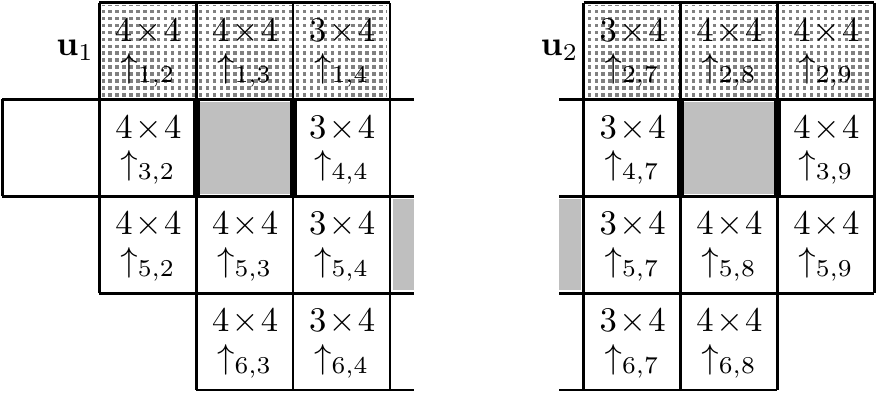}
\vspace{3pt}\\
\mbox{Figure 7.2.6: $S_3$ with $m=n=12$: almost vertical units of type $\uparrow$ in $\bfu_1,\bfu_2$}
\end{array}
\end{displaymath}

We need to explain the multiples in Figure~7.2.6.
Observe the horizontal street corresponding to $\bfu_1$ consists of $3$ square faces.
As $m=12$, the correct multiplicity is $12/3=4$.
On the other hand, the vertical street containing the square face $S_{1,4}$ consists of $4$ square faces.
As $n=12$, the correct multiplicity is $12/4=3$.
Thus the multiple for that square face is $3\times4$.

\begin{remark}
We always have $\vert I_j^*\vert=n$ for every $j=1,\ldots,v$, and $\vert J_i^*\vert=m$ for every $i=1,\ldots,h$.
The first of these corresponds to the fact that every almost vertical detour crossing of $V_0$ travels a vertical distance between $n$ and $n+1$.
The second of these corresponds to the fact that every almost horizontal detour crossing of $H_0$ travels a horizontal distance between $m$ and $m+1$.
\end{remark}

With $J_2^*=\{7,7,7,7,8,8,8,8,9,9,9,9\}$, for the horizontal street corresponding to~$\bfu_2$, as highlighted in the right half of Figure~7.2.6, a similar argument gives
\begin{align}\label{eq7.2.29}
(\bfA-I)\bfu_2
&
=44(\bfu_2+\bfv_2)+16(\bfu_3+\bfv_3)+12(\bfu_4+\bfv_4)
\nonumber
\\
&\qquad
+44(\bfu_5+\bfv_5)+28(\bfu_6+\bfv_6).
\end{align}

With $J_3^*=\{1,1,1,1,2,2,2,2,9,9,9,9\}$, for the horizontal street corresponding to~$\bfu_3$, as highlighted in Figure~7.2.7, a similar argument gives
\begin{equation}\label{eq7.2.30}
(\bfA-I)\bfu_3
=16(\bfu_1+\bfv_1)+16(\bfu_2+\bfv_2)+80(\bfu_3+\bfv_3)+32(\bfu_5+\bfv_5).
\end{equation}
\begin{displaymath}
\begin{array}{c}
\includegraphics[scale=0.8]{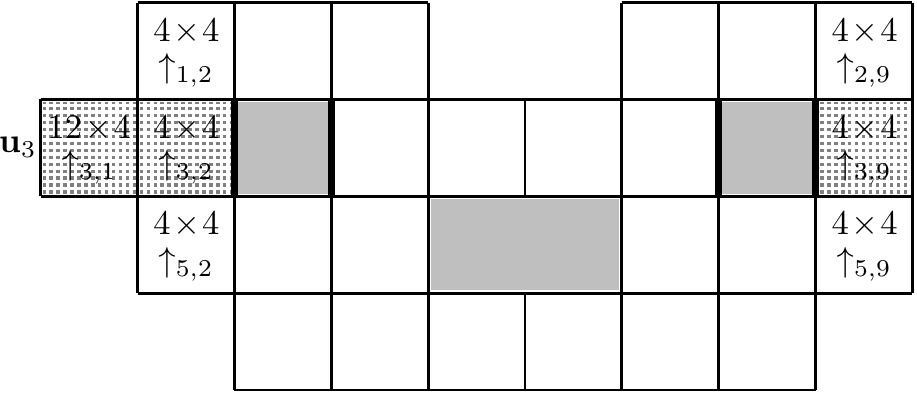}
\vspace{3pt}\\
\mbox{Figure 7.2.7: $S_3$ with $m=n=12$: almost vertical units of type $\uparrow$ in $\bfu_3$}
\end{array}
\end{displaymath}

With $J_4^*=\{4,4,4,5,5,5,6,6,6,7,7,7\}$, for the horizontal street corresponding to~$\bfu_4$, as highlighted in Figure~7.2.8, a similar argument gives
\begin{align}\label{eq7.2.31}
(\bfA-I)\bfu_4
&
=9(\bfu_1+\bfv_1)+9(\bfu_2+\bfv_2)+54(\bfu_4+\bfv_4)
\nonumber
\\
&\qquad
+18(\bfu_5+\bfv_5)+54(\bfu_6+\bfv_6).
\end{align}
\begin{displaymath}
\begin{array}{c}
\includegraphics[scale=0.8]{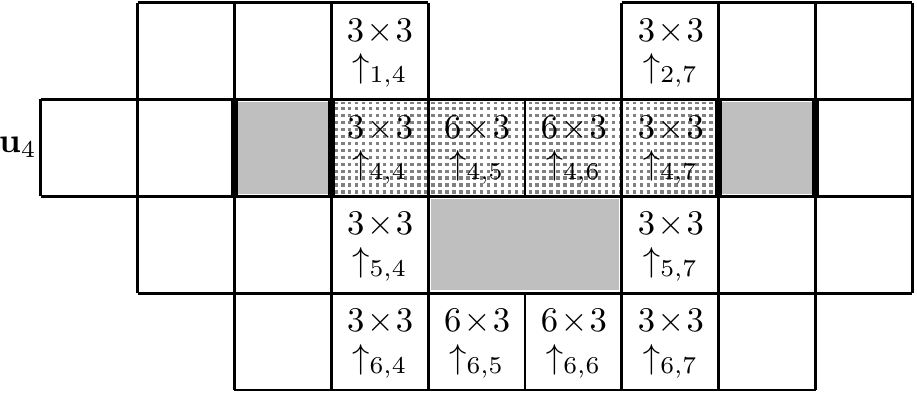}
\vspace{3pt}\\
\mbox{Figure 7.2.8: $S_3$ with $m=n=12$: almost vertical units of type $\uparrow$ in $\bfu_4$}
\end{array}
\end{displaymath}

With $J_5^*=\{2,2,3,3,4,4,7,7,8,8,9,9\}$, for the horizontal street corresponding to~$\bfu_5$, as highlighted in Figure~7.2.9, a similar argument gives
\begin{align}\label{eq7.2.32}
(\bfA-I)\bfu_5
&
=22(\bfu_1+\bfv_1)+22(\bfu_2+\bfv_2)+16(\bfu_3+\bfv_3)
\nonumber
\\
&\qquad
+12(\bfu_4+\bfv_4)+44(\bfu_5+\bfv_5)+28(\bfu_6+\bfv_6).
\end{align}
\begin{displaymath}
\begin{array}{c}
\includegraphics[scale=0.8]{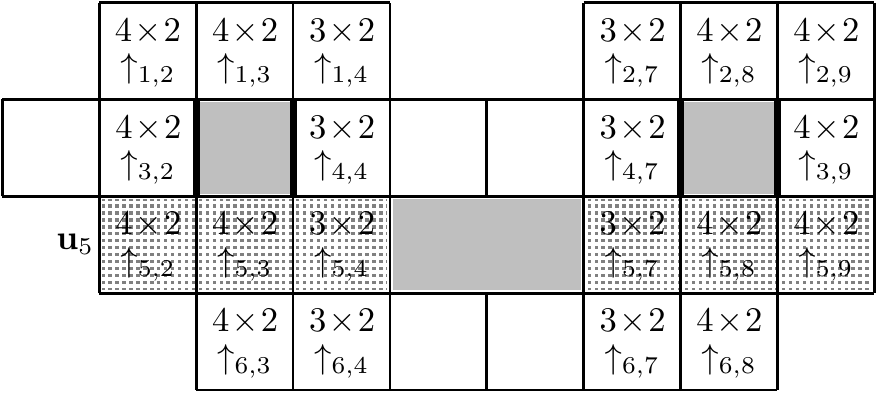}
\vspace{3pt}\\
\mbox{Figure 7.2.9: $S_3$ with $m=n=12$: almost vertical units of type $\uparrow$ in $\bfu_5$}
\end{array}
\end{displaymath}

With $J_6^*=\{3,3,4,4,5,5,6,6,7,7,8,8\}$, for the horizontal street corresponding to~$\bfu_6$, as highlighted in Figure~7.2.10, a similar argument gives
\begin{align}\label{eq7.2.33}
(\bfA-I)\bfu_6
&
=14(\bfu_1+\bfv_1)+14(\bfu_2+\bfv_2)+36(\bfu_4+\bfv_4)
\nonumber
\\
&\qquad
+28(\bfu_5+\bfv_5)+52(\bfu_6+\bfv_6).
\end{align}
\begin{displaymath}
\begin{array}{c}
\includegraphics[scale=0.8]{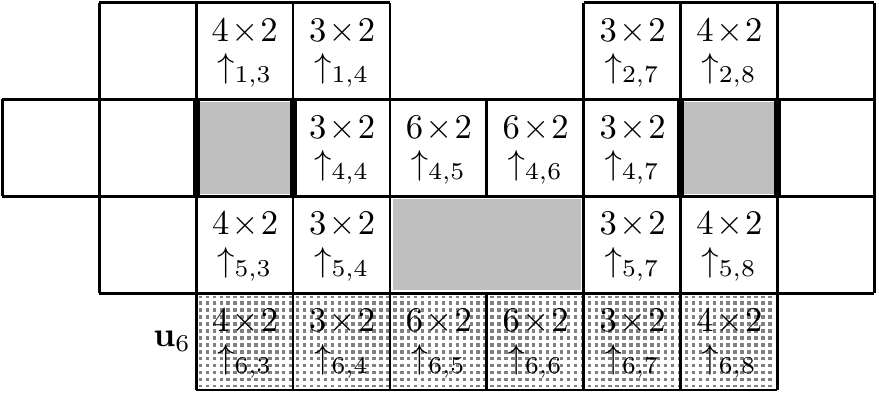}
\vspace{3pt}\\
\mbox{Figure 7.2.10: $S_3$ with $m=n=12$: almost vertical units of type $\uparrow$ in $\bfu_6$}
\end{array}
\end{displaymath}

We now define the street-spreading matrix by
\begin{displaymath}
\bfS=\begin{pmatrix}
44&0&16&9&22&14\\
0&44&16&9&22&14\\
16&16&80&0&16&0\\
12&12&0&54&12&36\\
44&44&32&18&44&28\\
28&28&0&54&28&52
\end{pmatrix},
\end{displaymath}
where the $j$-th column comes from the coefficients of $(\bfu_i+\bfv_i)$, $i=1,2,3,4,5,6$, in the expression for $(\bfA-I)\bfu_j$, $j=1,2,3,4,5,6$, as given by \eqref{eq7.2.28}--\eqref{eq7.2.33}.
\end{example}

\begin{thm}\label{thm7.2.2}
Let $\PPP$ be a finite polysquare translation surface with $d$ square faces.
Assume that $h\le v$, where $h$ is the number of horizontal streets in $\PPP$ and $v$ is the number of vertical streets in~$\PPP$.
Let $m$ be any fixed integer multiple of the lengths of the horizontal streets of~$\PPP$, and let $n$ be any fixed integer multiple of the lengths of the vertical streets of~$\PPP$.

Consider a $1$-direction almost vertical geodesic $V_0$ on~$\PPP$, with slope $\alpha$ given by \eqref{eq7.2.1}.
Let $\bfA$ denote the transition matrix of the $2$-step ancestor process with respect to the basis~$\WWWW$, where $\WWWW$ is given by
\eqref{eq7.2.2} .
Then all the relevant eigenvalues of $\bfA$ and their corresponding eigenvectors can be determined as follows:

\emph{(i)}
Let $\VVV$ denote the subspace of $\Cc^{2d}$ generated by $\bfu_i,\bfv_i$, $i=1,\ldots,h$, given by \eqref{eq7.2.14} and \eqref{eq7.2.15}.
Then $\VVV$ is an $\bfA$-invariant subspace of $\Cc^{2d}$, and contains the eigenvectors corresponding to all the relevant eigenvalues of~$\bfA$.
Furthermore, there exist non-negative integers $s_{i,j}$, $i,j=1,\ldots,h$, such that for every $j=1,\ldots,h$,
\begin{equation}\label{eq7.2.34}
(\bfA-I)\bfu_j=\sum_{i=1}^hs_{i,j}(\bfu_i+\bfv_i),
\end{equation}
and
\begin{equation}\label{eq7.2.35}
\bfA\bfv_j=\bfu_j+\bfv_j.
\end{equation}

\emph{(ii)}
Consider the street-spreading matrix
\begin{equation}\label{eq7.2.36}
\bfS=\begin{pmatrix}
s_{1,1}&\cdots&s_{1,h}\\
\vdots&&\vdots\\
s_{h,1}&\cdots&s_{h,h}
\end{pmatrix},
\end{equation}
as well as the matrix
\begin{equation}\label{eq7.2.37}
\bfA\vert_\VVV=\begin{pmatrix}
\bfS+I&I\\
\bfS&I
\end{pmatrix}.
\end{equation}
Suppose that $\tau$ is an eigenvalue of $\bfS$ with eigenvector~$\psi$, so that $\bfS\psi=\tau\psi$.
Then
\begin{equation}\label{eq7.2.38}
\lambda(\tau;\pm)=1+\frac{\tau\pm\sqrt{\tau^2+4\tau}}{2}
\end{equation}
are two eigenvalues of~$\bfA\vert_\VVV$ with product equal to~$1$, with corresponding eigenvectors
\begin{equation}\label{eq7.2.39}
\Psi(\tau;\pm)=\left(\psi,\frac{-\tau\pm\sqrt{\tau^2+4\tau}}{2}\psi\right)^T\in\Cc^{2h}.
\end{equation}
In particular, each eigenvalue of $\bfA\vert_\VVV$ comes from an eigenvalue of~$\bfS$.

\emph{(iii)}
If $\Psi(\tau;\pm)=(a_1,a_2,a_3,a_4,b_1,b_2,b_3,b_4)^T$ is an eigenvector corresponding to an eigenvalue $\lambda(\tau;\pm)$ of~$\bfA\vert_\VVV$, then $\lambda(\tau;\pm)$ is an eigenvalue of $\bfA$ with eigenvector
\begin{equation}\label{eq7.2.40}
\bfw=\sum_{i=1}^ha_i\bfu_i+\sum_{i=1}^hb_i\bfv_i.
\end{equation}
\end{thm}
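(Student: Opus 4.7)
The argument splits along the three parts.

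For part (i), the $\bfA$-invariance of $\VVV$ and the inclusion of eigenvectors of relevant eigenvalues are already furnished by Lemma~\ref{lem7.2.3}, since every relevant eigenvalue satisfies $\lambda\ne1$. Equation \eqref{eq7.2.35} is exactly \eqref{eq7.2.22}, so the real content is the non-negative integer decomposition \eqref{eq7.2.34}. The plan is to unfold the definition \eqref{eq7.2.14} as
\begin{displaymath}
\bfu_j=\sum_{j'\in J_j^*}\sum_{i\in I_{j'}^*}[\{\uparrow_{i,j'}\}],
\end{displaymath}
apply \eqref{eq7.2.17} termwise to obtain $(\bfA-I)\bfu_j=\sum_{j'\in J_j^*}\sum_{i\in I_{j'}^*}(\bfu_i+\bfv_i)$, and then gather equal contributions. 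The coefficient of $\bfu_i+\bfv_i$ becomes
\begin{displaymath}
s_{i,j}=\sum_{\substack{j'\\ S_{i,j'},\,S_{j,j'}\subset\PPP}}\frac{m}{\vert J_j\vert}\cdot\frac{n}{\vert I_{j'}\vert},
\end{displaymath}
which is a non-negative integer by the divisibility hypothesis on $m$ and $n$. This simultaneously verifies that the street-spreading algorithm illustrated in Examples~\ref{ex7.2.1} and~\ref{ex7.2.2} is correct in general.

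For part (ii), once \eqref{eq7.2.34} and \eqref{eq7.2.35} are in hand, the block form \eqref{eq7.2.37} is immediate with respect to the ordered generating list $(\bfu_1,\ldots,\bfu_h,\bfv_1,\ldots,\bfv_h)$. The eigenvalue computation is then a direct ansatz. For an eigenvector $\psi$ of $\bfS$ with $\bfS\psi=\tau\psi$, try $\Psi=(\psi,c\psi)^T$ and compute the two block rows of $\bfA\vert_\VVV\Psi$ as $(\tau+1+c)\psi$ and $(\tau+c)\psi$, then equate with $\lambda\psi$ and $c\lambda\psi$ respectively. Eliminating $c$ yields the quadratic $\lambda^2-(\tau+2)\lambda+1=0$, whose roots are exactly \eqref{eq7.2.38}, with Vieta giving product~$1$, and $c=\lambda-\tau-1$ recovers \eqref{eq7.2.39}. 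To see that \emph{every} eigenvalue of $\bfA\vert_\VVV$ arises in this way, I would compute the characteristic polynomial via the commuting-block identity
\begin{displaymath}
\det(\bfA\vert_\VVV-\lambda I)=\det((1-\lambda)^2 I-\lambda\bfS),
\end{displaymath}
valid because the bottom-right block $(1-\lambda)I$ is scalar and hence commutes with $\bfS$; the determinant then factors along the eigenvalues of $\bfS$ into precisely the same quadratics.

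Part (iii) is essentially tautological once the preceding framework is in place: the matrix $\bfA\vert_\VVV$ represents the action of $\bfA$ on $\VVV$ expanded in the generators $\{\bfu_i,\bfv_i\}$, so an eigenvector $(a_1,\ldots,a_h,b_1,\ldots,b_h)^T$ of $\bfA\vert_\VVV$ lifts by linearity to $\bfw=\sum_{i=1}^ha_i\bfu_i+\sum_{i=1}^hb_i\bfv_i\in\VVV\subset\Cc^{2d}$ satisfying $\bfA\bfw=\lambda(\tau;\pm)\bfw$. The main obstacle, and essentially the only substantive computational step, is the bookkeeping in part~(i) showing that the street-spreading multiplicities are genuine non-negative integers matching the construction of $\bfS$. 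A secondary subtlety worth flagging is that the generators $\bfu_i,\bfv_i$ of $\VVV$ need not be linearly independent in $\Cc^{2d}$, so the lifted $\bfw$ in part~(iii) could vanish for some choices of $(a_i,b_i)$; this does not invalidate the statement but warns against assuming a one-to-one correspondence between eigenvectors of $\bfA\vert_\VVV$ and eigenvectors of $\bfA$ in~$\VVV$.
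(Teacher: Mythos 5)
Your proposal is correct and follows the paper's overall architecture, with a modest but genuine variant in part (ii). For part (i), you and the paper both reduce to Lemma~\ref{lem7.2.3} and equations \eqref{eq7.2.17}, \eqref{eq7.2.21}; your additional unwinding of $\bfu_j$ into the double sum over $j'\in J_j^*$ and $i\in I_{j'}^*$, and the resulting closed formula for $s_{i,j}$, is more explicit than the paper's one-line deduction but amounts to the same observation. For part (ii), the paper proceeds in the opposite order from you: it first shows $\det(\bfA\vert_\VVV)=1$ by a block row reduction, then takes an arbitrary eigenvector $\Psi=(\psi,\psi^*)^T$, subtracts the two block rows to get $\psi^*=\frac{\lambda-1}{\lambda}\psi$, substitutes back to show $\bfS\psi=\left(\lambda-\frac{2\lambda-1}{\lambda}\right)\psi$, and thereby concludes directly that $\psi$ is an eigenvector of $\bfS$ with eigenvalue $\tau$. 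You instead start from the ansatz $\Psi=(\psi,c\psi)^T$, obtain the quadratic $\lambda^2-(\tau+2)\lambda+1=0$, and then close the gap with the commuting-block determinant identity
\begin{displaymath}
\det(\bfA\vert_\VVV-\lambda I)=\det\!\left((1-\lambda)^2 I-\lambda\bfS\right),
\end{displaymath}
which factors the characteristic polynomial along the eigenvalues of $\bfS$ and simultaneously disposes of $\lambda=0$ since the right side equals $\det(I)=1$ there. Your route buys a cleaner, one-shot proof that \emph{every} eigenvalue of $\bfA\vert_\VVV$ is of the stated form, at the cost of invoking the block-determinant fact; the paper's route is more elementary but has to argue separately that $\lambda\ne0$ and that $\psi\ne0$. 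Part (iii) is identical in both. Finally, your closing caveat that the generators $\bfu_i,\bfv_i$ need not be linearly independent, so that the map $\Cc^{2h}\to\VVV$ may have a kernel and the lifted vector $\bfw$ could in principle vanish, is an astute observation that the paper does not explicitly address; it is worth flagging, since in that degenerate case the conclusion of part (iii) would need the additional check $\bfw\ne0$ before $\bfw$ can be called an eigenvector.
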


\begin{remark}
The assumption that the number $h=h(\PPP)$ of horizontal streets is less than or equal to the number $v=v(\PPP)$ of vertical streets, \textit{i.e.}, $h\le v$, is one for convenience, as the street-spreading matrix is smaller.
For any polysquare translation surface $\PPP$ where $h\ge v$, we can always interchange the horizontal and vertical directions.
\end{remark}

\begin{proof}[Proof of Theorem~\ref{thm7.2.2}]
(i)
In view of Lemma~\ref{lem7.2.3}, it remains to establish \eqref{eq7.2.34} and \eqref{eq7.2.35}.
For \eqref{eq7.2.34}, note from \eqref{eq7.2.14} that each $\bfu_j$ involves only units of type~$\uparrow$, and so \eqref{eq7.2.34} follows immediately from \eqref{eq7.2.17}, on observing that $\bfu_i,\bfv_i$, $i=1,\ldots,h$, generate~$\VVV$.
On the other hand, \eqref{eq7.2.35} follows immediately from \eqref{eq7.2.21}.

(ii)
Suppose that $\Psi$ is a column eigenvector corresponding to an eigenvalue $\lambda$ of~$\bfA\vert_\VVV$.
From \eqref{eq7.2.37}, it is clear that $\bfA\vert_\VVV$ can be reduced to the matrix
\begin{displaymath}
\begin{pmatrix}
I&0\\
\bfS&I
\end{pmatrix}
\end{displaymath}
by elementary row operations.
The determinant of this matrix is clearly equal to~$1$.
It follows that $\det(\bfA\vert_\VVV)=1$, and so $\lambda\ne0$.
We can write
\begin{equation}\label{eq7.2.41}
\Psi=(\psi,\psi^*)^T,
\end{equation}
where $\psi$ and $\psi^*$ are both column vectors with $h$ entries.
Then
\begin{displaymath}
\bfA\vert_\VVV\Psi
=\begin{pmatrix}
\bfS+I&I\\
\bfS&I
\end{pmatrix}
\begin{pmatrix}
\psi\\
\psi^*
\end{pmatrix}
=\lambda\begin{pmatrix}
\psi\\
\psi^*
\end{pmatrix},
\end{displaymath}
and so
\begin{align}\label{eq7.2.42}
\bfS\psi+\psi+\psi^*&
=\lambda\psi,
\nonumber
\\
\bfS\psi+\psi^*&
=\lambda\psi^*.
\end{align}
Subtracting the second equation in \eqref{eq7.2.42} from the first, we obtain $\psi=\lambda(\psi-\psi^*)$, and so
\begin{equation}\label{eq7.2.43}
\psi^*=\frac{\lambda-1}{\lambda}\psi.
\end{equation}
Substituting this into the first equation in \eqref{eq7.2.42}, we obtain
\begin{displaymath}
\bfS\psi=\left(\lambda-\frac{2\lambda-1}{\lambda}\right)\psi,
\end{displaymath}
so that
\begin{displaymath}
\tau=\lambda-\frac{2\lambda-1}{\lambda}
\end{displaymath}
is an eigenvalue of $\bfS$ with eigenvector~$\psi$.
This is equivalent to
\begin{equation}\label{eq7.2.44}
\lambda=1+\frac{\tau\pm\sqrt{\tau^2+4\tau}}{2}.
\end{equation}
This proves that any eigenvalue $\lambda$ of $\bfA\vert_\VVV$ is obtained from an eigenvalue $\tau$ of $\bfS$ by \eqref{eq7.2.44}, and establishes \eqref{eq7.2.38}.
Finally, note that
\begin{displaymath}
\frac{\lambda-1}{\lambda}
=1-\frac{1}{\lambda}
=1-\left(1+\frac{\tau\mp\sqrt{\tau^2+4\tau}}{2}\right)
=\frac{-\tau\pm\sqrt{\tau^2+4\tau}}{2}.
\end{displaymath}
In view of \eqref{eq7.2.41} and \eqref{eq7.2.43}, this establishes \eqref{eq7.2.39}.

(iii)
Suppose that $(a_1,\ldots,a_h,b_1,\ldots,b_h)^T$ is an eigenvector corresponding to an eigenvalue $\lambda$ of~$\bfA\vert_\VVV$.
Let the vector $\bfw$ be given by \eqref{eq7.2.40}.
Then
\begin{equation}\label{eq7.2.45}
\bfA\bfw=\sum_{i=1}^hc_i\bfu_i+\sum_{i=1}^hd_i\bfv_i,
\end{equation}
where the coefficients are related by
\begin{displaymath}
(c_1,\ldots,c_h,d_1,\ldots,d_h)^T=\bfA\vert_\VVV(a_1,\ldots,a_h,b_1,\ldots,b_h)^T.
\end{displaymath}
It is clear that
\begin{displaymath}
(c_1,\ldots,c_h,d_1,\ldots,d_h)^T=\lambda(a_1,\ldots,a_h,b_1,\ldots,b_h)^T,
\end{displaymath}
and so it follows from \eqref{eq7.2.40} and \eqref{eq7.2.45} that $\bfA\bfw=\lambda\bfw$, so that $\lambda$ is an eigenvalue of $\bfA$ with
eigenvector~$\bfw$.
\end{proof}

\begin{remark}
As commented earlier, if two or more distinct square faces of a polysquare translation surface lie on the same horizontal street and the same vertical street simultaneously, then the notation $S_{i,j}$ does not allow us to distinguish between two distinct square faces that lie on the $i$-th horizontal street and the $j$-th vertical street.
In this case, we need to denote each square face by $S_\delta$, where $\delta=1,\ldots,d$.

We see from \eqref{eq7.2.34} as well as Examples \ref{ex7.2.1} and~\ref{ex7.2.2} that the crux of our argument concerns expressions of the form
\begin{displaymath}
(\bfA-I)\bfu_j,
\quad
\bfu_i
\quad\mbox{and}\quad
\bfv_i,
\end{displaymath}
so we need to find suitable generalizations of the expressions \eqref{eq7.2.12}--\eqref{eq7.2.18}.

For instance, let $\HS_{i_0}$ denote the $i_0$-th horizontal street.
For any $\delta=1,\ldots,d$, let $\HS(S_\delta)$ and $\VS(S_\delta)$ denote respectively the horizontal street and the vertical street that contains the square face~$S_\delta$.
Then the expression \eqref{eq7.2.12} can be rewritten in the form
\begin{align}
(\bfA-I)[\{\uparrow_{\delta_0}\}]
&
=[\{\uparrow_\delta:S_\delta\subseteq\VS^*(S_\eta)\mbox{ and }S_\eta\subseteq\HS^*(S_{\delta_0})\}]
\nonumber
\\
&\qquad
+[\{\nuparrow_\delta:S_\delta\subseteq\HS^*(S_{\delta_0})\}]
-[\{\uparrow_\delta:S_\delta\subseteq\HS^*(S_{\delta_0})\}],
\nonumber
\end{align}
while the expression \eqref{eq7.2.14} can then be rewritten in the form
\begin{displaymath}
\bfu_{i_0}=[\{\uparrow_\delta:S_\delta\subseteq\VS^*(S_\eta)\mbox{ and }S_\eta\subseteq\HS^*_{i_0}\}].
\end{displaymath}
In both instances, we adopt the convention that every $S_\eta$ is counted and $*$ denotes counting with the appropriate multiplicity.
We also need analogous generalizations of the other identities.

We make use of this more general version in the study of one-street polysquare translation surfaces later in this section.
We also make further use of this general version in Section~\ref{sec8} where we study geodesic flow on the regular octagon surface, the double-pentagon surface, polyrhombus surfaces and the infinite-halving staircase surface; see Figures 8.2.5, 8.5.2, 8.6.9 and 8.9.2 respectively.
\end{remark}

\begin{example721c}
Recall from \eqref{eq7.1.6}, \eqref{eq7.1.8} and \eqref{eq7.1.10} that the relevant eigenvalues of $\bfA(S_2)$ are
\begin{displaymath}
\lambda_1=\frac{11+3\sqrt{13}}{2},
\quad
\lambda_2=3+2\sqrt{2},
\quad
\lambda_3=\frac{3+\sqrt{5}}{2},
\end{displaymath}
obtained by tedious calculation using MATLAB.
Let us instead retrieve the same information using the street-spreading matrix $\bfS$ given by \eqref{eq7.2.27}.

The eigenvalues of $\bfS$ are $\tau_1=9$, $\tau_2=4$, $\tau_3=1$ and $\tau_4=0$.
Using \eqref{eq7.2.38}, the corresponding eigenvalues of $\bfA$ are
\begin{displaymath}
\lambda(9;\pm)=\frac{11\pm3\sqrt{13}}{2},
\quad
\lambda(4;\pm)=3\pm2\sqrt{2},
\quad
\lambda(1;\pm)=\frac{3\pm\sqrt{5}}{2},
\quad
\lambda(0;\pm)=1.
\end{displaymath}

Note that in view of Lemma~\ref{lem7.2.3}, $\lambda(9;\pm)$, $\lambda(4;\pm)$ and $\lambda(1;\pm)$ are the only eigenvalues of $\bfA$ that are not equal to~$1$.
Thus the multiplicity of the eigenvalue $1$ is~$14$, as we have claimed earlier in the Remark following \eqref{eq7.2.8}.
\end{example721c}

\begin{example}\label{ex7.2.3}
Consider again the L-surface, given in Figure~7.1.6, and consider a $1$-direction geodesic starting from some vertex of the L-surface with slope
$\alpha$ given by \eqref{eq7.2.1} with $m=n=2$.
It is easy to see from Figure~7.2.11 that
\begin{align}
(\bfA-I)\bfu_1
&=2(\bfu_1+\bfv_1)+2(\bfu_2+\bfv_2),
\label{eq7.2.46}
\\
(\bfA-I)\bfu_2
&=(\bfu_1+\bfv_1)+3(\bfu_2+\bfv_2).
\label{eq7.2.47}
\end{align}
\begin{displaymath}
\begin{array}{c}
\includegraphics[scale=0.8]{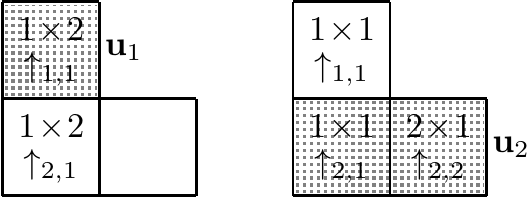}
\vspace{3pt}\\
\mbox{Figure 7.2.11: L-surface with $m=n=2$: almost vertical units}
\\
\mbox{of type $\uparrow$ in $\bfu_1,\bfu_2$}
\end{array}
\end{displaymath}

It follows from \eqref{eq7.2.46} and \eqref{eq7.2.47} that the street-spreading matrix is given by
\begin{displaymath}
\bfS=\begin{pmatrix}
2&1\\
2&3
\end{pmatrix},
\end{displaymath}
with eigenvalues $\tau_1=4$ and $\tau_2=1$.
Using \eqref{eq7.2.38} leads to the eigenvalues
\begin{displaymath}
\lambda(4;\pm)=3\pm2\sqrt{2}
\quad\mbox{and}\quad
\lambda(1;\pm)=\frac{3\pm\sqrt{5}}{2},
\end{displaymath}
including those given in \eqref{eq7.1.24} with absolute values greater than~$1$.
\end{example}

\begin{example}\label{ex7.2.4}
It is not difficult to show that geodesic flow on the surface of the unit cube is equivalent to $1$-direction geodesic flow on the polysquare translation surface $\PPP$ as shown in Figure~7.2.12 that represents the $4$-copy version of the cube surface.

\begin{displaymath}
\begin{array}{c}
\includegraphics[scale=0.8]{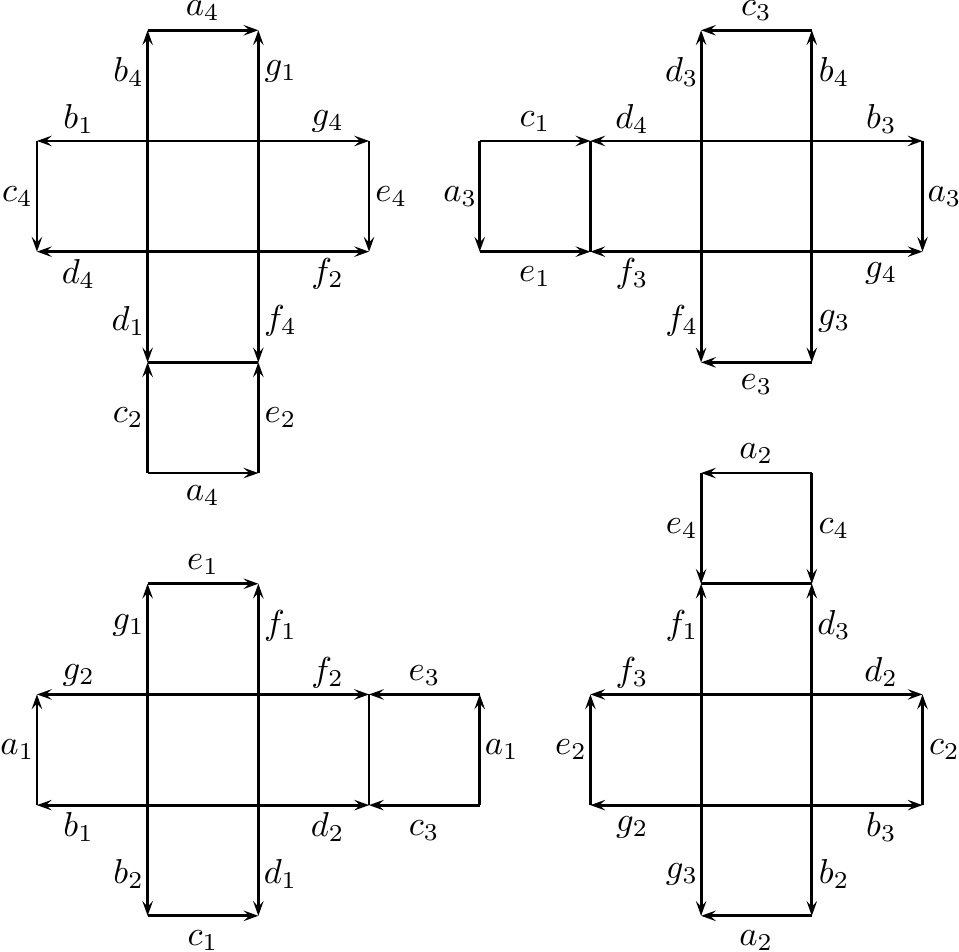}
\vspace{3pt}\\
\mbox{Figure 7.2.12: $4$-copy version of the surface of the unit cube}
\end{array}
\end{displaymath}

Note that $\PPP$ has $6$ horizontal streets and $6$ vertical streets, all of length~$4$, as shown in Figure~7.2.13, where, for instance,
$\leftrightarrow4$ and $\updownarrow6$ in a square face indicates that this square face is part of the $4$-th horizontal street and $6$-th vertical street.

\begin{displaymath}
\begin{array}{c}
\includegraphics[scale=0.8]{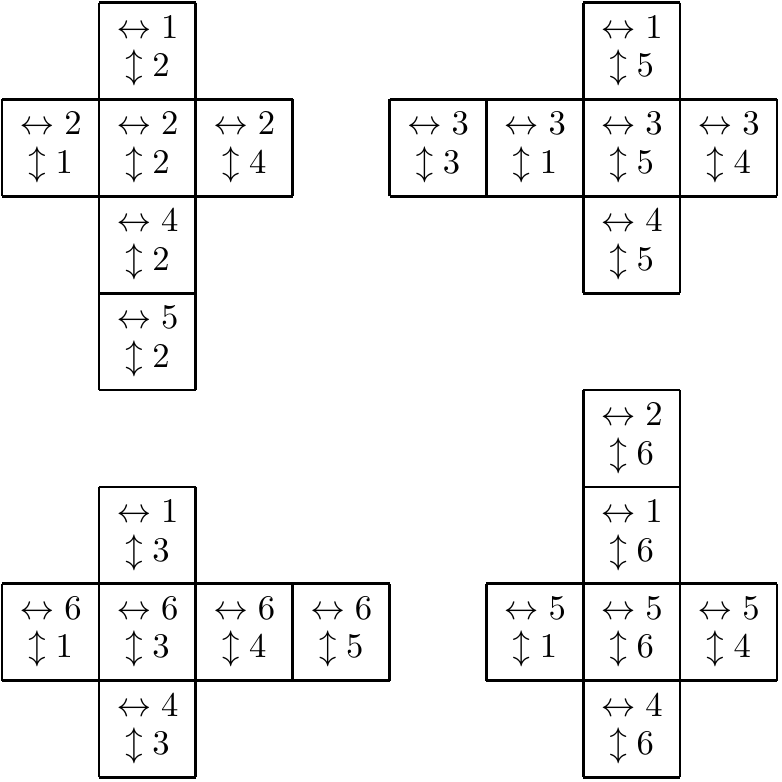}
\vspace{3pt}\\
\mbox{Figure 7.2.13: $4$-copy version of the surface of the unit cube}
\end{array}
\end{displaymath}

Figures 7.2.12 and~7.2.13 are not very convenient for us to visualize the horizontal and vertical streets clearly.
We can re-draw these two figures, and attempt to arrange the square faces in an array where each row represents a horizontal street and each column represents a vertical street.

In Figure~7.2.14, we do precisely this, where for $i,j=1,\ldots,6$, the $i$-th row represents the $i$-th horizontal street and the $j$-th column represents the $j$-th vertical street as shown in Figure~7.2.13.
Note that this exercise is somewhat cumbersome and requires very careful edge identifications as illustrated.
We include the details here for the sake of completeness, and comment that the edge identifications are not required in the process for the determination of the street-spreading matrix.

\begin{displaymath}
\begin{array}{c}
\includegraphics[scale=0.8]{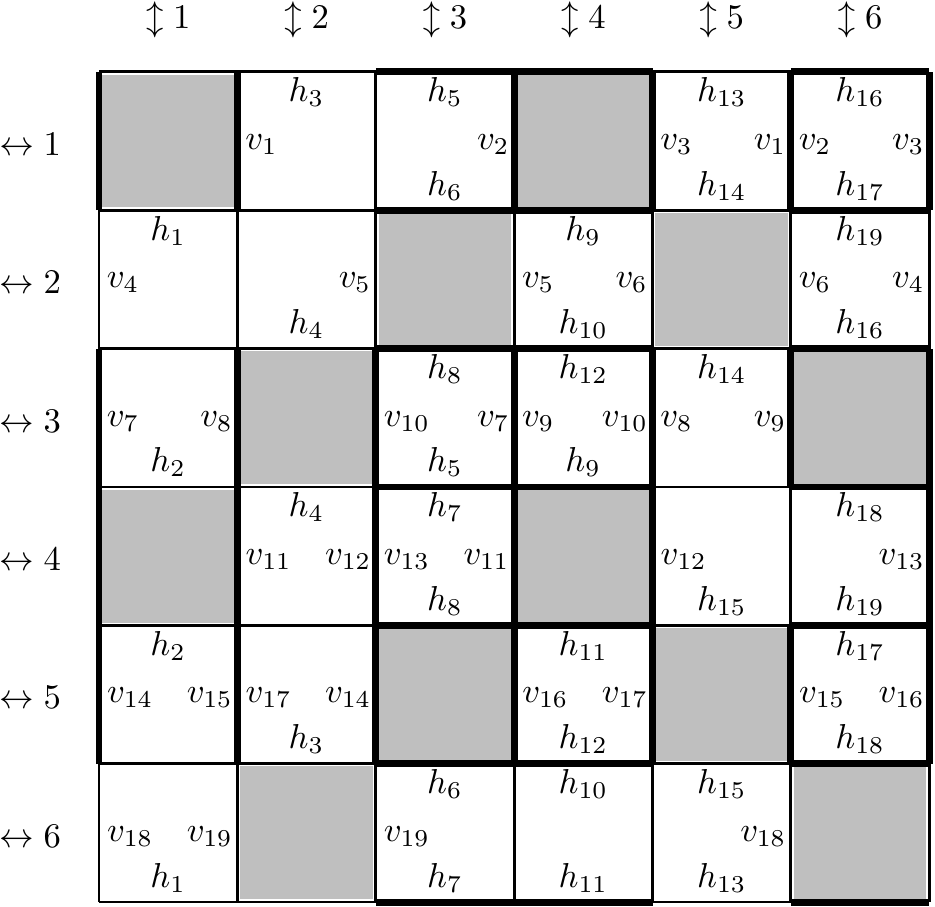}
\vspace{3pt}\\
\mbox{Figure 7.2.14: $4$-copy version of the surface of the unit cube}
\\
\mbox{with square faces arranged in an array}
\end{array}
\end{displaymath}

Consider a $1$-direction geodesic on the polysquare translation surface $\PPP$ starting from some vertex of $\PPP$ with slope $\alpha$ given by \eqref{eq7.2.1} with $m=n=4$.

The surface $\PPP$ has $24$ square faces, so any $2$-step transition matrix $\bfA$ has size $48\times48$.
It is much simpler to find the $6\times6$ street-spreading matrix.

As before, we define the matrices $\bfu_i,\bfv_i$, $i=1,\ldots,6$, according to \eqref{eq7.2.14} and \eqref{eq7.2.15}.

Consider the horizontal street corresponding to~$\bfu_1$, and also the horizontal street corresponding to~$\bfu_4$, as highlighted in the picture on the left in Figure~7.2.15.

\begin{displaymath}
\begin{array}{c}
\includegraphics[scale=0.8]{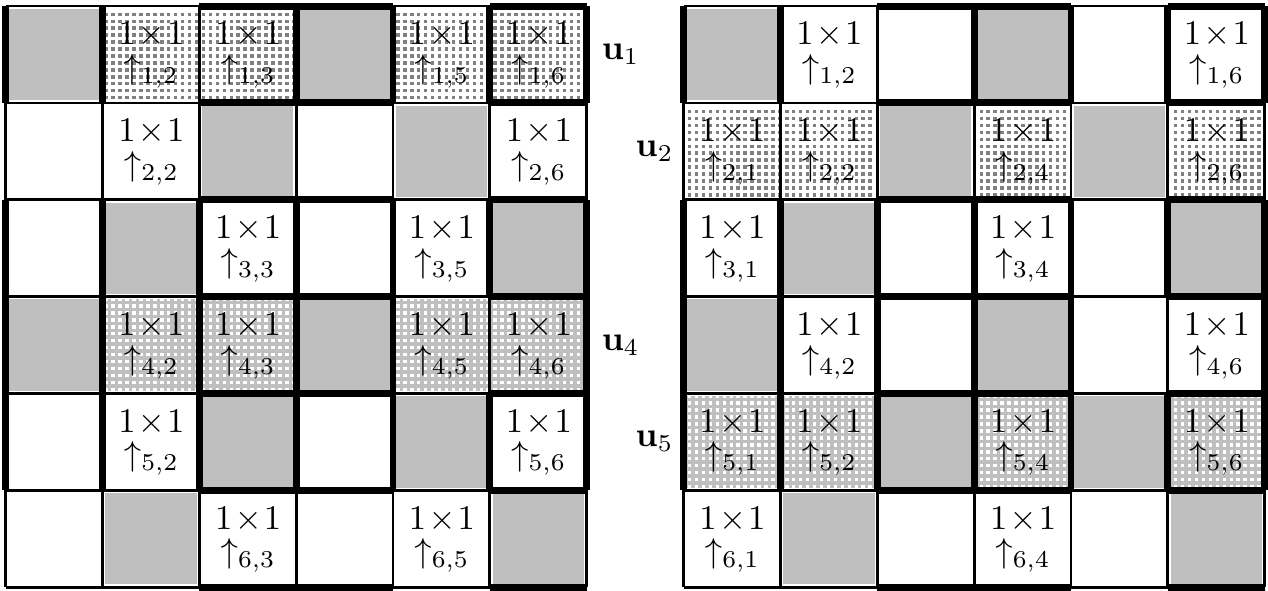}
\vspace{3pt}\\
\mbox{Figure 7.2.15: $\PPP$ with $m=n=4$: almost vertical units of type $\uparrow$ in $\bfu_1\bfu_2,\bfu_4,\bfu_5$}
\end{array}
\end{displaymath}

We see that $J_1^*=J_4^*=\{2,3,5,6\}$, and both $\bfu_1$ and $\bfu_4$ concern almost vertical units on the vertical streets $2,3,5,6$.
Applying \eqref{eq7.2.17}, we have
\begin{align}\label{eq7.2.48}
(\bfA-I)\bfu_1
&
=(\bfA-I)\bfu_4
\nonumber
\\
&
=(\bfA-I)[\{\uparrow_{1,2},\uparrow_{1,3},\uparrow_{1,5},\uparrow_{1,6}\}]+(\bfA-I)[\{\uparrow_{2,2},\uparrow_{2,6}\}]
\nonumber
\\
&\qquad
+(\bfA-I)[\{\uparrow_{3,3},\uparrow_{3,5}\}]
+(\bfA-I)[\{\uparrow_{4,2},\uparrow_{4,3},\uparrow_{4,5},\uparrow_{4,6}\}]
\nonumber
\\
&\qquad
+(\bfA-I)[\{\uparrow_{5,2},\uparrow_{5,6}\}]
+(\bfA-I)[\{\uparrow_{6,3},\uparrow_{6,5}\}]
\nonumber
\\
&
=4(\bfu_1+\bfv_1)+2(\bfu_2+\bfv_2)+2(\bfu_3+\bfv_3)
\nonumber
\\
&\qquad
+4(\bfu_4+\bfv_4)+2(\bfu_5+\bfv_5)+2(\bfu_6+\bfv_6).
\end{align}

With $J_2^*=J_5^*=\{1,2,4,6\}$, for the horizontal street corresponding to~$\bfu_2$, and also the horizontal street corresponding to~$\bfu_5$, as highlighted in the picture on the right in Figure~7.2.15, a similar argument gives
\begin{align}\label{eq7.2.49}
(\bfA-I)\bfu_2
&
=(\bfA-I)\bfu_5
\nonumber
\\
&
=2(\bfu_1+\bfv_1)+4(\bfu_2+\bfv_2)+2(\bfu_3+\bfv_3)
\nonumber
\\
&\qquad
+2(\bfu_4+\bfv_4)+4(\bfu_5+\bfv_5)+2(\bfu_6+\bfv_6).
\end{align}

With $J_3^*=J_6^*=\{1,3,4,5\}$, for the horizontal street corresponding to~$\bfu_3$, and also the horizontal street corresponding to~$\bfu_6$, as highlighted in Figure~7.2.16, a similar argument gives
\begin{align}\label{eq7.2.50}
(\bfA-I)\bfu_3
&
=(\bfA-I)\bfu_6
\nonumber
\\
&
=2(\bfu_1+\bfv_1)+2(\bfu_2+\bfv_2)+4(\bfu_3+\bfv_3)
\nonumber
\\
&\qquad
+2(\bfu_4+\bfv_4)+2(\bfu_5+\bfv_5)+4(\bfu_6+\bfv_6).
\end{align}
\begin{displaymath}
\begin{array}{c}
\includegraphics[scale=0.8]{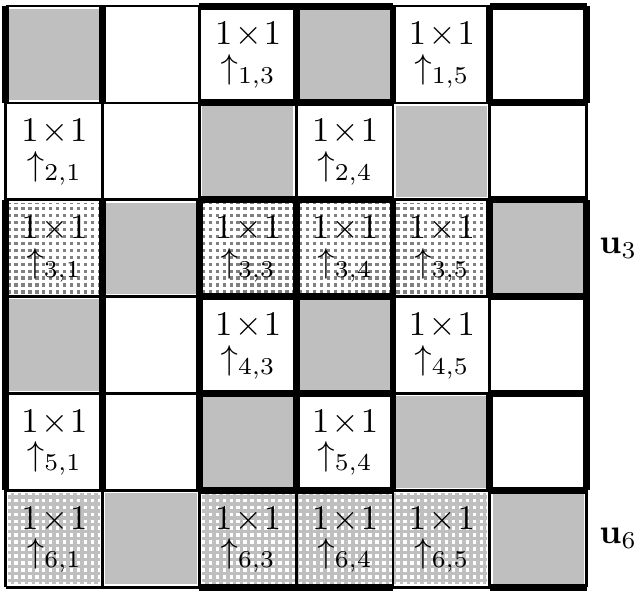}
\vspace{3pt}\\
\mbox{Figure 7.2.16: $\PPP$ with $m=n=4$: almost vertical units of type $\uparrow$ in $\bfu_3,\bfu_6$}
\end{array}
\end{displaymath}

It follows from \eqref{eq7.2.48}--\eqref{eq7.2.50} that the street-spreading matrix is given by
\begin{displaymath}
\bfS=\begin{pmatrix}
4&2&2&4&2&2\\
2&4&2&2&4&2\\
2&2&4&2&2&4\\
4&2&2&4&2&2\\
2&4&2&2&4&2\\
2&2&4&2&2&4
\end{pmatrix},
\end{displaymath}
with eigenvalues $\tau_1=16$, $\tau_2=\tau_3=4$ and $\tau_4=\tau_5=\tau_6=0$.
Using \eqref{eq7.2.38}, the corresponding eigenvalues of $\bfA$ are
\begin{displaymath}
\lambda(16;\pm)=9\pm4\sqrt{5},
\quad
\lambda(4;\pm)=3\pm2\sqrt{2},
\quad
\lambda(0;\pm)=1,
\end{displaymath}
with the appropriate multiplicities.
The two largest eigenvalues of $\bfA$ are therefore
\begin{displaymath}
\lambda_1=9+4\sqrt{5}=(2+\sqrt{5})^2
\quad\mbox{and}\quad
\lambda_2=3+2\sqrt{2}=(1+\sqrt{2})^2.
\end{displaymath}
Furthermore, the multiplicity of the eigenvalue $1$ is~$42$.

Note that
\begin{displaymath}
2+\sqrt{5}=[4;4,4,4,\ldots]
\quad\mbox{and}\quad
1+\sqrt{2}=[2;2,2,2,\ldots]
\end{displaymath}
exhibit \textit{digit-halving} in their continued fraction expansions, and are also the two largest eigenvalues of the $1$-step transition matrix of a
$1$-direction geodesic of slope $\alpha=2+\sqrt{5}$ on the L-surface; see \cite[Section~4.1]{BDY2}.

We comment that digit-halving is still exhibited if $\alpha$ is defined by \eqref{eq7.2.1} with $m=n=4k$ for any positive integers~$k$.
In this general case, the largest eigenvalue is the square of $[4k;4k,4k,4k,\ldots]$, and the second largest eigenvalue is the square of
$[2k;2k,2k,2k,\ldots]$.
\end{example}
\end{part3}

We complete this section by considering two consequences of Theorem~\ref{thm7.2.2} and its more general form.
The first one is fairly straightforward.

\begin{cor}\label{cor7.2.1}
The determinant of the $2$-step transition matrix $\bfA$ is equal to~$1$.
\end{cor}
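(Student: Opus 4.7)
The plan is to reduce the computation of $\det\bfA$ to the action of $\bfA$ on the invariant subspace $\VVV$ from Theorem~\ref{thm7.2.2}. The first step is already contained in the proof of Theorem~\ref{thm7.2.2}(ii): the single elementary row operation that subtracts the lower block row from the upper block row transforms the matrix in \eqref{eq7.2.37} into the lower block-triangular matrix $\begin{pmatrix}I&0\\\bfS&I\end{pmatrix}$, so $\det(\bfA\vert_\VVV)=1$.

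The second step is to upgrade Lemma~\ref{lem7.2.3} from ordinary eigenvectors to full generalized eigenspaces. For any $\lambda\ne1$, let $E_\lambda\subseteq\Cc^{2d}$ denote the generalized eigenspace of $\bfA$ attached to~$\lambda$. Then $E_\lambda$ is $\bfA$-invariant, and by the spectral mapping theorem the restriction $(\bfA-I)^2\vert_{E_\lambda}$ has only the eigenvalue $(\lambda-1)^2\ne0$, so it is invertible on~$E_\lambda$. This forces
\begin{displaymath}
E_\lambda=(\bfA-I)^2(E_\lambda)\subseteq\image((\bfA-I)^2)\subseteq\VVV,
\end{displaymath}
where the last inclusion is Lemma~\ref{lem7.2.2}. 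Consequently, the algebraic multiplicity of every $\lambda\ne1$ as an eigenvalue of $\bfA$ coincides with its algebraic multiplicity as an eigenvalue of $\bfA\vert_\VVV$.

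The final step is purely formal. Complete a basis of $\VVV$ to a basis of $\Cc^{2d}$; since $\VVV$ is $\bfA$-invariant, the matrix of $\bfA$ in this basis is block upper triangular with diagonal blocks $\bfA\vert_\VVV$ and the induced operator $\overline{\bfA}$ on $\Cc^{2d}/\VVV$. Hence $\det\bfA=\det(\bfA\vert_\VVV)\cdot\det(\overline{\bfA})$. By the previous step, the characteristic polynomial of $\overline{\bfA}$ is a power of $(x-1)$, so $\det(\overline{\bfA})=1$, and the corollary follows. The only nontrivial input is the generalized eigenspace upgrade of Lemma~\ref{lem7.2.3}; everything else is an immediate consequence of the block structure furnished by Theorem~\ref{thm7.2.2}.
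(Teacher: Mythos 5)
Your proof is correct and follows essentially the same route as the paper: reduce to the $\bfA$-invariant subspace $\VVV$, observe $\det(\bfA\vert_\VVV)=1$ via the row reduction already done in Theorem~\ref{thm7.2.2}(ii), and argue that the induced operator on $\Cc^{2d}/\VVV$ contributes only the eigenvalue~$1$. Where you diverge is in how you justify this last point. The paper invokes Lemma~\ref{lem7.2.3}, which only asserts that the \emph{eigenvectors} of $\bfA$ for $\lambda\ne1$ lie in $\VVV$; strictly speaking this does not by itself rule out $\lambda\ne1$ appearing in the characteristic polynomial of the quotient, since a Jordan chain for $\lambda$ could a priori stick out of $\VVV$ even when the bottom eigenvector sits inside. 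You close this gap by upgrading to full generalized eigenspaces: since $(\bfA-I)^2\vert_{E_\lambda}$ is invertible for $\lambda\ne1$, Lemma~\ref{lem7.2.2} ($\image((\bfA-I)^2)\subseteq\VVV$) forces $E_\lambda\subseteq\VVV$, so the entire algebraic multiplicity of $\lambda$ is absorbed into $\bfA\vert_\VVV$. This is exactly the right way to make the paper's appeal to Lemma~\ref{lem7.2.3} watertight, and it also makes the explicit product formula for $p(x)$ that the paper writes down unnecessary for this corollary.
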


\begin{proof}
We have shown earlier that $\det(\bfA\vert_\VVV)=1$.

Let $p(x)$ be the characteristic polynomial of~$\bfA\vert_\VVV$.
In view of Theorem~\ref{thm7.2.2}, it is clear that the characteristic polynomial of $\bfA$ is equal to $p(x)q(x)$, where $q(x)$ has degree $2d-2h$ and its roots are all the eigenvalues of $\bfA$ that are not eigenvalues of~$\bfA\vert_\VVV$.
In view of Lemma~\ref{lem7.2.3}, we have $q(x)=(x-1)^{2d-2h}$.
On the other hand, it follows from \eqref{eq7.2.38} that
\begin{displaymath}
p(x)=\prod_{i=1}^h(x-\lambda(\tau_i;+))(x-\lambda(\tau_i;-)),
\end{displaymath}
where $\tau_1,\ldots,\tau_h$ are the eigenvalues of~$\bfS$, and $\lambda(\tau_i;+)\lambda(\tau_i;-)=1$, $i=1,\ldots,h$.

Clearly the determinant of $\bfA$ is equal to $p(0)q(0)=1$.
\end{proof}

To motivate the second consequence, we start with the L-surface, and obtain a new polysquare translation surface with $6$ smaller square faces, each of area~$1/2$.
The trick is to draw two diagonals on each of the $3$ square faces of the L-surface.
We use the boundary pairing in the picture on the right in Figure~7.2.17, which gives the so-called \textit{diagonal subdivision surface} of the L-surface, or DS-L-surface.
Of course the genus remains the same; namely,~$2$.
Rotating this picture on the right anticlockwise by $45$ degrees and resizing, the DS-L-surface becomes a polysquare translation surface in the standard horizontal-vertical position.
It is easy to see that the DS-L-surface has only one horizontal street, which has length $6$ and is defined by the cycle $1,2,3,4,6,5$.
It also has only one vertical street, defined by the cycle $1,5,6,2,3,4$.
So the street size multiple of the DS-L-surface is equal to~$6$.

\begin{displaymath}
\begin{array}{c}
\includegraphics[scale=0.8]{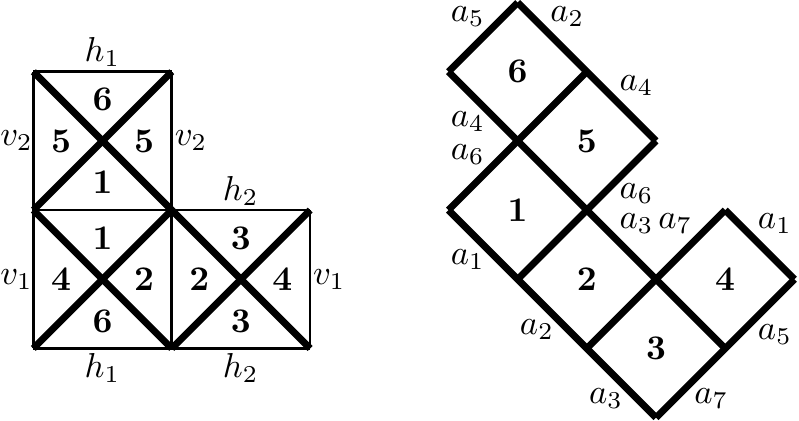}
\vspace{3pt}\\
\mbox{Figure 7.2.17: L-surface and DS-L-surface}
\end{array}
\end{displaymath}

Now \cite[Theorem~5.5.1]{BDY2} says that, applying the surplus shortline method to this particular \textit{one-street translation surface}, we obtain infinitely many explicit slopes with the property that a $1$-direction geodesic with any such slope is \textit{superuniform}.
However, this theorem is obtained by using \cite[Theorem~5.3.1]{BDY2}, the main result for the L-surface in that paper, with an exceptionally long proof.
So the mysterious connection between a concrete one-street translation surface and superuniformity there may seem at first sight a very deep, and perhaps accidental, result.
However, nothing is further from the truth.

If a polysquare translation surface has only one horizontal street, or only one vertical street, then we call it a \textit{one-street polysquare translation surface}.

A one-street polysquare translation surface does not in general satisfy the condition that if a horizontal street and a vertical street intersect, then the intersection is a unique square face.
Thus we cannot directly apply Theorem~\ref{thm7.2.2} as stated.
However, there is a general connection between such surfaces and superuniformity, and it is an extremely easy consequence of the general version of Theorem~\ref{thm7.2.2}.

Indeed, we simply need the condition that the polysquare translation surface has only one horizontal street (or one vertical street), but no restriction about the other direction. 
Then the street-spreading matrix $\bfS$ is a $1\times1$ matrix which gives rise to the largest eigenvalue of the $2$-step transition matrix~$\bfA$.
It follows that the second largest eigenvalue is irrelevant, which implies that the surplus shortline method provides explicit quadratic irrational slopes for which the irregularity exponent is equal to~$0$, so that the corresponding geodesic is superuniform.

We summarize our discussion as follows.

\begin{remark}
For every one-street polysquare translation surface, there exist infinitely many explicit quadratic irrational slopes such that every $1$-direction geodesic with any such slope is superuniform.
\end{remark}

We conclude this section by demonstrating that one-street polysquare translation surfaces are not rare.
The idea is based on a a far-reaching generalization of the DS-L-surface.

Recall the $2$-polysquare snake surface in Figure~7.1.4.
Snake surfaces are not restricted to $2$-polysquare translation surfaces, as illustrated in Figure~7.2.18.

\begin{displaymath}
\begin{array}{c}
\includegraphics[scale=0.8]{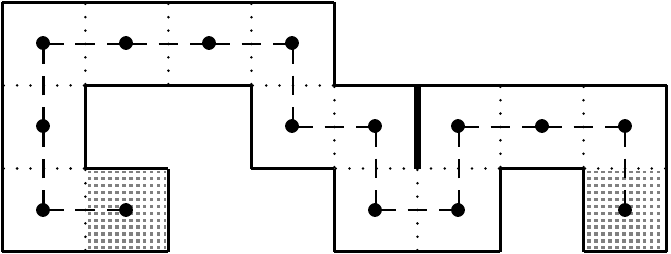}
\vspace{3pt}\\
\mbox{Figure 7.2.18: a snake surface}
\end{array}
\end{displaymath}

More precisely, let $\PPP$ be a finite polysquare translation surface.
We put a point at the center of every square face of~$\PPP$, and call it the \textit{capital} of the square face.
Consider the \textit{neighbor graph} of~$\PPP$, where the \textit{vertices} are the capitals, and where two capitals are joined by an \textit{edge} of the graph if the corresponding square faces share a common edge.
If the neighbor graph of $\PPP$ is a \textit{path}, with no branching, then $\PPP$ is called a snake surface, as shown in Figure~7.2.18.
If the neighbor graph of $\PPP$ is a \textit{tree}, then $\PPP$ is called a polysquare-tree translation surface, as shown in Figure~7.2.18 and~7.2.19.

\begin{displaymath}
\begin{array}{c}
\includegraphics[scale=0.8]{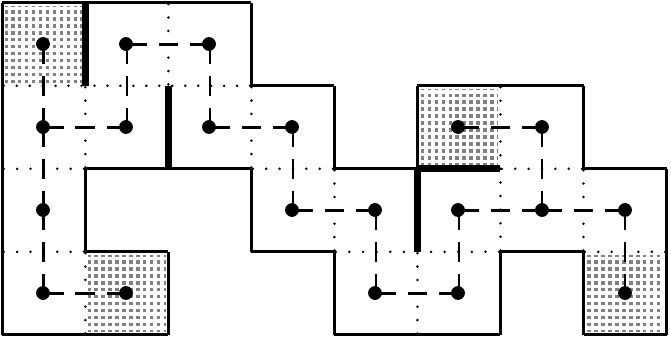}
\vspace{3pt}\\
\mbox{Figure 7.2.19: a polysquare-tree translation surface}
\end{array}
\end{displaymath}

Given any graph, the number of edges at a vertex is called the \textit{degree} of the vertex. 
In a path, the degree of every vertex is~$2$, except at the two end-vertices where the degree is~$1$.
In a tree we may have higher degrees.
For example, the neighbor tree of the polysquare-tree translation surface in Figure~7.2.19 has two vertices of degree~$3$. 

One-street polysquare translation surfaces are not rare, because the diagonal subdivision of every polysquare-tree translation surface has only one street.

The proof goes by induction on the number of square faces in the polysquare.
The inductive step is based on the elementary result in graph theory that every tree has a vertex of degree~$1$.
Indeed, it is easy to see that a tree with $n$ vertices has precisely $n-1$ edges, and that the sum of the degrees is twice the number of edges.
Since the sum of the degrees is $2(n-1)$, this implies that there must be a vertex with degree less than~$2$, and so must be equal to~$1$.

In terms of the polysquare-tree region, this result in graph theory means that there is always a square face with $3$ sides not shared with neighboring square faces.
Let us call them end square faces.
We have highlighted these end square faces in Figures 7.2.18 and~7.2.19.

Figure~7.2.20 shows the smaller polysquare-tree translation surface obtained from the polysquare-tree translation surface in Figure~7.2.19 by removing one of the end square faces.
By the induction hypothesis, the diagonal subdivision of this smaller polysquare-tree translation surface is a one-street translation surface, labelled by the numbers $1$ to~$38$.

\begin{displaymath}
\begin{array}{c}
\includegraphics[scale=0.8]{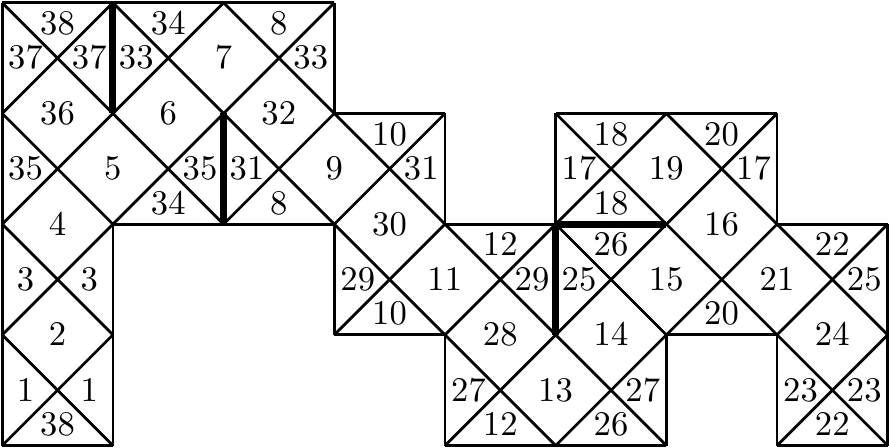}
\vspace{3pt}\\
\mbox{Figure 7.2.20: removing an end square face}
\end{array}
\end{displaymath}

Note that we carefully start the labelling from a common edge shared with the removed end square face, and keep moving in the north-east direction.
Adding back the removed end square face, Figure~7.2.21 shows an explicit one-street in the original polysquare-street translation surface, where the consecutive squares along the street are labelled with the consecutive integers from $1$ to~$40$.

\begin{displaymath}
\begin{array}{c}
\includegraphics[scale=0.8]{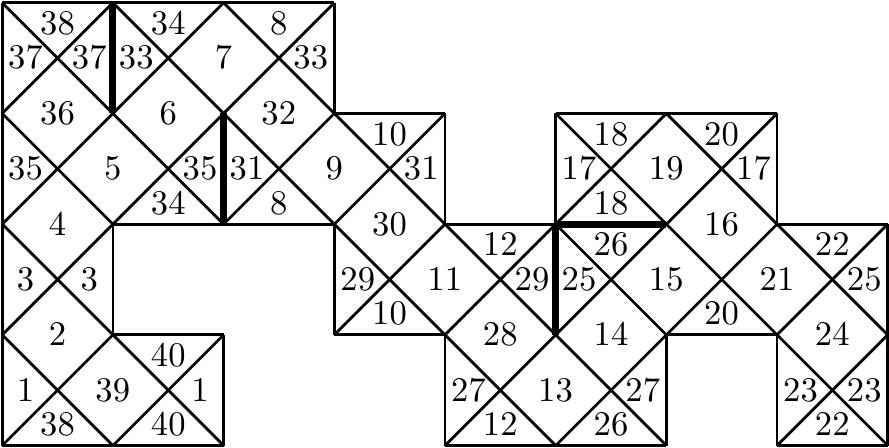}
\vspace{3pt}\\
\mbox{Figure 7.2.21: extension of the labelling}
\end{array}
\end{displaymath}

Note that the labelling in Figure~7.2.21 is obtained as a trivial extension of the labelling in Figure~7.2.20 by including $39$ and~$40$.
This completes the inductive step.

We cannot entirely drop the tree condition for the neighbor graph, as cycles may spoil the argument.
Consider the two polysquare cycles in Figure~7.2.22.

\begin{displaymath}
\begin{array}{c}
\includegraphics[scale=0.8]{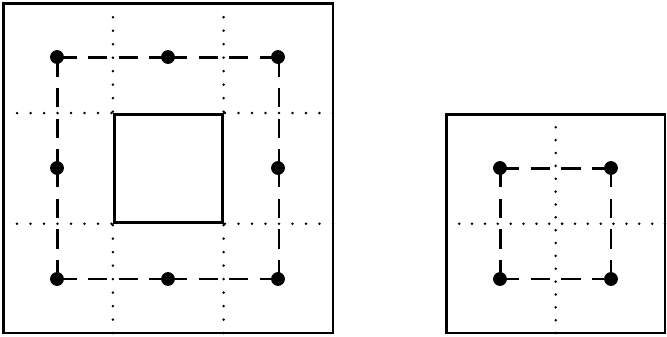}
\vspace{3pt}\\
\mbox{Figure 7.2.22: two polysquare-cycles}
\end{array}
\end{displaymath}

Neither of these polysquare-cycles leads to a one-street diagonal decomposition surface, as can be shown in Figure~7.2.23.

\begin{displaymath}
\begin{array}{c}
\includegraphics[scale=0.8]{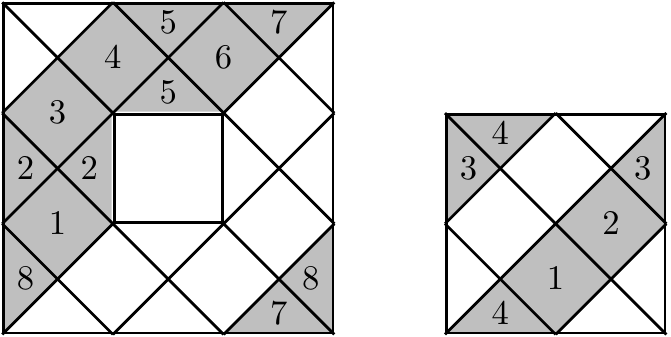}
\vspace{3pt}\\
\mbox{Figure 7.2.23: diagonal decomposition of two polysquare-cycles}
\end{array}
\end{displaymath}

In the picture on the left, we have a tilted street of length~$8$, covering only half the tilted square faces of the polysquare.
In the picture on the right, we have a tilted street of length~$4$, again covering only half the tilted square faces of the polysquare.

%
%

\subsection{Application of Theorem~\ref{thm7.2.2}: super-fast escape to infinity}\label{sec7.3}

Our goal in this section is to find explicit quadratic irrational slopes which give rise to orbits that exhibit \textit{super-fast} escape rate to infinity.
More precisely, given any $\eps>0$, we shall find explicit quadratic irrational slopes such that for any orbit with any such slope, there exists an infinite sequence $T_n$, $n\ge1$, such that $T_n\to\infty$ as $n\to\infty$, and the diameter of the initial segment of length $T_n$ of the orbit exceeds $c_0T_n^{1-\eps}$, where $c_0$ is an absolute constant depending only on the given special slope.

Billiard orbits and geodesics in integrable polysquare systems behave in the same way for all quadratic irrational slopes.
In complete contrast, billiard orbits and geodesics in infinite non-integrable polysquare systems may behave in completely different ways for different quadratic irrational slopes.
We may therefore have the full spectrum from super-slow to super-fast escape rate to infinity.

\begin{example}\label{ex7.3.1}
We return to the problem of billiard in the $\infty$-L-strip region shown in Figure~7.3.1.
In \cite[Theorem~6.7.2]{BCY}, it is shown that there are infinitely many quadratic irrational slopes such that a billiard orbit starting from a corner point with such a slope is dense on the $\infty$-L-strip region and also exhibits super-slow logarithmic escape rate to infinity.
In particular, any initial segment of such an orbit of length~$T$, $T\ge2$, has distance at most $c_0\log T$ from the starting point, where $c_0$ is an absolute constant depending only on the given special slope.

\begin{displaymath}
\begin{array}{c}
\includegraphics[scale=0.8]{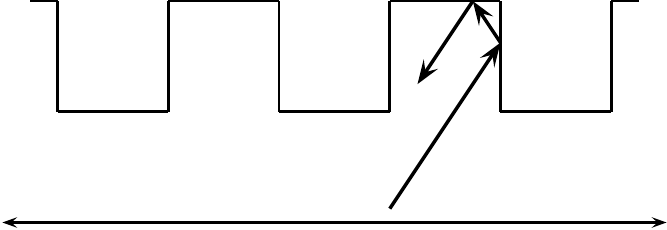}
\vspace{3pt}\\
\mbox{Figure 7.3.1: billiard in the $\infty$-L-strip region}
\end{array}
\end{displaymath}

As illustrated in Figure~7.3.2, the L-shape is the building block of the $\infty$-L-strip region.

\begin{displaymath}
\begin{array}{c}
\includegraphics[scale=0.8]{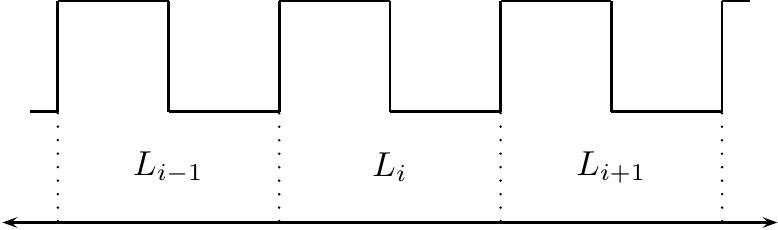}
\vspace{3pt}\\
\mbox{Figure 7.3.2: the $\infty$-L-strip region with its building blocks}
\end{array}
\end{displaymath}

The trick of unfolding reduces the problem of billiard on the $\infty$-L-strip region, a $4$-direction flow, to the problem of a $1$-direction geodesic flow on an appropriate infinite polysquare translation surface.
We call this surface the \textit{translation surface for $\infty$-L-strip billiard}, and denote it by $\Bil(\infty;1)$.

To describe $\Bil(\infty;1)$, we first unfold each of the L-shapes $L_i$ by reflecting it horizontally and vertically to obtain $4$ reflected copies and then put them together to form a $4$-copy-$L_i$.
We then obtain $\Bil(\infty;1)$ by gluing together the infinitely many copies of these $4$-copy-$L_i$.

Clearly there is billiard flow from each L-shape $L_i$ to its two immediate neighbours $L_{i-1}$ and $L_{i+1}$.
We therefore need to identify corresponding edges of these $4$-copy versions very carefully.
A simple examination will convince the reader that the edge identifications can be as illustrated in Figure~7.3.3.
Note that a $1$-direction geodesic on $\Bil(\infty;1)$ that goes from the vertical edge $v_2^{(i+1)}$ to the vertical edge $v_2^{(i)}$ corresponds to a billiard path going from $L_{i+1}$ to~$L_i$, whereas a $1$-direction geodesic on $\Bil(\infty;1)$ that goes from the vertical edge $v_3^{(i-1)}$ to the vertical edge $v_3^{(i)}$ corresponds to a billiard path going from $L_i$ to~$L_{i+1}$.

\begin{displaymath}
\begin{array}{c}
\includegraphics[scale=0.8]{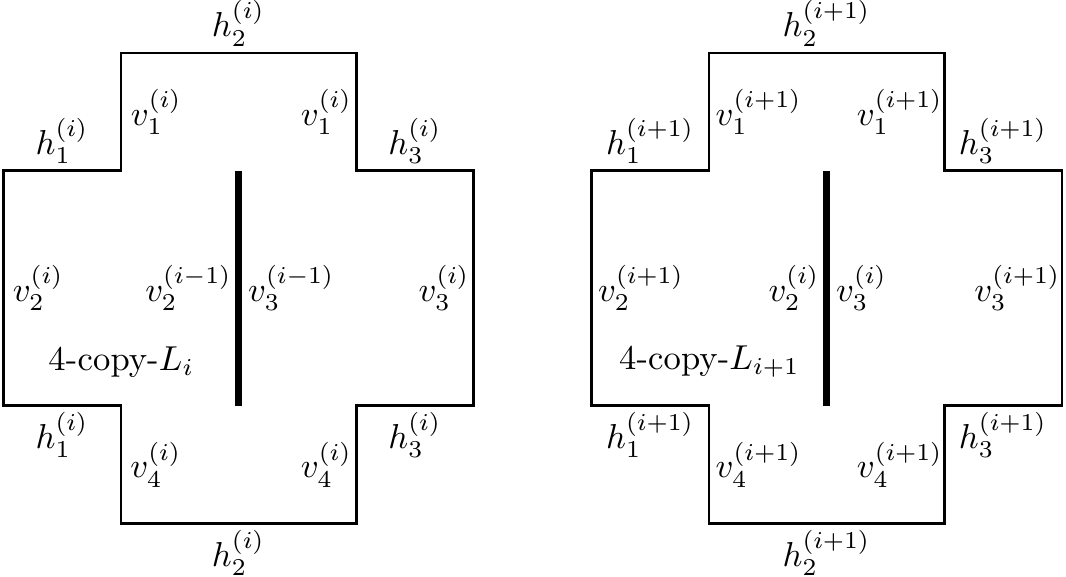}
\vspace{3pt}\vspace{3pt}\\
\mbox{Figure 7.3.3: $4$-copy-$L_i$ and $4$-copy-$L_{i+1}$ together with edge identifications}
\end{array}
\end{displaymath}

Thus the motion of the infinite L-strip billiard can be described in terms of $1$-direction geodesic flow on a finite polysquare translation surface~$\PPP$, the period-surface of $\Bil(\infty;1)$, illustrated in Figure~7.3.4, where we have rotated by $90$ degrees in the anticlockwise direction.

\begin{displaymath}
\begin{array}{c}
\includegraphics[scale=0.8]{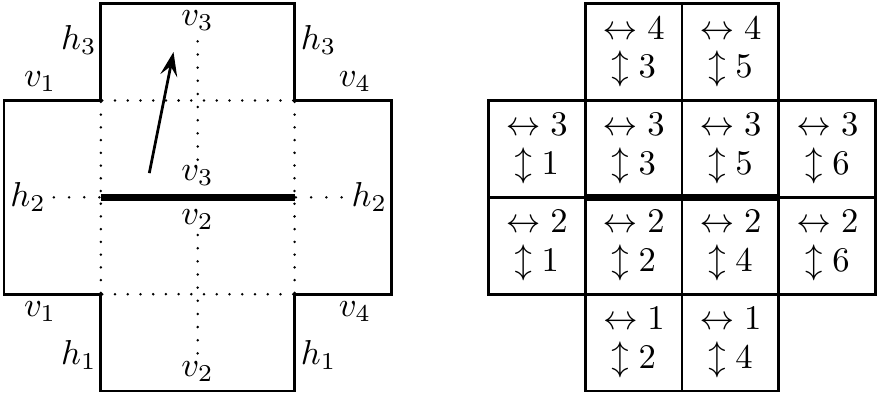}
\vspace{3pt}\vspace{3pt}\\
\mbox{Figure 7.3.4: the period surface $\PPP$ of $\Bil(\infty;1)$}
\end{array}
\end{displaymath}

Clearly $\PPP$ has $12$ square faces, $4$ horizontal streets and $6$ vertical streets, labelled according to the picture on the right where, for instance, the information
$\leftrightarrow3$ and $\updownarrow5$ in a square face indicates that this square face is on the $3$-rd horizontal street and the $5$-th vertical street.
Note that the anticlockwise rotation by $90$ degrees makes the application of Theorem~\ref{thm7.2.2} more convenient.

The boundary identification in $\PPP$ is clearly the simplest perpendicular translation.
Consider a $1$-direction geodesic on $\PPP$ with slope given by the arrow in the picture on the left in Figure~7.3.4.
If such a geodesic hits the edge $v_3$ at the top, then it jumps down vertically to the identified edge $v_3$ in the middle.
In terms of the $\infty$-L-strip billiard this means precisely that the billiard moves from $L_i$ to~$L_{i+1}$, \textit{i.e.}, \textit{moving right}.
If such a geodesic hits the edge $v_2$ in the middle, then it jumps down vertically to the  identified edge $v_2$ at the bottom.
In terms of the $\infty$-L-strip billiard this means precisely that the billiard moves from $L_i$ to~$L_{i-1}$, \textit{i.e.}, \textit{moving left}.
Thus by counting the number of edge cuttings for $v_2$ and for~$v_3$, we know the total number of steps moving to the left and the total number of steps moving to the right.
Taking the difference of these two numbers, we then know precisely which particular constituent L-shape happens to contain the billiard at a given moment. 

Now the edge cutting numbers are expressed in terms of powers of the relevant eigenvalues of the $2$-step transition matrix corresponding to the given quadratic irrational slope. 
It makes it plausible to guess that the \textit{irregularity exponent} of the period-surface $\PPP$ gives the escape rate to infinity for the infinite L-strip billiard, as long as the shortline method works.
Next we expand on this and justify this intuition.

We need to choose the parameters $m$ and $n$, which have to be integer multiples of the lengths of the horizontal and vertical streets respectively.
For this period surface, the horizontal streets have length $2$ or~$4$, while the vertical streets all have length~$2$.
Thus we let $m=4k$ and $n=2k$, where $k\ge1$ is any integer.
Consequently, we let
\begin{equation}\label{eq7.3.1}
\alpha_k=[2k;4k,2k,4k,\ldots]=k+\frac{\sqrt{4k^2+2}}{2}.
\end{equation}

Since the period-surface $\PPP$ has $12$ squares, the original shortline method leads to a $24\times24$ $2$-step transition matrix~$\bfA$.
To find the relevant eigenvalues and eigenvectors of such a large matrix directly is clearly rather inconvenient. 

Instead we apply Theorem~\ref{thm7.2.2}, which leads to a much smaller $4\times4$ street-spreading matrix $\bfS$ defined by \eqref{eq7.2.36}.
We follow the notation in Section~\ref{sec7.2}.
We consider the $\bfA$-invariant subspace $\VVV$ generated by the $8$ vectors $\bfu_i,\bfv_i$, $i=1,2,3,4$, defined by \eqref{eq7.2.14} and
\eqref{eq7.2.15}.
The relevant eigenvalues of~$\bfA$ are then eigenvalues of the matrix~$\bfA\vert_\VVV$, defined by \eqref{eq7.2.37}.

Consider the horizontal street corresponding to~$\bfu_1$, as highlighted in the picture on the left in Figure~7.3.5.
Clearly $J_1=\{2,4\}$.
Using \eqref{eq7.2.17}, we have
\begin{align}\label{eq7.3.2}
(\bfA-I)\bfu_1
&
=(\bfA-I)[\{2k^2\uparrow_{1,2},2k^2\uparrow_{1,4}\}]+(\bfA-I)[\{2k^2\uparrow_{2,2},2k^2\uparrow_{2,4}\}]
\nonumber
\\
&
=4k^2(\bfu_1+\bfv_1)+4k^2(\bfu_2+\bfv_2).
\end{align}

With $J_2=\{1,2,4,6\}$, for the horizontal street corresponding to~$\bfu_2$, as highlighted in the picture on the right in Figure~7.3.5, a similar argument gives
\begin{equation}\label{eq7.3.3}
(\bfA-I)\bfu_2
=2k^2(\bfu_1+\bfv_1)+4k^2(\bfu_2+\bfv_2)+2k^2(\bfu_3+\bfv_3).
\end{equation}
\begin{displaymath}
\begin{array}{c}
\includegraphics[scale=0.8]{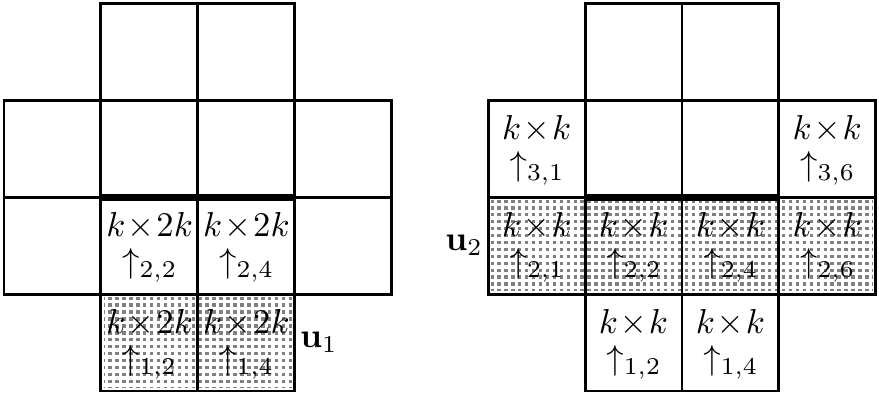}
\vspace{3pt}\vspace{3pt}\\
\mbox{Figure 7.3.5: almost vertical units of type $\uparrow$ in $\bfu_1$ and $\bfu_2$}
\end{array}
\end{displaymath}

With $J_3=\{1,3,5,6\}$, for the horizontal street corresponding to~$\bfu_3$, as highlighted in the picture on the left in Figure~7.3.6, a similar argument gives
\begin{equation}\label{eq7.3.4}
(\bfA-I)\bfu_3
=2k^2(\bfu_2+\bfv_2)+4k^2(\bfu_3+\bfv_3)+2k^2(\bfu_4+\bfv_4).
\end{equation}

With $J_4=\{3,5\}$, for the horizontal street corresponding to~$\bfu_3$, as highlighted in the picture on the right in Figure~7.3.6, a similar argument gives
\begin{equation}\label{eq7.3.5}
(\bfA-I)\bfu_4
=4k^2(\bfu_3+\bfv_3)+4k^2(\bfu_4+\bfv_4).
\end{equation}
\begin{displaymath}
\begin{array}{c}
\includegraphics[scale=0.8]{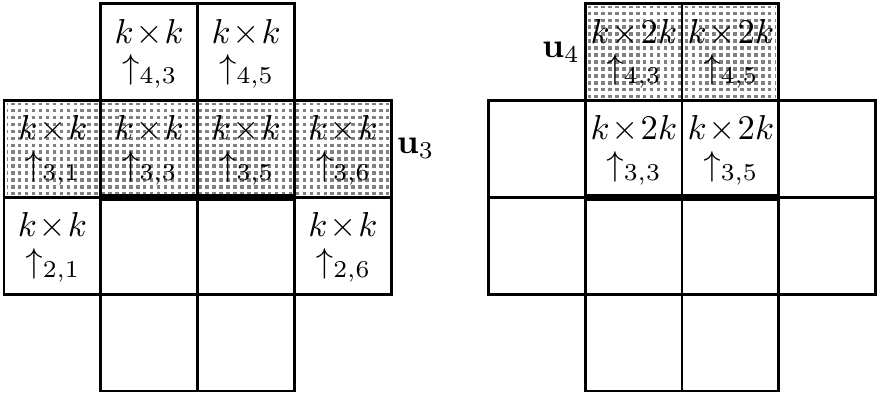}
\vspace{3pt}\vspace{3pt}\\
\mbox{Figure 7.3.6: almost vertical units of type $\uparrow$ in $\bfu_3$ and $\bfu_4$}
\end{array}
\end{displaymath}

It follows from \eqref{eq7.3.2}--\eqref{eq7.3.5} that the street-spreading matrix is given by
\begin{displaymath}
\bfS
=\begin{pmatrix}
4k^2&2k^2&0&0\\
4k^2&4k^2&2k^2&0\\
0&2k^2&4k^2&4k^2\\
0&0&2k^2&4k^2
\end{pmatrix}
=k^2\begin{pmatrix}
4&2&0&0\\
4&4&2&0\\
0&2&4&4\\
0&0&2&4
\end{pmatrix}.
\end{displaymath}
This has eigenvalues and corresponding eigenvectors
\begin{align}
\tau_1=8k^2,
&\quad\psi_1=(1,2,2,1)^T,
\nonumber
\\
\tau_2=6k^2,
&\quad\psi_2=(1,1,-1,-1)^T,
\nonumber
\\
\tau_3=2k^2,
&\quad\psi_3=(1,-1,-1,1)^T,
\nonumber
\\
\tau_4=0,
&\quad\psi_4=(1,-2,2,-1)^T.
\nonumber
\end{align}
Using the formula \eqref{eq7.2.38}, we obtain corresponding eigenvalues of $\bfA\vert_\VVV$ given by
\begin{align}
\lambda(\tau_1;\pm)
&
=1+\frac{8k^2\pm\sqrt{64k^4+32k^2}}{2}
=1+4k^2\pm2k\sqrt{4k^2+2},
\nonumber
\\
\lambda(\tau_2;\pm)
&
=1+\frac{6k^2\pm\sqrt{36k^4+24k^2}}{2}
=1+3k^2\pm k\sqrt{9k^2+6},
\nonumber
\\
\lambda(\tau_3;\pm)
&
=1+\frac{2k^2\pm\sqrt{4k^4+8k^2}}{2}
=1+k^2\pm k\sqrt{k^2+2},
\nonumber
\\
\lambda(\tau_4;\pm)
&
=1.
\nonumber
\end{align}
The three relevant eigenvalues of $\bfA\vert_\VVV$ and corresponding eigenvectors are therefore
\begin{align}
\lambda_1
=1+4k^2+2k\sqrt{4k^2+2},
&
\quad\Psi_1=(1,2,2,1,b_{11},b_{12},b_{13},b_{14})^T,
\label{eq7.3.6}
\\
\lambda_2
=1+3k^2+k\sqrt{9k^2+6},
&
\quad\Psi_2=(1,1,-1,-1,b_{21},b_{22},b_{23},b_{24})^T,
\label{eq7.3.7}
\\
\lambda_3
=1+k^2+k\sqrt{k^2+2},
&
\quad\Psi_3=(1,-1,-1,1,b_{31},b_{32},b_{33},b_{34})^T,
\nonumber
\end{align}
where the values of $b_{ij}$, $i=1,2,3$, $j=1,2,3,4$, can be calculated using \eqref{eq7.2.39}.

Recall that by counting the number of edge cuttings for $v_2$ and for~$v_3$, we know the total number of steps of the billiard orbit moving to the left and the total number of steps of the billiard orbit moving to the right.
The difference between these two numbers then gives us information on the movement of the billiard orbit.

We now proceed to find the edge cutting numbers of $v_2$ and~$v_3$.

The eigenvector of $\bfA$ corresponding to the eigenvector $\Psi_1$ of $\bfA\vert_\VVV$ is given by
\begin{displaymath}
\bfu_1+2\bfu_2+2\bfu_3+\bfu_4+b_{11}\bfv_1+b_{12}\bfv_2+b_{13}\bfv_3+b_{14}\bfv_4.
\end{displaymath}

We first find the number of almost vertical units counted here that cut the edge~$v_2$; see Figure~7.3.4.

Concerning the contribution from those units counted by~$\bfu_1$, it is clear from the picture on the left in Figure~7.3.5 that we have $2k^2$ units of type $\uparrow_{2,2}$ and $2k^2$ units of type $\uparrow_{2,4}$, making a total of~$4k^2$.

Concerning the contribution from those units counted by~$\bfu_2$, it is clear from the picture on the right in Figure~7.3.5 that we have $k^2$ units of type $\uparrow_{2,2}$ and $k^2$ units of type $\uparrow_{2,4}$, making a total of~$2k^2$.

Concerning the contribution from those units counted by $\bfu_3$ and~$\bfu_4$, it is clear from Figure~7.3.6 that there is none.

To study the contribution from those units counted by $\bfv_1,\ldots,\bfv_4$, we appeal to \eqref{eq7.2.15} and Figure~7.2.1.
There is clearly no contribution from $\bfv_1,\bfv_3,\bfv_4$, while for~$\bfv_2$, we have positive contributions from $\nuparrow_{2,1}$ and
$\nuparrow_{2,2}$, and negative contributions from $\uparrow_{2,2}$ and $\uparrow_{2,4}$, each counted with the multiplicity of $J_2^*$, and so there is perfect cancellation.

Thus for the edge~$v_2$, we have a total count of $4k^2+2(2k^2)=8k^2$.

We next find the number of almost vertical units counted here that cut the edge~$v_3$; see Figure~7.3.4.

Concerning the contribution from those units counted by $\bfu_1$ and~$\bfu_2$, it is clear from Figure~7.3.5 that there is none.

Concerning the contribution from those units counted by~$\bfu_3$, it is clear from the picture on the left in Figure~7.3.6 that we have $k^2$ units of type $\uparrow_{4,3}$ and $k^2$ units of type $\uparrow_{4,5}$, making a total of~$2k^2$.

Concerning the contribution from those units counted by~$\bfu_4$, it is clear from the picture on the right in Figure~7.3.6 that we have $2k^2$ units of type
$\uparrow_{4,3}$ and $2k^2$ units of type $\uparrow_{4,5}$, making a total of~$4k^2$.

There is clearly no contribution from $\bfv_1,\bfv_2,\bfv_3$, while for~$\bfv_4$, we have positive contributions from $\nuparrow_{4,3}$ and $\nuparrow_{4,5}$, and negative contributions from $\uparrow_{4,3}$ and $\uparrow_{4,5}$, each counted with the multiplicity of $J_4^*$, and so there is perfect cancellation.

Thus for the edge~$v_3$, we have a total count of $2(2k^2)+4k^2=8k^2$.

Since the two counts are the same, it follows that the eigenvalue $\lambda_1$ does not contribute to the difference between the edge cutting numbers of $v_2$ and~$v_3$.

\begin{remark}
The perfect cancellation is not surprising at all.
Indeed, the high powers corresponding to the largest eigenvalue represents the \textit{main term} of the problem.
Any lack of cancellation here would violate the uniformity of the geodesic.
With a quadratic irrational slope, this would specifically violate the Gutkin--Veech theorem \cite{G,V2,V3} that guarantees uniformity of any
$1$-direction geodesic in any polysquare translation surface with irrational slope.
See also \cite{Bo1,Bo2}.
\end{remark}

The eigenvector of $\bfA$ corresponding to the eigenvector $\Psi_2$ of $\bfA\vert_\VVV$ is given by
\begin{displaymath}
\bfu_1+\bfu_2-\bfu_3-\bfu_4+b_{21}\bfv_1+b_{22}\bfv_2+b_{23}\bfv_3+b_{24}\bfv_4.
\end{displaymath}

We first find the number of almost vertical units counted here that cut the edge~$v_2$.
Recall that the contribution from $\bfu_1$ and $\bfu_2$ are $4k^2$ and $2k^2$ respectively, while there is no contribution from $\bfu_3$ and
$\bfu_4$, and also no contribution from $\bfv_1,\ldots,\bfv_4$.
Thus for the edge~$v_2$, we have a total count of $4k^2+2k^2=6k^2$.

We next find the number of almost vertical units counted here that cut the edge~$v_3$.
Recall that the contribution from $\bfu_3$ and $\bfu_4$ are $2k^2$ and $4k^2$ respectively, while there is no contribution from $\bfu_1$ and
$\bfu_2$, and also no contribution from $\bfv_1,\ldots,\bfv_4$.
Thus for the edge~$v_3$, we have a total count of $-2k^2-4k^2=-6k^2$.

The difference, in absolute value, is therefore~$12k^2$.
Let $c_2=c_{2,1}$ in \eqref{eq7.2.11}.
Then it follows that the second eigenvalue $\lambda_2$ contributes $12c_2\lambda_2^rk^2$ to the difference between the edge cutting numbers of $v_2$ and~$v_3$.

So the deviation from the starting point comes from the second largest eigenvalue, and this has the order of magnitude $\lambda_2^r$ compared to the order of magnitude $\lambda_1^r$ of the main term.
Choosing $T=\lambda_1^r$, we have $\lambda_2^r\asymp T^{\kappa_0}$, where
\begin{equation}\label{eq7.3.8}
\kappa_0=\kappa_0(k)=\frac{\log\lambda_2}{\log\lambda_1}=\frac{2\log k+\log6}{2\log k+\log8}+o(1),
\end{equation}
in view of \eqref{eq7.3.6} and \eqref{eq7.3.7}, is the irregularity exponent of a $1$-direction geodesic of slope \eqref{eq7.3.1} on the period-surface in Figure~7.3.4.

Clearly $\kappa_0=\kappa_0(k)\to1$ as $k\to\infty$.
So we have just established $T^{\kappa_0}=T^{1-\eps}$ size \textit{super-fast escape rate to infinity} for the infinite L-strip billiard with the explicit class of quadratic irrational slopes in \eqref{eq7.3.1} where the parameter $k\ge1$ is any integer.

It is easy to see that the irregularity exponent $\kappa_0=\kappa_0(k)$ in \eqref{eq7.3.8} is {\it precisely} the escape rate to infinity of this infinite billiard.
Indeed, the exponent of the escape rate to infinity cannot be larger than the expression \eqref{eq7.3.8} coming from the two largest eigenvalues.
\end{example}

\begin{example}\label{ex7.3.2}
Consider the Ehrenfest wind-tree model with $a=b=1/2$, rescaled so that we have square faces of unit area.
Consider a $4$-direction billiard trajectory in the infinite region in Figure~7.3.7.
Here the building block of this infinite polysquare translation surface is an L-shape.

\begin{displaymath}
\begin{array}{c}
\includegraphics[scale=0.8]{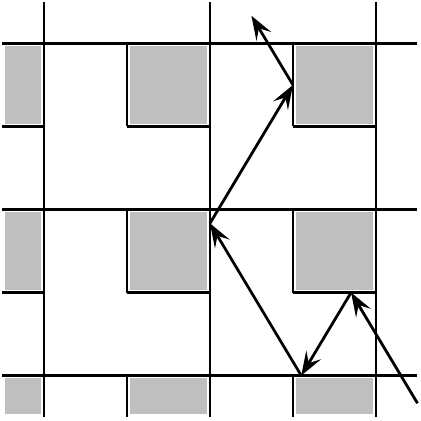}
\vspace{3pt}\\
\mbox{Figure 7.3.7: Ehrenfest wind-tree billiard model}
\end{array}
\end{displaymath}

As usual, this infinite billiard model is equivalent to a $1$-direction geodesic flow on an infinite polysquare translation surface that we denote by $\Bil(\infty;2)$, with the index $2$ here indicating that this is double periodic.
To construct $\Bil(\infty;2)$, we take one of the L-shape building blocks, and unfold the $4$-direction billiard flow on it to a $1$-direction geodesic flow on a $4$-copy version of it, obtained by reflecting horizontally and vertically.
Note that each $4$-copy L-shape has a right-neighbor, a left-neighbor, a down-neighbor and an up-neighbor in $\Bil(\infty;2)$, and we need appropriate edge identification which amount to a more complicated version of Figure~7.3.3.
The period-surface $\Bil(2)$ of $\Bil(\infty;2)$ is shown in the picture on the left in Figure~7.3.8. 

\begin{displaymath}
\begin{array}{c}
\includegraphics[scale=0.8]{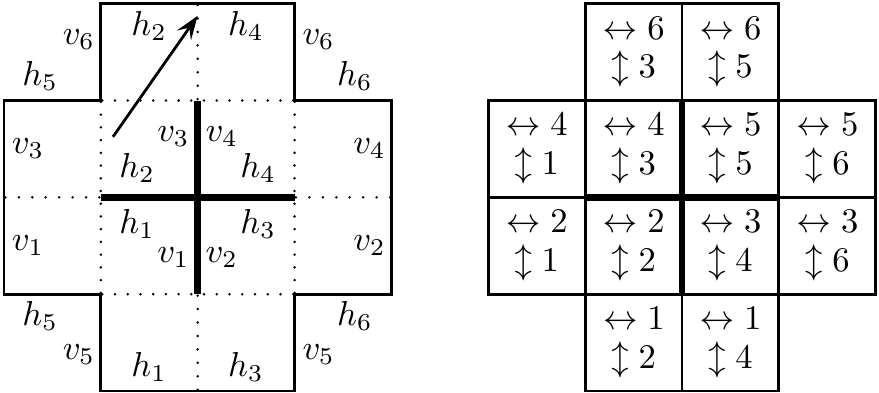}
\vspace{3pt}\\
\mbox{Figure 7.3.8: the period surface $\Bil(2)$ of $\Bil(\infty;2)$}
\end{array}
\end{displaymath}

The horizontal and vertical streets of $\Bil(2)$ are indicated in the picture on the right in Figure~7.3.8 where, for instance, the entries
$\leftrightarrow3$ and $\updownarrow4$ in a square face indicates that the square face is on the $3$-rd horizontal street and the $4$-th vertical street.
It is easy to see that $\Bil(2)$ has $6$ horizontal streets and $6$ vertical streets.

The boundary identification in $\Bil(2)$ is clearly the simplest perpendicular translation.
Consider a $1$-direction geodesic in $\Bil(2)$ with slope given by the arrow in the picture on the left in Figure~7.3.8.
If such a geodesic hits the edge $h_2$ or $h_4$ at the top, then it jumps down vertically to the identified edge $h_2$ or $h_4$ in the middle.
In terms of the Ehrenfest wind-tree billiard this means precisely that the billiard moves from one L-shape to the next above, \textit{i.e.}, \textit{moving up}.
If such a geodesic hits the edge $h_1$ or $h_3$ in the middle, then it jumps down vertically to the  identified edge $h_1$ or $h_3$ at the bottom.
In terms of the Ehrenfest wind-tree billiard this means precisely that the billiard moves from one L-shape to the next below, \textit{i.e.}, \textit{moving down}.
Thus by counting the number of edge cuttings for $h_2,h_4$ and for $h_1,h_3$, we know the total number of steps moving up and the total number of steps moving down.
Taking the difference of these two numbers, we then know precisely the vertical location of the billiard at a given moment. 

We need to choose the parameters $m$ and $n$, which have to be integer multiples of the lengths of the horizontal and vertical streets respectively.
For this period surface, the horizontal and vertical streets all have length $2$.
Thus we let $m=2k$ and $n=2k$, where $k\ge1$ is any integer.
Consequently, we let
\begin{equation}\label{eq7.3.9}
\alpha_k=[2k;2k,2k,2k,\ldots]=k+\sqrt{k^2+1}.
\end{equation}

We apply Theorem~\ref{thm7.2.2}, leading to a $6\times6$ street-spreading matrix $\bfS$ defined by \eqref{eq7.2.36}.
We follow the notation in Section~\ref{sec7.2}.
We consider the $\bfA$-invariant subspace $\VVV$ generated by the $12$ vectors $\bfu_i,\bfv_i$, $i=1,\ldots,6$, defined by \eqref{eq7.2.14} and \eqref{eq7.2.15}.
The relevant eigenvalues of~$\bfA$ are then eigenvalues of the matrix~$\bfA\vert_\VVV$, defined by \eqref{eq7.2.37}.

Consider the horizontal street corresponding to~$\bfu_1$, as highlighted in the picture on the left in Figure~7.3.9.
Clearly $J_1=\{2,4\}$.
Using \eqref{eq7.2.17}, we have
\begin{align}\label{eq7.3.10}
(\bfA-I)\bfu_1
&
=(\bfA-I)[\{k^2\uparrow_{1,2},k^2\uparrow_{1,4}\}]+(\bfA-I)[\{k^2\uparrow_{2,2},k^2\uparrow_{3,4}\}]
\nonumber
\\
&
=2k^2(\bfu_1+\bfv_1)+k^2(\bfu_2+\bfv_2)+k^2(\bfu_3+\bfv_3).
\end{align}

With $J_2=\{1,2\}$, for the horizontal street corresponding to~$\bfu_2$, as highlighted in the picture in the middle in Figure~7.3.9, a similar argument gives
\begin{equation}\label{eq7.3.11}
(\bfA-I)\bfu_2
=k^2(\bfu_1+\bfv_1)+2k^2(\bfu_2+\bfv_2)+k^2(\bfu_4+\bfv_4).
\end{equation}

With $J_3=\{4,6\}$, for the horizontal street corresponding to~$\bfu_3$, as highlighted in the picture on the right in Figure~7.3.9, a similar argument gives
\begin{equation}\label{eq7.3.12}
(\bfA-I)\bfu_3
=k^2(\bfu_1+\bfv_1)+2k^2(\bfu_3+\bfv_3)+k^2(\bfu_5+\bfv_5).
\end{equation}
\begin{displaymath}
\begin{array}{c}
\includegraphics[scale=0.8]{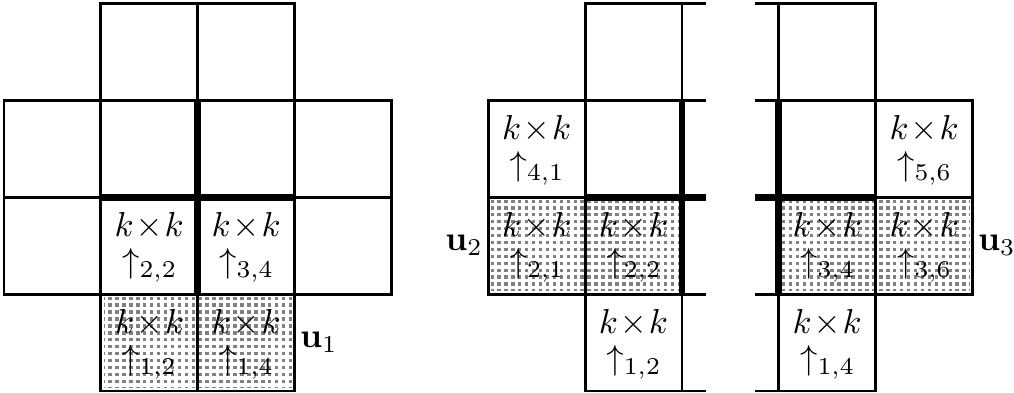}
\vspace{3pt}\vspace{3pt}\\
\mbox{Figure 7.3.9: almost vertical units of type $\uparrow$ in $\bfu_1,\bfu_2,\bfu_3$}
\end{array}
\end{displaymath}

With $J_4=\{1,3\}$, for the horizontal street corresponding to~$\bfu_4$, as highlighted in the picture on the left in Figure~7.3.10, a similar argument gives
\begin{equation}\label{eq7.3.13}
(\bfA-I)\bfu_4
=k^2(\bfu_2+\bfv_2)+2k^2(\bfu_4+\bfv_4)+k^2(\bfu_6+\bfv_6).
\end{equation}

With $J_5=\{5,6\}$, for the horizontal street corresponding to~$\bfu_5$, as highlighted in the picture in the middle in Figure~7.3.10, a similar argument gives
\begin{equation}\label{eq7.3.14}
(\bfA-I)\bfu_5
=k^2(\bfu_3+\bfv_3)+2k^2(\bfu_5+\bfv_5)+k^2(\bfu_6+\bfv_6).
\end{equation}

With $J_6=\{3,5\}$, for the horizontal street corresponding to~$\bfu_5$, as highlighted in the picture in the middle in Figure~7.3.10, a similar argument gives
\begin{equation}\label{eq7.3.15}
(\bfA-I)\bfu_6
=k^2(\bfu_4+\bfv_4)+k^2(\bfu_5+\bfv_5)+2k^2(\bfu_6+\bfv_6).
\end{equation}
\begin{displaymath}
\begin{array}{c}
\includegraphics[scale=0.8]{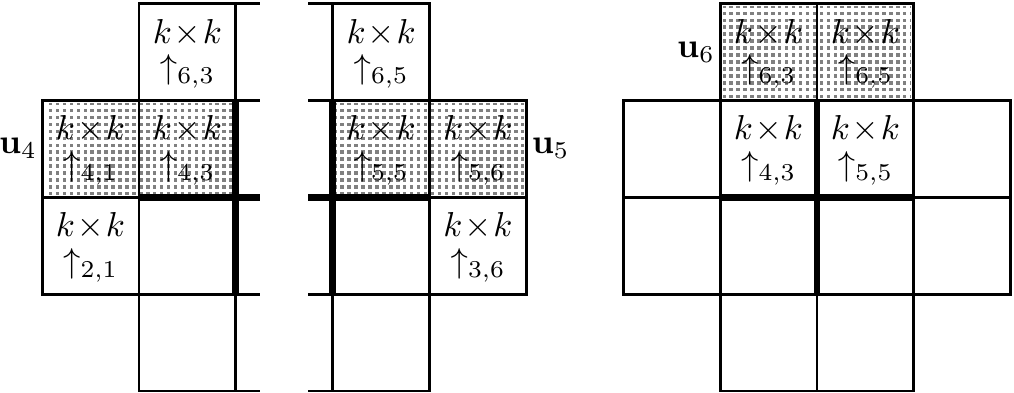}
\vspace{3pt}\vspace{3pt}\\
\mbox{Figure 7.3.10: almost vertical units of type $\uparrow$ in $\bfu_4,\bfu_5,\bfu_6$}
\end{array}
\end{displaymath}

It follows from \eqref{eq7.3.10}--\eqref{eq7.3.15} that the street-spreading matrix is given by
\begin{displaymath}
\bfS
=\begin{pmatrix}
2k^2&k^2&k^2&0&0&0\\
k^2&2k^2&0&k^2&0&0\\
k^2&0&2k^2&0&k^2&0\\
0&k^2&0&2k^2&0&k^2\\
0&0&k^2&0&2k^2&k^2\\
0&0&0&k^2&k^2&2k^2
\end{pmatrix}
=k^2\begin{pmatrix}
2&1&1&0&0&0\\
1&2&0&1&0&0\\
1&0&2&0&1&0\\
0&1&0&2&0&1\\
0&0&1&0&2&1\\
0&0&0&1&1&2
\end{pmatrix}.
\end{displaymath}
This has non-zero eigenvalues and corresponding eigenvectors
\begin{align}
\tau_1=4k^2,
&\quad\psi_1=(1,1,1,1,1,1)^T,
\nonumber
\\
\tau_2=3k^2,
&\quad\psi_{2,1}=(-1,0,-1,1,0,1)^T,
\nonumber
\\
&\quad\psi_{2,2}=(0,-1,1,-1,1,0)^T,
\nonumber
\\
\tau_3=k^2,
&\quad\psi_{3,1}=(1,0,-1,-1,0,1)^T,
\nonumber
\\
&\quad\psi_{3,2}=(0,1,-1,-1,1,0)^T,
\nonumber
\\
\tau_4=0,
&\quad\psi_4=(-1,1,1,-1,-1,1)^T.
\nonumber
\end{align}
Using the formula \eqref{eq7.2.38}, we obtain corresponding eigenvalues of $\bfA\vert_\VVV$ given by
\begin{align}
\lambda(\tau_1;\pm)
&
=1+\frac{4k^2\pm\sqrt{16k^4+16k^2}}{2}
=1+2k^2\pm2k\sqrt{k^2+1},
\nonumber
\\
\lambda(\tau_2;\pm)
&
=1+\frac{3k^2\pm\sqrt{9k^4+12k^2}}{2}
=1+\frac{3k^2}{2}\pm\frac{k\sqrt{9k^2+12}}{2},
\nonumber
\\
\lambda(\tau_3;\pm)
&
=1+\frac{k^2\pm\sqrt{k^4+4k^2}}{2}
=1+\frac{k^2}{2}\pm\frac{k\sqrt{k^2+4}}{2}.
\nonumber
\\
\lambda(\tau_4;\pm)
&
=1.
\nonumber
\end{align}
The three relevant eigenvalues of $\bfA\vert_\VVV$ and corresponding eigenvectors are therefore
\begin{align}
\lambda_1
=1+2k^2+2k\sqrt{k^2+1},
&
\quad\Psi_1=(1,1,1,1,1,1,b_{11},\ldots,b_{16})^T,
\label{eq7.3.16}
\\
\lambda_2
=1+\frac{3k^2}{2}+\frac{k\sqrt{9k^2+12}}{2},
&
\quad\Psi_{2,1}=(-1,0,-1,1,0,1,b_{21}^{(1)},\ldots,b_{26}^{(1)})^T,
\label{eq7.3.17}
\\
&
\quad\Psi_{2,2}=(0,-1,1,-1,1,0,b_{21}^{(2)},\ldots,b_{26}^{(2)})^T,
\nonumber
\\
\lambda_3
=1+\frac{k^2}{2}+\frac{k\sqrt{k^2+4}}{2},
&
\quad\Psi_{3,1}=(1,0,-1,-1,0,1,b_{31}^{(1)},\ldots,b_{36}^{(1)})^T,
\nonumber
\\
&
\quad\Psi_{3,2}=(0,1,-1,-1,1,0,b_{31}^{(2)},\ldots,b_{36}^{(2)})^T,
\nonumber
\end{align}
where the values of $b_{1j},b_{2j}^{(1)},b_{2j}^{(2)},b_{3j}^{(1)},b_{3j}^{(2)}$, $j=1,\ldots,6$, can all be calculated using \eqref{eq7.2.39}.

We now proceed to find the edge cutting numbers of $h_2,h_4$ and $h_1,h_3$.

As in Example~\ref{ex7.3.1}, the eigenvalue $\lambda_1$ does not contribute to their difference, as any lack of cancellation would violate the Gutkin--Veech theorem that guarantees uniformity.

The eigenvector of $\bfA$ corresponding to the eigenvector $\Psi_{2,1}$ of $\bfA\vert_\VVV$ is given by
\begin{equation}\label{eq7.3.18}
-\bfu_1-\bfu_3+\bfu_4+\bfu_6
-b_{21}^{(1)}\bfv_1-b_{23}^{(1)}\bfv_3+b_{24}^{(1)}\bfv_4+b_{26}^{(1)}\bfv_6.
\end{equation}
The eigenvector of $\bfA$ corresponding to the eigenvector $\Psi_{2,2}$ of $\bfA\vert_\VVV$ is given by
\begin{equation}\label{eq7.3.19}
-\bfu_2+\bfu_3-\bfu_4+\bfu_5
-b_{22}^{(2)}\bfv_2+b_{23}^{(2)}\bfv_3-b_{24}^{(2)}\bfv_4+b_{25}^{(2)}\bfv_5.
\end{equation}

We first find the number of almost vertical units counted here that cut the edges $h_2$ and~$h_4$; see Figure~7.3.8.
The count from each of $\bfu_i$, $i=1,\ldots,6$ are
\begin{displaymath}
\begin{array}{llllll}
\bfu_1\mapsto0,\ {}
&\bfu_2\mapsto0,\ {}
&\bfu_3\mapsto0,\ {}
\bfu_4\mapsto k^2,\ {}
&\bfu_5\mapsto k^2,\ {}
&\bfu_6\mapsto2k^2.
\end{array}
\end{displaymath}
On the other hand, it is easy to show that the total contribution from $\bfv_i$, $i=1,\ldots,6$ is~$0$.
Thus corresponding to the eigenvector $\Psi_{2,1}$ and \eqref{eq7.3.18}, the total count is~$3k^2$.
Corresponding to the eigenvector $\Psi_{2,2}$ and \eqref{eq7.3.19}, the total count is~$0$.

We next find the number of almost vertical units counted here that cut the edges $h_1$ and~$h_3$; see Figure~7.3.8.
The count from each of $\bfu_i$, $i=1,\ldots,6$ are
\begin{displaymath}
\begin{array}{llllll}
\bfu_1\mapsto2k^2,\ {}
&\bfu_2\mapsto k^2,\ {}
&\bfu_3\mapsto k^2,\ {}
\bfu_4\mapsto0,\ {}
&\bfu_5\mapsto0,\ {}
&\bfu_6\mapsto0.
\end{array}
\end{displaymath}
Like before, the total contribution from $\bfv_i$, $i=1,\ldots,6$ is~$0$.
Thus corresponding to the eigenvector $\Psi_{2,1}$ and \eqref{eq7.3.18}, the total count is~$-3k^2$.
Corresponding to the eigenvector $\Psi_{2,2}$ and \eqref{eq7.3.19}, the total count is~$0$.

It follows that the second eigenvalue $\lambda_2$ contributes $6c_{2,1}\lambda_2^rk^2$ to the difference between the edge cuttings numbers of $h_2,h_4$ and $h_1,h_3$.

So the deviation from the starting point comes from the second largest eigenvalue, and this has the order of magnitude $\lambda_2^r$ compared to the order of magnitude $\lambda_1^r$ of the main term.
Choosing $T=\lambda_1^r$, we have $\lambda_2^r\asymp T^{\kappa_0}$, where
\begin{equation}\label{eq7.3.20}
\kappa_0=\kappa_0(k)=\frac{\log\lambda_2}{\log\lambda_1}=\frac{2\log k+\log3}{2\log k+\log4}+o(1),
\end{equation}
in view of \eqref{eq7.3.16} and \eqref{eq7.3.17}, is the irregularity exponent of a $1$-direction geodesic of slope \eqref{eq7.3.9} on the period-surface in Figure~7.3.8.

Clearly $\kappa_0=\kappa_0(k)\to1$ as $k\to\infty$.
So we have just established $T^{\kappa_0}=T^{1-\eps}$ size super-fast escape rate to infinity for the infinite L-strip billiard with the explicit class of quadratic irrational slopes in \eqref{eq7.3.9} where the parameter $k\ge1$ is any integer.

It is easy to see that the irregularity exponent $\kappa_0=\kappa_0(k)$ in \eqref{eq7.3.20} is {\it precisely} the escape rate to infinity of this infinite billiard.
Indeed, the exponent of the escape rate to infinity cannot be larger than the expression \eqref{eq7.3.20} coming from the two largest eigenvalues.

This example also highlights that different quadratic irrational slopes can lead to $1$-direction geodesics that exhibit vastly different escape rates to infinity.
To explain this, we shall consider an infinite polysquare translation surface $\PPP$ where there is an absolute bound $\ell$ such that the length of any horizontal or vertical street of $\PPP$ is at most~$\ell$.
This is called an $\ell$-square-maze translation surface, and $1$-direction geodesics on square-maze translation surfaces are studied in \cite[Theorem~6.5.1]{BCY}.

The problem connected to the Ehrenfest wind-tree model billiard with $a=b=1/2$ that we have considered here leads to a problem of $1$-direction geodesic flow on
$\Bil(\infty;2)$.
Now $\Bil(\infty;2)$ has infinite horizontal and vertical streets, so is not a square-maze translation surface.
However, if we consider \textit{streets} in $\Bil(\infty;2)$ at $45$ degrees to the horizontal and vertical axes, then the situation becomes very different.
First of all, observe such a street in the Ehrenfest wind-tree model billiard as shown in Figure~7.3.11.

\begin{displaymath}
\begin{array}{c}
\includegraphics[scale=0.8]{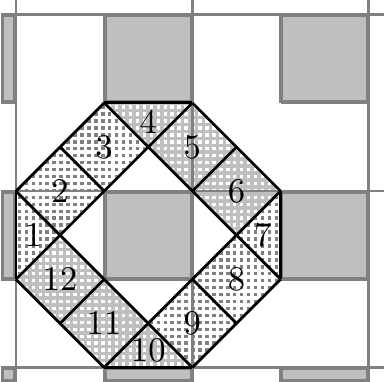}
\vspace{3pt}\\
\mbox{Figure 7.3.11: a diagonal street in the Ehrenfest wind-tree model billiard}
\end{array}
\end{displaymath}

The street indicated is contained in $4$ constituent L-shapes.
If we consider the analogous problem of $1$-direction geodesic flow in $\Bil(\infty;2)$, then the image of this street is shown in Figure~7.3.12, made up of the $4$-copy versions of these $4$ constituent L-shapes.
We have in fact shown that $\Bil(\infty;2)$ rotated by $45$ degrees becomes a square-maze translation surface where every horizontal and vertical street has length~$12$.

\begin{displaymath}
\begin{array}{c}
\includegraphics[scale=0.8]{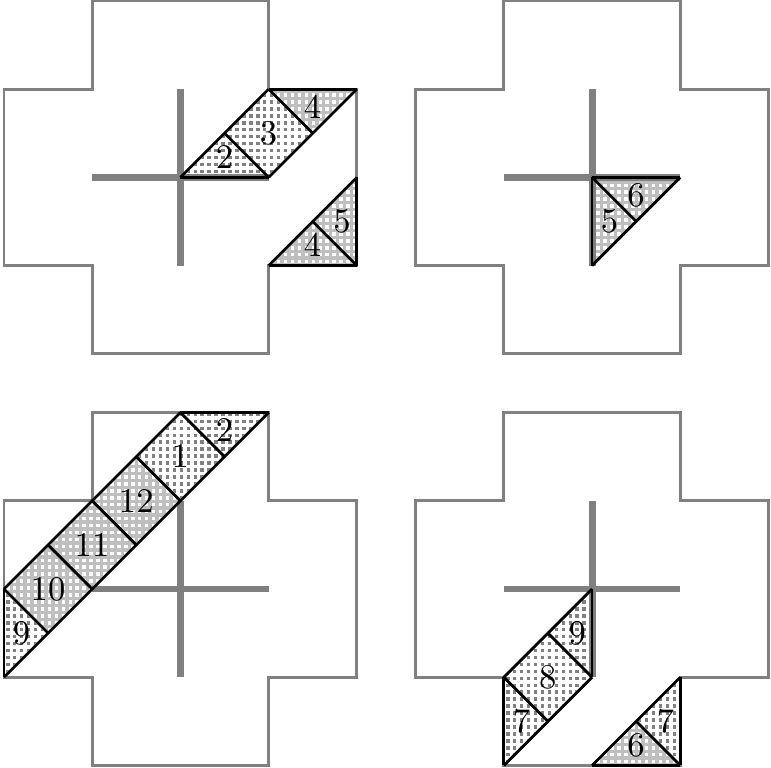}
\vspace{3pt}\\
\mbox{Figure 7.3.12: a $45$-degree street of length $12$ in $\Bil(\infty;2)$}
\end{array}
\end{displaymath}

\cite[Theorem~6.5.1]{BCY} asserts that for an $\ell$-square-maze translation surface, there are infinitely many numbers $\alpha$ of the form
\begin{displaymath}
\alpha=[a;a,a,a,\ldots]=a+\frac{1}{a+\frac{1}{a+\frac{1}{a+\cdots}}},
\end{displaymath}
where $a\ge\ell!$ is divisible by~$\ell!$, such that there exist $1$-direction geodesics with slope $\alpha$ that exhibit time-quantitative density and super-slow logarithmic escape rate to infinity.
For a square-maze translation surface where every street has length precisely~$\ell$, the condition on $a$ can be relaxed to include all integers $a\ge\ell$ that are divisible by~$\ell$.

It follows that for the rotated $\Bil(\infty;2)$, there exist infinitely many numbers $\alpha_k$ of the form
\begin{equation}\label{eq7.3.21}
\alpha_k=[12k;12k,12k,12k,\ldots]=12k+\frac{1}{12k+\frac{1}{12k+\frac{1}{12k+\cdots}}},
\end{equation}
where $k\ge1$ is an integer, such that there exist $1$-direction geodesics with slope $\alpha_k$ that exhibit super-slow logarithmic escape rate to infinity.

Now note that slopes $\alpha_k$ of the form \eqref{eq7.3.21} make up a subset of those slopes $\alpha_k$ of the form \eqref{eq7.3.9}.
It follows that slopes $\alpha_k$ of the form \eqref{eq7.3.21} give rise to $1$-direction geodesics on $\Bil(\infty;2)$ that exhibit super-fast escape rate to infinity.

It is easy to show that a line with slope~$\alpha_k$, after a clockwise rotation of $45$ degrees, now has slope equal to
\begin{equation}\label{eq7.3.22}
\alpha_k^*=\frac{\alpha_k-1}{\alpha_k+1}.
\end{equation}
Thus we have shown that there are infinitely many numbers $\alpha_k$ of the form \eqref{eq7.3.21} such that there exist $1$-direction geodesics of slope $\alpha_k$ in $\Bil(\infty;2)$ that exhibit super-fast escape rate to infinity, and also $1$-direction geodesics of slope $\alpha_k^*$ in
$\Bil(\infty;2)$ that exhibit super-slow logarithmic escape rate to infinity.
Note that in view of \eqref{eq7.3.21} and \eqref{eq7.3.22}, both $\alpha_k$ and $\alpha_k^*$ are quadratic irrationals.
\end{example}

%
%

\subsection{More on the escape rate to infinity}\label{sec7.4}

In this section, we consider some more complicated infinite billiards.

\begin{example}\label{ex7.4.1}
Consider \textit{C-wall} obstacles, as shown in Figure~7.4.1.

\begin{displaymath}
\begin{array}{c}
\includegraphics[scale=0.8]{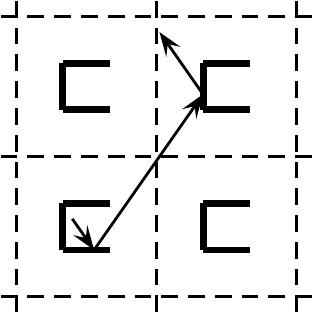}
\vspace{3pt}\vspace{3pt}\\
\mbox{Figure 7.4.1: infinite billiard with double periodic C-shaped walls}
\end{array}
\end{displaymath}

More precisely, the building block of this infinite polysquare region is a $3\times3$ square, with walls on three sides of the middle square face as shown.

As usual, this infinite billiard model is equivalent to a $1$-direction geodesic flow on an infinite polysquare translation surface that we denote by $\Bil(\infty;2;\mathrm{C})$.
Here the index $2$ indicates that this is double periodic, and the letter $\mathrm{C}$ refers to the common shape of the obstacles.
To construct $\Bil(\infty;2;\mathrm{C})$, we take one of the building blocks, and unfold the $4$-direction billiard flow on it to a $1$-direction geodesic flow on a $4$-copy version of it, obtained by reflecting horizontally and vertically.
Note that each $4$-copy version has a right-neighbor, a left-neighbor, a down-neighbor and an up-neighbor in $\Bil(\infty;2;\mathrm{C})$, and we need appropriate edge identification for gluing them together.
The period-surface $\Bil(2;\mathrm{C})$ of $\Bil(\infty;2;\mathrm{C})$ is shown in the picture on the left in Figure~7.4.2. 
The horizontal and vertical streets of $\Bil(2;\mathrm{C})$ are indicated in the picture on the right in Figure~7.4.2 where, for instance, the entries
$\leftrightarrow3$ and $\updownarrow5$ in a square face indicates that the square face is on the $3$-rd horizontal street and the $5$-th vertical street.
It is easy to see that $\Bil(2;\mathrm{C})$ has $8$ horizontal streets of length~$3$, and $2$ horizontal streets of length~$6$.
It also has $8$ vertical streets of length~$3$, $2$ vertical streets of length~$4$, and $2$ vertical streets of length~$2$.
Thus the street-LCM is equal to ~$12$. 

\begin{displaymath}
\begin{array}{c}
\includegraphics[scale=0.8]{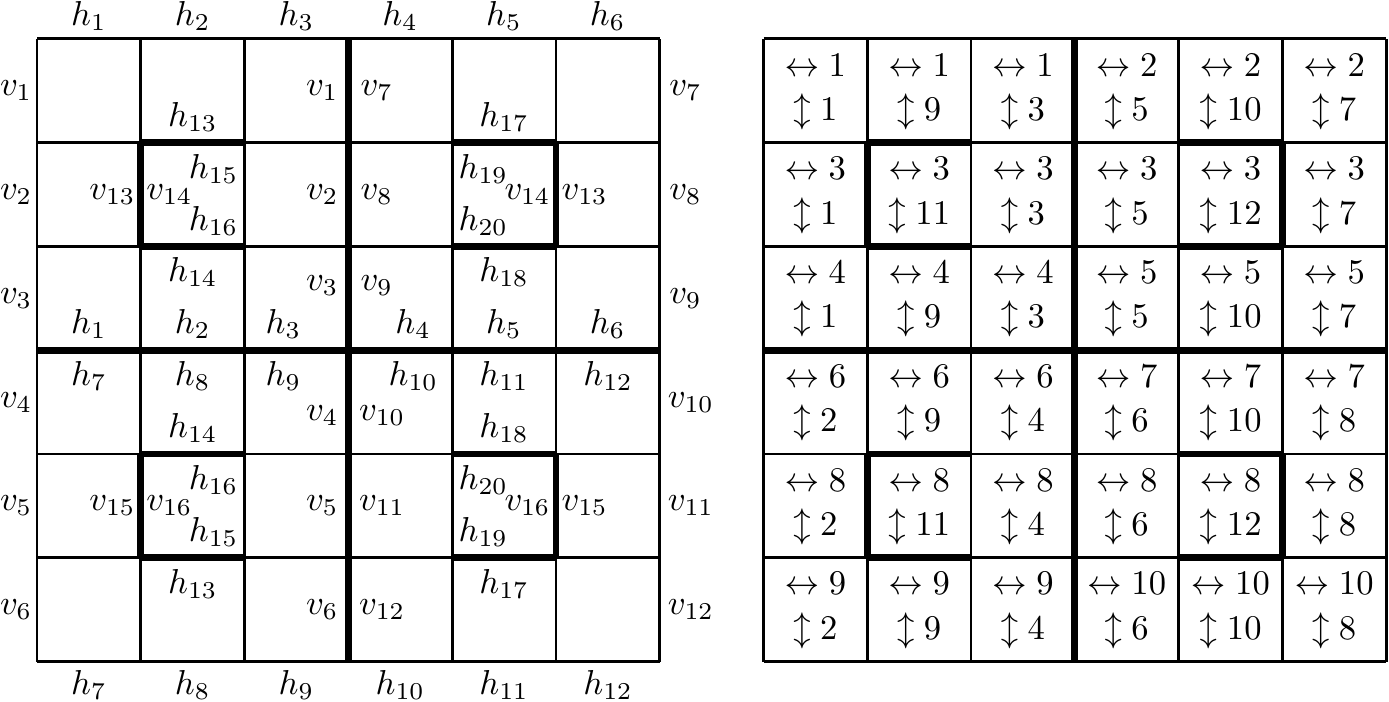}
\vspace{3pt}\vspace{3pt}\\
\mbox{Figure 7.4.2: the period surface $\Bil(2;\mathrm{C})$ of $\Bil(\infty;2;\mathrm{C})$}
\end{array}
\end{displaymath}

We shall consider an almost vertical geodesic $V_0$ in $\Bil(2;\mathrm{C})$ with slope $\alpha$ given by \eqref{eq7.2.1} with $m=n=12$.
As $\Bil(2;\mathrm{C})$ has $36$ square faces, the $2$-step transition matrix $\bfA$ is $72\times72$ which is mildly inconvenient.
Instead, we shall determine the street-spreading matrix $\bfS$ which, at size $10\times10$, is considerably smaller.

We follow the notation in Section~\ref{sec7.2}.
We consider the $\bfA$-invariant subspace $\VVV$ generated by the $20$ vectors $\bfu_i,\bfv_i$, $i=1,\ldots,10$, defined by \eqref{eq7.2.14} and \eqref{eq7.2.15}.
The relevant eigenvalues of~$\bfA$ are then eigenvalues of the matrix~$\bfA\vert_\VVV$, defined by \eqref{eq7.2.37}.

With $J_1=J_4=\{1,3,9\}$, for the horizontal streets corresponding to $\bfu_1,\bfu_4$ as highlighted in the picture on the left in Figure~7.4.3, we have, using \eqref{eq7.2.17},
\begin{align}
(\bfA-I)\bfu_1
&
=(\bfA-I)\bfu_4
\nonumber
\\
&
=(\bfA-I)[\{16\uparrow_{1,1},16\uparrow_{1,3},12\uparrow_{1,9}\}]
+(\bfA-I)[\{16\uparrow_{3,1},16\uparrow_{3,3}\}]
\nonumber
\\
&\qquad
+(\bfA-I)[\{16\uparrow_{4,1},16\uparrow_{4,3},12\uparrow_{4,9}\}]
+(\bfA-I)[\{12\uparrow_{6,9}\}]
\nonumber
\\
&\qquad
+(\bfA-I)[\{12\uparrow_{9,9}\}]
\nonumber
\\
&
=44(\bfu_1+\bfv_1)+32(\bfu_3+\bfv_3)+44(\bfu_4+\bfv_4)
\nonumber
\\
&\qquad
+12(\bfu_6+\bfv_6)+12(\bfu_9+\bfv_9),
\label{eq7.4.1}
\end{align}
\begin{displaymath}
\begin{array}{c}
\includegraphics[scale=0.8]{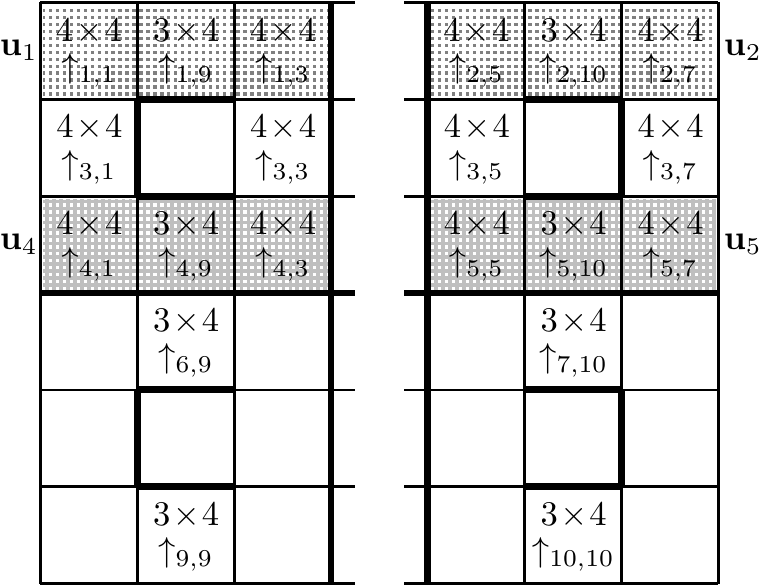}
\vspace{3pt}\vspace{3pt}\\
\mbox{Figure 7.4.3: almost vertical units of type $\uparrow$ in $\bfu_1,\bfu_2,\bfu_4,\bfu_5$}
\end{array}
\end{displaymath}

With $J_2=J_5=\{5,7,10\}$, for the horizontal streets corresponding to $\bfu_2,\bfu_5$ as highlighted in the picture on the right in Figure~7.4.3, we have, using \eqref{eq7.2.17},
\begin{align}
(\bfA-I)\bfu_2
&
=(\bfA-I)\bfu_5
\nonumber
\\
&
=44(\bfu_2+\bfv_2)+32(\bfu_3+\bfv_3)+44(\bfu_5+\bfv_5)
\nonumber
\\
&\qquad
+12(\bfu_7+\bfv_7)+12(\bfu_{10}+\bfv_{10}).
\label{eq7.4.2}
\end{align}

With $J_6=J_9=\{2,4,9\}$ and $J_7=J_{10}=\{6,8,10\}$, for the horizontal streets corresponding to $\bfu_6,\bfu_7,\bfu_9,\bfu_{10}$ highlighted in Figure~7.4.4, we have
\begin{align}
(\bfA-I)\bfu_6
&
=(\bfA-I)\bfu_9
\nonumber
\\
&
=12(\bfu_1+\bfv_1)+12(\bfu_4+\bfv_4)+44(\bfu_6+\bfv_6)
\nonumber
\\
&\qquad
+32(\bfu_8+\bfv_8)+44(\bfu_9+\bfv_9),
\label{eq7.4.3}
\\
(\bfA-I)\bfu_7
&
=(\bfA-I)\bfu_{10}
\nonumber
\\
&
=12(\bfu_2+\bfv_2)+12(\bfu_5+\bfv_5)+44(\bfu_7+\bfv_7)
\nonumber
\\
&\qquad
+32(\bfu_8+\bfv_8)+44(\bfu_{10}+\bfv_{10}).
\label{eq7.4.4}
\end{align}
\begin{displaymath}
\begin{array}{c}
\includegraphics[scale=0.8]{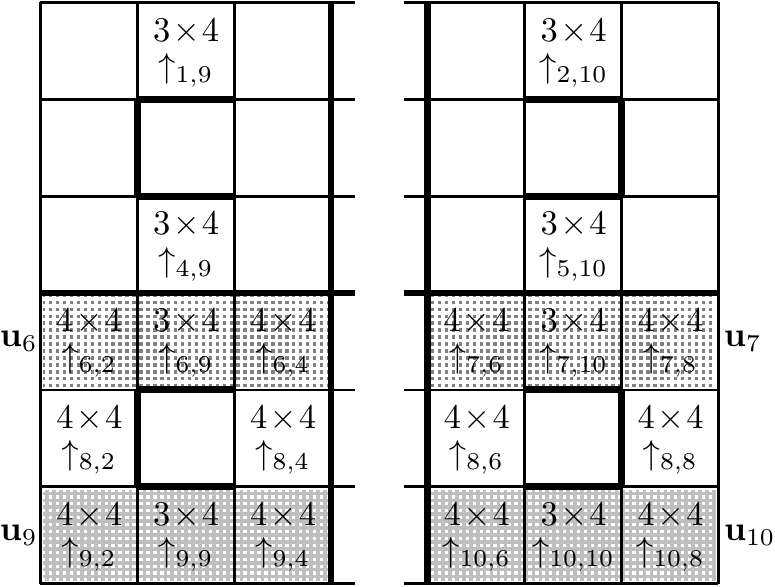}
\vspace{3pt}\vspace{3pt}\\
\mbox{Figure 7.4.4: almost vertical units of type $\uparrow$ in $\bfu_6,\bfu_7,\bfu_9,\bfu_{10}$}
\end{array}
\end{displaymath}

With $J_3=\{1,3,5,7,11,12\}$ and $J_8=\{2,4,6,8,11,12\}$, for the horizontal streets corresponding to $\bfu_3,\bfu_8$ highlighted in Figure~7.4.5, we have
\begin{align}
(\bfA-I)\bfu_3
&
=16(\bfu_1+\bfv_1)+16(\bfu_2+\bfv_2)+56(\bfu_3+\bfv_3)
\nonumber
\\
&\qquad
+16(\bfu_4+\bfv_4)+16(\bfu_5+\bfv_5)+24(\bfu_8+\bfv_8),
\label{eq7.4.5}
\\
(\bfA-I)\bfu_8
&
=24(\bfu_3+\bfv_3)+16(\bfu_6+\bfv_6)+16(\bfu_7+\bfv_7)
\nonumber
\\
&\qquad
+56(\bfu_8+\bfv_8)+16(\bfu_9+\bfv_9)+16(\bfu_{10}+\bfv_{10}).
\label{eq7.4.6}
\end{align}
\begin{displaymath}
\begin{array}{c}
\includegraphics[scale=0.8]{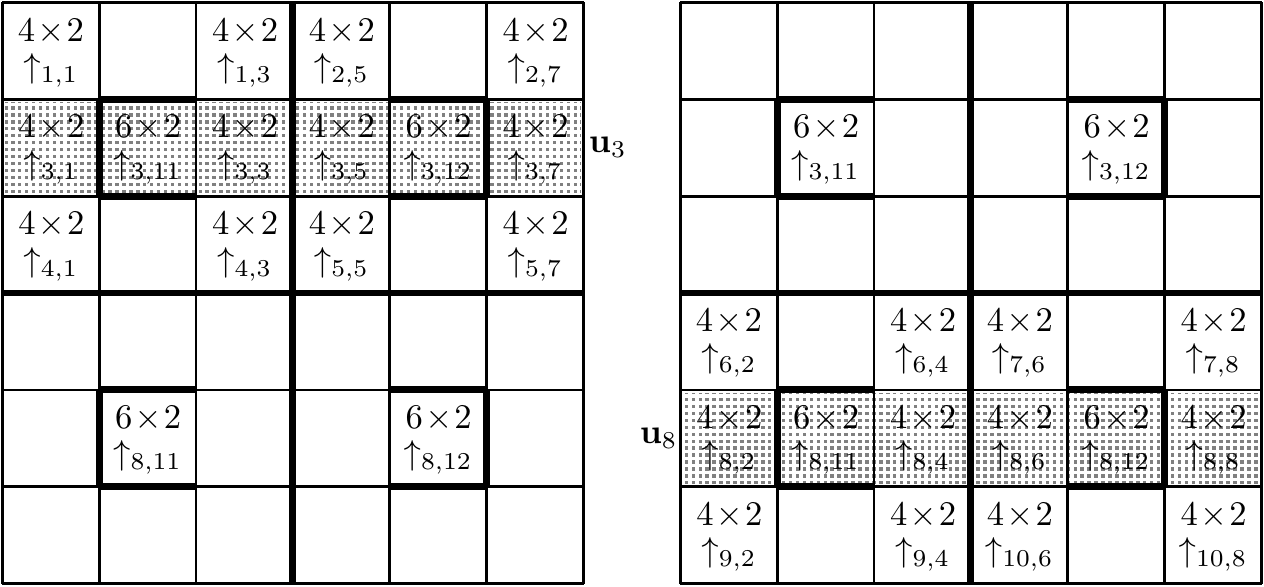}
\vspace{3pt}\vspace{3pt}\\
\mbox{Figure 7.4.5: almost vertical units of type $\uparrow$ in $\bfu_3,\bfu_8$}
\end{array}
\end{displaymath}

It follows from \eqref{eq7.4.1}--\eqref{eq7.4.6} that the street-spreading matrix is given by
\begin{equation}\label{eq7.4.7}
\bfS=\begin{pmatrix}
44&0&16&44&0&12&0&0&12&0\\
0&44&16&0&44&0&12&0&0&12\\
32&32&56&32&32&0&0&24&0&0\\
44&0&16&44&0&12&0&0&12&0\\
0&44&16&0&44&0&12&0&0&12\\
12&0&0&12&0&44&0&16&44&0\\
0&12&0&0&12&0&44&16&0&44\\
0&0&24&0&0&32&32&56&32&32\\
12&0&0&12&0&44&0&16&44&0\\
0&12&0&0&12&0&44&16&0&44
\end{pmatrix}.
\end{equation}
This has non-zero eigenvalues and corresponding eigenvectors given by
\begin{align}
\tau_1=144,
&\quad
\psi_1=(1,1,2,1,1,1,1,2,1,1)^T,
\nonumber
\\
\tau_2=112,
&\quad
\psi_2=(-1,1,0,-1,1,-1,1,0,-1,1)^T,
\nonumber
\\
\tau_3=96,
&\quad
\psi_3=(-1,-1,-2,-1,-1,1,1,2,1,1)^T,
\nonumber
\\
\tau_4=64,
&\quad
\psi_4=(1,-1,0,1,-1,-1,1,0,-1,1)^T,
\nonumber
\\
\tau_5=48,
&\quad
\psi_5=(1,1,-4,1,1,1,1,-4,1,1)^T.
\nonumber
\end{align}

Note that the street-spreading matrix $\bfS$ given by \eqref{eq7.4.7} corresponds to the choice of parameters $m=n=12$.
Switching to $m=n=12k$, where $k\ge1$ is any integer, we obtain the new street-spreading matrix $\bfS(k)=k^2\bfS$ simply by multiplying the matrix
$\bfS$ by~$k^2$.
Then of course the eigenvalues are also multiplied by~$k^2$, but the eigenvectors remain the same.
Naturally the $2$-step transition matrix $\bfA$ is modified to~$\bfA(k)$.

We next determine some of the eigenvalues of $\bfA(k)$ using \eqref{eq7.2.38}.
The largest eigenvalue of $\bfA(k)\vert_\VVV$ is
\begin{equation}\label{eq7.4.8}
\lambda_1(k)
=1+\frac{\tau_1k^2+\sqrt{\tau_1^2k^4+4\tau_1k^2}}{2}
=1+72k^2+12k\sqrt{36k^2+1},
\end{equation}
while the second largest eigenvalue of $\bfA(k)\vert_\VVV$ is
\begin{equation}\label{eq7.4.9}
\lambda_2(k)
=1+\frac{\tau_2k^2+\sqrt{\tau_2^2k^4+4\tau_2k^2}}{2}
=1+56k^2+4k\sqrt{196k^2+7},
\end{equation}
with eigenvector of the form
\begin{displaymath}
\Psi_2=(-1,1,0,-1,1,-1,1,0,-1,1,-\tau_2^*,\tau_2^*,0,-\tau_2^*,\tau_2^*,-\tau_2^*,\tau_2^*,0,-\tau_2^*,\tau_2^*)^T,
\end{displaymath}
where
\begin{displaymath}
\tau_2^*=\frac{-\tau_2k^2+\sqrt{\tau_2^2k^4+4\tau_2k^2}}{2}=4k\sqrt{196k^2+7}-56k^2.
\end{displaymath}

Let us return to Figures 7.4.1 and~7.4.2.
It is clear that the billiard moves to the up-neighbor if the $1$-direction geodesic hits any of the edges $h_1,\ldots,h_6$, and moves to the
down-neighbor if the $1$-direction geodesic hits any of the edges $h_7,\ldots,h_{12}$.
It is also clear that the billiard moves to the right-neighbor if the $1$-direction geodesic hits any of the edges $v_7,\ldots,v_{12}$, and moves to the
left-neighbor if the $1$-direction geodesic hits any of the edges $v_1,\ldots,v_6$.

We now proceed to find the edge cutting numbers of $h_1,\ldots,h_6$ and $h_7,\ldots,h_{12}$.

As in the example in Section~\ref{sec7.3}, the eigenvalue $\lambda_1$ does not contribute to their difference, as any lack of cancellation would violate the Gutkin--Veech theorem that guarantees uniformity.

The eigenvector of $\bfA(k)$ corresponding to the eigenvector $\Psi_2$ of $\bfA(k)\vert_\VVV$ is given by
\begin{align}\label{eq7.4.10}
&
-\bfu_1+\bfu_2-\bfu_4+\bfu_5-\bfu_6+\bfu_7-\bfu_9+\bfu_{10}
\nonumber
\\
&\qquad
-\tau_2^*\bfv_1+\tau_2^*\bfv_2-\tau_2^*\bfv_4+\tau_2^*\bfv_5-\tau_2^*\bfv_6+\tau_2^*\bfv_7-\tau_2^*\bfv_9+\tau_2^*\bfv_{10}.
\end{align}

We first find the number of almost vertical units counted here that cut the edges $h_1,\ldots,h_6$; see Figure~7.4.2.
The counts from $\bfu_i$, $i=1,\ldots,10$, are
\begin{equation}\label{eq7.4.11}
\begin{array}{lllll}
\bfu_1\mapsto44k^2,\ {}
&\bfu_2\mapsto44k^2,\ {}
&\bfu_3\mapsto32k^2,\ {}
&\bfu_4\mapsto44k^2,\ {}
&\bfu_5\mapsto44k^2,
\\
\bfu_6\mapsto12k^2,\ {}
&\bfu_7\mapsto12k^2,\ {}
&\bfu_8\mapsto0,\ {}
&\bfu_9\mapsto12k^2,\ {}
&\bfu_{10}\mapsto12k^2.
\end{array}
\end{equation}
On the other hand, it is not difficult to show that each of $\bfv_1,\ldots\bfv_{10}$ contributes precisely zero.
Thus for the edges $h_1,\ldots,h_6$, we have a total count of
\begin{displaymath}
-44k^2+44k^2-44k^2+44k^2-12k^2+12k^2-12k^2+12k^2=0.
\end{displaymath}

We next find the number of almost vertical units counted here that cut the edges $h_7,\ldots,h_{12}$; see Figure~7.4.2.
The counts from $\bfu_i$, $i=1,\ldots,10$, are
\begin{equation}\label{eq7.4.12}
\begin{array}{lllll}
\bfu_1\mapsto12k^2,\ {}
&\bfu_2\mapsto12k^2,\ {}
&\bfu_3\mapsto0,\ {}
&\bfu_4\mapsto12k^2,\ {}
&\bfu_5\mapsto12k^2,
\\
\bfu_6\mapsto44k^2,\ {}
&\bfu_7\mapsto44k^2,\ {}
&\bfu_8\mapsto32k^2,\ {}
&\bfu_9\mapsto44k^2,\ {}
&\bfu_{10}\mapsto44k^2.
\end{array}
\end{equation}
Again, it is not difficult to show that each of $\bfv_1,\ldots\bfv_{10}$ contributes precisely zero.
Thus for the edges $h_7,\ldots,h_{12}$, we have a total count of
\begin{displaymath}
-12k^2+12k^2-12k^2+12k^2-44k^2+44k^2-44k^2+44k^2=0.
\end{displaymath}

Thus there is perfect cancellation in the vertical direction.

We next proceed to find the edge cutting numbers of $v_1,\ldots,v_6$ and $v_7,\ldots,v_{12}$; see Figure~7.4.2.

Again, the eigenvalue $\lambda_1$ does not contribute to their difference, as any lack of cancellation would violate the Gutkin--Veech theorem that guarantees uniformity.
So we concentrate our attention on the contribution from the eigenvector $\Psi_2$ of $\bfA\vert_\VVV$.
Here, note that none of $\bfu_1,\ldots,\bfu_{10}$ makes any non-zero contribution, as only units of type $\nuparrow$ can contribute to the count.
For the contributions from $\bfv_1,\ldots,\bfv_{10}$, we use \eqref{eq7.2.15}.

We first find the number of almost vertical units counted in \eqref{eq7.4.10} that cut the edges $v_1,\ldots,v_6$; see Figure~7.4.2.
Since $\bfv_3$ and $\bfv_8$ do not feature in \eqref{eq7.4.10}, it is not necessary to any counting for them.
Clearly we have a count of $4k$ for each of $\bfv_1,\bfv_4,\bfv_6,\bfv_9$, and none for the rest, making a total of~$-16\tau_2^*k$.

We next find the number of almost vertical units counted in \eqref{eq7.4.10} that cut the edges $v_7,\ldots,v_{12}$; see Figure~7.4.2.
Clearly we have a count of $4k$ for each of $\bfv_2,\bfv_5,\bfv_7,\bfv_{10}$, and none for the rest, making a total of~$16\tau_2^*k$.

The difference, in absolute value, is therefore~$32\tau_2^*k$.
Let $c_2=c_{2,1}$ in \eqref{eq7.2.9} and \eqref{eq7.2.10}.
It follows that the second eigenvalue $\lambda_2(k)$ contributes $32c_2\tau_2^*\lambda_2^r(k)k$ to the difference between the edge cuttings numbers of $v_1,\ldots,v_6$ and $v_7,\ldots,v_{12}$.

So the \textit{horizontal} deviation from the starting point comes from the second largest eigenvalue, with order of magnitude $\lambda_2^r$ compared to the order of magnitude $\lambda_1^r$ of the main term.
Choosing $T=\lambda_1^r$, we have $\lambda_2^r\asymp T^{\kappa_0}$, where
\begin{equation}\label{eq7.4.13}
\kappa_0=\kappa_0(k)=\frac{\log\lambda_2}{\log\lambda_1}=\frac{2\log k+\log112}{2\log k+\log144}+o(1),
\end{equation}
in view of \eqref{eq7.4.8} and \eqref{eq7.4.9}, is the irregularity exponent of a $1$-direction geodesic of slope
\begin{equation}\label{eq7.4.14}
\alpha_k=[12k;12k,12k,12k,\ldots]=6k+\sqrt{36k^2+1}=\sqrt{\lambda_1(k)}
\end{equation}
on the period-surface $\Bil(2;\mathrm{C})$ in Figure~7.4.2.

Clearly $\kappa_0=\kappa_0(k)\to1$ as $k\to\infty$.
So we have just established $T^{\kappa_0}=T^{1-\eps}$ size super-fast escape rate to infinity for this infinite billiard with the explicit class of quadratic irrational slopes in \eqref{eq7.4.14} where the parameter $k\ge1$ is any integer.

It is easy to see that the irregularity exponent $\kappa_0=\kappa_0(k)$ in \eqref{eq7.4.13} is \textit{precisely} the escape rate to infinity of this infinite billiard.
The exponent of the escape rate to infinity cannot be larger than the expression \eqref{eq7.4.13} coming from the two largest eigenvalues.

Note that this super-fast escape rate to infinity comes from horizontal deviation.
As observed earlier, we have perfect cancellation in the vertical direction for the two largest eigenvalues.
So let us investigate what the third eigenvalue gives.

The third largest eigenvalue of $\bfA(k)\vert_\VVV$ is
\begin{displaymath}
\lambda_3(k)
=1+\frac{\tau_3k^2+\sqrt{\tau_3^2k^4+4\tau_3k^2}}{2}
=1+48k^2+4k\sqrt{144k^2+6},
\end{displaymath}
with eigenvector of the form
\begin{displaymath}
\Psi_3=(-1,-1,-2,-1,-1,1,1,2,1,1,-\tau_3^*,-\tau_3^*,-2\tau_3^*,-\tau_3^*,-\tau_3^*,\tau_3^*,\tau_3^*,2\tau_3^*,\tau_3^*,\tau_3^*)^T,
\end{displaymath}
where
\begin{displaymath}
\tau_3^*=\frac{-\tau_3k^2+\sqrt{\tau_3^2k^4+4\tau_3k^2}}{2}=4k\sqrt{144k^2+6}-48k^2.
\end{displaymath}
The eigenvector of $\bfA(k)$ corresponding to the eigenvector $\Psi_3$ of $\bfA(k)\vert_\VVV$ is given by
\begin{align}
&
-\bfu_1-\bfu_2-2\bfu_3-\bfu_4-\bfu_5+\bfu_6+\bfu_7+2\bfu_8+\bfu_9+\bfu_{10}
\nonumber
\\
&\qquad
-\tau_3^*\bfv_1-\tau_3^*\bfv_2-2\tau_3^*\bfv_3-\tau_3^*\bfv_4-\tau_3^*\bfv_5
\nonumber
\\
&\qquad
+\tau_3^*\bfv_6+\tau_3^*\bfv_7+2\tau_3^*\bfv_8+\tau_3^*\bfv_9+\tau_3^*\bfv_{10}.
\nonumber
\end{align}

We first find the number of almost vertical units counted here that cut the edges $h_1,\ldots,h_6$; see Figure~7.4.2.
The counts from $\bfu_i$, $i=1,\ldots,10$, are given by \eqref{eq7.4.11}.
Again, it is not difficult to show that each of $\bfv_1,\ldots\bfv_{10}$ contributes precisely zero.
Thus for the edges $h_1,\ldots,h_6$, we have a total count of
\begin{displaymath}
-44k^2-44k^2-64k^2-44k^2-44k^2+12k^2+12k^2+12k^2+12k^2=-196k^2.
\end{displaymath}

We next find the number of almost vertical units counted here that cut the edges $h_7,\ldots,h_{12}$; see Figure~7.4.2.
The counts from $\bfu_i$, $i=1,\ldots,10$, are given by \eqref{eq7.4.12}.
Again, it is not difficult to show that each of $\bfv_1,\ldots\bfv_{10}$ contributes precisely zero.
Thus for the edges $h_7,\ldots,h_{12}$, we have a total count of
\begin{displaymath}
-12k^2-12k^2-12k^2-12k^2+44k^2+44k^2+64k^2+44k^2+44k^2=196k^2.
\end{displaymath}

The difference, in absolute value, is therefore~$392\tau_3^*k^2$.
Let $c_3=c_{3,1}$ in \eqref{eq7.2.9} and \eqref{eq7.2.10}.
It follows that the third eigenvalue $\lambda_3(k)$ contributes $392c_3\tau_3^*\lambda_3^r(k)k^2$ to the difference between the edge cuttings numbers of $h_1,\ldots,h_6$ and $h_7,\ldots,h_{12}$.

So the \textit{vertical} deviation from the starting point comes from the third largest eigenvalue, with order of magnitude $\lambda_3^r$ compared to the order of magnitude $\lambda_1^r$ of the main term.
\end{example}

\begin{example}\label{ex7.4.2}
Consider the Ehrenfest wind-tree model with $a=b=1/3$, rescaled so that we have square faces of unit area.
Consider a $4$-direction billiard trajectory in the infinite region in Figure~7.4.6.
Here the building block of this infinite polysquare region is a $3\times3$ square, with walls on all sides of the middle square face as shown.

\begin{displaymath}
\begin{array}{c}
\includegraphics[scale=0.8]{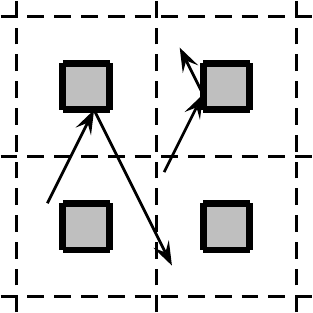}
\vspace{3pt}\\
\mbox{Figure 7.4.6: Ehrenfest wind-tree billiard model}
\end{array}
\end{displaymath}

As usual, this infinite billiard model is equivalent to a $1$-direction geodesic flow on an infinite polysquare translation surface that we denote by $\Bil(\infty;2;\mathrm{W})$.
Here the index $2$ indicates that this is double periodic, and the letter $\mathrm{W}$ refers to the wind-tree model.
To construct $\Bil(\infty;2;\mathrm{W})$, we take one of the building blocks, and unfold the $4$-direction billiard flow on it to a $1$-direction geodesic flow on a $4$-copy version of it, obtained by reflecting horizontally and vertically.
Note that each copy of the period surface has a right-neighbor, a left-neighbor, a down-neighbor and an up-neighbor in $\Bil(\infty;2;\mathrm{W})$,
and we need appropriate edge identification for gluing them together.

The period-surface $\Bil(2;\mathrm{W})$ of $\Bil(\infty;2;\mathrm{W})$ is shown in the picture on the left in Figure~7.4.7. 
The horizontal and vertical streets of $\Bil(2;\mathrm{W})$ are indicated in the picture on the right in Figure~7.4.7 where, for instance, the entries
$\leftrightarrow3$ and $\updownarrow5$ in a square face indicates that the square face is on the $3$-rd horizontal street and the $5$-th vertical street.
It is easy to see that $\Bil(2;\mathrm{W})$ has $10$ horizontal streets, of which $8$ are of length $3$ and $2$ are of length~$4$.
It also has $10$ vertical streets, of which $8$ are of length $3$ and $2$ are of length~$4$.

\begin{displaymath}
\begin{array}{c}
\includegraphics[scale=0.8]{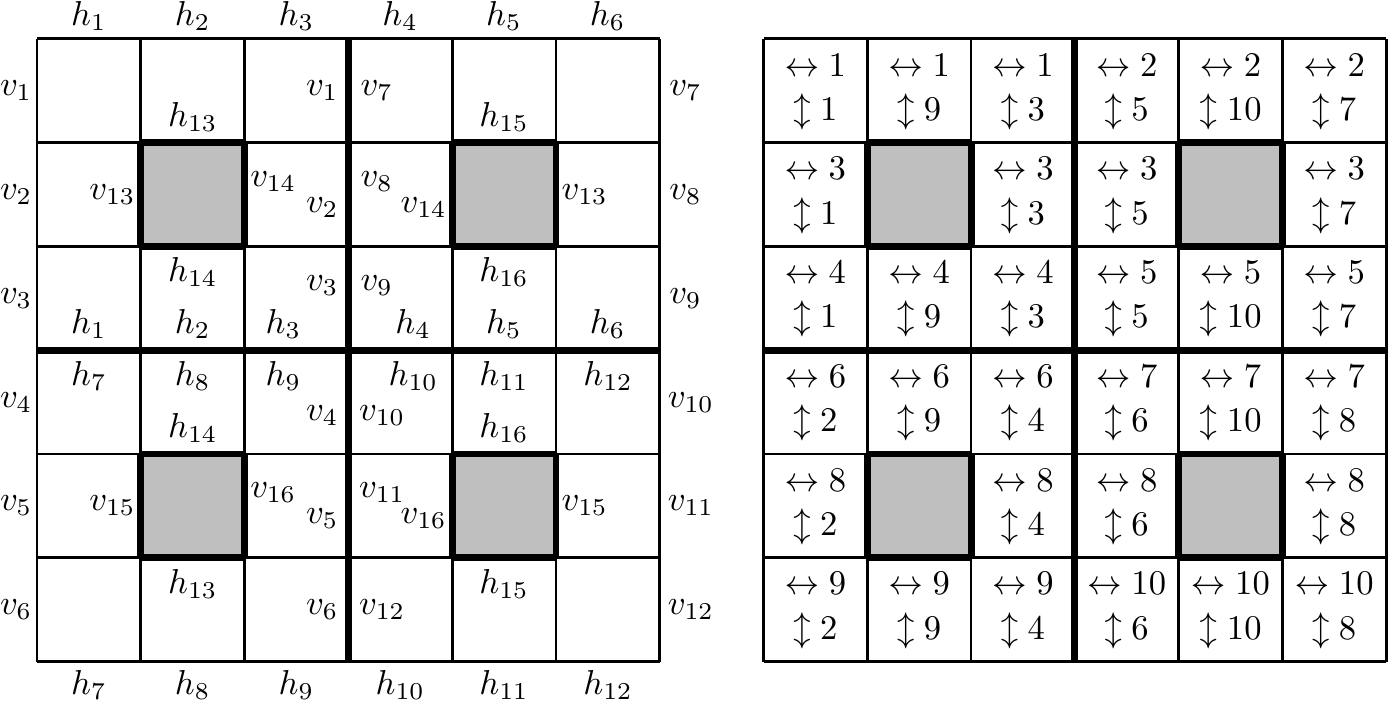}
\vspace{3pt}\\
\mbox{Figure 7.4.7: the period surface $\Bil(2;\mathrm{W})$ of $\Bil(\infty;2;\mathrm{W})$}
\end{array}
\end{displaymath}

Consider now a $1$-direction geodesic starting from some vertex of $\Bil(2;W)$ with slope $\alpha$ given by \eqref{eq7.2.1} with $m=n=12$.

With $J_1=J_4=\{1,3,9\}$ and $J_2=J_5=\{5,7,10\}$, for the horizontal streets corresponding to $\bfu_1,\bfu_2,\bfu_4,\bfu_5$ highlighted in Figure~7.4.8, we have
\begin{align}
(\bfA-I)\bfu_1
&
=(\bfA-I)\bfu_4
\nonumber
\\
&
=44(\bfu_1+\bfv_1)+32(\bfu_3+\bfv_3)+44(\bfu_4+\bfv_4)
\nonumber
\\
&\qquad
+12(\bfu_6+\bfv_6)+12(\bfu_9+\bfv_9),
\label{eq7.4.15}
\\
(\bfA-I)\bfu_2
&
=(\bfA-I)\bfu_5
\nonumber
\\
&
=44(\bfu_2+\bfv_2)+32(\bfu_3+\bfv_3)+44(\bfu_5+\bfv_5)
\nonumber
\\
&\qquad
+12(\bfu_7+\bfv_7)+12(\bfu_{10}+\bfv_{10}).
\label{eq7.4.16}
\end{align}
\begin{displaymath}
\begin{array}{c}
\includegraphics[scale=0.8]{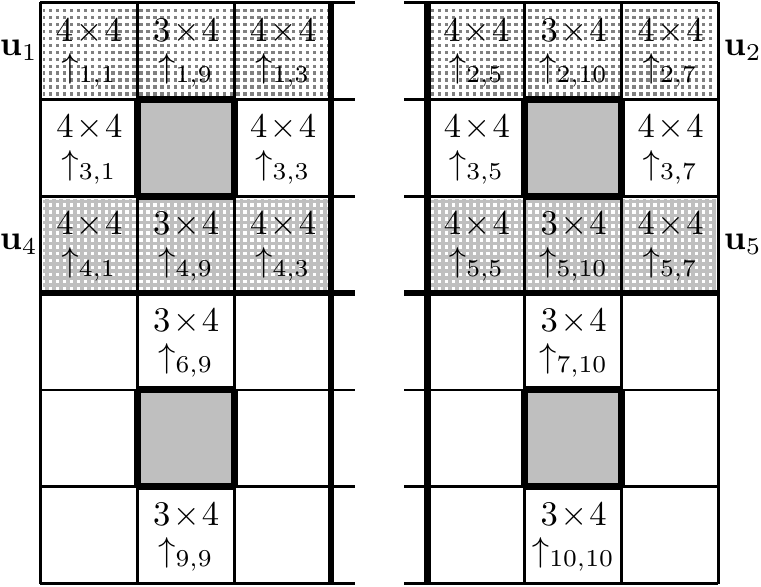}
\vspace{3pt}\\
\mbox{Figure 7.4.8: almost vertical of type $\uparrow$ in $\bfu_1,\bfu_2,\bfu_4,\bfu_5$}
\end{array}
\end{displaymath}

With $J_6=J_9=\{2,4,9\}$ and $J_7=J_{10}=\{6,8,10\}$, for the horizontal streets corresponding to $\bfu_6,\bfu_7,\bfu_9,\bfu_{10}$ highlighted in Figure~7.4.9, we have
\begin{align}
(\bfA-I)\bfu_6
&
=(\bfA-I)\bfu_9
\nonumber
\\
&
=12(\bfu_1+\bfv_1)+12(\bfu_4+\bfv_4)+44(\bfu_6+\bfv_6)
\nonumber
\\
&\qquad
+32(\bfu_8+\bfv_8)+44(\bfu_9+\bfv_9),
\label{eq7.4.17}
\\
(\bfA-I)\bfu_7
&
=(\bfA-I)\bfu_{10}
\nonumber
\\
&
=12(\bfu_2+\bfv_2)+12(\bfu_5+\bfv_5)+44(\bfu_7+\bfv_7)
\nonumber
\\
&\qquad
+32(\bfu_8+\bfv_8)+44(\bfu_{10}+\bfv_{10}).
\label{eq7.4.18}
\end{align}

\begin{displaymath}
\begin{array}{c}
\includegraphics[scale=0.8]{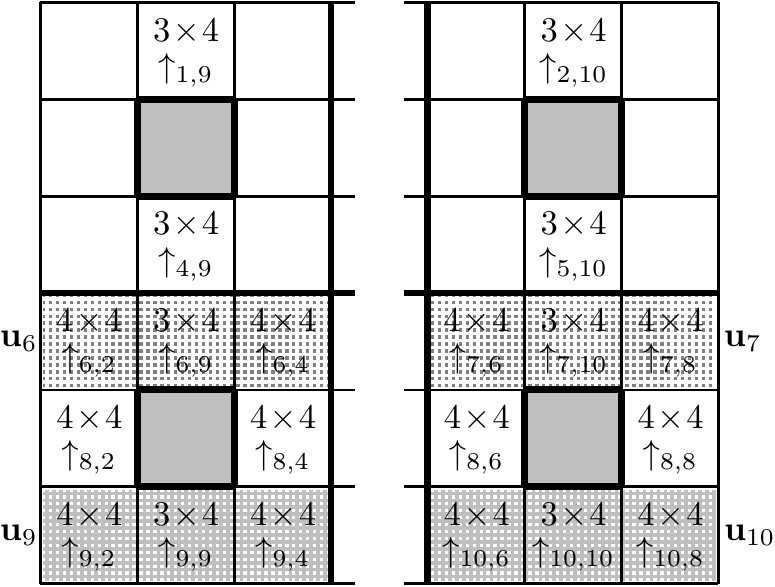}
\vspace{3pt}\\
\mbox{Figure 7.4.9: almost vertical of type $\uparrow$ in $\bfu_6,\bfu_7,\bfu_9,\bfu_{10}$}
\end{array}
\end{displaymath}

With $J_3=\{1,3,5,7\}$ and $J_8=\{2,4,6,8\}$, for the horizontal streets corresponding to $\bfu_3,\bfu_8$ highlighted in Figure~7.4.10, we have
\begin{align}
(\bfA-I)\bfu_3
&
=24(\bfu_1+\bfv_1)+24(\bfu_2+\bfv_2)+48(\bfu_3+\bfv_3)
\nonumber
\\
&\qquad
+24(\bfu_4+\bfv_4)+24(\bfu_5+\bfv_5),
\label{eq7.4.19}
\\
(\bfA-I)\bfu_8
&
=24(\bfu_6+\bfv_6)+24(\bfu_7+\bfv_7)+48(\bfu_8+\bfv_8)
\nonumber
\\
&\qquad
+24(\bfu_9+\bfv_9)+24(\bfu_{10}+\bfv_{10}).
\label{eq7.4.20}
\end{align}
\begin{displaymath}
\begin{array}{c}
\includegraphics[scale=0.8]{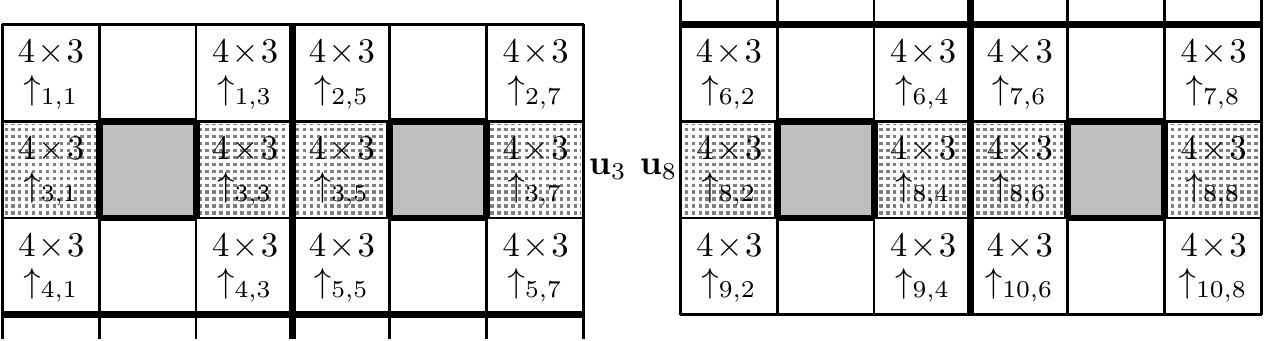}
\vspace{3pt}\\
\mbox{Figure 7.4.10: almost vertical of type $\uparrow$ in $\bfu_3$}
\end{array}
\end{displaymath}

It follows from \eqref{eq7.4.15}--\eqref{eq7.4.20} that the street-spreading matrix is given by
\begin{equation}\label{eq7.4.21}
\bfS=\begin{pmatrix}
44&0&24&44&0&12&0&0&12&0\\
0&44&24&0&44&0&12&0&0&12\\
32&32&48&32&32&0&0&0&0&0\\
44&0&24&44&0&12&0&0&12&0\\
0&44&24&0&44&0&12&0&0&12\\
12&0&0&12&0&44&0&24&44&0\\
0&12&0&0&12&0&44&24&0&44\\
0&0&0&0&0&32&32&48&32&32\\
12&0&0&12&0&44&0&24&44&0\\
0&12&0&0&12&0&44&24&0&44
\end{pmatrix}.
\end{equation}
This has non-zero eigenvalues and corresponding eigenvectors given by
\begin{align}
\tau_1=144,
&\quad
\psi_1=(3,3,4,3,3,3,3,4,3,3)^T,
\nonumber
\\
\tau_2=112,
&\quad
\psi_{2,1}=(0,-1,-1,0,-1,1,0,1,1,0)^T,
\nonumber
\\
&\quad
\psi_{2,2}=(-1,0,-1,-1,0,0,1,1,0,1)^T,
\nonumber
\\
\tau_3=64,
&\quad
\psi_3=(1,-1,0,1,-1,-1,1,0,-1,1)^T,
\nonumber
\\
\tau_4=16,
&\quad
\psi_4=(1,1,-4,1,1,1,1,-4,1,1)^T.
\nonumber
\end{align}

Note that the street-spreading matrix $\bfS$ given by \eqref{eq7.4.21} corresponds to the choice of parameters $m=n=12$.
Switching to $m=n=12k$, where $k\ge1$ is any integer, we obtain the new street-spreading matrix $\bfS(k)=k^2\bfS$ simply by multiplying the matrix $\bfS$ by~$k^2$.
Then of course the eigenvalues are also multiplied by~$k^2$, but the eigenvectors remain the same.
Naturally the $2$-step transition matrix $\bfA$ is modified to~$\bfA(k)$.

We next determine some of the eigenvalues of $\bfA(k)$ using \eqref{eq7.2.38}.
The largest eigenvalue of $\bfA(k)$ is
\begin{equation}\label{eq7.4.22}
\lambda_1(k)
=1+\frac{\tau_1k^2+\sqrt{\tau_1^2k^4+4\tau_1k^2}}{2}
=1+72k^2+12k\sqrt{36k^2+1},
\end{equation}
while the second largest eigenvalue of $\bfA(k)$ is
\begin{equation}\label{eq7.4.23}
\lambda_2(k)
=1+\frac{\tau_2k^2+\sqrt{\tau_2^2k^4+4\tau_2k^2}}{2}
=1+56k^2+4k\sqrt{196k^2+7},
\end{equation}
with eigenvectors of the form
\begin{align}
\Psi_{2,1}
&
=(0,-1,-1,0,-1,1,0,1,1,0,0,-\tau_2^*,-\tau_2^*,0,-\tau_2^*,\tau_2^*,0,\tau_2^*,\tau_2^*,0)^T,
\nonumber
\\
\Psi_{2,2}
&
=(-1,0,-1,-1,0,0,1,1,0,1,-\tau_2^*,0,-\tau_2^*,-\tau_2^*,0,0,\tau_2^*,\tau_2^*,0,\tau_2^*)^T,
\nonumber
\end{align}
where
\begin{displaymath}
\tau_2^*=\frac{-\tau_2k^2+\sqrt{\tau_2^2k^4+4\tau_2k^2}}{2}=4k\sqrt{196k^2+7}-56k^2.
\end{displaymath}

Let us return to Figures 7.4.6 and~7.4.7.
It is clear that the billiard moves to the up-neighbour if the $1$-direction geodesic hits any of the edges $h_1,\ldots,h_6$, and moves to the
down-neighbor if the $1$-direction geodesic hits any of the edges $h_7,\ldots,h_{12}$.
It is also clear that the billiard moves to the right-neighbor if the $1$-direction geodesic hits any of the edges $v_7,\ldots,v_{12}$, and moves to the
left-neighbor if the $1$-direction geodesic hits any of the edges $v_1,\ldots,v_6$.

We now proceed to find the edge cutting numbers of $h_1,\ldots,h_6$ and $h_7,\ldots,h_{12}$.
As before, it is easily shown that each of $\bfv_1,\ldots\bfv_{10}$ contributes precisely zero.

As in Examples \ref{ex7.3.1} and~\ref{ex7.3.2}, the eigenvalue $\lambda_1$ does not contribute to their difference, as any lack of cancellation would violate the
Gutkin--Veech theorem that guarantees uniformity.

The eigenvector of $\bfA(k)$ corresponding to the eigenvector $\Psi_{2,1}$ of $\bfA\vert_\VVV$ is given by
\begin{align}\label{eq7.4.24}
&
-\bfu_2-\bfu_3-\bfu_5+\bfu_6+\bfu_8+\bfu_9
\nonumber
\\
&\qquad
-\tau_2^*\bfv_2-\tau_2^*\bfv_3-\tau_2^*\bfv_5+\tau_2^*\bfv_6+\tau_2^*\bfv_8+\tau_2^*\bfv_9.
\end{align}
The eigenvector of $\bfA(k)$ corresponding to the eigenvector $\Psi_{2,2}$ of $\bfA\vert_\VVV$ is given by
\begin{align}\label{eq7.4.25}
&
-\bfu_1-\bfu_3-\bfu_4+\bfu_7+\bfu_8+\bfu_{10}
\nonumber
\\
&\qquad
-\tau_2^*\bfv_1-\tau_2^*\bfv_3-\tau_2^*\bfv_4+\tau_2^*\bfv_7+\tau_2^*\bfv_8+\tau_2^*\bfv_{10}.
\end{align}

We first find the number of almost vertical units counted here that cut the edges $h_1,\ldots,h_6$; see Figure~7.4.7.
The counts from $\bfu_i$, $i=1,\ldots,10$, are
\begin{displaymath}
\begin{array}{lllll}
\bfu_1\mapsto44k^2,\quad
&\bfu_2\mapsto44k^2,\quad
&\bfu_3\mapsto48k^2,\quad
&\bfu_4\mapsto44k^2,\quad
&\bfu_5\mapsto44k^2,
\\
\bfu_6\mapsto12k^2,\quad
&\bfu_7\mapsto12k^2,\quad
&\bfu_8\mapsto0,\quad
&\bfu_9\mapsto12k^2,\quad
&\bfu_{10}\mapsto12k^2.
\end{array}
\end{displaymath}
Thus corresponding to the eigenvector $\Psi_{2,1}$ and \eqref{eq7.4.24}, the total count is~$-112k^2$.
Corresponding to the eigenvector $\Psi_{2,2}$ and \eqref{eq7.4.25}, the total count is also~$-112k^2$.

We next find the number of almost vertical units counted here that cut the edges $h_7,\ldots,h_{12}$; see Figure~7.4.7.
The counts from $\bfu_i$, $i=1,\ldots,10$, are
\begin{displaymath}
\begin{array}{lllll}
\bfu_1\mapsto12k^2,\quad
&\bfu_2\mapsto12k^2,\quad
&\bfu_3\mapsto0,\quad
&\bfu_4\mapsto12k^2,\quad
&\bfu_5\mapsto12k^2,
\\
\bfu_6\mapsto44k^2,\quad
&\bfu_7\mapsto44k^2,\quad
&\bfu_8\mapsto48k^2,\quad
&\bfu_9\mapsto44k^2,\quad
&\bfu_{10}\mapsto44k^2.
\end{array}
\end{displaymath}
Thus corresponding to the eigenvector $\Psi_{2,1}$ and \eqref{eq7.4.24}, the total count is~$112k^2$.
Corresponding to the eigenvector $\Psi_{2,2}$ and \eqref{eq7.4.25}, the total count is also~$112k^2$.

The difference, in absolute value, is therefore~$224k^2$ in each case.
It follows that the second eigenvalue $\lambda_2(k)$ contributes $224(c_{2,1}+c_{2,2})\lambda_2^r(k)k^2$ to the difference between the edge cuttings numbers of $h_1,\ldots,h_6$ and $h_7,\ldots,h_{12}$.

Naturally, we can choose a starting vector $\bfw_0$ for the geodesic in question such that $c_{2,1}+c_{2,2}\ne0$.
This is clearly possible, since $c_{2,1}+c_{2,2}=0$ always would violate the linear independence of the eigenvectors $\Psi_{2,1}$ and~$\Psi_{2,2}$.

So the deviation from the starting point comes from the second largest eigenvalue, with order of magnitude $\lambda_2^r$ compared to the order of magnitude
$\lambda_1^r$ of the main term.
Choosing $T=\lambda_1^r$, we have $\lambda_2^r\asymp T^{\kappa_0}$, where
\begin{equation}\label{eq7.4.26}
\kappa_0=\kappa_0(k)=\frac{\log\lambda_2}{\log\lambda_1}=\frac{2\log k+\log112}{2\log k+\log144}+o(1),
\end{equation}
in view of \eqref{eq7.4.22} and \eqref{eq7.4.23}, is the irregularity exponent of a $1$-direction geodesic of slope
\begin{equation}\label{eq7.4.27}
\alpha_k=[12k;12k,12k,12k,\ldots]=6k+\sqrt{36k^2+1}=\sqrt{\lambda_1(k)}
\end{equation}
on the period-surface $\Bil(2;\mathrm{W})$ in Figure~7.4.8.

Again $\kappa_0=\kappa_0(k)\to1$ as $k\to\infty$.
So we have just established $T^{\kappa_0}=T^{1-\eps}$ size super-fast escape rate to infinity for this infinite billiard with the explicit class of quadratic irrational slopes in \eqref{eq7.4.27} where the parameter $k\ge1$ is any integer.

It is easy to see that the irregularity exponent $\kappa_0=\kappa_0(k)$ in \eqref{eq7.4.26} is {\it precisely} the escape rate to infinity of this infinite billiard.
The exponent of the escape rate to infinity cannot be larger than the expression \eqref{eq7.4.26} coming from the two largest eigenvalues.
\end{example}

In view of the examples in Sections \ref{sec7.3} and \ref{sec7.4} so far, it seems plausible to conjecture that every periodic infinite polysquare billiard with at least one \textit{infinite street} exhibits super-fast escape rate to infinity for some concrete infinite class of quadratic irrational slopes.
It would be nice to prove this conjecture in general.

We complete this section with a hybrid example.

\begin{example}\label{ex7.4.3}
Our final example is modified from the Ehrenfest wind-tree model.
Consider a $2$-direction billiard trajectory in the region in Figure~7.4.11.
Here the building block of this infinite polysquare region is a $3\times3$ square, with the middle square missing and with two vertical walls.
Note that there is no billiard in the vertical direction, so this is a hybrid problem.
The trajectory keeps on going up vertically, or going down vertically, one way but not both.

\begin{displaymath}
\begin{array}{c}
\includegraphics[scale=0.8]{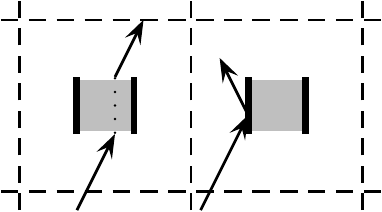}
\vspace{3pt}\\
\mbox{Figure 7.4.11: a hybrid model}
\end{array}
\end{displaymath}

As usual, this infinite hybrid model is equivalent to a $1$-direction geodesic flow on an infinite polysquare translation surface that we denote by $\Bil(\infty;2;\mathrm{H})$.
Here the index $2$ indicates that this is double periodic, and the letter $\mathrm{H}$ refers to the hybrid model.
To construct $\Bil(\infty;2;\mathrm{H})$, we take one of the building blocks, and unfold the $2$-direction billiard flow on it to a $1$-direction geodesic flow on a $2$-copy version of it, obtained by reflecting vertically.
Note that each $2$-copy version has a right-neighbor, a left-neighbor, a down-neighbor and an up-neighbor in $\Bil(\infty;2;\mathrm{H})$, and we need appropriate edge identification for gluing them together.

The period-surface $\Bil(2;\mathrm{H})$ of $\Bil(\infty;2;\mathrm{H})$ is shown in the picture on the left in Figure~7.4.12. 

\begin{displaymath}
\begin{array}{c}
\includegraphics[scale=0.8]{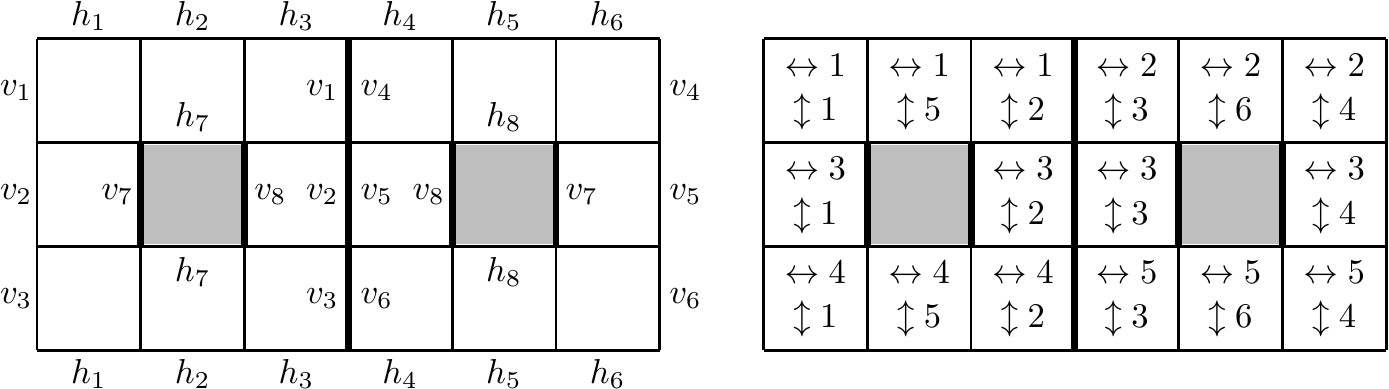}
\vspace{3pt}\\
\mbox{Figure 7.4.12: the period surface $\Bil(2;\mathrm{H})$ of $\Bil(\infty;2;\mathrm{H})$}
\end{array}
\end{displaymath}

The horizontal and vertical streets of $\Bil(2;\mathrm{H})$ are indicated in the picture on the right in Figure~7.4.12 where, for instance, the entries
$\leftrightarrow3$ and $\updownarrow4$ in a square face indicates that the square face is on the $3$-rd horizontal street and the $4$-th vertical street.
It is easy to see that $\Bil(2;\mathrm{H})$ has $5$ horizontal streets, of which $4$ are of length $3$ and $1$ is of length~$4$.
It also has $6$ vertical streets, of which $4$ are of length $3$ and $2$ are of length~$2$.

Consider now a $1$-direction geodesic starting from some vertex of $\Bil(2;H)$ with slope $\alpha$ given by \eqref{eq7.2.1} with $m=n=12$.

With $J_1=J_4=\{1,2,5\}$, for the horizontal streets corresponding to $\bfu_1,\bfu_4$ highlighted in the picture on the left in Figure~7.4.13, we have
\begin{align}
(\bfA-I)\bfu_1
&
=(\bfA-I)\bfu_4
\nonumber
\\
&
=56(\bfu_1+\bfv_1)+32(\bfu_3+\bfv_3)+56(\bfu_4+\bfv_4)
\label{eq7.4.28}
\end{align}

With $J_2=J_5=\{3,4,6\}$, for the horizontal streets corresponding to $\bfu_2,\bfu_5$ highlighted in the picture on the right in Figure~7.4.13, we have

\begin{align}
(\bfA-I)\bfu_2
&
=(\bfA-I)\bfu_5
\nonumber
\\
&
=56(\bfu_2+\bfv_2)+32(\bfu_3+\bfv_3)+56(\bfu_5+\bfv_5).
\label{eq7.4.29}
\end{align}
\begin{displaymath}
\begin{array}{c}
\includegraphics[scale=0.8]{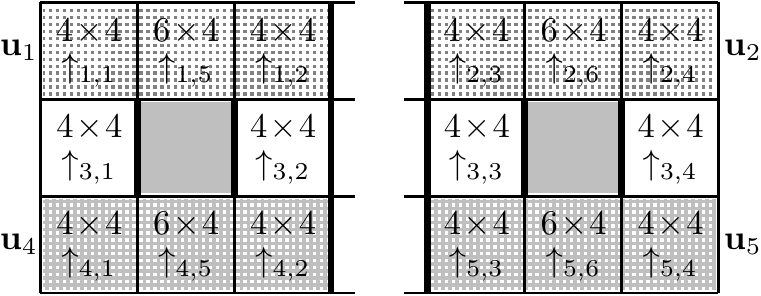}
\vspace{3pt}\\
\mbox{Figure 7.4.13: almost vertical units of type $\uparrow$ in $\bfu_1,\bfu_2,\bfu_4,\bfu_5$}
\end{array}
\end{displaymath}

With $J_3=\{1,2,3,4\}$, for the horizontal street corresponding to $\bfu_3$ highlighted in Figure~7.4.14, we have
\begin{align}\label{eq7.4.30}
(\bfA-I)\bfu_3
&
=24(\bfu_1+\bfv_1)+24(\bfu_2+\bfv_2)+48(\bfu_3+\bfv_3)
\nonumber
\\
&\qquad
+24(\bfu_4+\bfv_4)+24(\bfu_5+\bfv_5).
\end{align}
\begin{displaymath}
\begin{array}{c}
\includegraphics[scale=0.8]{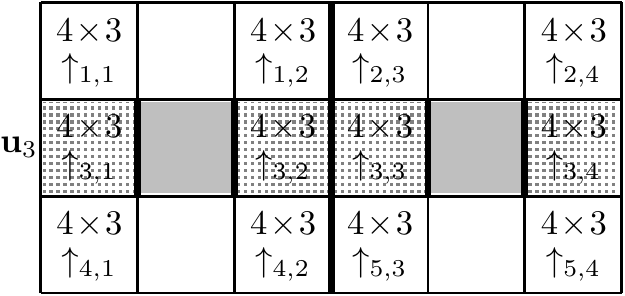}
\vspace{3pt}\\
\mbox{Figure 7.4.14: almost vertical units of type $\uparrow$ in $\bfu_3$}
\end{array}
\end{displaymath}

It follows from \eqref{eq7.4.28}--\eqref{eq7.4.30} that the street-spreading matrix is given by
\begin{equation}\label{eq7.4.31}
\bfS=\begin{pmatrix}
56&0&24&56&0\\
0&56&24&0&56\\
32&32&48&32&32\\
56&0&24&56&0\\
0&56&24&0&56
\end{pmatrix}.
\end{equation}
This has non-zero eigenvalues and corresponding eigenvectors given by
\begin{align}
\tau_1=144,
&\quad
\psi_1=(3,3,4,3,3)^T,
\nonumber
\\
\tau_2=112,
&\quad
\psi_2=(-1,1,0,-1,1)^T,
\nonumber
\\
\tau_3=16,
&\quad
\psi_3=(1,1,-4,1,1)^T.
\nonumber
\end{align}

Note that the street-spreading matrix $\bfS$ given by \eqref{eq7.4.31} corresponds to the choice of parameters $m=n=12$.
Switching to $m=n=12k$, where $k\ge1$ is any integer, we obtain the new street-spreading matrix $\bfS(k)=k^2\bfS$ simply by multiplying the matrix $\bfS$ by~$k^2$.
Then of course the eigenvalues are also multiplied by~$k^2$, but the eigenvectors remain the same.
Naturally the $2$-step transition matrix $\bfA$ is modified to~$\bfA(k)$.

We next determine some of the eigenvalues of $\bfA(k)$ using \eqref{eq7.2.38}.
The largest eigenvalue of $\bfA(k)$ is
\begin{displaymath}
\lambda_1(k)
=1+\frac{\tau_1k^2+\sqrt{\tau_1^2k^4+4\tau_1k^2}}{2}
=1+72k^2+12k\sqrt{36k^2+1},
\end{displaymath}
while the second largest eigenvalue of $\bfA(k)$ is
\begin{displaymath}
\lambda_2(k)
=1+\frac{\tau_2k^2+\sqrt{\tau_2^2k^4+4\tau_2k^2}}{2}
=1+56k^2+4k\sqrt{196k^2+7},
\end{displaymath}
with eigenvector of the form
\begin{align}
\Psi_2
&
=(-1,1,0,-1,1,-\tau_2^*,\tau_2^*,0,-\tau_2^*,\tau_2^*)^T,
\nonumber
\end{align}
where
\begin{displaymath}
\tau_2^*=\frac{-\tau_2k^2+\sqrt{\tau_2^2k^4+4\tau_2k^2}}{2}=4k\sqrt{196k^2+7}-56k^2.
\end{displaymath}

As observed earlier, there is no billiard in the vertical direction, so we now proceed to study its horizontal behaviour.
We thus proceed to find the edge cutting numbers of $v_1,v_2,v_3$ and $v_4,v_5,v_6$; see Figure~7.4.12.

It can be shown that the eigenvalue $\lambda_1$ does not contribute to their difference, although the Gutkin--Veech theorem does not apply here.
So we concentrate our attention on the contribution from the eigenvector $\Psi_2$ of $\bfA\vert_\VVV$.

The eigenvector of $\bfA(k)$ corresponding to the eigenvector $\Psi_2$ of $\bfA\vert_\VVV$ is given by
\begin{equation}\label{eq7.4.32}
-\bfu_1+\bfu_2-\bfu_4+\bfu_5
-\tau_2^*\bfv_1+\tau_2^*\bfv_2-\tau_2^*\bfv_4+\tau_2^*\bfv_5.
\end{equation}
Here, note that none of $\bfu_1,\ldots,\bfu_5$ makes any non-zero contribution, as only units of type $\nuparrow$ can contribute to the count.
For the contributions from $\bfv_1,\ldots,\bfv_5$, we use \eqref{eq7.2.15}.

We first find the number of almost vertical units counted in \eqref{eq7.4.32} that cut the edges $v_1,v_2,v_3$; see Figure~7.4.12.
Since $\bfv_3$ does not feature in \eqref{eq7.4.32}, it is not necessary to any counting for it.
Clearly we have a count of $4k$ for each of $\bfv_1,\bfv_4$, and none for $\bfv_2,\bfv_5$, making a total of~$-8\tau_2^*k$.

We next find the number of almost vertical units counted in \eqref{eq7.4.32} that cut the edges $v_4,v_5,v_6$; see Figure~7.4.12.
Clearly we have a count of $4k$ for each of $\bfv_2,\bfv_5$, and none for $\bfv_1,\bfv_4$, making a total of~$8\tau_2^*k$.

The difference, in absolute value, is therefore~$16\tau_2^*k$.
Thus the second eigenvalue $\lambda_2(k)$ contributes $16c_2\tau_2^*\lambda_2^r(k)k$ to the difference between the edge cuttings numbers of $v_1,v_2,v_3$ and $v_4,v_5,v_6$.

Again the \textit{horizontal} deviation from the starting point comes from the second largest eigenvalue, with order of magnitude $\lambda_2^r$ compared to the order of magnitude $\lambda_1^r$ of the main term.
\end{example}

%
%

\subsection{Geodesics with arbitrary starting points}\label{sec7.5}

In Sections \ref{sec7.3} and~\ref{sec7.4}, we have exhibited examples of $1$-direction geodesics in infinite polysquare translation surfaces that start from a vertex and which exhibit super-fast escape rate to infinity.
In Example~\ref{ex7.3.2}, we have also shown that there are $1$-direction geodesics in square-maze translation surfaces that start from a vertex and which exhibit super-slow escape rate to infinity.

We now describe a method by which we can relax the restriction that the starting point of the geodesic has to be a vertex of a polysquare translation surface, so that only the angle $\alpha$ of the geodesic matters, and can still establish results on the escape rate to infinity as before.

Note, first of all, that a $1$-direction geodesic \textit{modulo one} is equivalent to a torus line in the unit square.
To help us visualize the situation even more clearly, we can replace the unit square by $\Rr^2$ and the torus line by a line on~$\Rr^2$.

Next, note that we are considering slopes $\alpha$ that are badly approximable numbers.
Such numbers satisfy the following two properties.

\begin{propa}
Suppose that $\alpha$ is badly approximable.
Then there exists a constant $C^*=C^*(\alpha)>0$ such that for every real number $\ell\ge1$, every segment of length $\ell$ of a straight line of slope
$\alpha$ has distance at most $C^*/\ell$ from the nearest integer lattice point in~$\Zz^2$.
\end{propa}

\begin{displaymath}
\begin{array}{c}
\includegraphics[scale=0.8]{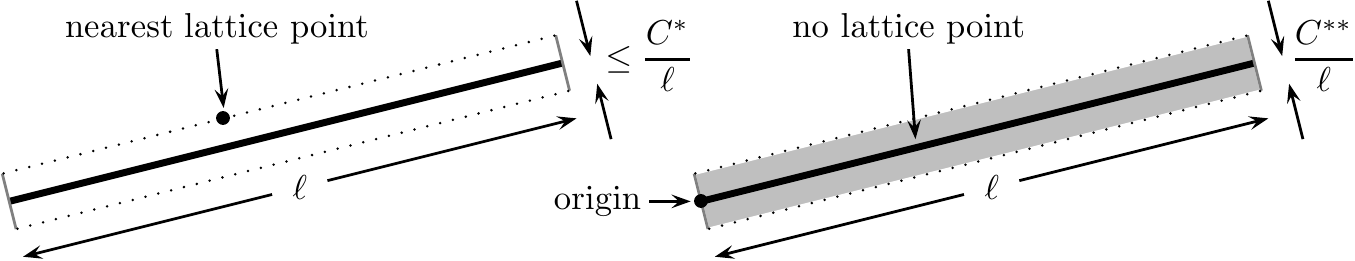}
\vspace{3pt}\\
\mbox{Figure 7.5.1: illustrating Property A and Property B}
\end{array}
\end{displaymath}

\begin{propb}
Suppose that $\alpha$ is badly approximable.
Then there exists a constant $C^{**}=C^{**}(\alpha)>0$ such that for any real number $\ell\ge1$, any tilted rectangle with one side starting from the origin, with slope $\alpha$ and length~$\ell$, and with the perpendicular side of length $C^{**}/\ell$, does not contain any integer lattice point in $\Zz^2$ except the origin.
\end{propb}

The idea is to consider geodesics that are parallel to our given geodesic and which share some of its characteristics.

Consider the initial segment $L(S;t)$, $0\le t\le T$, with $T\ge2$, of a $1$-direction geodesic $L$ of slope~$\alpha$, starting at a point~$S$, not necessarily a vertex of a polysquare translation surface~$\PPP$, as shown in the top half of Figure~7.5.2.

\begin{displaymath}
\begin{array}{c}
\includegraphics[scale=0.8]{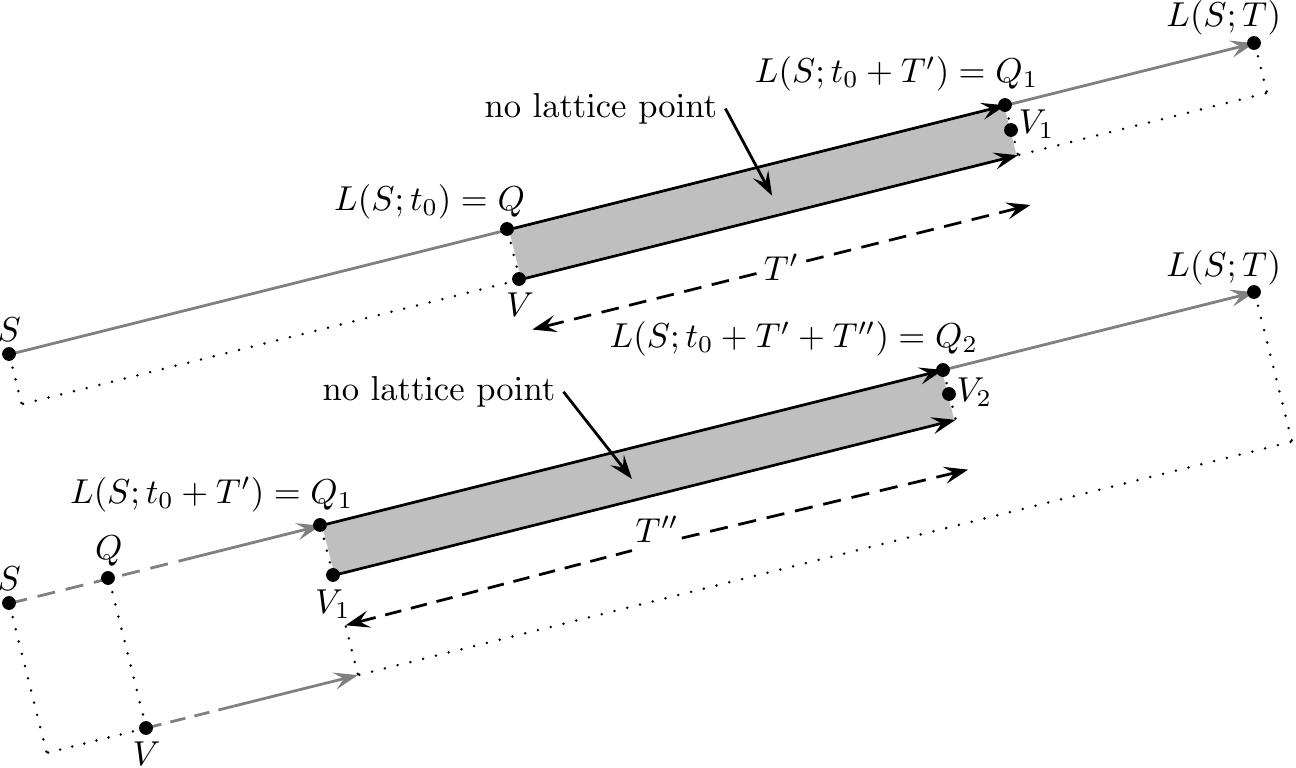}
\vspace{3pt}\\
\mbox{Figure 7.5.2: using parallel geodesics}
\end{array}
\end{displaymath}

Let $Q=Q(T)=L(S;t_0)$ denote the point on this segment which is closest to a vertex of~$\PPP$, and let $V$ be this closest vertex.
Using Property~A, we see that the distance between $Q$ and $V$ is at most~$C^*/T$.

Starting from the vertex~$V$, we draw a new geodesic $L'$ of slope~$\alpha$, in the same forward direction as~$L$.
Consider the longest vertex-free rectangle between $L$ and~$L'$.
Using Property~B, we see that the side of this rectangle parallel to $L$ has length
\begin{equation}\label{eq7.5.1}
T'\ge\frac{C^{**}}{C^*}T.
\end{equation}

Suppose that $T\le t_0+T'$.
Then the segment $L(S;t)$, $t_0\le t\le T$, of $L$ remains close to the segment $L'(V;t)$, $0\le t\le T-t_0$, of the new geodesic~$L'$, since the rectangle between them does not contain any vertex of $\PPP$ apart from $V$ and another vertex at the far end if $T=t_0+T'$.

Suppose that $T>t_0+T'$.
Then the segment $L(S;t)$, $t_0\le t\le t_0+T'$, of $L$ remains close to the segment $L'(V;t)$, $0\le t\le T'$, of the new geodesic~$L'$, since the rectangle between them does not contain any vertex of $\PPP$ apart from at the two ends.
Since this rectangle is the longest vertex-free rectangle that we can draw, clearly there is another vertex $V_1$ of $\PPP$ at the far end.
We now repeat the argument.

Starting from the vertex~$V_1$, we draw a new geodesic $L''$ of slope~$\alpha$, in the same forward direction as~$L$.
Consider the longest vertex-free rectangle between $L$ and~$L''$, as shown in the bottom half of Figure~7.5.2.
Using Property~B, we see that the side of this rectangle parallel to $L$ has length
\begin{equation}\label{eq7.5.2}
T''\ge\frac{C^{**}}{C^*}T.
\end{equation}

Suppose that $t_0+T'<T\le t_0+T'+T''$.
Then the segment $L(S;t)$, $t_0+T'\le t\le T$, of $L$ remains close to the segment $L''(V_1;t)$, $0\le t\le T-t_0-T'$, of the new geodesic~$L''$, since the rectangle between them does not contain any vertex of $\PPP$ apart from $V_1$ and another vertex at the far end if $T=t_0+T'+T''$.

Suppose that $T>t_0+T'+T''$.
Then the segment $L(S;t)$, $t_0+T'\le t\le t_0+T'+T''$, of $L$ remains close to the segment $L''(V_1;t)$, $0\le t\le T''$, of the new geodesic~$L''$, since the rectangle between them does not contain any vertex of $\PPP$ apart from at the two ends.
Since this rectangle is the longest vertex-free rectangle that we can draw, clearly there is another vertex $V_2$ of $\PPP$ at the far end.
We now repeat the argument again.

Clearly the argument must stop after a finite number of steps, in view of estimates such as \eqref{eq7.5.1} and \eqref{eq7.5.2} and their analogs.
In particular it must stop after at most $C^*/C^{**}$ steps.

Note that we have moved forward from the point $Q$ to the point $L(S;T)$.
Clearly a similar argument applies when we move backward from the point $Q$ to the starting point $S=L(S;0)$.

What we have shown is that the finite geodesic $L(S;t)$, $0\le t\le T$, can be broken up into a finite number of parts, each of which remains close to a parallel finite geodesic of the same length that starts from a vertex of~$\PPP$.

We can now use this observation to study escape rates to infinity.

Suppose that we have established that for a polysquare translation surface~$\PPP$, the escape rate to infinity for a $1$-direction geodesic $L$ of slope $\alpha$ that starts at a vertex of $\PPP$ is $O(f(T))$ as a function of time~$T$, where $f$ is an increasing function satisfying $f(t)\to\infty$ as $t\to\infty$.
This means that the diameter of the finite geodesic $L(t)$, $0\le t\le T$, is $O(f(T))$.
Consider now a $1$-direction geodesic $L(S;t)$, $0\le t\le T$, of slope $\alpha$ and starting point $S$ that is not necessarily a vertex of~$\PPP$.
We now break $L(S;t)$, $0\le t\le T$, into a finite number of parts as described above, and note that each part remains close to a parallel finite geodesic that starts at a vertex of $\PPP$ and has length at most~$T$.
The diameter of each of these parallel finite geodesics is $O(f(T))$.
It follows that the diameter of $L(S;t)$, $0\le t\le T$, is also $O(f(T))$.

For super-slow logarithmic escape rate to infinity, we have $f(T)=\log T$.
For super-fast escape rate to infinity as in the examples in Sections \ref{sec7.3} and~\ref{sec7.4}, we have $f(T)=T^{\kappa_0}$.

To exhibit fluctuations of the required order of magnitude in the escape rate to infinity, we have a similar approach but with a slightly different first step.

Let $n\ge2$ be an arbitrarily large but fixed integer.
Consider a $1$-direction geodesic $L(S;t)$, $t\ge0$, on a polysquare translation surface $\PPP$ that starts from an arbitrary point with slope~$\alpha$.
Suppose that $t_0$ is the first value of $t\ge0$ such that $L(S;t_0)$ has perpendicular distance at most $1/n$ from a vertex of~$\PPP$, as shown in Figure~7.5.3.

\begin{displaymath}
\begin{array}{c}
\includegraphics[scale=0.8]{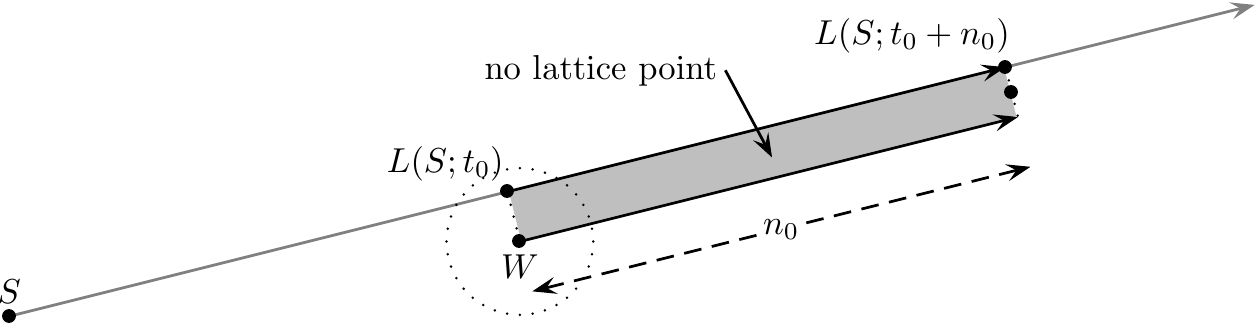}
\vspace{3pt}\\
\mbox{Figure 7.5.3: using parallel geodesics}
\end{array}
\end{displaymath}

In view of Property~A, we know that $0\le t_0\le C^*n$.

Suppose that the vertex of $\PPP$ in question is~$W$.
Starting from this vertex~$W$, we draw a new geodesic $L'$ of slope~$\alpha$, in the same forward direction as~$L$.
Consider the longest vertex-free rectangle between $L$ and~$L'$.
Using Property~B, we see that the side of this rectangle parallel to $L$ has length
\begin{equation}\label{eq7.5.3}
n_0\ge C^{**}n.
\end{equation}
Furthermore, the segment $L(S;t)$, $t_0\le t\le t_0+n_0$, of $L$ remains close to the segment $L'(W;t)$, $0\le t\le n_0$, of the new geodesic~$L'$, since the rectangle between them does not contain any vertex of $\PPP$ apart from at the two ends.

Suppose now that fluctuations of size $C_1T^{\kappa_0}$ are exhibited for geodesics of slope $\alpha$ and length $T$ that start from a vertex
of~$\PPP$.
Then $L'(W;t)$, $0\le t\le n_0$, and hence also $L(S;t)$, $t_0\le t\le t_0+n_0$, exhibits fluctuations of size
\begin{displaymath}
C_1n_0^{\kappa_0}\ge C_2n^{\kappa_0},
\end{displaymath}
in view of \eqref{eq7.5.3}.
If the distance between $S$ and $L(S;t_0)$ exceeds
\begin{equation}\label{eq7.5.4}
\frac{1}{2}C_2n^{\kappa_0},
\end{equation}
then $L(S;t)$, $0\le t\le t_0$, has diameter at least equal to \eqref{eq7.5.4}.
Otherwise $L(S;t)$, $0\le t\le t_0+n_0$, has diameter at least equal to
\begin{displaymath}
C_2n^{\kappa_0}-\frac{1}{2}C_2n^{\kappa_0}=\frac{1}{2}C_2n^{\kappa_0}.
\end{displaymath}
%
%

%
%

\section{Beyond polysquare surfaces}\label{sec8}

%
%

\subsection{Street-rational polyrectangle translation surfaces}\label{sec8.1}

We have developed two different versions of the shortline method for polysquare translation surfaces.
Using the eigenvalue-based version of the method, we can prove time-quantitative uniformity in terms of the irregularity exponent, as in \cite{BDY1,BDY2} and Section~\ref{sec7} of the present paper.
Using the eigenvalue-free version of the method, we can prove time-quantitative density, including superdensity in some cases, as in~\cite{BCY}.

The purpose of this section is to show that both versions of the shortline method can be extended to the class of \textit{street-rational polyrectangle translation surfaces}, a class which includes all polysquare translation surfaces but goes far beyond.
The first example of this larger class comes from right triangle billiard with angle~$\pi/8$.
This is perhaps the simplest non-integrable right triangle billiard. 
The standard trick of unfolding implies that this billiard can be described in terms of a $1$-direction geodesic flow on the regular octagon surface, as shown in Figure~8.1.1.

\begin{displaymath}
\begin{array}{c}
\includegraphics[scale=0.8]{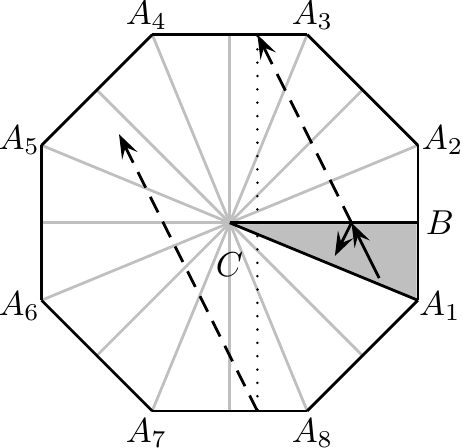}
\vspace{3pt}\\
\mbox{Figure 8.1.1: right triangle billiard with angle $\pi/8$, and geodesic flow}
\\
\mbox{on the regular octagon surface via unfolding}
\end{array}
\end{displaymath}

The definition of the \textit{regular octagon surface} is rather straightforward.
We identify the opposite parallel boundary edges by translation.
Thus the four identified pairs are
\begin{displaymath}
(A_1A_2,A_6A_5),
\quad
(A_2A_3,A_7A_6),
\quad
(A_3A_4,A_8A_7),
\quad
(A_4A_5,A_1A_8).
\end{displaymath}
Hence a $1$-direction geodesic flow on the regular octagon surface, a compact orientable surface, is a $16$-fold covering of the right triangle billiard with angle~$\pi/8$, in much the same way as a torus line flow on a $2\times2$ square is a $4$-fold covering of the square billiard.
Needless to say, the regular octagon surface looks completely different from a polysquare surface.

In Figure~8.1.1, the point $C$ represents the center of the octagon.
The right triangle $A_1BC$ has angle $\pi/8$ at~$C$.
Reflecting the $A_1BC$ billiard across the side $CB$ is the first step in the unfolding process, and leads to the triangle~$A_1A_2C$.
There are three more steps.
Reflecting $A_1A_2C$ across the side $CA_2$ leads to a polygon $A_1A_2A_3C$.
Reflecting $A_1A_2A_3C$ across the side $CA_3$ leads to a polygon $A_1A_2A_3A_4A_5C$.
Finally, reflecting $A_1A_2A_3A_4A_5C$ across the side $CA_5$ leads to the whole octagon.

Non-integrability is clear from the unfolding, since the vertices of the octagon are split-singularities of the geodesic flow on the surface.

Of course the same elegant construction works for any right triangle billiard with angle $\pi/k$ where $k\ge4$ is \textit{even}.
Unfolding will then convert the billiard orbit to a $1$-direction geodesic flow on the \textit{regular $k$-gon surface}, defined by identifying parallel boundary edges of the $k$-gon by translation.
These are non-integrable systems for every even integer $k\ge8$, so the regular octagon surface is the simplest such system.

\begin{remark}
Consider the regular $k$-gon surface with even $k\ge4$.
If $k$ is divisible by~$4$, then the boundary identification gives $1$ vertex, $k/2$ edges, and $1$ region, so Euler's formula
\begin{displaymath}
2-2g=\chi=V-E+R=1-(k/2)+1
\end{displaymath}
gives $g=k/4$.
If $k$ is not divisible by~$4$, then the boundary identification gives $2$ vertices, $k/2$ edges, and $1$ region, so Euler's formula
\begin{displaymath}
2-2g=\chi=V-E+R=2-(k/2)+1
\end{displaymath}
gives $g=(k-2)/4$.
Thus the regular $k$-gon surface with even $k$ has genus $1$ when $k=4,6$.
This is consistent with the well known fact that we can tile the plane with squares or regular hexagons, but not with any other regular polygons with an even number of sides.
\end{remark}

Let us return to the regular octagon surface.
While it looks completely different from a polysquare surface, there is a hidden similarity.
The regular octagon surface is in fact equivalent to a \textit{polyrectangle translation surface}; see Figure~8.1.2.

\begin{displaymath}
\begin{array}{c}
\includegraphics[scale=0.8]{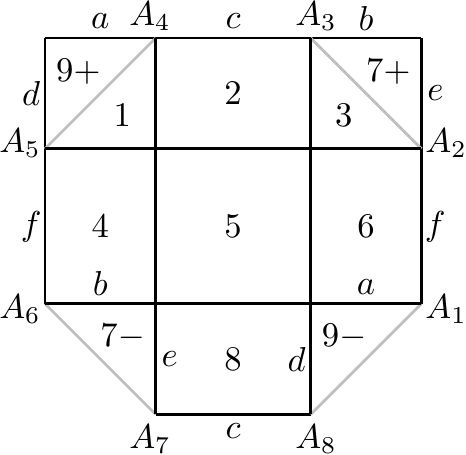}
\vspace{3pt}\\
\mbox{Figure 8.1.2: the regular octagon surface seen as}
\\
\mbox{a polyrectangle translation surface}
\end{array}
\end{displaymath}

The edge $A_2A_3$ is identified with the edge~$A_7A_6$.
This allows us to replace the triangle labelled $7-$ by the triangle labelled $7+$, with the two horizontal edges $b$ identified and the two vertical edges $e$ identified.
Likewise, the edge $A_4A_5$ is identified with the edge~$A_1A_8$.
This allows us to replace the triangle labelled $9-$ by the triangle labelled $9+$, with the two horizontal edges $a$ identified and the two vertical edges $d$ identified.
Thus the regular octagon surface becomes a polyrectangle translation surface consisting of $7$ rectangles, labelled
\begin{displaymath}
(1,9+),2,(3,7+),4,5,6,8.
\end{displaymath}
With the edge identification, this surface has $2$ horizontal streets
\begin{displaymath}
(1,9+),2,(3,7+),8
\quad\mbox{and}\quad
4,5,6,
\end{displaymath}
as well as $2$ vertical streets
\begin{displaymath}
(1,9+),4,(3,7+),6
\quad\mbox{and}\quad
2,5,8.
\end{displaymath}

The two identified edges $e$ also allows us to replace the rectangle $8$ at the bottom in Figure~8.1.2, indicated by $8-$ in Figure~8.1.3, by a rectangle on the top right indicated by $8+$.

\begin{displaymath}
\begin{array}{c}
\includegraphics[scale=0.8]{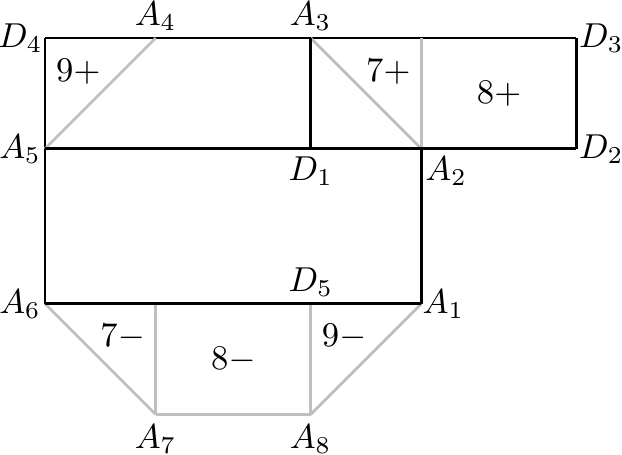}
\vspace{3pt}\\
\mbox{Figure 8.1.3: street decomposition into similar rectangles}
\end{array}
\end{displaymath}

Figures 8.1.2 and~8.1.3 justify the claim that the regular octagon surface is in fact a polyrectangle translation surface.
Furthermore, we have a second crucial property of the regular octagon surface, that the three rectangles
\begin{displaymath}
D_4A_5D_1A_3,
\quad
A_3D_1D_2D_3,
\quad
A_5A_6A_1A_2
\end{displaymath}
are similar.
To see this, assume, without loss of generality, that $\length(A_3D_1)=1$, so that each edge of the regular octagon has length~$\sqrt{2}$.
Then
\begin{equation}\label{eq8.1.1}
\frac{\length(A_5D_1)}{\length(A_3D_1)}=\frac{\length(D_1D_2)}{\length(D_3D_2)}=1+\sqrt{2},
\end{equation}
and
\begin{equation}\label{eq8.1.2}
\frac{\length(A_6A_1)}{\length(A_2A_1)}=\frac{2+\sqrt{2}}{\sqrt{2}}=1+\sqrt{2}.
\end{equation}
We also have
\begin{equation}\label{eq8.1.3}
\frac{\length(A_5D_2)}{\length(D_3D_2)}=2(1+\sqrt{2}).
\end{equation}
Note that \eqref{eq8.1.3} shows that the cotangent of the diagonal $A_5D_3$ of the horizontal street $D_4A_5D_2D_3$ is equal to $2(1+\sqrt{2})$, while \eqref{eq8.1.2} shows that the cotangent of the diagonal $A_6A_2$ of the horizontal street $A_5A_6A_1A_2$ is equal to $1+\sqrt{2}$.
We can draw similar conclusions for the vertical streets.

\begin{remark}
The equality of the ratios in \eqref{eq8.1.1} and \eqref{eq8.1.2} can also be established by using the well known geometric property of the circle very often known as the \textit{chord-angle relation} and which states that if we fix any chord with endpoints $P,Q$ on a circle, then the angle $PRQ$ remains the same for any point $R$ on the same circular arc~$PQ$.
Since $A_1A_2$ and $A_2A_3$ are two chords of the same length, the angles $A_1A_6A_2$ and $A_2A_5A_3$ are equal.
\end{remark}

The ratio of the cotangents of the diagonals of the horizontal streets, and similarly, the ratio of the cotangents of the diagonals of the vertical streets, is \textit{rational}.
Thus we refer to the regular octagon surface as a \textit{street-rational polyrectangle translation surface}.

\begin{remark}
Consider a horizontal street on a finite polyrectangle translation surface~$\PPP$.
The width of this street, \textit{i.e.} the perpendicular distance between its top and bottom horizontal edges, may not be equal to~$1$.
Let us expand or contract this street to obtain a similar copy where the width is now equal to~$1$.
Then we call the length of this expanded or contracted horizontal street the \textit{normalized length} of the horizontal street.
Some authors also use the terms \textit{modulus} and \textit{cylinder} in place of the terms \textit{normalized length} and \textit{street}.
We can also think of this as the cotangent of the (angle that the) diagonals (make with the direction) of the horizontal street.
For a finite street-rational polyrectangle translation surface~$\PPP$, there clearly exists a smallest real number $h^*$ which is an integer multiple of the normalized length of every horizontal street.
We call $h^*$ the \textit{normalized horizontal street-LCM} of~$\PPP$.
Analogous to this, we denote by $v^*$ the \textit{normalized vertical street-LCM} of~$\PPP$.

To have a suitable version of the surplus shortline method on a street-rational polyrectangle translation surface~$\PPP$, we must consider slopes of the form
\begin{equation}\label{eq8.1.4}
\alpha=v^*a_0+\frac{1}{h^*a_1+\frac{1}{v^*a_2+\frac{1}{h^*a_3+\cdots}}},
\end{equation}
where $a_0,a_1,a_2,a_3,\ldots$ are positive integers.

Note that $h^*$ plays the analogous role of an integer which is the least integer multiple of the lengths of the horizontal streets in a polysquare translation surface, and $v^*$ plays the analogous role of an integer which is the least integer multiple of the lengths of the vertical streets in a polysquare translation surface.
We can view the expression \eqref{eq8.1.4} as an extension of the concept of continued fraction expansion.
For the success of the surplus shortline method in a street-rational polyrectangle translation surface, we require these \textit{digits} to be integer multiples of $h^*$ and $v^*$ as relevant.

Note that it is not necessary that $h^*$ and $v^*$ are rational multiples of each other, so there may not be a quantity that corresponds to the
street-LCM of a finite polysquare translation surface.
\end{remark}

Indeed, we have the two key ingredients needed for the success of the shortline method.
Working with a polyrectangle translation surface automatically gives rise to the horizontal and vertical directions, vital for the shortline method which is an alternating process between two directions, while \textit{street-rationality} guarantees that the concept of shortline is well defined.

Thus the eigenvalue-free version of the shortline method developed in \cite[Sections 6.2 and~6.4]{BCY} also works for street-rational polyrectangle translation surfaces, and establishes superdensity of the right triangle billiard with angle $\pi/8$ for some special slopes.

The normalized lengths of the streets, or the cotangents of the diagonals of the streets, of the regular octagon surface represented as a street-rational polyrectangle translation surface are $2(1+\sqrt{2})$ and $1+\sqrt{2}$, so that $h^*=v^*=2(1+\sqrt{2})$.
Thus we consider the special slopes of the form
\begin{equation}\label{eq8.1.5}
\alpha=2(1+\sqrt{2})a_0+\frac{1}{2(1+\sqrt{2})a_1+\frac{1}{2(1+\sqrt{2})a_2+\cdots}},
\end{equation}
and their reciprocals $\alpha^{-1}$, where $a_0,a_1,a_2,a_3,\ldots$ is an infinite sequence of positive integers bounded from above.

The choice \eqref{eq8.1.5} represents the analog of those badly approximable slopes for which the eigenvalue-free version of the shortline method works in the case of polysquare translation surfaces.

Figure~8.1.4 illustrates some almost horizontal detour crossings and their almost vertical shortcuts.

\begin{displaymath}
\begin{array}{c}
\includegraphics[scale=0.8]{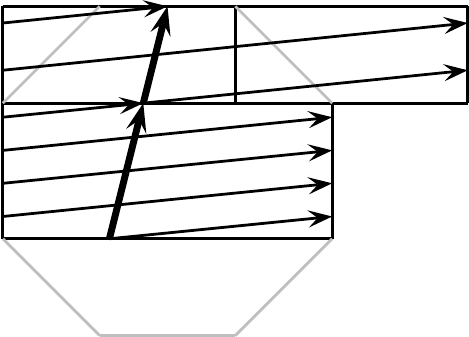}
\vspace{3pt}\\
\mbox{Figure 8.1.4: almost horizontal detour crossings}
\\
\mbox{and their almost vertical shortcuts}
\end{array}
\end{displaymath}

By a straightforward adaptation of the proof of \cite[Theorems 6.1.1 and~6.4.1]{BCY} to the street-rational polyrectangle translation surface in Figures 8.1.2--8.1.4 that represents the regular octagon surface, we obtain the following result concerning the right triangle billiard with angle~$\pi/8$.

\begin{thm}\label{thm8.1.1}
\emph{(i)}
Consider the right triangle with angle~$\pi/8$.
Let $\alpha>1$ be a real number of the form \eqref{eq8.1.5}. 
Then any half-infinite billiard orbit in the right triangle with angle~$\pi/8$ with initial slope $\alpha$ exhibits superdensity.
\end{thm}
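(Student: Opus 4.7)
The plan is to mirror the proof of \cite[Theorem~6.4.1]{BCY} on the street-rational polyrectangle realization of the regular octagon surface that appears in Figures 8.1.2--8.1.4, so the first step is to reduce the billiard problem to a geodesic problem. By the unfolding illustrated in Figure~8.1.1, a half-infinite billiard orbit in the right triangle with angle $\pi/8$ and initial slope $\alpha$ lifts to a $1$-direction geodesic on the regular octagon surface with the same slope~$\alpha$, and since the unfolding is a $16$-fold cover, superdensity of the geodesic on the octagon surface will transfer back to superdensity of the billiard orbit in the triangle (after intersecting with the triangular fundamental domain). Then I reinterpret the octagon surface, via the cutting and re-gluing in Figures 8.1.2 and~8.1.3, as a polyrectangle translation surface $\PPP$ with $7$ rectangular faces, $2$ horizontal streets, and $2$ vertical streets, whose normalized street-lengths (cotangents of the diagonal directions) are all of the form $(1+\sqrt{2})$ or $2(1+\sqrt{2})$, giving $h^*=v^*=2(1+\sqrt{2})$.

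Next I would install the surplus shortcut-ancestor process of Section~\ref{sec7} and \cite[Section~6]{BCY} on this rectangular model. For $\alpha$ of the form \eqref{eq8.1.5}, so that the ``digits'' $h^*a_{2k+1}$ and $v^*a_{2k}$ of the generalized continued fraction are integer multiples of the normalized street-LCM's, a fractional almost horizontal detour crossing of a horizontal street and the almost vertical geodesic that shortcuts it are in mutual shortline relation; this is the content of Figure~8.1.4, and it only uses \emph{street-rationality}, not that the faces are squares. I would then introduce almost vertical and almost horizontal units $\uparrow_\delta,\nuparrow_\delta,\rightarrow_\delta,\nrightarrow_\delta$ for each of the $7$ rectangle faces~$S_\delta$ of~$\PPP$ exactly as in the polysquare case, use the keep/delete end rules to write down the two ancestor relations (directly analogous to \eqref{eq7.2.3}--\eqref{eq7.2.6}), and thereby build the $14\times14$ two-step transition matrix~$\bfA(\PPP)$, counted with the appropriate multiplicities $m/|J_i|$ and $n/|I_j|$ (taking into account that the ``length'' of each street is now its normalized length).

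With the combinatorial shortline bookkeeping in place, the rest of the proof is a direct transcription of the eigenvalue-free argument in \cite[Section~6.2 and Section~6.4]{BCY}. The boundedness of the $a_k$ gives bounded ``incomplete detour parts'' at each renormalization step, so the standard zigzag/iteration argument of \cite[Theorem~6.4.1]{BCY} produces, for every convex test region and every time scale, an approximating union of full streets whose contribution to the orbit is controlled in a time-quantitative manner. Summing over the dyadic scales determined by the continued fraction expansion \eqref{eq8.1.5} yields the superdensity quantitative bound on~$\PPP$, and the $16$-to-$1$ unfolding then gives superdensity of the billiard orbit in the right triangle.

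The main obstacle, and the only place where one must really work rather than quote, is in setting up the ancestor relations correctly for the polyrectangle~$\PPP$: because the seven rectangles have two different aspect ratios related by the factor $1+\sqrt{2}$, the number of faces crossed by a detour inside a street, and the multiplicities carried by the shortcut, are not given by integer counts of face-widths but by integer multiples of the normalized street-lengths. Verifying that the digit condition in \eqref{eq8.1.5} (divisibility by $h^*$ and $v^*$) exactly guarantees that every detour crossing decomposes into an \emph{integer} number of full unit crossings plus two boundary fractional pieces, so that the delete/keep end rule bookkeeping closes up without leftover error of unbounded size, is the one technical step that has to be checked by hand; once this is done, the rest of the argument is identical to the polysquare case.
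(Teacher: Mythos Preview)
Your proposal is correct and follows essentially the same approach as the paper. The paper itself does not give a detailed proof of Theorem~\ref{thm8.1.1}(i); it simply states that the result follows ``by a straightforward adaptation of the proof of \cite[Theorems 6.1.1 and~6.4.1]{BCY} to the street-rational polyrectangle translation surface in Figures 8.1.2--8.1.4,'' and your outline is precisely a fleshed-out version of that adaptation, correctly identifying the unfolding to the octagon surface, the polyrectangle realization with $h^*=v^*=2(1+\sqrt{2})$, the role of the digit condition in \eqref{eq8.1.5} in making the shortcut-ancestor bookkeeping close up, and the invocation of the eigenvalue-free argument for superdensity.
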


Figures 8.1.2--8.1.4 are based on the parallel decomposition of the regular octagon into $3$ parts as shown in the picture on the left in Figure~8.1.5.
The picture on the right shows a different decomposition into $4$ parts.

\begin{displaymath}
\begin{array}{c}
\includegraphics[scale=0.8]{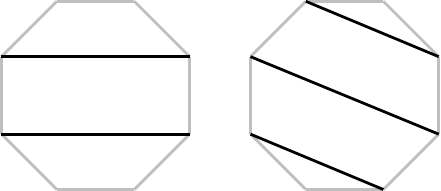}
\vspace{3pt}\\
\mbox{Figure 8.1.5: parallel decompositions of the regular octagon}
\end{array}
\end{displaymath}

Figure~8.1.6 illustrates the representation of the regular octagon surface as a street-rational polyrectangle translation surface, based on this second decomposition.
The $4$ rectangles are similar.
Since $A_3A_4$ and $A_4A_5$ are two chords of the same length, the angles $A_3A_2A_4$ and $A_4A_1A_5$ are equal.

\begin{displaymath}
\begin{array}{c}
\includegraphics[scale=0.8]{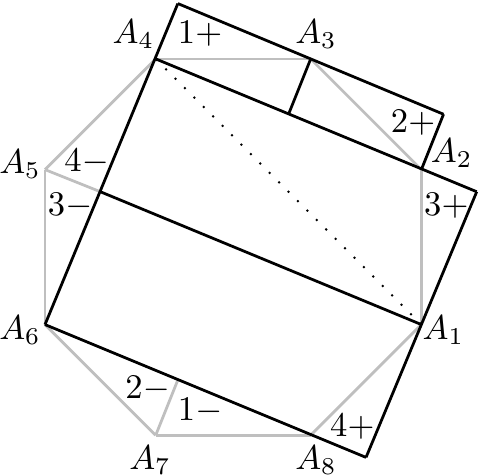}
\vspace{3pt}\\
\mbox{Figure 8.1.6: another street decomposition into similar rectangles}
\end{array}
\end{displaymath}

It is not difficult to see that this polyrectangle translation surface, suitably rotated, has $2$ horizontal streets and $2$ vertical streets.
For instance, in one of these two directions, one street comprises the two smaller rectangles, and the other street comprises the two larger rectangles.
Furthermore, it can be checked that the normalized lengths of the streets are equal to $2(1+\sqrt{2})$, so we can consider slopes of the form \eqref{eq8.1.5} for the suitably rotated copy of this polyrectangle translation surface.
This leads to \textit{new} explicit slopes such that any half-infinite orbit with such an initial slope in the right triangle billiard with angle $\pi/8$ exhibits superdensity.

It can be shown that the same method works for the right triangle billiard with angle~$\pi/k$, where $k\ge8$ is an even integer.
Again we use the elementary fact that the vertices of a regular $k$-gon are all on a circle, and street-rationality comes from the chord-angle relation on this circle.
We leave the details to the reader.

\begin{theorem811c}
\emph{(ii)}
Let $k\ge8$ be an even integer, and consider the right-triangle with angle~$\pi/k$.
There exist infinitely many slopes, depending on~$k$, such that any half-infinite billiard orbit in the right triangle with angle $\pi/k$ with such an initial slope exhibits superdensity.

\emph{(iii)}
In general, consider an arbitrary finite street-rational polyrectangle translation surface~$\PPP$.
There exist infinitely many slopes, depending on~$\PPP$, such that any half-infinite $1$-direction geodesic on $\PPP$ having such a slope exhibits superdensity.

\emph{(iv)}
As in Theorem~\ref{thm7.1.1}, for many of these slopes we can explicitly compute the irregularity exponent.
Combining the irregularity exponent with the method of zigzagging introduced in \cite[Section~3.3]{BDY1}, we can also describe, for a geodesic flow on $\PPP$ with such a slope, the time-quantitative behavior of the edge cutting and face crossing numbers, as well as equidistribution relative to all convex sets.
\end{theorem811c}

In the course on the next few sections, we shall emphasize on parts (iii) and (iv) by giving a few examples; see Theorems \ref{thm8.3.1} and Theorems
\ref{thm8.4.1}--\ref{thm8.4.3}.

\begin{remark}
This paper is not about ergodic theory.
Instead, we focus on the time-quantitative evolution of individual orbits instead of geodesic flow as a whole.
Nevertheless, it is very interesting to point out a substantial difference between geodesic flow on a polysquare translation surface and geodesic flow on a regular $k$-gon surface with even $k\ge8$.
From the viewpoint of ergodic theory these two flows are very different.
The latter is \textit{weakly mixing} in almost every direction, while the former is \textit{not} weakly mixing in any direction; see~\cite{AD}.

Our main point is that, despite this big difference between the two flows, the shortline method works \textit{equally well} for individual orbits in either case.

However, whereas for a polysquare translation surface, a geodesic \textit{modulo one} is equivalent to a torus line in the unit square, there is no corresponding analog for the street-rational polyrectangle translation surface corresponding to the regular octagon surface.
\end{remark}

In the next section, we illustrate the proof of part (iv) in a special case that reflects the whole difficulty of the general case.

%
%

\subsection{Computing the irregularity exponent for the regular octagon surface}\label{sec8.2}

We now apply the eigenvalue-based version of the shortline method on the regular octagon surface, and compute the irregularity exponent of $1$-direction geodesics with certain slopes.
In particular, we consider special slopes of the form
\begin{equation}\label{eq8.2.1}
\alpha=2(1+\sqrt{2})a_0+\frac{1}{2(1+\sqrt{2})a_1+\frac{1}{2(1+\sqrt{2})a_2+\cdots}},
\end{equation}
where $a_0,a_1,a_2,\ldots$ form a sequence of positive integers which is eventually periodic.
This last requirement distinguishes \eqref{eq8.2.1} from \eqref{eq8.1.5}.

Since the common ratio of the vertical and horizontal sides of the rectangles in Figure~8.1.4 is $1+\sqrt{2}$, the shortline of a geodesic on the regular octagon surface with slope $\alpha$ given by \eqref{eq8.2.1} has slope~$\alpha_1^{-1}$, where
\begin{displaymath}
\alpha_1=2(1+\sqrt{2})a_1+\frac{1}{2(1+\sqrt{2})a_2+\frac{1}{2(1+\sqrt{2})a_3+\cdots}},
\end{displaymath}
and the shortline of this shortline has slope
\begin{displaymath}
\alpha_2=2(1+\sqrt{2})a_2+\frac{1}{2(1+\sqrt{2})a_3+\frac{1}{2(1+\sqrt{2})a_4+\cdots}},
\end{displaymath}
and so on.
Note that for $\alpha,\alpha_1,\alpha_2,\ldots,$ the usual shift of digits $a_0,a_1,a_2,\ldots$ applies.

\begin{thm}\label{thm8.2.1}
Consider $1$-direction geodesic flow on the regular octagon surface with slope $\alpha$ given by \eqref{eq8.2.1}.
If the sequence $a_0,a_1,a_2,\ldots$ has period $k_1,k_2,\ldots,k_r$ eventually, then the irregularity exponent of the geodesic with slope $\alpha$ is equal to
\begin{displaymath}
\frac{\log\vert\lambda\vert}{\log\vert\Lambda\vert},
\end{displaymath}
where $\lambda$ is the eigenvalue with the larger absolute value of the product matrix
\begin{displaymath}
\prod_{i=1}^r
\begin{pmatrix}
-2(\sqrt{2}-1)k_i&1\\
1&0
\end{pmatrix},
\end{displaymath}
and $\Lambda$ is the eigenvalue with the larger absolute value of the product matrix
\begin{displaymath}
\prod_{i=1}^r
\begin{pmatrix}
2(1+\sqrt{2})k_i&1\\
1&0
\end{pmatrix}.
\end{displaymath}
Alternatively, we have the product formula
\begin{displaymath}
\Lambda=\prod_{i=1}^r\beta_i,
\end{displaymath}
where for every $i=1,\ldots,r$,
\begin{displaymath}
\beta_i=2(1+\sqrt{2})k_i+\frac{1}{2(1+\sqrt{2})k_{i+1}+\frac{1}{2(1+\sqrt{2})k_{i+2}+\cdots}},
\end{displaymath}
corresponding to the periodic sequence
\begin{displaymath}
k_i,k_{i+1},k_{i+2},\ldots,k_r,k_1,k_2,\ldots,k_r,\ldots.
\end{displaymath}
\end{thm}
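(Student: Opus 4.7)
The plan is to apply the eigenvalue-based surplus shortline method of Theorem~\ref{thm7.2.2}, extended as in Section~\ref{sec8.1} to street-rational polyrectangle translation surfaces, to the representation of the regular octagon surface shown in Figures~8.1.2 and~8.1.3. That representation has $h=2$ horizontal streets (of cotangents $2(1+\sqrt{2})$ and $1+\sqrt{2}$, giving $h^*=2(1+\sqrt{2})$) and $v=2$ vertical streets. Applying Theorem~\ref{thm7.2.2} with $h=2$ reduces the computation of the relevant eigenvalues of the 2-step transition matrix $\bfA$ to the analysis of a $2\times 2$ street-spreading matrix $\bfS$, whose eigenvalues $\tau$ give rise to eigenvalues of $\bfA\vert_\VVV$ via the formula \eqref{eq7.2.38}.

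First, I would run the street-spreading algorithm of Section~\ref{sec7.2} on the two horizontal streets for a single digit $k$, using the modified parameters $m=n=2(1+\sqrt{2})k$ in place of integer multiples of the street-LCM. The resulting entries of the $\bfA\vert_\VVV$ block lie naturally in the quadratic field $\Qq(\sqrt{2})$ (since the cotangents of the two streets do), so this block splits over $\Qq(\sqrt{2})$ into two invariant one-dimensional actions that are Galois conjugate to each other. Concretely, the single-digit operator on one invariant line is multiplication by the standard continued-fraction matrix $\begin{pmatrix} 2(1+\sqrt{2})k & 1 \\ 1 & 0 \end{pmatrix}$, and on the Galois-conjugate invariant line by $\begin{pmatrix} -2(\sqrt{2}-1)k & 1 \\ 1 & 0 \end{pmatrix}$, the two matrices sharing determinant $-1$ with traces that are Galois conjugates of each other in $\Qq(\sqrt{2})$.

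For an eventually periodic digit sequence $k_1,\ldots,k_r$, one full period of the shortline evolution corresponds to the product of $r$ single-digit operators, producing the two matrix products in the statement. The dominant eigenvalue $\Lambda=\prod_{i=1}^r\beta_i$ of the first product follows from a standard telescoping argument: the relation $\beta_i=2(1+\sqrt{2})k_i+1/\beta_{i+1}$ gives $M_i(\beta_{i+1},1)^T=\beta_{i+1}(\beta_i,1)^T$, and iterating cyclically over the period yields $M_1\cdots M_r(\beta_1,1)^T=\bigl(\prod_{i=1}^r\beta_i\bigr)(\beta_1,1)^T$. Since each $\beta_i>1$, this is the eigenvalue of larger absolute value (the other being $(-1)^r/\prod\beta_i$ by the determinant identity). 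The deviation eigenvalue $\lambda$ of the second product is obtained by applying the Galois automorphism $\sqrt{2}\mapsto-\sqrt{2}$ to the same telescoping identity, yielding $\lambda=\prod_{i=1}^r\bar\beta_i$; here $|\bar\beta_i|=(\sqrt{2}-1)k_i+\sqrt{(\sqrt{2}-1)^2k_i^2+1}>1$ for every $k_i\geq 1$, so this is again the eigenvalue of larger absolute value. Finally, the irregularity exponent is extracted exactly as in Examples~\ref{ex7.3.1}--\ref{ex7.4.2}: setting the time scale $T\asymp\Lambda^n$ after $n$ period iterations, the Gutkin--Veech uniformity theorem forces perfect cancellation of the $\Lambda^n$-contribution to the deviation, and the dominant deviation is of order $|\lambda|^n\asymp T^{\log|\lambda|/\log|\Lambda|}$, giving the claimed irregularity exponent.

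The main obstacle will be the rigorous verification that the $2$-dimensional block $\bfA\vert_\VVV$ for the octagon really splits cleanly into the two Galois-conjugate invariant directions: unlike in the polysquare case where street-spreading entries are non-negative integers, here they lie in $\Qq(\sqrt{2})$ and the conjugation structure must be tracked explicitly when computing $\bfS$ and its eigendecomposition. A secondary technical point is that the octagon representation of Figure~8.1.3 is not strictly covered by Theorem~\ref{thm7.2.2} as stated, because horizontal and vertical streets may share more than one rectangular face; the general version described in the Remark following the proof of Theorem~\ref{thm7.2.2} must be invoked instead.
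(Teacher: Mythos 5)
Your proposal is built on Theorem~\ref{thm7.2.2}'s street-spreading matrix~$\bfS$, which is a fundamentally $2$-step construction: it lives inside the $2$-step transition matrix $\bfA=(M_1M_2)^T$ and its eigenvalues $\tau$ give eigenvalues of $\bfA$ via \eqref{eq7.2.38}. This is adequate when the continued-fraction digits have period two, but Theorem~\ref{thm8.2.1} allows the period $k_1,\ldots,k_r$ to be arbitrarily long. The paper is explicit about this obstruction in the Remark immediately following the theorem statement: \emph{``If the sequence $a_0,a_1,a_2,\ldots$ has period $k_1,k_2$ eventually, of length $2$, then it is possible to use the street-spreading matrix \ldots\ But here the period can be arbitrarily long.''} The actual proof therefore does not use $\bfS$ at all. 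It works directly with the $1$-step transition matrix $M(k)$ of size $14\times14$ (built from the ancestor relations \eqref{eq8.2.2}--\eqref{eq8.2.15}), and establishes a partial diagonalization $P^{-1}M(k)P$ of the form \eqref{eq8.2.18}--\eqref{eq8.2.20}, where $P$ is \emph{independent of $k$}. That $k$-independence is the crux: it means $P^{-1}\bigl(\prod_{i=1}^r M(k_i)\bigr)P$ retains the block-triangular shape, with the two relevant diagonal $2\times2$ blocks becoming $\prod_i\begin{pmatrix}2(1+\sqrt{2})k_i&1\\1&0\end{pmatrix}$ and $\prod_i\begin{pmatrix}-2(\sqrt{2}-1)k_i&1\\1&0\end{pmatrix}$, while the $10\times10$ upper-left block contributes only eigenvalues of absolute value $1$.

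Your plan to ``run the street-spreading algorithm for a single digit $k$'' and then read off the two $2\times2$ single-step matrices does not actually produce them: $\bfS$ with parameters $m=n=2(1+\sqrt{2})k$ encodes only the eigenvalues of the period-two operator, not a factorization into two single-step maps. You cannot compose it over varying digits $k_1,\ldots,k_r$, and the claim that ``$\bfA\vert_\VVV$ splits into two invariant one-dimensional actions'' misdescribes the structure (the splitting is into two $2$-dimensional continued-fraction blocks, a distinction that matters because the irregularity exponent depends on both eigenvalues of each block). To salvage the argument you would need to prove the $k$-independent partial-diagonalization of the $1$-step matrix $M(k)$, which is the bulk of the paper's proof and the piece your proposal currently assumes rather than derives.

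The remaining pieces of your proposal are sound in spirit: the Galois-conjugacy of the two $2\times2$ blocks under $\sqrt{2}\mapsto-\sqrt{2}$ is correct; the telescoping identity $M_1\cdots M_r(\beta_1,1)^T=\bigl(\prod_i\beta_i\bigr)(\beta_1,1)^T$ is the standard argument for the product formula; and the passage from the two dominant eigenvalues to the irregularity exponent follows the template of Sections~\ref{sec7.3}--\ref{sec7.4}. Your secondary observation --- that Figure~8.1.3 is not strictly covered by Theorem~\ref{thm7.2.2} as stated because a horizontal and a vertical street can share more than one face --- is valid, but moot once one abandons the street-spreading route for the $1$-step analysis.
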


\begin{remark}
If the sequence $a_0,a_1,a_2,\ldots$ has period $k_1,k_2$ eventually, of length~$2$, then it is possible to use the street-spreading matrix in the same spirit as in Theorem~\ref{thm7.2.2}, noting that the method there works even for street-rational polyrectangle translation surfaces.
But here the period can be arbitrarily long.
\end{remark}

\begin{proof}[Proof of Theorem~\ref{thm8.2.1}]
We apply an adaptation of the original eigenvalue-based version of the surplus shortline method developed in \cite[Section~3]{BDY1} and \cite[Section~4]{BDY2}.

First of all, we note that the regular octagon region, viewed as a street-rational polyrectangle translation surface as in Figures 8.1.2--8.1.4, has $7$ horizontal and $7$ vertical edges, some with identified pairs, as shown in Figure~8.2.1.

\begin{displaymath}
\begin{array}{c}
\includegraphics[scale=0.8]{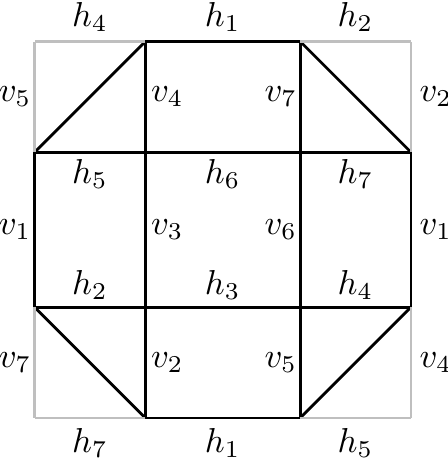}
\vspace{3pt}\\
\mbox{Figure 8.2.1: horizontal and vertical edges of the regular octagon surface}
\\
\mbox{viewed as a street-rational polyrectangle translation surface}
\end{array}
\end{displaymath}

We distinguish the $14$ types of almost vertical units
\begin{displaymath}
h_1h_3,h_1h_4,
h_2h_5,h_2h_6,
h_3h_6,h_3h_7,
h_4h_5,
h_4h_7,
h_5h_1,h_5h_4,
h_6h_1,h_6h_2,
h_7h_2,h_7h_3,
\end{displaymath}
as shown in Figure~8.2.2.

\begin{displaymath}
\begin{array}{c}
\includegraphics[scale=0.8]{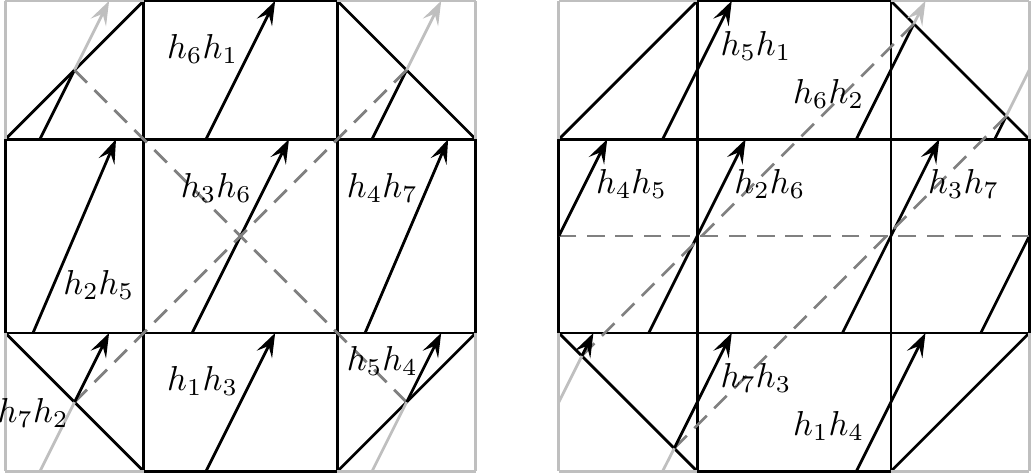}
\vspace{3pt}\\
\mbox{Figure 8.2.2: almost vertical units of the regular octagon surface}
\end{array}
\end{displaymath}

We distinguish the $14$ types of almost horizontal units
\begin{displaymath}
v_1v_3,v_1v_4,
v_2v_5,v_2v_6,
v_3v_6,v_3v_7,
v_4v_5,
v_4v_7,
v_5v_1,v_5v_4,
v_6v_1,v_6v_2,
v_7v_2,v_7v_3,
\end{displaymath}
as shown in Figure~8.2.3.

\begin{displaymath}
\begin{array}{c}
\includegraphics[scale=0.8]{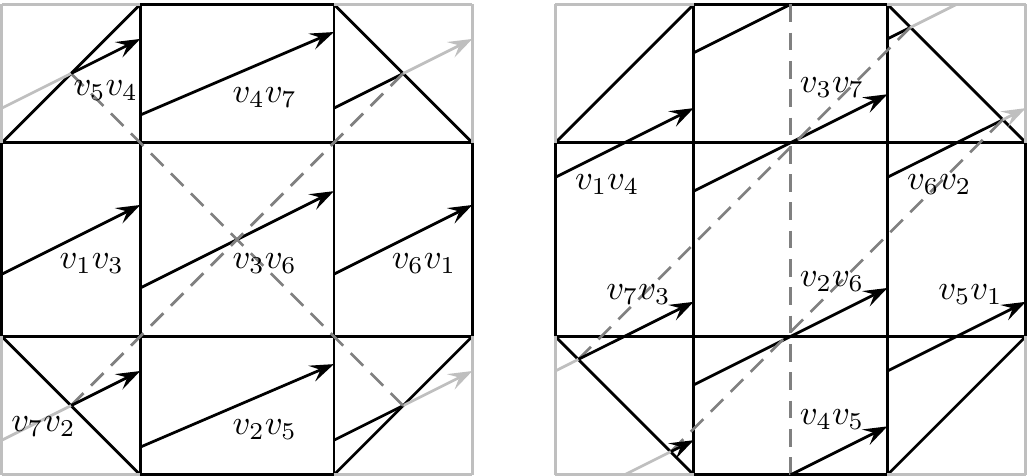}
\vspace{3pt}\\
\mbox{Figure 8.2.3: almost horizontal units of the regular octagon surface}
\end{array}
\end{displaymath}

Consider first the almost vertical unit~$h_1h_3$.
It is not difficult to see from Figures 8.2.2 and~8.2.3 that $h_1h_3$ is the shortcut of an almost horizontal detour crossing of a horizontal street, made up of a fractional almost horizontal unit $v_4v_5$, full almost horizontal units $v_5v_4,v_4v_7,v_7v_2$, then $(k-1)$ copies of full almost horizontal units $v_2v_5,v_5v_4,v_4v_7,v_7v_2$, and finally a fractional almost horizontal unit~$v_2v_6$.
This can be summarized by
\begin{displaymath}
h_1h_3\to
v_4v_5,v_5v_4,v_4v_7,v_7v_2,(v_2v_5,v_5v_4,v_4v_7,v_7v_2)^{k-1},v_2v_6,
\end{displaymath}
where $k\ge1$ is the branching parameter.
For bookkeeping purposes, we next apply the delete end rule, keep the initial fractional almost horizontal unit $v_4v_5$ as a full unit, discard the final fractional almost horizontal unit~$v_2v_6$, and call these units the ancestor units of~$h_1h_3$.

Repeating the same exercise on the other $13$ almost vertical units, we summarize this ancestor process by
\begin{align}
h_1h_3&\rightharpoonup
v_4v_5,
k\times v_5v_4,
k\times v_4v_7,
k\times v_7v_2,
(k-1)\times v_2v_5,
\label{eq8.2.2}
\\
h_1h_4&\rightharpoonup
v_4v_5,
k\times v_5v_4,
k\times v_4v_7,
k\times v_7v_2,
k\times v_2v_5,
\label{eq8.2.3}
\\
h_2h_5&\rightharpoonup
v_7v_3,
2k\times v_3v_6,
2k\times v_6v_1,
(2k-1)\times v_1v_3,
\label{eq8.2.4}
\\
h_2h_6&\rightharpoonup
v_7v_3,
2k\times v_3v_6,
2k\times v_6v_1,
2k\times v_1v_3,
\label{eq8.2.5}
\\
h_3h_6&\rightharpoonup
v_2v_6,
2k\times v_6v_1,
2k\times v_1v_3,
(2k-1)\times v_3v_6,
\label{eq8.2.6}
\\
h_3h_7&\rightharpoonup
v_2v_6,
2k\times v_6v_1,
2k\times v_1v_3,
2k\times v_3v_6,
\label{eq8.2.7}
\\
h_4h_5&\rightharpoonup
v_5v_1,
2k\times v_1v_3,
2k\times v_3v_6,
2k\times v_6v_1,
\label{eq8.2.8}
\\
h_4h_7&\rightharpoonup
v_5v_1,
2k\times v_1v_3,
2k\times v_3v_6,
(2k-1)\times v_6v_1,
\label{eq8.2.9}
\\
h_5h_1&\rightharpoonup
v_1v_4,
k\times v_4v_7,
k\times v_7v_2,
k\times v_2v_5,
k\times v_5v_4,
\label{eq8.2.10}
\\
h_5h_4&\rightharpoonup
v_1v_4,
k\times v_4v_7,
k\times v_7v_2,
k\times v_2v_5,
(k-1)\times v_5v_4,
\label{eq8.2.11}
\\
h_6h_1&\rightharpoonup
v_3v_7,
k\times v_7v_2,
k\times v_2v_5,
k\times v_5v_4,
(k-1)\times v_4v_7,
\label{eq8.2.12}
\\
h_6h_2&\rightharpoonup
v_3v_7,
k\times v_7v_2,
k\times v_2v_5,
k\times v_5v_4,
k\times v_4v_7,
\label{eq8.2.13}
\\
h_7h_2&\rightharpoonup
v_6v_2,
k\times v_2v_5,
k\times v_5v_4,
k\times v_4v_7,
(k-1)\times v_7v_2,
\label{eq8.2.14}
\\
h_7h_3&\rightharpoonup
v_6v_2,
k\times v_2v_5,
k\times v_5v_4,
k\times v_4v_7,
k\times v_7v_2.
\label{eq8.2.15}
\end{align}
These lead to a $14\times14$ transition matrix
\begin{displaymath}
M(k)=\begin{pmatrix}
M_{1,1}(k)&M_{1,2}(k)\\
M_{2,1}(k)&M_{2,2}(k)
\end{pmatrix},
\end{displaymath}
where any particular row captures the information in the ancestor relation of the almost vertical unit in question by displaying the multiplicities of each of its ancestor almost horizontal units.

We have
\begin{displaymath}
M_{1,1}(k)=\bordermatrix{
&v_1v_3&v_1v_4&v_2v_5&v_2v_6&v_3v_6&v_3v_7&v_4v_5\cr
h_1h_3&0&0&k-1&0&0&0&1\cr 
h_1h_4&0&0&k&0&0&0&1\cr 
h_2h_5&2k-1&0&0&0&2k&0&0\cr
h_2h_6&2k&0&0&0&2k&0&0\cr
h_3h_6&2k&0&0&1&2k-1&0&0\cr
h_3h_7&2k&0&0&1&2k&0&0\cr
h_4h_5&2k&0&0&0&2k&0&0},
\end{displaymath}
\begin{displaymath}
M_{1,2}(k)=\bordermatrix{
&v_4v_7&v_5v_1&v_5v_4&v_6v_1&v_6v_2&v_7v_2&v_7v_3\cr
h_1h_3&k&0&k&0&0&k&0\cr 
h_1h_4&k&0&k&0&0&k&0\cr 
h_2h_5&0&0&0&2k&0&0&1\cr
h_2h_6&0&0&0&2k&0&0&1\cr
h_3h_6&0&0&0&2k&0&0&0\cr
h_3h_7&0&0&0&2k&0&0&0\cr
h_4h_5&0&1&0&2k&0&0&0},
\end{displaymath}
\begin{displaymath}
M_{2,1}(k)=\bordermatrix{
&v_1v_3&v_1v_4&v_2v_5&v_2v_6&v_3v_6&v_3v_7&v_4v_5\cr
h_4h_7&2k&0&0&0&2k&0&0\cr 
h_5h_1&0&1&k&0&0&0&0\cr 
h_5h_4&0&1&k&0&0&0&0\cr 
h_6h_1&0&0&k&0&0&1&0\cr 
h_6h_2&0&0&k&0&0&1&0\cr 
h_7h_2&0&0&k&0&0&0&0\cr 
h_7h_3&0&0&k&0&0&0&0},
\end{displaymath}
and
\begin{displaymath}
M_{2,2}(k)=\bordermatrix{
&v_4v_7&v_5v_1&v_5v_4&v_6v_1&v_6v_2&v_7v_2&v_7v_3\cr
h_4h_7&0&1&0&2k-1&0&0&0\cr 
h_5h_1&k&0&k&0&0&k&0\cr 
h_5h_4&k&0&k-1&0&0&k&0\cr 
h_6h_1&k-1&0&k&0&0&k&0\cr 
h_6h_2&k&0&k&0&0&k&0\cr 
h_7h_2&k&0&k&0&1&k-1&0\cr 
h_7h_3&k&0&k&0&1&k&0}.
\end{displaymath}

Similarly, we can study the ancestor relation of each of the almost horizontal units, again using the delete end rule, and obtain the analogs of
\eqref{eq8.2.2}--\eqref{eq8.2.15}.
These will lead to another $14\times14$ transition matrix.
Since we have listed the almost vertical units and almost horizontal units in lexicographical order, these two $14\times14$ transition matrices are the same.

Of the $14$ eigenvalues of~$M(k)$, there are $10$ irrelevant ones of the form $\pm1$, $\pm\ii$ and $(-1\pm\sqrt{3}\ii)/2$.
The more interesting ones are
\begin{displaymath}
(1+\sqrt{2})k\pm\left((1+\sqrt{2})^2k^2+1\right)^{1/2},
\quad
(1-\sqrt{2})k\pm\left((1-\sqrt{2})^2k^2+1\right)^{1/2}.
\end{displaymath}
Clearly the eigenvalue with the largest absolute value is
\begin{equation}\label{eq8.2.16}
\Lambda=(1+\sqrt{2})k+\left((1+\sqrt{2})^2k^2+1\right)^{1/2},
\end{equation}
and the eigenvalue with the second largest absolute value is
\begin{equation}\label{eq8.2.17}
\lambda=-(\sqrt{2}-1)k-\left((\sqrt{2}-1)^2k^2+1\right)^{1/2}.
\end{equation}
The other two eigenvalues among these four are also irrelevant.

We shall show that the transition matrix $M(k)$ has a conjugate with the form
\begin{equation}\label{eq8.2.18}
P^{-1}M(k)P=\begin{pmatrix}
T&?\\
0&A(k)
\end{pmatrix},
\end{equation}
where $T$ is a $10\times10$ triangular matrix with main diagonal entries
\begin{equation}\label{eq8.2.19}
1,-1,-1,-1,-\ii,\ii,-1,1,\frac{-1-\sqrt{3}\ii}{2},\frac{-1+\sqrt{3}\ii}{2},
\end{equation}
in this order, the entries of $T$ below the main diagonal are all zero, and
\begin{equation}\label{eq8.2.20}
A(k)=\begin{pmatrix}
2(1+\sqrt{2})k&1&?&?\\
1&0&?&?\\
0&0&-2(\sqrt{2}-1)k&1\\
0&0&1&0
\end{pmatrix}.
\end{equation}
The description of the matrix $M(k)$ by \eqref{eq8.2.18}--\eqref{eq8.2.20} is extremely convenient.
It reduces the necessary eigenvalue computation of arbitrary products 
\begin{displaymath}
\prod_{i=1}^rM(k_i)
\end{displaymath}
of $14\times14$ matrices with different values of the branching parameter $k_i$ to the much simpler eigenvalue computation of products   
\begin{displaymath}
\prod_{i=1}^r
\begin{pmatrix}
2(1+\sqrt{2})k_i&1\\
1&0
\end{pmatrix}
\quad\mbox{and}\quad
\prod_{i=1}^r
\begin{pmatrix}
-2(\sqrt{2}-1)k_i&1\\
1&0
\end{pmatrix}
\end{displaymath}
of $2\times2$ matrices, as the remaining eigenvalues $\pm1$, $\pm\ii$ and $(-1\pm\sqrt{3}\ii)/2$ are irrelevant.

We now outline the routine deduction of \eqref{eq8.2.18}--\eqref{eq8.2.20}.

We first make use of the fact that $M(k)$ has $6$ eigenvectors that are independent of the branching parameter~$k$.
Together with the eigenvalues, they are
\begin{align}
\lambda_1=1,
&\quad
\bfv_1=(-1,0,1,0,0,0,0,-1,0,0,1,0,0,0)^T,
\nonumber
\\
\lambda_2=-1,
&\quad
\bfv_2=(-1,0,-1,0,1,0,0,0,0,1,0,0,0,0)^T,
\nonumber
\\
\lambda_3=-1,
&\quad
\bfv_3=(-1,0,-1,0,0,0,0,1,0,0,1,0,0,0)^T,
\nonumber
\\
\lambda_4=-1,
&\quad
\bfv_4=(-1,0,-1,0,1,0,0,0,0,0,0,0,1,0)^T,
\nonumber
\\
\lambda_5=-\ii,
&\quad
\bfv_5=(1,0,\ii,2\ii,-1+\ii,-2,0,-1,0,0,-\ii,-2\ii,1-\ii,2)^T,
\nonumber
\\
\lambda_6=\ii,
&\quad
\bfv_6=(1,0,-\ii,-2\ii,-1-\ii,-2,0,-1,0,0,\ii,2\ii,1+\ii,2)^T.
\nonumber
\end{align}
This observation allows us to apply a \textit{partial diagonalization} trick first discussed in \cite[Lemma~4.1.1]{BDY2}.
Let $Q$ be a $14\times14$ invertible matrix such that the first $6$ columns are $\bfv_1,\ldots,\bfv_6$.
Then for every $1\le i\le6$, the $i$-th column of the conjugate $Q^{-1}M(k)Q$ has the special form that its $i$-th element is~$\lambda_i$, and the remaining elements are all zero.
A concrete choice of $Q$ gives
\begin{displaymath}
Q^{-1}M(k)Q=\begin{pmatrix}
D_6&?\\
0&M_8(k)
\end{pmatrix},
\end{displaymath}
where $D_6$ is a $6\times6$ diagonal matrix with diagonal entries $\lambda_1,\ldots\lambda_6$, and
\begin{displaymath}
M_8(k)=\begin{pmatrix}
-\frac{1}{4}\!-\!\frac{\sqrt{3}\ii}{4}
&\frac{1}{4}\!-\!\frac{\sqrt{3}\ii}{4}
&-\frac{\sqrt{3}\ii}{12}
&\frac{1}{4}\!-\!\frac{\sqrt{3}k\ii}{6}
&\frac{1}{4}\!-\!\frac{\sqrt{3}k\ii}{6}
&0
&0
&0
\vspace{4pt}\\
\frac{1}{4}\!+\!\frac{\sqrt{3}\ii}{4}
&-\frac{1}{4}\!+\!\frac{\sqrt{3}\ii}{4}
&\frac{\sqrt{3}\ii}{12}
&\frac{1}{4}\!+\!\frac{\sqrt{3}k\ii}{6}
&\frac{1}{4}\!+\!\frac{\sqrt{3}k\ii}{6}
&0
&0
&0
\vspace{4pt}\\
-2k\!-\!\frac{3+3\sqrt{3}\ii}{2}
&-2k\!-\!\frac{3-3\sqrt{3}\ii}{2}
&\frac{1}{2}
&2k\!-\!\frac{3}{2}
&k\!-\!\frac{5}{2}
&0
&k
&0
\vspace{4pt}\\
-6k\!-\!\frac{7+3\sqrt{3}\ii}{2}
&-6k\!-\!\frac{7-3\sqrt{3}\ii}{2}
&\frac{1}{2}
&6k\!+\!\frac{3}{2}
&3k\!+\!\frac{3}{2}
&0
&3k
&0
\vspace{4pt}
\\
2k\!-\!\frac{3+3\sqrt{3}\ii}{2}
&2k\!-\!\frac{3-3\sqrt{3}\ii}{2}
&-\frac{1}{2}
&-2k\!-\!\frac{3}{2}
&-k\!-\!\frac{3}{2}
&0
&-k
&0
\vspace{4pt}\\
-2k
&-2k
&0
&3k
&2k
&0
&k
&1
\vspace{2pt}\\
2k\!-\!\frac{7+3\sqrt{3}\ii}{2}
&2k\!-\!\frac{7-3\sqrt{3}\ii}{2}
&\frac{1}{2}
&-\frac{5}{2}
&k\!-\!\frac{5}{2}
&1
&-k\!-\!1
&1
\vspace{4pt}\\
-2k
&-2k
&0
&3k
&2k
&1
&k
&0
\end{pmatrix}.
\end{displaymath}

We next make use of the fact that $M_8(k)$ has $4$ eigenvectors that are independent of the branching parameter~$k$.
Together with the eigenvalues, they are
\begin{align}
\tau_1=-1,
&\quad
\bfw_1=(0,0,0,0,0,-1,0,1)^T,
\nonumber
\\
\tau_2=1,
&\quad
\bfw_2=\left(\frac{\sqrt{3}\ii}{18},-\frac{\sqrt{3}\ii}{18},-1,-1,1,1,1,1\right)^T,
\nonumber
\\
\tau_3=-\frac{1+\sqrt{3}\ii}{2},
&\quad
\bfw_3=\left(\frac{1}{6},-\frac{1}{6},1,-1,1,0,1,0\right)^T,
\nonumber
\\
\tau_4=-\frac{1-\sqrt{3}\ii}{2},
&\quad
\bfw_4=\left(-\frac{1}{6},\frac{1}{6},1,-1,1,0,1,0\right)^T.
\nonumber
\end{align}

We apply the same partial diagonalization trick.
Let $R$ be an $8\times8$ invertible matrix such that the first $4$ columns are $\bfw_1,\ldots,\bfw_4$.
Then for every $1\le i\le4$, the $i$-th column of the conjugate $R^{-1}M_8(k)R$ has the special form that its $i$-th element is $\tau_i$, and the remaining elements are all zero.
A concrete choice of $R$ gives
\begin{displaymath}
R^{-1}M_8(k)R=\begin{pmatrix}
D_4&?\\
0&M_4(k)
\end{pmatrix},
\end{displaymath}
where $D_4$ is a $4\times4$ diagonal matrix with diagonal entries $\tau_1,\ldots,\tau_4$, and
\begin{displaymath}
M_4(k)=\begin{pmatrix}
6k\!-\!2
&2\!-\!4k
&2k
&0
\vspace{2pt}\\
6k\!-\!\frac{3}{2}
&2\!-\!4k
&2k
&0
\vspace{4pt}\\
8k\!-\!1
&1\!-\!4k
&2k\!-\!1
&1
\vspace{2pt}\\
14k\!-\!\frac{3}{2}
&1\!-\!8k
&4k
&1
\end{pmatrix}.
\end{displaymath}

Finally, one more routine conjugation turns $M_4(k)$ into $A(k)$ in \eqref{eq8.2.20}.
This completes the deduction of \eqref{eq8.2.18}--\eqref{eq8.2.20}.

One can derive Theorem~\ref{thm8.2.1} from \eqref{eq8.2.18}--\eqref{eq8.2.20} in the usual way; we leave it to the reader.
\end{proof}

To complete this section, we shall compute the eigenvalues of the $2$-step transition matrix $\bfA$ of the regular octagon surface.

With Figure~8.1.2 in mind, we view the regular octagon surface as a polyrectangle translation surface with $7$ rectangle faces, as shown in Figure~8.2.4.
The horizontal streets are $1,2,3,8$ and $4,5,6$, while the vertical streets are $1,4,3,6$ and $2,5,8$.
Note here that two distinct rectangle faces can fall into the same horizontal and the same vertical street simultaneously, for instance, faces $2,8$ or faces $4,6$, so we shall adapt our notation from that used in earlier street-spreading matrix determination.
We also show the almost vertical units of type~$\uparrow$ in the polyrectangle translation surface.
We have not shown the almost vertical units of type $\nuparrow$ here.

\begin{displaymath}
\begin{array}{c}
\includegraphics[scale=0.8]{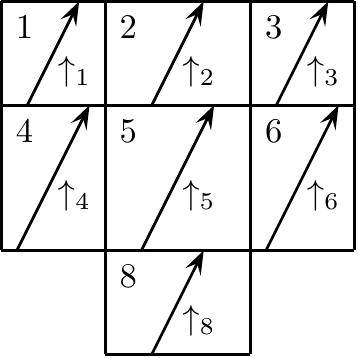}
\vspace{3pt}\\
\mbox{Figure 8.2.4: the regular octagon surfaces and almost vertical units of type $\uparrow$}
\end{array}
\end{displaymath}

Let
\begin{displaymath}
J_1=\{1,2,3,8\}
\quad\mbox{and}\quad
J_2=\{4,5,6\}
\end{displaymath}
denote the horizontal streets, and let 
\begin{displaymath}
I_1=I_3=I_4=I_6=\{1,3,4,6\}
\quad\mbox{and}\quad
I_2=I_5=I_8=\{2,5,8\}
\end{displaymath}
denote the vertical streets.

For simplicity, we consider only the special case with branching parameter $k=1$, so that we are considering a slope of the form
\begin{displaymath}
\alpha=2(1+\sqrt{2})+\frac{1}{2(1+\sqrt{2})+\frac{1}{2(1+\sqrt{2})+\cdots}}.
\end{displaymath}

Corresponding to \eqref{eq7.2.14}, we define the column matrices
\begin{equation}\label{eq8.2.21}
\bfu_1=[\{\uparrow_s:j\in J^*_1,s\in I^*_j\}]
\quad\mbox{and}\quad
\bfu_2=[\{\uparrow_s:j\in J^*_2,s\in I^*_j\}].
\end{equation}
Here $J^*_1$, $J^*_2$ and $I^*_j$ denote that the edges are counted with multiplicity.

The horizontal street $J_1$ has length $2(1+\sqrt{2})$ and width~$1$, so has normalized length $2(1+\sqrt{2})$.
Since the normalized length is precisely the value of the digit $2(1+\sqrt{2})$ of~$\alpha$, this means that an almost horizontal detour crossing of slope $\alpha^{-1}$ travels along this street essentially once, giving rise to a single count and $J^*_1=\{1,2,3,8\}$.
On the other hand, the horizontal street $J_2$ has smaller length $2+\sqrt{2}$ but greater width~$\sqrt{2}$, and so has normalized length $1+\sqrt{2}$.
Thus an almost horizontal detour crossing of slope $\alpha^{-1}$ travels along the horizontal street $J_2$ essentially twice, giving rise to a double count and $J^*_2=\{4,5,6,4,5,6\}$.

A similar argument now gives
\begin{displaymath}
I^*_1=I^*_3=I^*_4=I^*_6=\{1,3,4,6\}
\quad\mbox{and}\quad
I^*_2=I^*_5=I^*_8=\{2,5,8,2,5,8\}.
\end{displaymath}

We also define the column matrices $\bfv_1$ and $\bfv_2$ analogous to \eqref{eq7.2.15}, but their details are not important.
Also, analogous to \eqref{eq7.2.17}, we have
\begin{equation}\label{eq8.2.22}
(\bfA-I)[\{\uparrow_s\}]=\left\{\begin{array}{ll}
\bfu_1+\bfv_1,&\mbox{if $s\in J_1$},\\
\bfu_2+\bfv_2,&\mbox{if $s\in J_2$}.
\end{array}\right.
\end{equation}

We now combine \eqref{eq8.2.21} and \eqref{eq8.2.22}.
For the horizontal street corresponding to~$\bfu_1$, as highlighted in the picture on the left in Figure~8.2.5, we have
\begin{align}\label{eq8.2.23}
(\bfA-I)\bfu_1
&
=(\bfA-I)[\{2\uparrow_1,4\uparrow_2,2\uparrow_3,4\uparrow_8\}]
\nonumber
\\
&\qquad
+(\bfA-I)[\{2\uparrow_4,4\uparrow_5,2\uparrow_6\}]
\nonumber
\\
&
=12(\bfu_1+\bfv_1)+8(\bfu_2+\bfv_2).
\end{align}
For the horizontal street corresponding to~$\bfu_2$, as highlighted in the picture on the right in Figure~8.2.5, we have
\begin{align}\label{eq8.2.24}
(\bfA-I)\bfu_2
&
=(\bfA-I)[\{4\uparrow_1,4\uparrow_2,4\uparrow_3,4\uparrow_8\}]
\nonumber
\\
&\qquad
+(\bfA-I)[\{4\uparrow_4,4\uparrow_5,4\uparrow_6\}]
\nonumber
\\
&
=16(\bfu_1+\bfv_1)+12(\bfu_2+\bfv_2).
\end{align}
\begin{displaymath}
\begin{array}{c}
\includegraphics[scale=0.8]{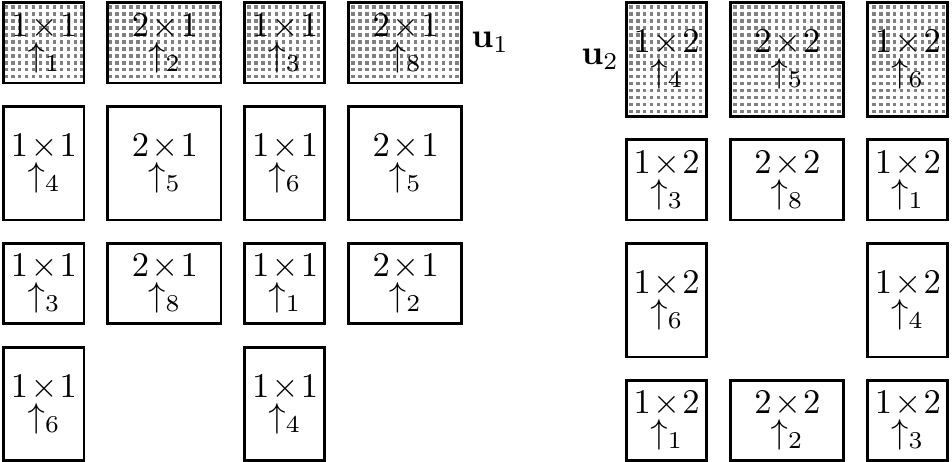}
\vspace{3pt}\\
\mbox{Figure 8.2.5: almost vertical units of type $\uparrow$ in $\bfu_1$ and $\bfu_2$}
\end{array}
\end{displaymath}

It follows from \eqref{eq8.2.23} and \eqref{eq8.2.24} that the street-spreading matrix is given by
\begin{displaymath}
\bfS=\begin{pmatrix}
12&16\\
8&12
\end{pmatrix},
\end{displaymath}
with eigenvalues $\tau_1=12+8\sqrt{2}$ and $\tau_2=12-8\sqrt{2}$.
Using \eqref{eq7.2.38}, the corresponding eigenvalues of $\bfA$ are
\begin{displaymath}
\lambda(12+8\sqrt{2};\pm)=7+4\sqrt{2}\pm2\!\left(20+14\sqrt{2}\right)^{1/2}
\end{displaymath}
and
\begin{displaymath}
\lambda(12-8\sqrt{2};\pm)=7-4\sqrt{2}\pm2\!\left(20-14\sqrt{2}\right)^{1/2}.
\end{displaymath}
The two largest eigenvalues are therefore
\begin{displaymath}
\lambda_1=7+4\sqrt{2}+2\!\left(20+14\sqrt{2}\right)^{1/2}
\quad\mbox{and}\quad
\lambda_2=7-4\sqrt{2}+2\!\left(20-14\sqrt{2}\right)^{1/2}.
\end{displaymath}
Recall \eqref{eq8.2.16} and \eqref{eq8.2.17} that, for the branching parameter $k=1$, the two largest eigenvalues for the $1$-step transition matrix are
\begin{displaymath}
\Lambda=(1+\sqrt{2})+\left((1+\sqrt{2})^2+1\right)^{1/2}
\quad\mbox{and}\quad
\lambda=-(\sqrt{2}-1)-\left((\sqrt{2}-1)^2+1\right)^{1/2}.
\end{displaymath}
Note that $\lambda_1=\Lambda^2$ and $\lambda_2=\lambda^2$.

%
%

\subsection{Regular octagon billiard}\label{sec8.3}

We now switch to billiard flow on the regular octagon.
This also leads to a street-rational polyrectangle translation surface with $1$-direction geodesic flow, but the details are more complicated than in Section~\ref{sec8.1}.

Clearly the method earlier of iterated reflection on a side, like in Figure~8.1.1, does not apply here, since we start with the regular octagon.
In this case, we need a somewhat different, but ultimately equivalent, approach based on the construction of the \textit{reflected net}.
The first step is illustrated by Figure~8.3.1 which shows a \textit{reflected double-octagon net} of the regular octagon billiard region, where the identified boundary edges are marked with the same letter.
Using Euler's formula, it is easy to see that the Euler characteristic $\chi$ of the compact surface in Figure~8.3.1 is $\chi=8-8+2=2$, so the genus $g$ is $g=1-(\chi/2)=0$.
Hence this double-octagon surface is homeomorphic to the sphere.

\begin{displaymath}
\begin{array}{c}
\includegraphics[scale=0.8]{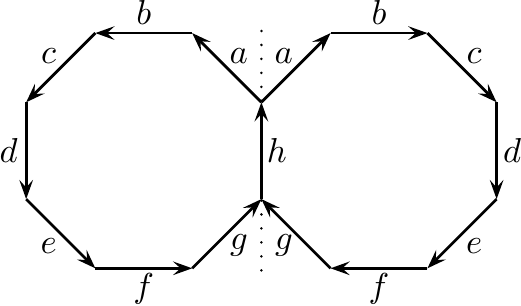}
\vspace{3pt}\\
\mbox{Figure 8.3.1: double-octagon net of the regular octagon billiard surface}
\end{array}
\end{displaymath}

Note the reflection symmetry of the boundary labelling, where the labelling on the right octagon is obtained from the labelling on the left octagon by a reflection across the vertical dotted line that we may call a \textit{mirror}.

Regular octagon billiard is a complicated flow.
While the double-octagon net in Figure~8.3.1 is quite simple, the corresponding surface still exhibits a rather complicated geodesic flow.
We illustrate how we construct this geodesic flow in Figure~8.3.2.
We start with a short billiard orbit in the left octagon, labelled~$\mathbf{1}$, that hits the boundary edge~$a$.
Following the usual billiard rule, $\mathbf{1}$ bounces back as shown by the dashed arrow.
We reflect this dashed arrow across the mirror in the middle, and obtain~$\mathbf{2}$. 
If $\mathbf{1}$ is considered an initial segment of a geodesic on the compact surface, then $\mathbf{2}$ is the continuation of the same geodesic on this compact surface.
In this way, the process can be viewed as a partial unfolding of the $8$-direction billiard flow in the left octagon into a $4$-direction flow in the
double-octagon, noting that a billiard orbit which starts horizontal or vertical or with slope $\pm1$ is not particularly interesting.
Note that the two octagons are reflections of each other.

\begin{displaymath}
\begin{array}{c}
\includegraphics[scale=0.8]{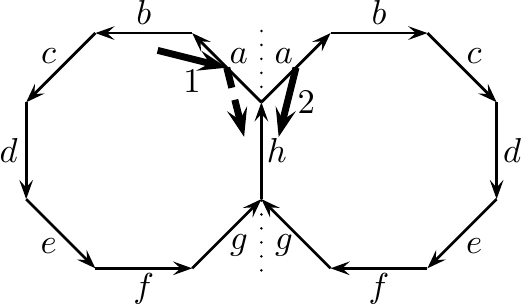}
\vspace{3pt}\\
\mbox{Figure 8.3.2: partial unfolding of the billiard in the left regular octagon}
\end{array}
\end{displaymath}

Next, we join up octagons in such a way that neighboring octagons are reflections of each other.
We end up with a ring of $8$ octagons, as shown in Figure~8.3.3.
This is sometimes known as the translation surface for regular octagon billiard.

\begin{displaymath}
\begin{array}{c}
\includegraphics[scale=0.8]{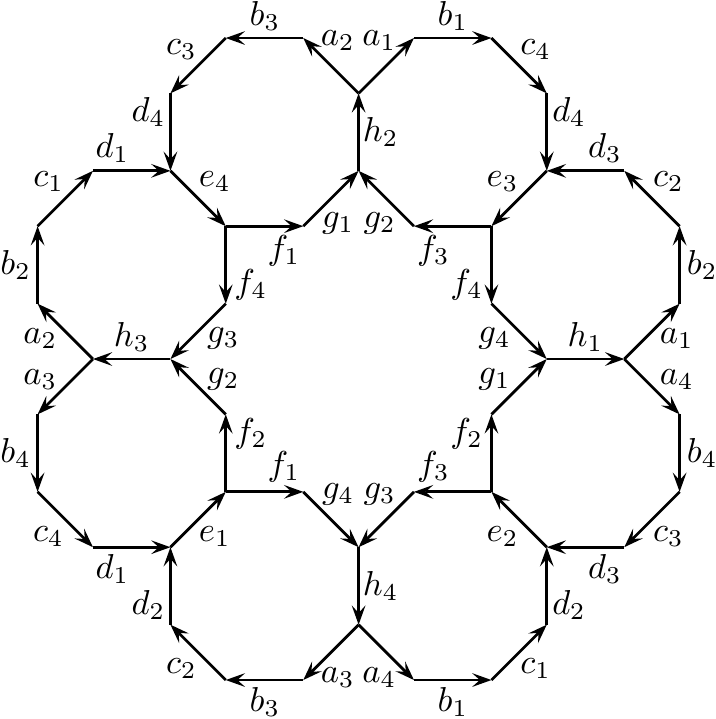}
\vspace{3pt}\\
\mbox{Figure 8.3.3: translation surface for regular octagon billiard}
\end{array}
\end{displaymath}

\begin{remark}
In general, for any even $k\ge4$, we can glue $k$ copies of a regular $k$-gon together in a perfect ring formation analogous to that in Figure~8.3.3 such that the midpoints of the common edges between neighboring $k$-gons all lie on a circle.
For odd $k\ge3$, we can do likewise with $2k$ copies of a regular $k$-gon.

The edge labellings in Figure~8.3.3 look very cumbersome at first sight.
However, we can follow a simple convention.
Start with one copy of the octagon, and label the directed edges of this octagon by $a,b,c,d,e,f,g,h$, initially without subscripts.
The adjacent octagon has reflected labelling, and the next one in the ring has labelling that is a reflection of the labelling of the second one, and so on.
When we complete this process, we have labellings $a,b,c,d,e,f,g,h$, still without subscripts, on each octagon.
The second step of the process is to look at all the edges labelled~$a$, and they occur as $4$ directed parallel pairs.
We now identify such directed parallel pairs by labelling them $a_1,a_2,a_3,a_4$.
In Figure~8.3.3, we have labelled them in increasing order of the angles they make with the positive horizontal direction.
We then repeat this step with the other edges.
Note that the edges $e$ and $h$ do not appear to come in pairs, but they actually do, but the identified edges overlap.
\end{remark}

We clearly need more, as we need a corresponding net that exhibits a $1$-direction geodesic flow and which shows the streets in a more transparent fashion.
The impatient reader may jump ahead to the net in Figures 8.3.6 and~8.3.7 from which we can easily obtain the corresponding street-rational polyrectangle translation surface, adapt the shortline method in the proof of \cite[Theorems 6.1.1 and~6.4.1]{BCY}, and exhibit explicit slopes for which we can establish superdensity.

\begin{remark}
The remarkable result of Veech~\cite{V3} on the \textit{uniform-periodic dichotomy} in flat systems, particularly that every infinite billiard orbit in a regular polygon is either periodic or exhibits uniformity.
His method is ergodic in nature, however, and does not give time-quantitative results.
It is the purpose of our work here to make some quantitative statements.
\end{remark}

The plan is quite straightforward, and the details are not too cumbersome.
We shall proceed in steps and illustrate our ideas with pictures.
Indeed, regular octagon billiard represents the whole difficulty.
Once we fully understand the special case of regular octagon billiard, it is easy to visualize the general case of regular $k$-gon billiard, where $k\ge8$ is any even integer. 

We shall use the parallel decomposition of the regular polygon shown in the picture on the right in Figure~8.1.5.
With the help of the edge labellings, we can easily work out some streets, as shown in Figure~8.3.4.
We consider tilted streets in the direction shown.

\begin{displaymath}
\begin{array}{c}
\includegraphics[scale=0.8]{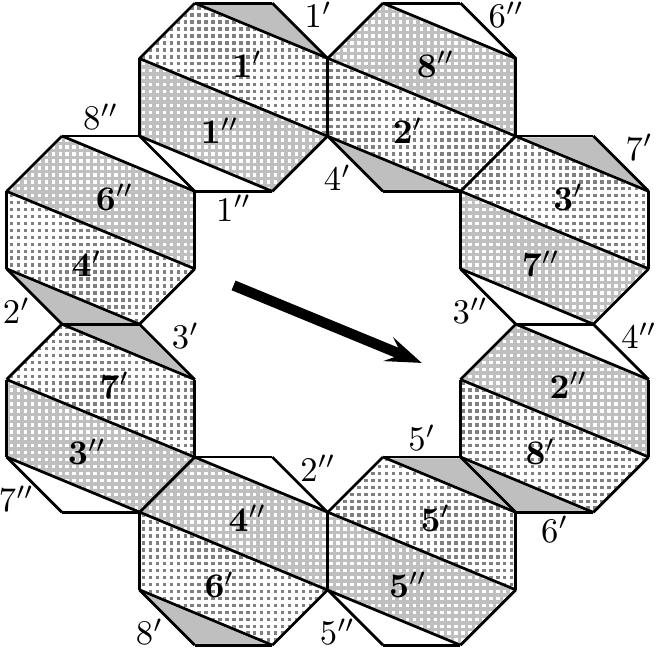}
\vspace{3pt}\\
\mbox{Figure 8.3.4: first set of tilted streets on the translation surface}
\\
\mbox{of regular octagon billiard}
\end{array}
\end{displaymath}

The first big street in this direction consists of $8$ trapezoids labelled
\begin{displaymath}
\mathbf{1'},\mathbf{2'},\mathbf{3'},\mathbf{4'},\mathbf{5'},\mathbf{6'},\mathbf{7'},\mathbf{8'}.
\end{displaymath}
Note the edge labellings in Figure~8.3.3.
The right edge of the trapezoid $\mathbf{3'}$ is~$b_2$, the same as the left edge of the trapezoid $\mathbf{4'}$.
The right edge of the trapezoid $\mathbf{4'}$ is~$g_3$, the same as the left edge of the trapezoid $\mathbf{5'}$.
The right edge of the trapezoid $\mathbf{5'}$ is~$d_2$, the same as the left edge of the trapezoid $\mathbf{6'}$.
The right edge of the trapezoid $\mathbf{6'}$ is~$a_3$, the same as the left edge of the trapezoid $\mathbf{7'}$.
The right edge of the trapezoid $\mathbf{7'}$ is~$f_2$, the same as the left edge of the trapezoid $\mathbf{8'}$.
The right edge of the trapezoid $\mathbf{8'}$ is~$c_3$, the same as the left edge of the trapezoid $\mathbf{1'}$.
This completes the street.

The second big street in this direction consists of $8$ trapezoids labelled
\begin{displaymath}
\mathbf{1''},\mathbf{2''},\mathbf{3''},\mathbf{4''},\mathbf{5''},\mathbf{6''},\mathbf{7''},\mathbf{8''}.
\end{displaymath}

There are also two small streets in this direction, consisting of $8$ triangles
\begin{displaymath}
1',2',3',4',5',6',7',8'
\quad\mbox{and}\quad
1'',2'',3'',4'',5'',6'',7'',8''
\end{displaymath}
each.

To use the shortline method, we need to consider streets in a second direction.
Figure~8.3.5 is an analog of Figure~8.3.4 in this new direction.
Again, we see that there are two big streets, each consisting of $8$ trapezoids, and two small streets, each consisting of $8$ triangles.

\begin{displaymath}
\begin{array}{c}
\includegraphics[scale=0.8]{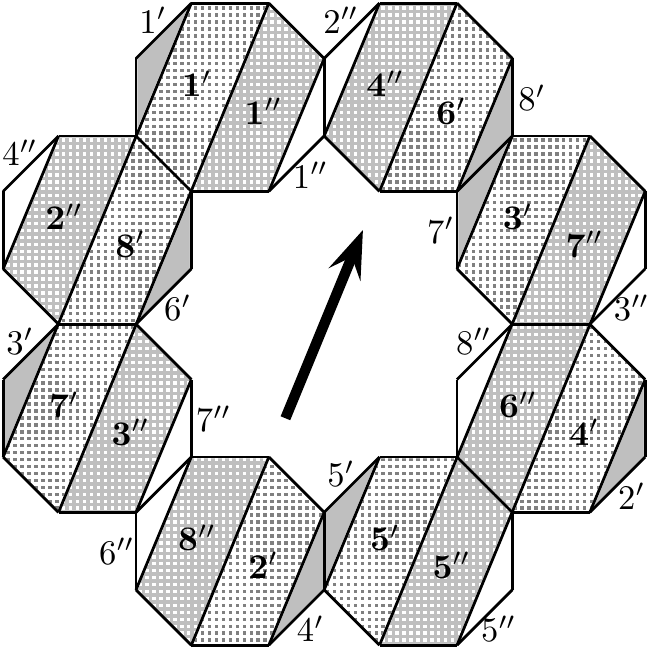}
\vspace{3pt}\\
\mbox{Figure 8.3.5: second set of tilted streets on the translation surface}
\\
\mbox{of regular octagon billiard}
\end{array}
\end{displaymath}

Having determined the streets in two perpendicular directions, we now attempt to visualize the translation surface of regular octagon billiard as a polyrectangle translation surface~$\PPP$.

To do so, we must be able to visualize the intersection of any two perpendicular streets in the translation surface of regular octagon billiard as one of the rectangle faces in~$\PPP$.
Figure~8.3.6 is a suitable modification of Figure~8.3.4.

\begin{displaymath}
\begin{array}{c}
\includegraphics[scale=0.8]{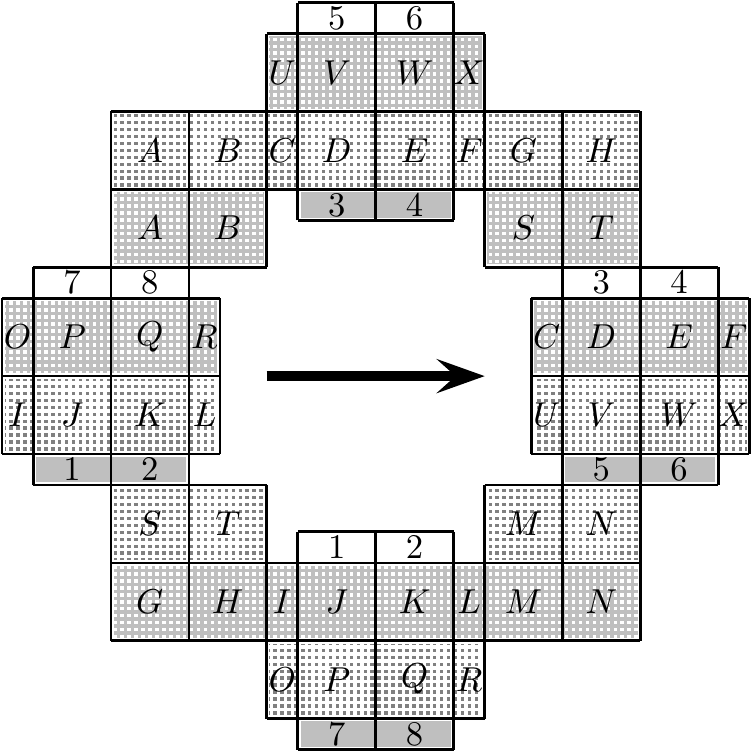}
\vspace{3pt}\\
\mbox{Figure 8.3.6: a modification of the streets in Figure 8.3.4}
\end{array}
\end{displaymath}

Note that in Figure~8.3.4, each big tilted street consists of $8$ trapezoids, each the union of $2$ squares and $2$ $(\pi/8)$-right-triangles, while each small tilted street consists of $8$ triangles, each the union of $2$ $(\pi/8)$-right-triangles.
This suggests that we attempt to visualize these $(\pi/8)$-right-triangles as pairs that combine to form $(\pi/8)$-rectangles.
Here a $(\pi/8)$-right-triangle means a right triangle with an angle equal to~$\pi/8$, whereas a $(\pi/8)$-rectangle means a rectangle that is the union of $2$ $(\pi/8)$-right-triangles.

If we compare Figures 8.3.4 and~8.3.6, we see that $A$ and $B$ are the squares in the trapezoid~$\mathbf{1'}$, while the $(\pi/8)$-rectangle $C$ is made up of the $(\pi/8)$-right-triangle on the right hand end of the trapezoid $\mathbf{1'}$ and the $(\pi/8)$-right-triangle on the left hand end of the trapezoid~$\mathbf{2'}$.
On the other hand, the $(\pi/8)$-rectangle $X$ is made up of the $(\pi/8)$-right-triangle on the right hand end of the trapezoid $\mathbf{8'}$ and the
$(\pi/8)$-right-triangle on the left hand end of the trapezoid~$\mathbf{1'}$.

Likewise, Figure~8.3.7 is a suitable modification of Figure~8.3.5.

\begin{displaymath}
\begin{array}{c}
\includegraphics[scale=0.8]{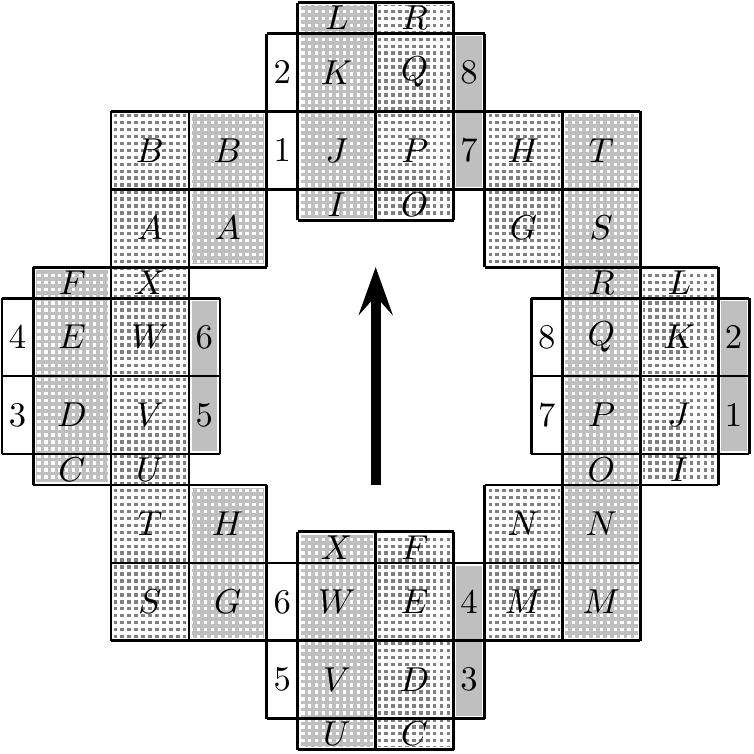}
\vspace{3pt}\\
\mbox{Figure 8.3.7: a modification of the streets in Figure 8.3.5}
\end{array}
\end{displaymath}

Figures 8.3.6 and~8.3.7 together clearly demonstrate that we can now visualize the translation surface of regular octagon billiard as a street-rational polyrectangle translation surface.

Figures 8.3.4--8.3.7 correspond to the decomposition of the regular octagon into $4$ parts as shown in the picture on the right in Figure~8.1.5.
If we repeat our argument in this section for the decomposition of the regular octagon into $3$ parts as shown in the picture on the left in Figure~8.1.5, we obtain a different $8$-octagon net which nevertheless has the same $1$-direction geodesic flow.

Adapting the shortline method proof of \cite[Theorems 6.1.1 and~6.4.1]{BCY} to any such concrete street-rational polyrectangle translation surface that arise, we then obtain the special case $k=8$ of Theorem~\ref{thm8.3.1} below.

It is easy to see that the same method works for any regular $k$-gon, where $k\ge6$ is even.
Indeed, the first step is to have an analog of Figure~8.3.1.

We simply need a regular $k$-gon with oriented boundary edges, and reflect it on a side.
This gives a $2$-copy version of the regular $k$-gon with boundary identification, which is a net of the corresponding billiard region.
To adapt the shortline method of \cite[Theorems 6.1.1 and~6.4.1]{BCY}, we need a flat surface with $1$-direction geodesic flow.
To construct such a surface we repeat the arguments illustrated by Figures 8.3.1--8.3.3.
At the end we have the translation surface of regular $k$-gon billiard with boundary identification.

We have street-rationality, courtesy of the elementary geometric fact that the vertices of a regular polygon lie on a circle, so that we can use the chord-angle relation on this circle.
The last simple step is to rearrange this flat polygonal surface to a street-rational polyrectangle translation surface by translating some corresponding parts, like converting Figures 8.3.4 and~8.3.5 to Figures 8.3.6 and~8.3.7.

To get a polyrectangle, we need two perpendicular decompositions into parallel strips like in Figure~8.1.5 for the regular octagon.
And of course it can be done for every regular polygon of $k$ sides, where $k\ge6$ is even; see Figure~8.3.8 for the special case $k=6$.
In the picture on the left, the horizontal line divides the hexagon into $2$ parts.
In the picture on the right, the $2$ vertical lines divide it into $3$ parts.

\begin{displaymath}
\begin{array}{c}
\includegraphics[scale=0.8]{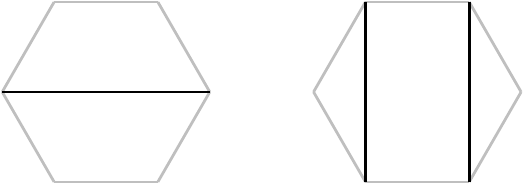}
\vspace{3pt}\\
\mbox{Figure 8.3.8: two perpendicular decompositions of the regular hexagon}
\end{array}
\end{displaymath}

Thus we obtain the following general result. 

\begin{thm}\label{thm8.3.1}
Let $k\ge6$ be an even integer, and consider billiard in a regular polygon of $k$ sides.
There exist infinitely many explicit slopes, depending on~$k$, such that any half-infinite billiard orbit having such an initial slope exhibits superdensity in the polygon.

For infinitely many of these initial slopes that give rise to superdensity, we can explicitly compute the corresponding irregularity exponent.

Combining the irregularity exponent with the method of zigzagging introduced in \cite[Section 3.3]{BDY1}, we can also describe, for billiard orbits having these initial slopes, the time-quantitative behavior of the edge cutting and face crossing numbers, as well as equidistribution relative to all convex sets.
\end{thm}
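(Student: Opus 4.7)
The plan is to reduce Theorem~\ref{thm8.3.1} to the machinery already developed, namely the adaptation of the eigenvalue-free shortline method of \cite[Theorems 6.1.1 and~6.4.1]{BCY} and the eigenvalue-based shortline method of Theorems~\ref{thm7.1.1}, \ref{thm7.2.2} and~\ref{thm8.2.1}, all of which apply to any finite street-rational polyrectangle translation surface. The heart of the argument is therefore to exhibit, for every even integer $k\ge6$, a translation surface~$\PPP_k$ supporting $1$-direction geodesic flow equivalent to billiard in the regular $k$-gon, and to realize $\PPP_k$ as a street-rational polyrectangle translation surface. First I would construct the \emph{reflected double-$k$-gon net} as in Figure~8.3.1 by reflecting one copy of the regular $k$-gon across a side and identifying opposite edges; the standard unfolding argument from \cite{V3} and Figures~8.3.2--8.3.3 then glues $k$ such copies into a ring, yielding the translation surface $\PPP_k$ on which the billiard flow corresponds to a $1$-direction geodesic flow.

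Next, I would exhibit two perpendicular parallel-strip decompositions of the regular $k$-gon analogous to the two shown for $k=8$ in Figure~8.1.5 and for $k=6$ in Figure~8.3.8; since $k$ is even, two such decompositions always exist, with strip boundaries perpendicular respectively to a side and to a diagonal joining opposite vertices. Pulling these decompositions back to $\PPP_k$ and rearranging the constituent $(\pi/k)$-right-triangles into $(\pi/k)$-rectangles, as in the passage from Figures~8.3.4--8.3.5 to Figures~8.3.6--8.3.7, converts $\PPP_k$ into a polyrectangle translation surface. To check street-rationality, I would apply the chord-angle relation on the circumscribed circle of the $k$-gon: since the vertices of a regular $k$-gon all lie on one circle and the boundary sides are all chords of equal length, the angles subtended by any two collections of consecutive sides lie in the same rational ratio, so the cotangents of the diagonals of any two streets in either family are rational multiples of one another. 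This yields the normalized horizontal and vertical street-LCMs $h^*=h^*(k)$ and $v^*=v^*(k)$ required for the generalized continued fraction~\eqref{eq8.1.4}.

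With $\PPP_k$ now identified as a finite street-rational polyrectangle translation surface, part~(iii) of Theorem~\ref{thm8.1.1}, obtained by the straightforward adaptation of \cite[Theorems 6.1.1 and~6.4.1]{BCY} that is already in hand, supplies the first conclusion: for any bounded sequence of positive integers $a_0,a_1,a_2,\ldots$, every slope of the form~\eqref{eq8.1.4} gives rise to $1$-direction geodesics on $\PPP_k$ that exhibit superdensity, and superdensity of the billiard orbit in the $k$-gon follows because the projection from $\PPP_k$ to the $k$-gon is at most a $2k$-fold covering. For the irregularity exponent, I would specialize to the quadratic irrational subclass of slopes~\eqref{eq8.1.4} whose digits are eventually periodic, in parallel with the hypotheses of Theorem~\ref{thm8.2.1}, and run the two-step ancestor process for $\PPP_k$ to obtain a $2$-step transition matrix; Theorem~\ref{thm7.2.2}, applied in the more general form described in the Remark after its proof (which allows two distinct rectangle faces of $\PPP_k$ to share both a horizontal and a vertical street, as already occurs for the regular octagon surface in Figure~8.2.4), reduces the eigenvalue problem to the much smaller street-spreading matrix and yields explicit values for the two largest eigenvalues $\lambda_1$ and $\lambda_2$ of the $2$-step transition matrix, from which the irregularity exponent $\log\vert\lambda_2\vert/\log\vert\lambda_1\vert$ is read off.

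Finally, the last conclusion follows by feeding this irregularity exponent into the zigzagging argument of \cite[Section~3.3]{BDY1}, precisely as in Theorem~\ref{thm7.1.1}~(iv), to upgrade the bound on edge cutting and face crossing discrepancies into a time-quantitative equidistribution statement relative to all convex test sets in the $k$-gon. The main obstacle I expect is purely bookkeeping rather than conceptual: for general even $k$ the polyrectangle $\PPP_k$ has many square and non-square faces arranged in an intricate ring pattern, so checking that the two perpendicular decompositions in Figure~8.1.5 and Figure~8.3.8 genuinely yield a street-rational polyrectangle structure in which Theorem~\ref{thm7.2.2} applies, and in particular that the partial-diagonalization trick used in the proof of Theorem~\ref{thm8.2.1} (to split off $10$ irrelevant eigenvalues when $k=8$) extends to arbitrary even $k$ with the correct number of irrelevant roots of unity, requires a careful case analysis. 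The chord-angle relation guarantees that the necessary rational ratios always exist, so no fundamentally new ingredient is needed beyond what has already been deployed for the octagon case.
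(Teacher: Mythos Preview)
Your proposal is correct and follows essentially the same approach as the paper: unfold the regular $k$-gon billiard to the ring of $k$ reflected copies, use two perpendicular parallel-strip decompositions of the $k$-gon together with the chord-angle relation to realize the resulting translation surface as a street-rational polyrectangle surface, and then invoke the adapted shortline machinery of \cite[Theorems 6.1.1 and~6.4.1]{BCY} for superdensity and Theorem~\ref{thm7.2.2} for the irregularity exponent. One small inaccuracy: for even $k$ the ring has exactly $k$ copies (not $2k$; see the Remark after Figure~8.3.3), though your ``at most $2k$-fold'' phrasing is harmless.
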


%
%

\subsection{More superdensity results}\label{sec8.4}

Next we switch to the regular polygons of $k$ sides, or $k$-gons for short, where $k\ge5$ is odd.

The first example of this class comes from right triangle billiard with angle~$\pi/5$.
The standard trick of unfolding implies that the billiard can be described in terms of a $1$-direction geodesic flow on the \textit{regular double-pentagon surface}, as shown in Figure~8.4.1.
Since we identify parallel edges, this surface has $1$-direction geodesic flow.

\begin{displaymath}
\begin{array}{c}
\includegraphics[scale=0.8]{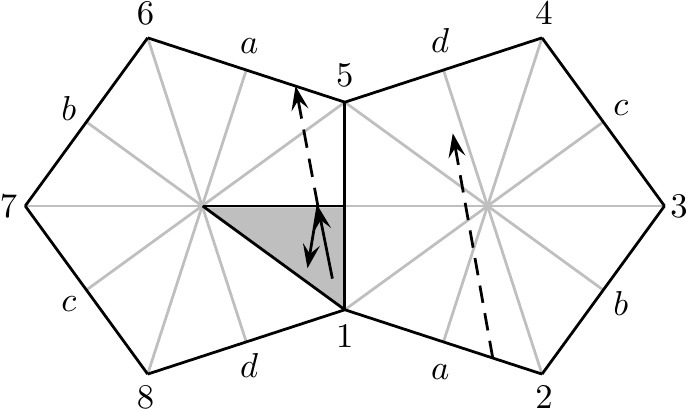}
\vspace{3pt}\\
\mbox{Figure 8.4.1: right triangle billiard with angle $\pi/5$, and geodesic flow}
\\
\mbox{on the regular double-pentagon surface via unfolding}
\end{array}
\end{displaymath}

Here the horizontal and vertical are no longer natural directions.
However, viewed in the appropriate way, $1$-direction geodesic flow on the regular double-pentagon surface can be shown to be equivalent to $1$-direction geodesic flow on a street-rational \textit{polyparallelogram} translation surface.
By moving the two triangles on the bottom left to the top right, we end up with two rhombi and two other parallelograms, as shown in the picture on the left in Figure~8.4.2.

\begin{displaymath}
\begin{array}{c}
\includegraphics[scale=0.8]{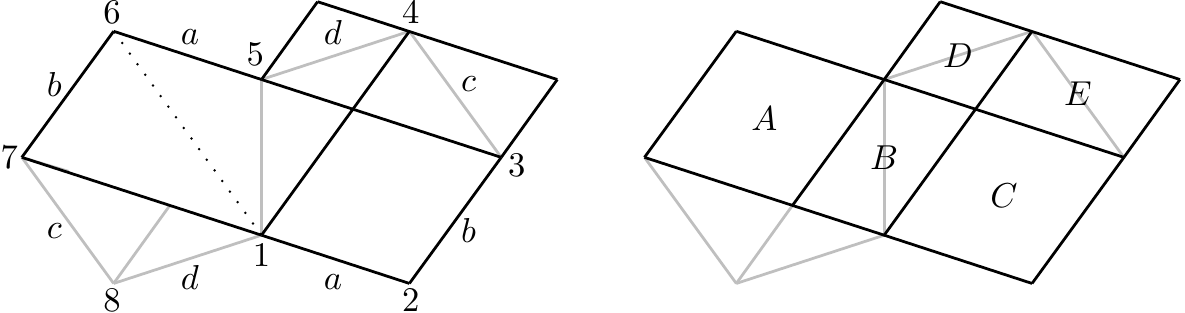}
\vspace{3pt}\\
\mbox{Figure 8.4.2: the double-pentagon surface as}
\\
\mbox{a polyparallelogram translation surface}
\end{array}
\end{displaymath}

Here street-rationality comes from the geometric fact that the vertices of a regular polygon all lie on the same circle, and so we can use the chord-angle relation on this circle.
Using this, we see that the angles $716$ and $534$ are the same.
Hence these two other parallelograms are similar.
Note from the picture on the right that we have two northwest-to-southeast streets, namely $A,B,C$ and~$D,E$.
We also have two southwest-to-northeast streets, namely $A,C,E$ and~$B,D$.
We shall return to this surface in the next section.

Similar constructions show that $1$-direction geodesic flow on every regular double-$k$-gon surface with odd $k\ge5$ is equivalent to $1$-direction geodesic flow on a street-rational polyparallelogram translation surface.

Regular pentagon billiard, on the other hand, is a complicated flow.
While the double-pentagon net in Figure~8.4.3 is quite simple, the corresponding surface still exhibits a rather complicated geodesic flow.
Note that the two pentagons are reflections of each other.

\begin{displaymath}
\begin{array}{c}
\includegraphics[scale=0.8]{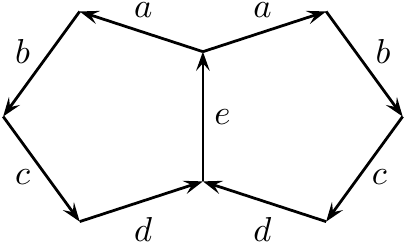}
\vspace{3pt}\\
\mbox{Figure 8.4.3: double-pentagon net of the regular pentagon billiard surface}
\end{array}
\end{displaymath}

As in the case for regular octagon billiard in Section~\ref{sec8.3}, we join up pentagons in such a way that neighboring pentagons are reflections of each other.
We end up with a ring of $10$ pentagons, as shown in Figure~8.4.4.
This is sometimes known as the translation surface of regular pentagon billiard.
Note that the edge labellings in Figure~8.4.4 are obtained by following the convention discussed in the Remark after Figure~8.3.3.

\begin{displaymath}
\begin{array}{c}
\includegraphics[scale=0.8]{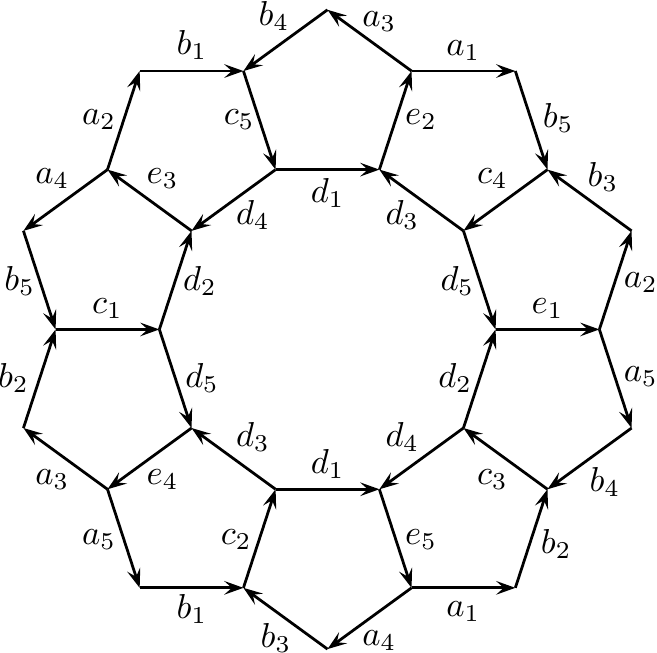}
\vspace{3pt}\\
\mbox{Figure 8.4.4: translation surface for regular pentagon billiard}
\end{array}
\end{displaymath}

We clearly need more, as we need a corresponding net that exhibits a $1$-direction geodesic flow and which shows the streets in a more transparent fashion.
The impatient reader may jump ahead to the nets in Figure~8.4.6 from which we can easily obtain the corresponding street-rational polyparallelogram translation surface, adapt the shortline method in the proof of \cite[Theorems 6.1.1 and~6.4.1]{BCY}, and exhibit explicit slopes for which we can establish superdensity.

The plan is quite straightforward, and again the details are not too cumbersome.
We shall proceed in steps and illustrate our ideas with pictures.
Indeed, regular pentagon billiard represents the whole difficulty.
Once we fully understand the special case of regular pentagon billiard, it is easy to visualize the general case of regular $k$-gon billiard, where $k\ge5$ is any odd integer. 

We shall use the decomposition of a regular pentagon into a trapezoid and a triangle.
With the help of edge labellings, we can easily work out some streets, as shown in Figure~8.4.5.
We first consider tilted streets in the direction shown in the picture on the left.

\begin{displaymath}
\begin{array}{c}
\includegraphics[scale=0.8]{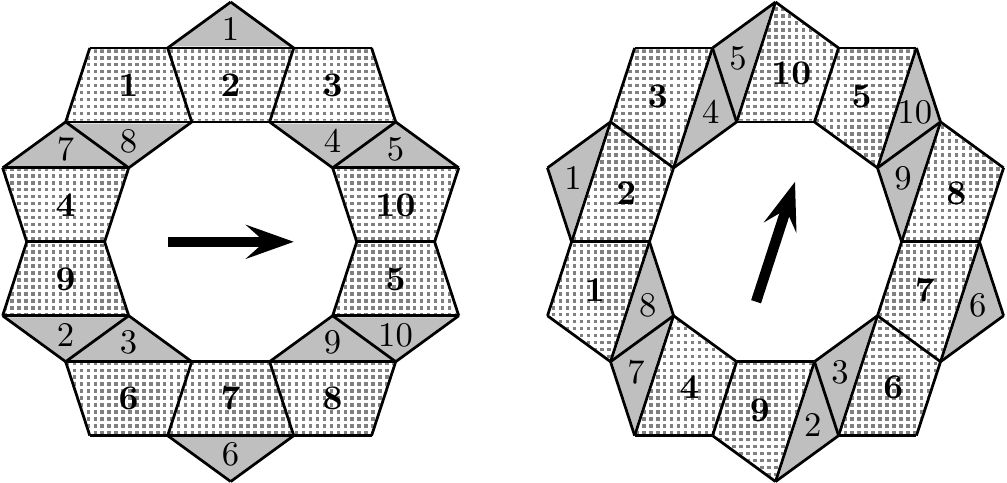}
\vspace{3pt}\\
\mbox{Figure 8.4.5: tilted streets on the translation surface}
\\
\mbox{of regular pentagon billiard}
\end{array}
\end{displaymath}

The big street in this direction consists of $10$ trapezoids labelled
\begin{displaymath}
\mathbf{1},\mathbf{2},\mathbf{3},\mathbf{4},\mathbf{5},\mathbf{6},\mathbf{7},\mathbf{8},\mathbf{9},\mathbf{10}.
\end{displaymath}

There is also a small street in this direction, consisting of $10$ triangles
\begin{displaymath}
1,2,3,4,5,6,7,8,9,10.
\end{displaymath}

To use the shortline method, we need to consider streets in a second direction, shown in the picture on the right in Figure~8.4.5.
Again, we see that there are a big street, consisting of $10$ trapezoids, and a small street, consisting of $10$ triangles.

Having determined the streets in two different directions, we now attempt to visualize the translation surface of regular pentagon billiard as a polyparallelogram translation surface~$\PPP$.

To do so, we must be able to visualize the intersection of any two streets in different directions in the translation surface of regular pentagon billiard as one of the paralleogram faces in~$\PPP$.
Figure~8.4.6 is a suitable modification of Figure~8.4.5.

\begin{displaymath}
\begin{array}{c}
\includegraphics[scale=0.8]{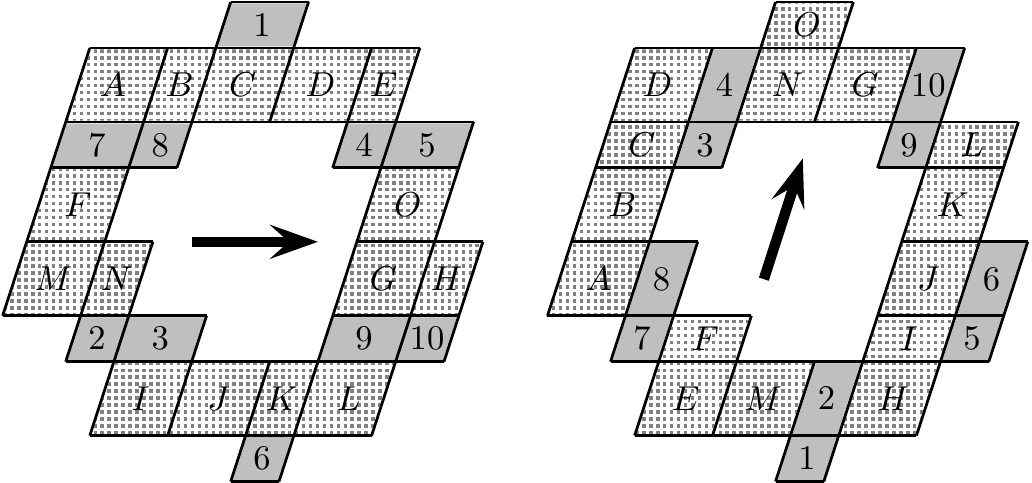}
\vspace{3pt}\\
\mbox{Figure 8.4.6: a modification of the streets in Figure 8.4.5}
\\
\mbox{of regular pentagon billiard}
\end{array}
\end{displaymath}

Note that in Figure~8.4.5, each big tilted street consists of $10$ trapezoids, each the union of a rhombus and a triangle.
If we compare the pictures on the left in Figures 8.4.5 and~8.4.6, we see that $A$ and $C$ are the rhombi in the trapezoids $\mathbf{1}$
and~$\mathbf{2}$, while the parallelogram $B$ is made up of the triangle on the right hand end of the trapezoid $\mathbf{1}$ and the triangle on the left hand end of the trapezoid~$\mathbf{2}$.
On the other hand, the parallelogram $E$ is made up of the triangle on the right hand end of the trapezoid $\mathbf{3}$ and the triangle on the left hand end of the trapezoid~$\mathbf{4}$.

Each small tilted street consists of $10$ triangles, and each can be split into unequal parts by a segment parallel to the other direction and intersecting one of its vertices.
If we compare the pictures on the left in Figures 8.4.5 and~8.4.6, we see that the parallelogram $1$ is the union of the bigger half of the triangle $1$ and the bigger half of the triangle~$2$, while the rhombus $2$ is the union of the smaller half of the triangle $2$ and the triangle~$3$.

Thus the resulting Figure~8.4.6 shows that we have a street-rational polyparallelogram translation surface.

We have already studied the superdensity of certain geodesic flow on the regular tetrahedron surface and on the cube surface.
The next member of the famous list of the five platonic solids is the regular dodecahedron which has regular pentagon faces.
Thus it is not surprising that a similar decomposition works for geodesic flow on the dodecahedron surface.
The standard net of the dodecahedron surface consists of $12$ pentagons.
To construct the corresponding polygonal surface with $1$-direction geodesic flow we have to glue together $10$ copies of the $12$-pentagon standard net.
This means $120$ copies of the pentagon, far too complicated to be explicitly included here.
For illustration, Figure~8.4.7 shows one large street on the dodecahedron surface.
We leave the complicated details of the explicit construction of the whole $120$-copy version of the pentagon to the interested reader.

\begin{displaymath}
\begin{array}{c}
\includegraphics[scale=0.8]{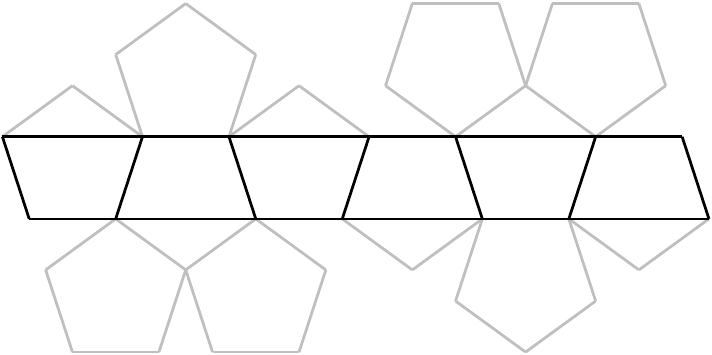}
\vspace{3pt}\\
\mbox{Figure 8.4.7: one large street on the dodecahedron surface}
\end{array}
\end{displaymath}

Note that the parity condition that $k\ge5$ is odd requires us to have decomposition into street-rational parallelograms rather than into street-rational polyrectangles for the case when $k\ge6$ is even.
This change is irrelevant, and does not prevent us from adapting the shortline method proof of \cite[Theorems 6.1.1 and~6.4.1]{BCY}.
Thus we obtain the following analog of Theorems \ref{thm8.1.1} and~\ref{thm8.3.1} for odd $k\ge5$.

\begin{thm}\label{thm8.4.1}
Let $k\ge5$ be an odd integer.

\emph{(i)}
Consider the right-triangle with angle $\pi/k$.
There exist infinitely many slopes, depending on~$k$, such that any half-infinite billiard orbit with such an initial slope exhibits superdensity in the
right-triangle with angle~$\pi/k$.

\emph{(ii)}
Consider geodesic flow on the regular double-polygon of $k$ sides.
There exist infinitely many explicit slopes, depending on~$k$, such that any half-infinite geodesic having such a slope exhibits superdensity in the double-polygon.

\emph{(iii)}
Consider billiard in a regular polygon of $k$ sides.
There exist infinitely many explicit slopes, depending on~$k$, such that any half-infinite billiard orbit having such an initial slope exhibits superdensity in the polygon.

\emph{(iv)}
Consider geodesic flow on the dodecahedron surface.
There exist infinitely many explicit slopes such that any half-infinite geodesic having such a slope exhibits superdensity on the surface.

\emph{(v)}
For infinitely many of these slopes in parts \emph{(i)}, \emph{(ii)}, \emph{(iii)} and \emph{(iv)} that give rise to superdensity, we can explicitly compute the corresponding irregularity exponent.
Combining the irregularity exponent with the method of zigzagging introduced in \cite[Section~3.3]{BDY1}, we can also describe, for trajectories having these initial slopes, the time-quantitative behavior of the edge cutting and face crossing numbers, as well as equidistribution relative to all convex sets.
\end{thm}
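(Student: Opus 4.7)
The proof will proceed by adapting the constructions already carried out in Sections \ref{sec8.1} and \ref{sec8.3} for the even case to the odd case, and then invoking the eigenvalue-free and eigenvalue-based versions of the shortline method as in Theorems \ref{thm8.1.1} and \ref{thm8.3.1}. For part~(i), the standard unfolding of the right triangle billiard with angle $\pi/k$, obtained by successive reflections on the sides emanating from the vertex of angle $\pi/k$, produces $2k$ copies of the triangle glued together into a regular double-$k$-gon surface, exactly as in Figure~8.4.1 for the case $k=5$. Thus part~(i) reduces to part~(ii). For part~(ii), the key geometric step is to exhibit the regular double-$k$-gon surface as a street-rational polyparallelogram translation surface, generalizing Figure~8.4.2. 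Since $k$ is odd, rectangles are replaced by parallelograms, but the essential point is that the vertices of the regular $k$-gon lie on a common circle, so by the chord-angle relation the relevant parallelograms in any two perpendicular parallel decompositions are pairwise similar, and the ratios of the normalized lengths (i.e.\ cotangents of street diagonals) are rational. This gives the required street-rationality, and the shortline method of \cite[Theorems~6.1.1 and~6.4.1]{BCY} then applies verbatim, yielding superdensity for slopes of the appropriate continued-fraction form analogous to \eqref{eq8.1.5}.

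For part~(iii), the construction parallels Section~\ref{sec8.3}. First I would build the reflected double-$k$-gon net, analogous to Figure~8.3.1, by reflecting the regular $k$-gon across one of its sides and identifying parallel boundary edges. Since $k$ is odd, neighbouring copies in a ring are reflections of each other, and forming a maximal ring requires $2k$ copies rather than $k$ copies (this is why the pentagon ring in Figure~8.4.4 contains $10$ pentagons rather than $5$). After gluing we obtain the translation surface for regular $k$-gon billiard, on which the billiard flow becomes a $1$-direction geodesic flow. The next step is to rearrange the $2k$-copy surface into a street-rational polyparallelogram translation surface by identifying two perpendicular decompositions into parallel strips, analogous to the passage from Figures~8.4.5 to~8.4.6, and then regrouping triangular end-pieces of adjacent trapezoids into full parallelograms. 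Once this is done, street-rationality again follows from the chord-angle relation on the circumscribed circle, and \cite[Theorems~6.1.1 and~6.4.1]{BCY} apply.

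For part~(iv), the argument is in principle the same but considerably more elaborate combinatorially. The standard net of the regular dodecahedron consists of $12$ pentagons, and to unfold geodesic flow into a single-direction flow one needs a $10$-fold covering of this net, giving a polygonal surface built from $120$ pentagons. After identification of parallel edges, one parallel decomposition produces long streets consisting of trapezoids and triangles (as illustrated in Figure~8.4.7), and a transverse decomposition produces another family of streets. Grouping trapezoid-plus-triangle pairs across street boundaries as in Figure~8.4.6 yields a street-rational polyparallelogram translation surface, to which \cite[Theorems~6.1.1 and~6.4.1]{BCY} apply. The hard part here is purely bookkeeping: writing down the explicit $120$-parallelogram polyparallelogram surface and checking that the normalized street lengths are rational multiples of a common quantity $h^*$ (and likewise~$v^*$). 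This is the main obstacle in the whole theorem, although it is conceptual overhead rather than a new idea, since street-rationality of all the relevant ratios again reduces to the chord-angle relation applied to the circumscribed circles of the pentagonal faces of the dodecahedron.

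Finally, part~(v) follows from the eigenvalue-based version of the shortline method developed in Section~\ref{sec7.2} and extended to street-rational polyrectangle (or polyparallelogram) surfaces as indicated in Theorem~\ref{thm8.1.1}(iv). For slopes of the form analogous to \eqref{eq8.2.1} with eventually periodic continued fraction digits (so that the slope is a quadratic irrational compatible with the normalized street-LCMs $h^*,v^*$), we can read off the relevant eigenvalues of the $2$-step transition matrix from the street-spreading matrix of Theorem~\ref{thm7.2.2}, or equivalently from a product of $2\times 2$ matrices as in Theorem~\ref{thm8.2.1}. The irregularity exponent is then $\log|\lambda|/\log|\Lambda|$, where $\lambda,\Lambda$ are the second largest and largest such eigenvalues. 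Combining this with the zigzagging method from \cite[Section~3.3]{BDY1} yields the stated time-quantitative statements on edge cutting numbers, face crossing numbers, and convex-set equidistribution, exactly as in the even case.
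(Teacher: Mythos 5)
Your proposal follows essentially the same route as the paper: unfolding of the right-triangle billiard to the double-$k$-gon for part~(i), rearrangement of the double-$k$-gon into a street-rational polyparallelogram translation surface via the chord-angle relation for part~(ii), construction of the $2k$-copy reflected ring and its modification into a polyparallelogram surface (analogous to Figures~8.4.4--8.4.6) for part~(iii), the $120$-pentagon ($10$-fold covering of the $12$-pentagon net) construction for part~(iv), and the eigenvalue-based shortline method with street-spreading matrix reduction for part~(v). The argument is correct and matches the paper's own treatment in Section~\ref{sec8.4}, including the remark that parallelograms replacing rectangles is an immaterial change for the shortline method.
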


At this point, it is perhaps appropriate to mention the study of the remarkable uniform-periodic dichotomy in flat systems.
This study essentially originates from the work of Veech~\cite{V3}, with the result that every infinite billiard orbit in a regular polygon is either periodic or exhibits uniformity.

As we have shown earlier, surfaces arising from regular polygons of $k$ sides can be visualized as street-rational polyrectangle translation surfaces for even $k\ge6$ and as street-rational polyparallelogram translation surfaces for odd $k\ge5$.

Another flat system that exhibits the uniform-periodic dichotomy is billiard on a street-rational polyrectangle translation surface called the \textit{golden L-surface}.
As can be seen in the picture on the left in Figure~8.4.8, this surface exhibits $45$-degree reflection symmetry, and it becomes a flat surface if we carry out the boundary identification indicated in the picture on the right in Figure~8.4.8.

\begin{displaymath}
\begin{array}{c}
\includegraphics[scale=0.8]{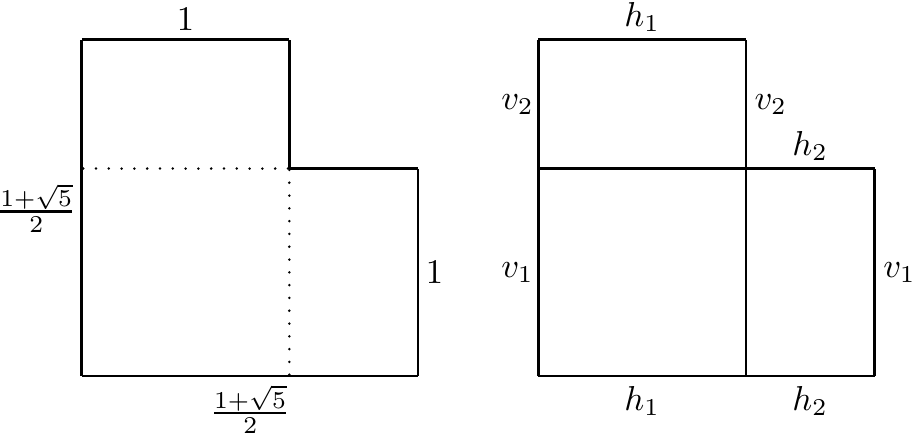}
\vspace{3pt}\\
\mbox{Figure 8.4.8: the golden L-surface}
\end{array}
\end{displaymath}

The street-rationality of the golden L-surface comes from a simple arithmetic property of the golden ratio, that
\begin{displaymath}
\frac{1+\sqrt{5}}{2}=\frac{1}{\frac{1+\sqrt{5}}{2}-1}.
\end{displaymath}

Of course the two special cases of square billiard and equilateral triangle billiard are well known classical results dating back to the 1920s. 
These classical results represent the \textit{easy} case of integrable systems.

Billiard in a regular polygon of $k$ sides, with $k\ge5$, on the other hand, is a \textit{non-integrable system}, where the vertices represent split-singularities of the orbits.

A flat system exhibiting the uniform-periodic dichotomy is called \textit{optimal}.

For a long time regular polygon billiard remains the only known infinite family of primitive optimal systems.
Here \textit{primitive} means that an elementary building block like an \textit{atom} cannot be obtained from a simpler system by some covering space construction.
We illustrate this concept on a familiar example.

Consider geodesic flow on the class of polysquare translation surfaces.
By the Gutkin--Veech theorem, each member of this class is an optimal system, and together they exhibit every high genus.
But there is only one primitive member, namely, geodesic flow on the unit torus $[0,1)^2$.
In other words, the fractional part of geodesic flow on a polysquare translation surface, or modulo one, is the torus line flow on the flat torus.
The inverse map of modulo one gives back the polysquare translation surface as a (branching) covering surface of the flat torus.

Similarly, we can glue together an arbitrary number of, for instance, congruent regular octagons in a natural horizontal or vertical way.
The class of such \textit{polyoctagon surfaces} exhibits arbitrarily high genera.
But there is only one primitive member, namely, the regular octagon surface itself. 

Let us return to regular polygon billiard, the first infinite family of primitive optimal systems, discovered in the 1980s.
A completely different infinite family of primitive optimal systems, discovered in the early 2000s independently by Calta~\cite{C} and McMullen~\cite{Mc}, using different methods, consists of L-shaped tables for which the billiard is optimal.
We shall refer to this as the Calta--McMullen family.
Translation surfaces arising from members of this family have genus~$2$, and the simplest member of this family is the golden L-shape.
The L-shape has the very special property that it is the same to consider billiard flow in the L-shape region or geodesic flow on the L-surface, as they are equivalent systems.

The general member of the Calta--McMullen family is an L-shaped billiard table, as shown in the picture on the left in Figure~8.4.9, or an L-surface, with boundary identification given as in the picture on the right in Figure~8.4.9.

\begin{displaymath}
\begin{array}{c}
\includegraphics[scale=0.8]{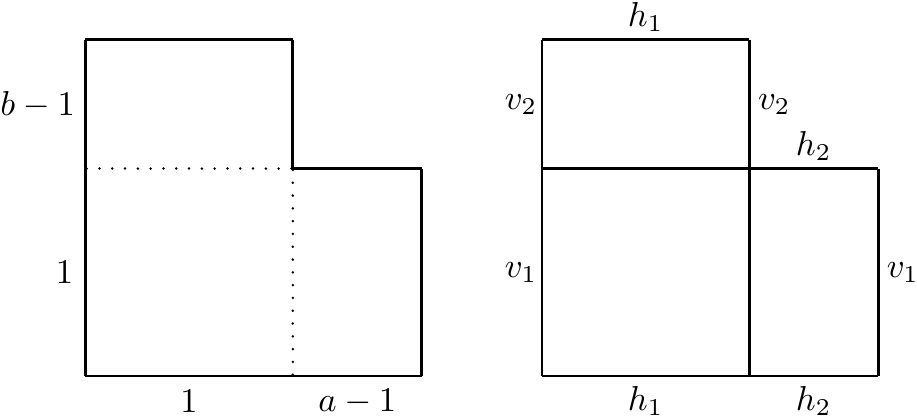}
\vspace{3pt}\\
\mbox{Figure 8.4.9: L-shaped billiard table}
\end{array}
\end{displaymath}

The bottom horizontal side has length~$a$, the left-most vertical side has length~$b$, the right-most vertical side has length~$1$, and the top-most horizontal side has length~$1$.
The numbers
\begin{displaymath}
a=r_1\sqrt{d}+r_2
\quad\mbox{and}\quad
b=r_1\sqrt{d}+1-r_2
\end{displaymath}
are such that $r_1,r_2$ are rational numbers and $d\ge2$ is a square-free integer.
Here the choice of length $1$ for two of the edges excludes equivalent systems, \textit{i.e.}, it represents normalization.

It is easy to check that the L-surface is a street-rational polyrectangle translation surface.
For the two horizontal streets, the width-height ratios of the two rectangle are
\begin{displaymath}
\frac{a}{1}
\quad\mbox{and}\quad
\frac{1}{b-1},
\end{displaymath}
and their ratio is $a(b-1)=(r_1\sqrt{d}+r_2)(r_1\sqrt{d}-r_2)=r_1^2d-r_2^2$, which is rational.
For the two vertical streets, the width-height ratios of the two rectangle are
\begin{displaymath}
\frac{a-1}{1}
\quad\mbox{and}\quad
\frac{1}{b},
\end{displaymath}
and their ratio is $(a-1)b=(r_1\sqrt{d}+r_2-1)(r_1\sqrt{d}+1-r_2)=r_1^2d-(r_2-1)^2$, which is also rational.

The shortline method gives the following result.

\begin{thm}\label{thm8.4.2}
Consider any surface or region in the Calta--McMullen family.

\emph{(i)}
There exist infinitely many explicit slopes, depending on the surface, such that any half-infinite geodesic on the surface having such a slope exhibits superdensity.

\emph{(ii)}
There exist infinitely many explicit slopes, depending on the region, such that any billiard orbit in the region having such an initial slope exhibits superdensity.

\emph{(iii)} For infinitely many of these slopes in parts \emph{(i)} and \emph{(ii)} that give rise to superdensity, we can explicitly compute the corresponding irregularity exponents.
Combining the irregularity exponent with the method of zigzagging introduced in \cite[Section~3.3]{BDY1}, we can also describe, for trajectories having these initial slopes, the time-quantitative behavior of the edge cutting and face crossing numbers, as well as equidistribution relative to all convex sets.
\end{thm}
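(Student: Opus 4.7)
The plan is to realize every surface in the Calta--McMullen family as a street-rational polyrectangle translation surface and then invoke the general machinery already developed in Theorems~\ref{thm8.1.1}(iii)--(iv) and Theorem~\ref{thm7.2.2}. The street-rationality calculation is essentially done in the paragraphs immediately preceding the theorem: for the two horizontal streets the ratio of normalized lengths equals $a(b-1)=r_1^2 d-r_2^2\in\Qq$, and for the two vertical streets the ratio equals $(a-1)b=r_1^2 d-(r_2-1)^2\in\Qq$. Hence there exist a normalized horizontal street-LCM $h^*$ and a normalized vertical street-LCM $v^*$, both well-defined positive reals, and the whole framework of continued-fraction slopes of the form \eqref{eq8.1.4} applies. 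For part~(ii), I would unfold the L-shaped billiard in the standard way (reflect horizontally and vertically to form the $4$-copy version), which converts the $4$-direction billiard on the region into a $1$-direction geodesic flow on a translation surface that is again street-rational, with normalized street-LCMs that are rational multiples of those for the L-surface; hence part~(ii) reduces to part~(i) applied to this $4$-copy surface.

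For part~(i), I would take any infinite sequence of positive integers $a_0,a_1,a_2,\ldots$ that is bounded, set
\begin{displaymath}
\alpha=v^*a_0+\cfrac{1}{h^*a_1+\cfrac{1}{v^*a_2+\cfrac{1}{h^*a_3+\cdots}}},
\end{displaymath}
and then run the eigenvalue-free shortline method of \cite[Sections~6.2 and~6.4]{BCY} in the adapted form already used in the proof of Theorem~\ref{thm8.1.1}. The boundedness of the $a_i$ plays exactly the role of the badly-approximable hypothesis in \cite[Theorem~6.4.1]{BCY}, guaranteeing superdensity for any half-infinite geodesic with slope $\alpha$ (or $\alpha^{-1}$) starting from any point. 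The construction yields uncountably many slopes, and restricting to sequences with any prescribed bound produces the explicit infinite family required.

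For part~(iii), I would restrict further to slopes of the above form whose digit sequence $a_0,a_1,a_2,\ldots$ is eventually periodic, so that $\alpha$ is a quadratic irrational in the extended sense of \eqref{eq8.1.4}. The eigenvalue-based version of the shortline method then applies: I would set up the $2$-step transition matrix $\bfA$ of the polyrectangle translation surface, use the general form of Theorem~\ref{thm7.2.2} (in the variant that allows several distinct rectangle faces on the same pair of horizontal and vertical streets, exactly as indicated in the Remark following the proof of Theorem~\ref{thm7.2.2}), and compute a small street-spreading matrix $\bfS$ of size at most $2\times2$ per direction. The two largest eigenvalues of $\bfA\vert_\VVV$ obtained from \eqref{eq7.2.38} give the irregularity exponent $\kappa_0=\log|\lambda_2|/\log|\lambda_1|$ explicitly. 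Combined with the zigzagging procedure of \cite[Section~3.3]{BDY1}, this yields the time-quantitative control on edge-cutting numbers, face-crossing numbers, and convex-set discrepancy asserted in the statement.

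The main obstacle will be verifying that the general-situation version of the street-spreading matrix formalism (alluded to in the Remark after the proof of Theorem~\ref{thm7.2.2}) applies cleanly to the Calta--McMullen L-surface, where a single horizontal street and a single vertical street may share more than one rectangle face. The identities \eqref{eq7.2.12}--\eqref{eq7.2.18} have to be rewritten with each face labelled by an index $\delta$ rather than by a pair $(i,j)$, and the multiplicities $|J_{i_0}^*|=m$ and $|I_{j_1}^*|=n$ have to be replaced by multiplicities that count traversals of the relevant normalized street once, with the rational ratios $r_1^2 d-r_2^2$ and $r_1^2 d-(r_2-1)^2$ appearing as the combinatorial factors that distribute these multiplicities among the constituent rectangles. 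Once that bookkeeping is checked, the proof of Lemmas~\ref{lem7.2.1}--\ref{lem7.2.3} and of Theorem~\ref{thm7.2.2}(ii)--(iii) transfers essentially verbatim, and the conclusions of parts~(i)--(iii) follow as above.
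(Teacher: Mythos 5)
Your proposal captures the same essential ingredients as the paper: establish street-rationality of the Calta--McMullen L-surface (the computation $a(b-1)=r_1^2d-r_2^2$ and $(a-1)b=r_1^2d-(r_2-1)^2$ is exactly what the paper does in the paragraph preceding the theorem), invoke the eigenvalue-free shortline method for superdensity with slopes of the form \eqref{eq8.1.4} having bounded digits, and then use the street-spreading matrix of Theorem~\ref{thm7.2.2} for eventually periodic digit sequences to get irregularity exponents. The paper does not spell out a proof either; it simply records street-rationality, states ``The shortline method gives the following result,'' and defers the detailed part~(iii) computation to the golden L-surface worked out in Section~\ref{sec8.5}.

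Two points worth flagging. First, for part~(ii) your $4$-copy unfolding is not what the paper does. The paper explicitly notes, a paragraph before Theorem~\ref{thm8.4.2}, that ``The L-shape has the very special property that it is the same to consider billiard flow in the L-shape region or geodesic flow on the L-surface, as they are equivalent systems,'' so part~(ii) collapses onto part~(i) immediately without introducing the cross-shaped $4$-copy surface. Your route would still work (the cross is again street-rational), but you then owe yourself a verification of its street structure and its normalized street-LCMs, which the paper avoids entirely.

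Second, the ``main obstacle'' you anticipate --- needing the general $S_\delta$-indexed version of the street-spreading formalism because a horizontal street and a vertical street could share more than one rectangle face --- does not in fact arise for the Calta--McMullen L-surface. That surface has $3$ rectangle faces, $2$ horizontal streets and $2$ vertical streets, and each face occupies a distinct $(\mathrm{horizontal},\mathrm{vertical})$ street pair, so the plain $S_{i,j}$ convention and the plain version of Theorem~\ref{thm7.2.2} apply directly. Indeed, the paper's computation in Section~\ref{sec8.5} for the golden L-surface uses the indices $\uparrow_{1,1}$, $\uparrow_{2,1}$, $\uparrow_{2,2}$ without any adaptation. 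The general indexing is needed for the regular octagon surface and for the double-pentagon and polyrhombus surfaces, but not here; so the caution you raise, while good hygiene, is a false alarm for this theorem.
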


We shall justify part (iii) for the golden L-surface in Section~\ref{sec8.5}.

Optimality and superdensity of some orbits are two different aspects concerning a dynamical system, both exhibiting \textit{perfect} behavior.
Optimality for these systems is established by Calta and McMullen via ergodicity, making use of Birkhoff's ergodic theorem.
Since Birkhoff's ergodic theorem does not give any error term, these results do not say anything quantitative about the speed of convergence to uniformity, or even about time-quantitative density.
Our superdensity results, on the other hand, establish a best possible form
of time-quantitative density, at least for \textit{some} slopes.
And this is the best that we can hope for, since superdensity fails for almost every slope anyway.

Note next that every member of the Calta--McMullen family is what we may call a \textit{$1$-step L-staircase}, and the corresponding surface has the same genus~$2$.
It is a $2$-parameter family, since
\begin{displaymath}
a=\sqrt{q_1}+r_2
\quad\mbox{and}\quad
b=\sqrt{q_1}+1-r_2
\end{displaymath}
depend only on two rational parameters $q_1=r_1^2d$ and~$r_2$.

We next describe a far-reaching extension of the Calta--McMullen family, leading to surfaces with arbitrarily large genus.
For any integer $k\ge1$, we construct an infinite family of street-rational polysquares that are \textit{$k$-step L-staircases}, and each corresponding surface has genus~$k+1$.

The picture on the left in Figure~8.4.10 illustrates a typical street-rational $2$-step L-staircase.

\begin{displaymath}
\begin{array}{c}
\includegraphics[scale=0.8]{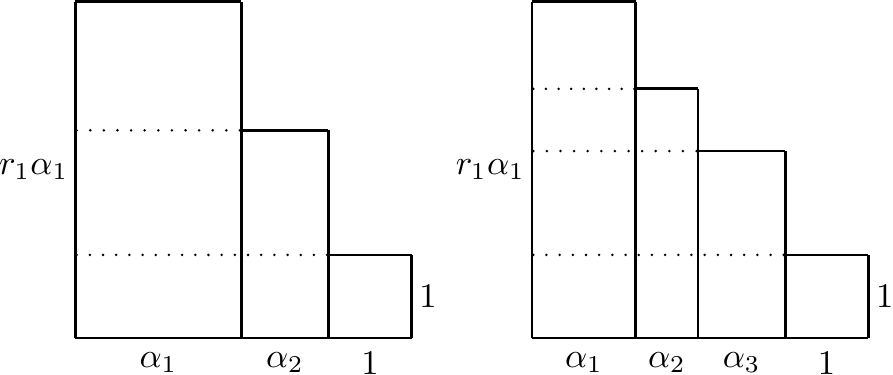}
\vspace{3pt}\\
\mbox{Figure 8.4.10: street-rational $2$-step and $3$-step L-staircases}
\end{array}
\end{displaymath}

The right-most vertical rectangle is a unit square, representing normalization to exclude equivalent systems.
The left-most vertical rectangle has size $\alpha_1\times r_1\alpha_1$, while the second vertical rectangle has size $\alpha_2\times r_2\alpha_2$, where $r_1\alpha_1>r_2\alpha_2>1$ and $r_1,r_2$ are positive rational numbers.

The top horizontal rectangle now has size $\alpha_1\times(r_1\alpha_1-r_2\alpha_2)$, the middle horizontal rectangle has size
$(\alpha_1+\alpha_2)\times (r_2\alpha_2-1)$, while the bottom horizontal rectangle has size $(\alpha_1+\alpha_2+1)\times1$.
Thus the $2$-step L-staircase surface is a street-rational polyrectangle translation surface precisely if we have the relations
\begin{equation}\label{eq8.4.1}
\frac{\alpha_1}{r_1\alpha_1-r_2\alpha_2}
=r_3\frac{\alpha_1+\alpha_2}{r_2\alpha_2-1}
=r_4(\alpha_1+\alpha_2+1)
\end{equation}
for the width-height ratios of the $3$ horizontal rectangles, where $r_3,r_4$ are some positive rational numbers.

The analogous requirement for the $3$ vertical rectangles is guaranteed by definition.

We wish to express the two positive real variables $\alpha_1,\alpha_2$ in \eqref{eq8.4.1} in terms of the given positive rational parameters
$r_1,r_2,r_3,r_4$.
Routine calculation shows that both $\alpha_1,\alpha_2$ are algebraic numbers.
Indeed, using the second equality in \eqref{eq8.4.1}, we obtain
\begin{equation}\label{eq8.4.2}
\alpha_1=\frac{r_2r_4\alpha^2_2+(r_2r_4-r_3-r_4)\alpha_2-r_4}{r_3+r_4-r_2r_4\alpha_2}.
\end{equation}
Substituting this into the first equality in \eqref{eq8.4.1} eliminates the variable~$\alpha_1$, and gives rise to a polynomial equation of degree $4$ in the
variable~$\alpha_2$ with rational coefficients.
Thus $\alpha_2$ is algebraic.
In view of \eqref{eq8.4.2}, $\alpha_1$ is also algebraic.
Crucially, for the typical choice of the rational parameters $r_1,r_2,r_3,r_4$, this polynomial equation of degree $4$ has a positive root which is not rational.

This completes the construction of the class of $2$-step street-rational L-staircases.
It is a $4$-parameter family, depending on the rational parameters $r_1,r_2,r_3,r_4$, and each surface in this family has genus~$3$.

The picture on the right in Figure~8.4.10 illustrates a typical street-rational $3$-step L-staircase.

The right-most vertical rectangle is a unit square, representing normalization to exclude equivalent systems.
The left-most vertical rectangle has size $\alpha_1\times r_1\alpha_1$, the second vertical rectangle has size $\alpha_2\times r_2\alpha_2$, while the third vertical rectangle has size $\alpha_3\times r_3\alpha_3$, where $r_1\alpha_1>r_2\alpha_2>r_3\alpha_3>1$ and $r_1,r_2,r_3$ are positive rational numbers.

The top horizontal rectangle now has size $\alpha_1\times(r_1\alpha_1-r_2\alpha_2)$, the second horizontal rectangle has size
$(\alpha_1+\alpha_2)\times(r_2\alpha_2-r_3\alpha_3)$, the third horizontal rectangle has size $(\alpha_1+\alpha_2+\alpha_3)\times(r_3\alpha_3-1)$, while the bottom horizontal rectangle has size $(\alpha_1+\alpha_2+\alpha_3+1)\times1$.
Thus the $3$-step L-staircase surface is a street-rational polyrectangle translation surface precisely if we have the relations
\begin{equation}\label{eq8.4.3}
\frac{\alpha_1}{r_1\alpha_1-r_2\alpha_2}
=r_4\frac{\alpha_1+\alpha_2}{r_2\alpha_2-r_3\alpha_3}
=r_5\frac{\alpha_1+\alpha_2+\alpha_3}{r_3\alpha_3-1}
=r_6(\alpha_1+\alpha_2+\alpha_3+1)
\end{equation}
for the width-height ratios of the $4$ horizontal rectangles, where $r_4,r_5,r_6$ are some positive rational numbers.

The analogous requirement for the $4$ vertical rectangles is guaranteed by definition.

We wish to express the three positive real variables $\alpha_1,\alpha_2,\alpha_3$ in \eqref{eq8.4.3} in terms of the given positive rational parameters $r_1,r_2,r_3,r_4,r_5,r_6$.
Routine calculation shows that $\alpha_1,\alpha_2,\alpha_3$ are all algebraic numbers.
We leave the details to the reader.

This completes the construction of the class of $3$-step street-rational L-staircases.
It is a $6$-parameter family, depending on the rational parameters $r_1,r_2,r_3,r_4,r_5,r_6$, and each surface in this family has genus~$4$.

The construction of street-rational $k$-step L-staircases, where $k\ge1$, goes in a similar way.
This gives rise to a $2k$-parameter family, and each surface in the family has genus~$k+1$.

The construction gives a street-rational polyrectangle translation surface with a $1$-direction geodesic flow, and the corresponding billiard region is also street-rational.
Thus the shortline method works, and gives the following generalization of Theorem~\ref{thm8.4.2}.

\begin{thm}\label{thm8.4.3}
Let $k\ge1$ be an integer, and consider a $k$-step street-rational L-staircase surface or region.

\emph{(i)}
There exist infinitely many explicit slopes, depending on the surface, such that any half-infinite geodesic on the surface having such a slope exhibits superdensity.

\emph{(ii)}
There exist infinitely many explicit slopes, depending on the region, such that any billiard orbit in the region having such an initial slope exhibits superdensity.

\emph{(iii)}
For infinitely many of these slopes in parts \emph{(i)} and \emph{(ii)} that give rise to superdensity, we can explicitly compute the corresponding irregularity exponents.
Combining the irregularity exponent with the method of zigzagging introduced in \cite[Section~3.3]{BDY1}, we can also describe, for trajectories having these initial slopes, the time-quantitative behavior of the edge cutting and face crossing numbers, as well as equidistribution relative to all convex sets.
\end{thm}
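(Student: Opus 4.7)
The plan is to deduce Theorem~\ref{thm8.4.3} as a direct application of the general framework developed in Sections~\ref{sec8.1}--\ref{sec8.2} for street-rational polyrectangle translation surfaces. The construction of $k$-step L-staircases given immediately before the theorem, via the relations \eqref{eq8.4.1}, \eqref{eq8.4.3}, and their natural extension to arbitrary $k$, already shows that each such surface $\PPP$ is a finite street-rational polyrectangle translation surface. Moreover, billiard in the associated L-staircase region unfolds by reflection in both axes to $1$-direction geodesic flow on a four-copy version of $\PPP$, so part~(ii) will follow from part~(i) once the latter is proved.

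For part~(i), I would first read off the normalized horizontal and vertical street-LCMs $h^*=h^*(\PPP)$ and $v^*=v^*(\PPP)$ from the defining rectangle dimensions; these exist since the relations \eqref{eq8.4.1}, \eqref{eq8.4.3}, and their $k$-step analogs force the aspect ratios of the horizontal streets to be pairwise rationally commensurable, while the analogous property for the vertical streets is automatic from the choice of widths $\alpha_i$ and heights $r_i\alpha_i$. I would then restrict attention to the class of slopes $\alpha$ of the form \eqref{eq8.1.4} whose digits form a bounded sequence of positive integers, playing the role of badly approximable slopes in the polysquare case. A direct adaptation of the proof of \cite[Theorems~6.1.1 and~6.4.1]{BCY}, following the template laid out explicitly in Section~\ref{sec8.1} for the regular octagon surface, then establishes superdensity for any half-infinite $1$-direction geodesic on $\PPP$ with such a slope. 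Since each rectangle face of the L-staircase decomposition is uniquely determined by its pair of horizontal and vertical streets, the simpler $S_{i,j}$ bookkeeping of Section~\ref{sec7.2} applies, and no appeal to the generalized form in the Remark following Theorem~\ref{thm7.2.2} is required.

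For part~(iii), I would restrict further to those slopes whose digit sequence is eventually periodic, making $\alpha$ a quadratic irrational under the generalized continued fraction expansion of \eqref{eq8.1.4}. I would then form the $2$-step transition matrix $\bfA$ with respect to the basis $\WWWW$ of almost vertical unit types in $\PPP$, determine the street-spreading matrix $\bfS$ as in \eqref{eq7.2.36} for the triple $(\PPP;h^*a,v^*b)$, and extract the two largest eigenvalues $\lambda_1,\lambda_2$ of $\bfA$ from the eigenvalues of $\bfS$ via the formula \eqref{eq7.2.38}. The irregularity exponent is then $\log|\lambda_2|/\log|\lambda_1|$, and combining it with the zigzagging technique of \cite[Section~3.3]{BDY1} yields the time-quantitative statements about edge cutting, face crossing, and equidistribution relative to convex sets, precisely as in Theorem~\ref{thm7.1.1}.

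The main obstacle is verifying and exploiting street-rationality uniformly in~$k$. For $k\ge 2$ the defining system couples the widths $\alpha_1,\ldots,\alpha_k$ through polynomial equations whose degrees grow with~$k$, so one must check that for generic rational parameters the system admits a positive real solution that is genuinely irrational (so that the resulting surface is non-integrable) and for which $h^*$ and $v^*$ admit closed-form expressions usable in the continued fraction \eqref{eq8.1.4}. Once this computational point is settled, the rest is a mechanical transcription of the polysquare shortline-method proofs to the polyrectangle setting, relying on the fact that the shortline construction only uses the street structure and the rational commensurability of normalized street lengths, not the unit aspect ratio of the faces.
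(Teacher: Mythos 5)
Your proposal is correct and follows essentially the same route as the paper: Theorem~\ref{thm8.4.3} is presented there as a direct consequence of the street-rationality established in the paragraphs preceding it, together with the polyrectangle adaptation of the shortline machinery of Sections~\ref{sec8.1}, \ref{sec8.2} and \ref{sec7.2}, and your reconstruction of the steps (including the observation that the simpler $S_{i,j}$ bookkeeping suffices, since each rectangle face of a $k$-step L-staircase sits in a unique horizontal/vertical street pair) is exactly what the paper intends. One small caveat on part~(iii): the street-spreading matrix formula \eqref{eq7.2.38} produces the relevant eigenvalues only when the digit sequence of \eqref{eq8.1.4} has period at most two, so that $V_0$ and $H_0$ are genuine mutual shortlines; for a longer eventual period you would need the partial-diagonalization route of Theorem~\ref{thm8.2.1} (and Theorem~\ref{thm8.5.1} in the golden L case) rather than \eqref{eq7.2.38} directly, though this does not affect the validity of your argument since period-two slopes already supply the required infinite family.
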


\begin{remark}
We are not aware whether these street-rational $k$-step L-staircases, where $k\ge2$, in general are optimal.
What we can prove is that they all have explicit superdense orbits.
More precisely, our construction gives primitive street-rational L-staircase surfaces in \textit{every} positive genus and which exhibit explicit superdense orbits. 
On the other hand, we are not aware of any infinite family of surfaces in a fixed high genus.
\end{remark}

To obtain new street-rational polyrectangle translation surfaces, there is a technique which is algebraic in nature.
We shall describe this technique by first looking at a simple example.

Consider the polysquare translation surface which is a $3\times3$ square with a missing square in the middle, as shown in Figure~8.4.11 with boundary identification given by perpendicular translation.

\begin{displaymath}
\begin{array}{c}
\includegraphics[scale=0.8]{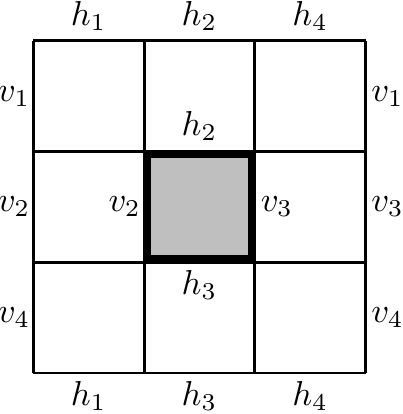}
\vspace{3pt}\\
\mbox{Figure 8.4.11: a polysquare translation surface}
\end{array}
\end{displaymath}

We can obtain a polyrectangle translation surface by varying the lengths of the edges of the square faces to change them to rectangle faces.
To ensure that we obtain affine-different polyrectangles, we fix a horizontal side and a vertical side to have length~$1$.
For simplicity, we have ensured that the bottom left square face does not get altered.
As shown in Figure~8.4.12, we allow the other faces to have different lengths.

\begin{displaymath}
\begin{array}{c}
\includegraphics[scale=0.8]{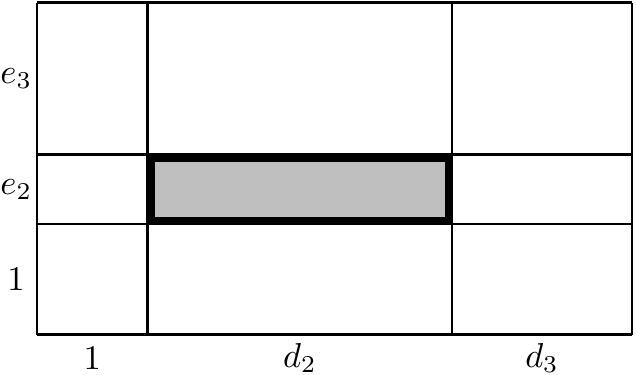}
\vspace{3pt}\\
\mbox{Figure 8.4.12: a polyrectangle translation surface corresponding}
\\
\mbox{to the polysquare translation surface in Figure 8.4.11}
\end{array}
\end{displaymath}

The original polysquare translation surface has $4$ horizontal streets and $4$ vertical streets, corresponding to the edge pairings $v_1,v_2,v_3,v_4$ and $h_1,h_2,h_3,h_4$.
The $4$ corresponding horizontal streets in the polyrectangle translation surface have normalized lengths, \textit{i.e.} street lengths normalized by division by street widths, equal to
\begin{displaymath}
\frac{1+d_2+d_3}{e_3},
\quad
\frac{1}{e_2},
\quad
\frac{d_3}{e_2},
\quad
\frac{1+d_2+d_3}{1}.
\end{displaymath}
For horizontal street-rationality, we require positive rational numbers $r_1,r_2,r_3$ such that
\begin{equation}\label{eq8.4.4}
e_3=r_1,
\quad
d_3=r_2,
\quad
e_2(1+d_2+d_3)=r_3.
\end{equation}
On the other hand, the $4$ corresponding vertical streets in the polyrectangle translation surface have normalized lengths equal to
\begin{displaymath}
\frac{1+e_2+e_3}{1},
\quad
\frac{e_3}{d_2},
\quad
\frac{1}{d_2},
\quad
\frac{1+e_2+e_3}{d_3}.
\end{displaymath}
For vertical street-rationality, again we require $e_3$ and $d_3$ to be rational, already taken care of in \eqref{eq8.4.4}, as well as another positive rational numbers $r_4$ such that
\begin{equation}\label{eq8.4.5}
d_2(1+e_2+e_3)=r_4.
\end{equation}
Combining \eqref{eq8.4.4} and \eqref{eq8.4.5}, we obtain
\begin{displaymath}
e_2(d_2+1+r_2)=r_3
\quad\mbox{and}\quad
d_2(e_2+1+r_1)=r_4,
\end{displaymath}
leading to a linear equation in $e_2$ and $d_2$ with rational coefficients.
On combining this linear equation with \eqref{eq8.4.5} and eliminating the variable~$e_2$, say, we obtain a quadratic equation with rational coefficients in the variable~$d_2$.
One can check that for infinitely many choices of the rational numbers $r_1,r_2,r_3,r_4$, this quadratic equation has a positive irrational root $d_2$ with corresponding positive~$e_2$, leading to a street-rational polyrectangle translation surface that is not a polysquare surface.

A similar argument seems to work if we start with a polysquare translation surface.
For illustration, we consider the slightly more complicated polysquare translation surface shown in Figure~8.4.13.

\begin{displaymath}
\begin{array}{c}
\includegraphics[scale=0.8]{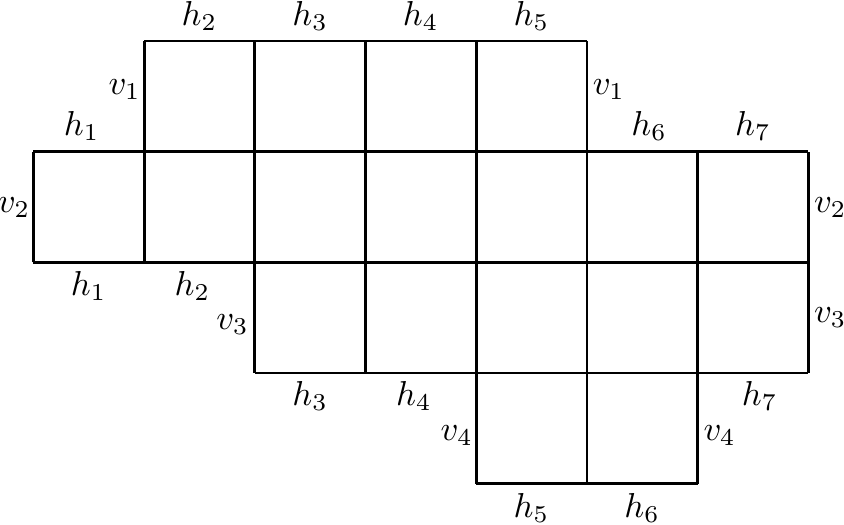}
\vspace{3pt}\\
\mbox{Figure 8.4.13: another polysquare translation surface}
\end{array}
\end{displaymath}

We can obtain a polyrectangle translation surface by varying the lengths of the edges of the square faces to change them to rectangle faces.
To ensure that we obtain affine-different polyrectangles, we fix a horizontal side and a vertical side to have length~$1$.
For simplicity, we have ensured that the left most square face does not get altered.
As shown in Figure~8.4.14, we allow the other faces to have different lengths.

\begin{displaymath}
\begin{array}{c}
\includegraphics[scale=0.8]{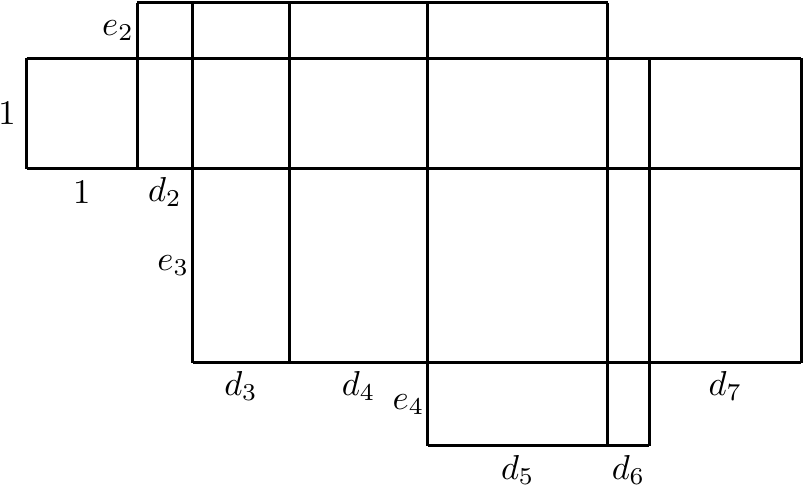}
\vspace{3pt}\\
\mbox{Figure 8.4.14: a polyrectangle translation surface corresponding}
\\
\mbox{to the polysquare translation surface in Figure 8.4.13}
\end{array}
\end{displaymath}

The original polysquare translation surface has $4$ horizontal streets and $7$ vertical streets.
The $4$ corresponding horizontal streets in the polyrectangle translation surface have normalized lengths equal to
\begin{displaymath}
\frac{d_2+\ldots+d_5}{e_2},
\quad
\frac{1+d_2+\ldots+d_7}{1},
\quad
\frac{d_3+\ldots+d_7}{e_3},
\quad
\frac{d_5+d_6}{e_4},
\end{displaymath}
and horizontal street-rationality leads to $3$ equations that involve $3$ positive rational numbers $r_1,r_2,r_3$, say.
On the other hand, the $7$ corresponding vertical streets in the polyrectangle translation surface have normalized lengths equal to
\begin{displaymath}
\frac{1}{1},
\ \frac{1+e_2}{d_2},
\ \frac{1+e_2+e_3}{d_3},
\ \frac{1+e_2+e_3}{d_4},
\ \frac{1+e_2+e_3+e_4}{d_5},
\ \frac{1+e_3+e_4}{d_6},
\ \frac{1+e_3}{d_7},
\end{displaymath}
and vertical street-rationality leads to $6$ equations that involve $6$ positive rational numbers $r_4,\ldots,r_9$, say.
In other words, we have $9$ equations in the $9$ variables
\begin{displaymath}
d_2,d_3,d_4,d_5,d_6,d_7,e_2,e_3,e_4
\end{displaymath}
that involve $9$ positive rational parameters $r_1,\ldots,r_9$.
And it is reasonable to expect that there are infinitely many choices of $r_1,\ldots,r_9$ that lead to positive solutions for all the variables and irrational solutions for some.

Indeed, we may generalize this to any arbitrary finite polysquare translation surface with $m$ horizontal streets and $n$ vertical streets.
Street-rationality will give rise to $m+n-2$ equations in $m+n-2$ variables of the type $d_i$ and~$e_j$, and these equations involve $m+n-2$ positive rational parameters.

In sharp contrast, as far as we know, the current literature shows only $3$ types of polysquare translation surfaces that will lead to infinitely many primitive optimal translation surfaces.
The L-surface on the left in Figure~8.4.15 leads to the Calta--McMullen family of L-surfaces with genus~$2$.
The second type, shown in the middle of Figure~8.4.15, has genus~$3$, while the third type, shown on the right in Figure~8.4.15, has genus~$4$.

\begin{displaymath}
\begin{array}{c}
\includegraphics[scale=0.8]{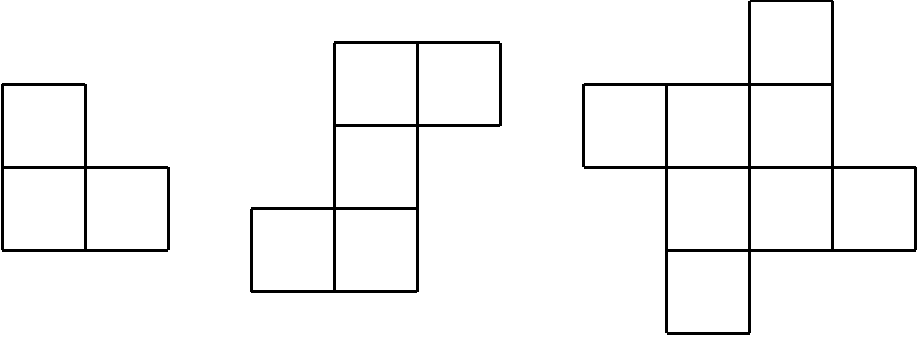}
\vspace{3pt}\\
\mbox{Figure 8.4.15: polysquare translation surfaces leading to infinitely many}
\\
\mbox{primitive optimal translation surfaces}
\end{array}
\end{displaymath}

For the details of the corresponding optimal families, we refer the reader to the paper of McMullen~\cite{Mc4}.

In particular, Calta~\cite{C} and McMullen~\cite{Mc2,Mc3,Mc4} have given a complete list of optimal systems of genus~$2$.
Those missing from this list include the family shown in Figure~8.4.16, with perpendicular boundary identification, and genus~$2$.

\begin{displaymath}
\begin{array}{c}
\includegraphics[scale=0.8]{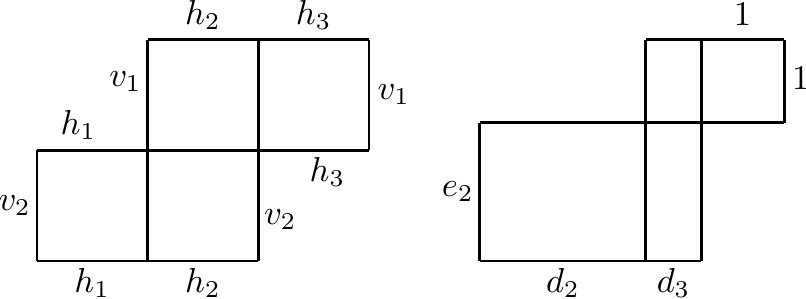}
\vspace{3pt}\\
\mbox{Figure 8.4.16: an infinite family of non-optimal street-rational}
\\
\mbox{polyrectangle translation surfaces}
\end{array}
\end{displaymath}

The polysquare translation surface has $2$ horizontal streets and $3$ vertical streets.
Let us consider the corresponding polyrectangle translation surfaces shown in the picture on the right in Figure~8.4.16.
The $3$ vertical streets have normalized lengths equal to
\begin{displaymath}
\frac{e_2}{d_2},
\quad
\frac{1+e_2}{d_3},
\quad
\frac{1}{1},
\end{displaymath}
and for vertical street-rationality, we need rational numbers $r_1,r_2>0$ such that
\begin{displaymath}
r_1\frac{e_2}{d_2}=r_2\frac{1+e_2}{d_3}=1,
\end{displaymath}
so we have the two conditions
\begin{equation}\label{eq8.4.6}
d_2=r_1e_2
\quad\mbox{and}\quad
d_3=r_2(1+e_2).
\end{equation}
On the other hand, the $2$ horizontal streets have normalized lengths equal to
\begin{displaymath}
\frac{1+d_3}{1},
\quad
\frac{d_2+d_3}{e_2},
\end{displaymath}
and for horizontal street-rationality, we need a rational number $r_3>0$ such that
\begin{displaymath}
r_3\frac{1+d_3}{1}=\frac{d_2+d_3}{e_2},
\end{displaymath}
so we have the extra condition
\begin{equation}\label{eq8.4.7}
d_2+d_3=r_3e_2(1+d_3).
\end{equation}

The equations \eqref{eq8.4.6} and \eqref{eq8.4.7} have quadratic irrational solutions in $d_2,d_3,e_2$ that can be written in a simple explicit form, in sharp contrast to L-staircases where the corresponding requirements \eqref{eq8.4.1}--\eqref{eq8.4.3} lead to unpleasant higher degree equations, so we work out the details here.

Substituting \eqref{eq8.4.6} into \eqref{eq8.4.7}, we obtain
\begin{equation}\label{eq8.4.8}
r_1e_2+r_2(1+e_2)=r_3e_2(1+r_2(1+e_2)).
\end{equation}
This quadratic equation in the variable $e_2$ can be rewritten in the form
\begin{displaymath}
r_2r_3e_2^2+(r_2r_3+r_3-r_1-r_2)e_2-r_2=0.
\end{displaymath}
Since $r_2,r_3>0$, such a typical quadratic equation has complex conjugate roots $e_2=R_2\pm R_1\sqrt{D}$, where $R_1,R_2$ are rational numbers and $D\ge2$ is a positive squarefree integer.
Furthermore,
\begin{equation}\label{eq8.4.9}
R_1^2D-R_2^2=-(R_2+R_1\sqrt{D})(R_2-R_1\sqrt{D})=\frac{1}{r_3}>0.
\end{equation}
For simplicity, let us assume that
\begin{equation}\label{eq8.4.10}
e_2=R_1\sqrt{D}+R_2>0.
\end{equation}
Using \eqref{eq8.4.10}, the left hand side of \eqref{eq8.4.8} becomes
\begin{displaymath}
r_1(R_1\sqrt{D}+R_2)+r_2(R_1\sqrt{D}+R_2+1),
\end{displaymath}
and the right hand side of \eqref{eq8.4.8} becomes
\begin{displaymath}
r_3(R_1\sqrt{D}+R_2)(1+r_2(R_1\sqrt{D}+R_2+1)).
\end{displaymath}
The coefficient for $\sqrt{D}$ in \eqref{eq8.4.8} is equal to
\begin{equation}\label{eq8.4.11}
(r_1+r_2)R_1=r_3(1+r_2)R_1+2r_2r_3R_1R_2,
\end{equation}
while the rational term in \eqref{eq8.4.8} is equal to
\begin{equation}\label{eq8.4.12}
(r_1+r_2)R_2+r_2=r_3(1+r_2)R_2+r_2r_3R_2^2+r_2r_3R_1^2D.
\end{equation}
Combining \eqref{eq8.4.11} and \eqref{eq8.4.12} and eliminating $r_3$, we deduce that
\begin{displaymath}
\frac{(r_1+r_2)R_1}{(1+r_2)R_1+2r_2R_1R_2}
=\frac{(r_1+r_2)R_2+r_2}{(1+r_2)R_2+r_2R_2^2+r_2R_1^2D},
\end{displaymath}
which can be simplified to
\begin{displaymath}
\frac{r_1+r_2}{1+r_2+2r_2R_2}
=\frac{(r_1+r_2)R_2+r_2}{(1+r_2)R_2+r_2R_2^2+r_2R_1^2D}.
\end{displaymath}
For any given $R_1,R_2,D$ and~$r_2$, this is a linear equation in~$r_1$, with solution
\begin{equation}\label{eq8.4.13}
r_1=\frac{1+r_2(2R_2+1)}{R_1^2D-R_2^2}-r_2.
\end{equation}
We require $r_1>0$, so we have to assume the \textit{positivity condition}
\begin{displaymath}
\frac{1+r_2(2R_2+1)}{R_1^2D-R_2^2}-r_2>0.
\end{displaymath}
In view of \eqref{eq8.4.9}, this is equivalent to $1+r_2(2R_2+1)-(R_1^2D-R_2^2)r_2>0$, \textit{i.e.},
\begin{equation}\label{eq8.4.14}
(R_1^2D-(R_2+1)^2)r_2<1.
\end{equation}
Note that $R_1^2D-(R_2+1)^2$ is never zero, so there are only two possibilities:

(i)
If $R_1^2D-(R_2+1)^2>0$, then $r_2$ is restricted to the range
\begin{displaymath}
0<r_2<\frac{1}{R_1^2D-(R_2+1)^2}.
\end{displaymath}

(ii)
If $R_1^2D-(R_2+1)^2<0$, then any $r_2>0$ is possible.

Combining \eqref{eq8.4.6} and \eqref{eq8.4.10}, we now have
\begin{equation}\label{eq8.4.15}
e_2=R_1\sqrt{D}+R_2,
\quad
d_2=r_1(R_1\sqrt{D}+R_2),
\quad
d_3=r_2(R_1\sqrt{D}+R_2+1).
\end{equation}
Note that \eqref{eq8.4.13} and \eqref{eq8.4.14} become particularly simple if $R_1^2D-R_2^2=1$ and $R_2>0$.
Then case (ii) applies.
If we write $r=r_2$, then using \eqref{eq8.4.13}, we can reduce \eqref{eq8.4.15} to
\begin{displaymath}
e_2=R_1\sqrt{D}+R_2,
\quad
d_2=(2rR_2+1)(R_1\sqrt{D}+R_2),
\quad
d_3=r(R_1\sqrt{D}+R_2+1).
\end{displaymath}
It is a lucky coincidence that we know all the integral solutions of the Pell equation $R_1^2D-R_2^2=1$ and $R_2>0$.
The simplest such choice is $D=2$ and $R_1=R_2=1$, leading to the solution
\begin{displaymath}
e_2=\sqrt{2}+1,
\quad
d_2=(2r+1)(\sqrt{2}+1),
\quad
d_3=r(\sqrt{2}+2),
\end{displaymath}
where $r>0$ is any positive rational number.

It is clear from the work of Calta and McMullen that neither geodesic flow on any of these street-rational polyrectangle translation surfaces nor billiard flow on any such polyrectangle billiard table is optimal.
Our shortline method, nevertheless, does give time-quantitative results for infinitely many directions.

Next, we leave this algebraic approach and outline a completely different geometric way to construct primitive non-polysquare street-rational polyrectangle translation surfaces.
We shall call this \textit{double-rational gluing}.
For a simple illustration of the ideas, we consider Figure~8.4.17, where we glue together two regular octagons with edge lengths that are rational multiples of each other.

\begin{displaymath}
\begin{array}{c}
\includegraphics[scale=0.8]{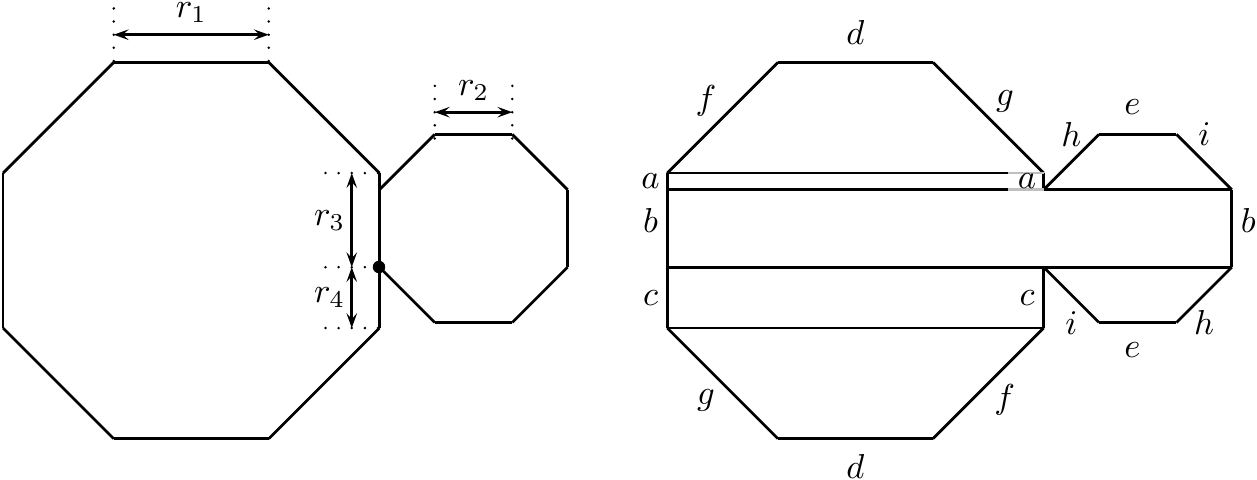}
\vspace{3pt}\\
\mbox{Figure 8.4.17: gluing together two regular octagons}
\end{array}
\end{displaymath}

Since the edge lengths of the two regular octagons are rational multiples of each other, we may assume, without loss of generality, that both edge lengths are rational numbers, equal to $r_1$ and $r_2$ as shown in the picture on the left in Figure~8.4.17.
We glue the two octagons together in such a way that the vertex, indicated in the picture by the dot, of the octagon on the right lies on a point on the edge of the octagon on the left which has rational distances $r_3$ and $r_4$ from the two nearest vertices.
With edge identification shown in the picture on the right in Figure~8.4.17, the union of the two regular octagons becomes a surface.

It is clear that the edge identification give rise to $4$ vertical streets and $5$ horizontal streets, as shown in Figure~8.4.18.
To check for street-rationality, we need to look at the normalized lengths, \textit{i.e.} street lengths normalized by division by street widths, of these streets.

\begin{displaymath}
\begin{array}{c}
\includegraphics[scale=0.8]{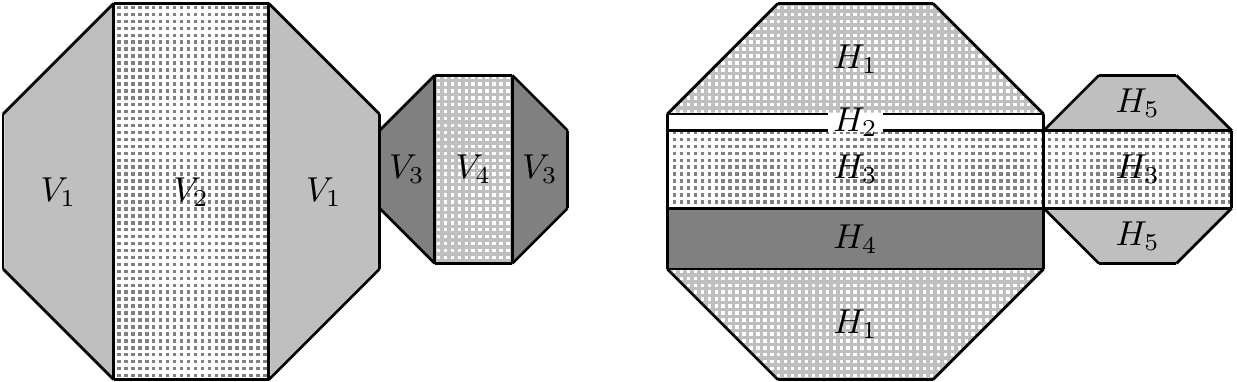}
\vspace{3pt}\\
\mbox{Figure 8.4.18: vertical and horizontal streets}
\end{array}
\end{displaymath}

The normalized lengths of the vertical streets $V_1,V_2,V_3,V_4$ are respectively
\begin{displaymath}
\frac{2+\sqrt{2}}{1/\sqrt{2}},
\quad
\frac{1+\sqrt{2}}{1},
\quad
\frac{2+\sqrt{2}}{1/\sqrt{2}},
\quad
\frac{1+\sqrt{2}}{1},
\end{displaymath}
and these are all rational multiples of $1+\sqrt{2}$.
On the other hand, the normalized lengths of the horizontal streets $H_1,H_2,H_3,H_4,H_5$ are respectively
\begin{displaymath}
\frac{2+\sqrt{2}}{1/\sqrt{2}},
\quad
\frac{r_1(1+\sqrt{2})}{r_3-r_2},
\quad
\frac{(r_1+r_2)(1+\sqrt{2})}{r_2},
\quad
\frac{r_1(1+\sqrt{2})}{r_4},
\quad
\frac{2+\sqrt{2}}{1/\sqrt{2}},
\end{displaymath}
and these are also all rational multiples of $1+\sqrt{2}$.
Thus the surface comprising the two octagons is a street-rational polyrectangle translation surface which is not a polysquare surface.

Note that a bigger regular octagon is not a covering surface of a smaller regular octagon.
Thus this street-rational polyrectangle translation surface is primitive, and cannot be obtained from a simpler surface via covering construction.

We need not stop at two regular octagons.
In general, we can glue together, across horizontal or vertical edges, arbitrarily many regular octagons with edge lengths that are rational multiples of each other in a similar way.
With a typical choice of rational edge length parameters, the resulting street-rational polyrectangle translation surface is primitive.

Of course, we can replace the regular octagon by any regular polygon of $k$ sides, where $k\ge8$ is divisible by~$4$, and obtain an analogous class of street-rational polyrectangle translation surfaces via double-rational gluing.
The divisibility by $4$ guarantees that each copy of the regular polygon has horizontal and vertical sides which play a special role in the gluing process.

We can also replace the regular octagon with the golden L-shape or, more precisely, the golden cross made up of $4$ reflected copies of the golden L-shape, as shown in Figure~8.4.19.
Then double-rational gluing gives rise to street-rational polyrectangle translation surfaces where the normalized length of any street is a rational multiple of the golden ratio $(1+\sqrt{5})/2$.

\begin{displaymath}
\begin{array}{c}
\includegraphics[scale=0.8]{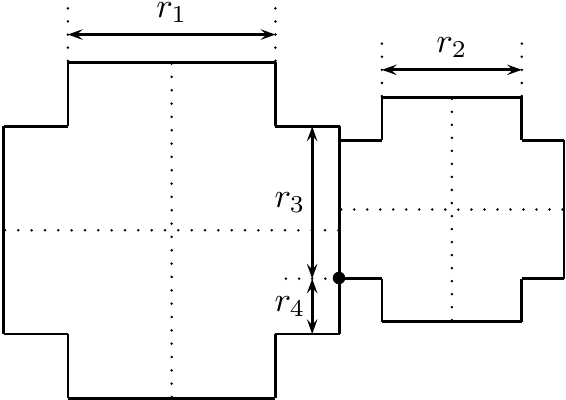}
\vspace{3pt}\\
\mbox{Figure 8.4.19: gluing together two golden crosses}
\end{array}
\end{displaymath}

In Figure~8.4.20, we have a street-rational polyrectangle translation surface made up of nine golden crosses, with six copies of the same size, two copies with double edge length, and one copy of triple edge length.
Indeed, if we consider this figure as a billiard table, the the corresponding translation surface for the billiard is also a street-rational polyrectangle translation surface.

\begin{displaymath}
\begin{array}{c}
\includegraphics[scale=0.8]{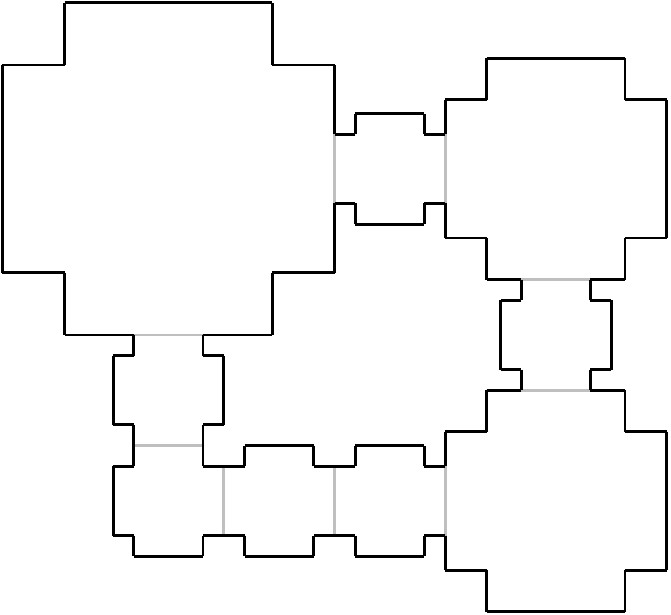}
\vspace{3pt}\\
\mbox{Figure 8.4.20: a street-rational polyrectangle surface}
\\
\mbox{consisting of nine golden crosses}
\end{array}
\end{displaymath}

A similar consideration arises if we start with a cross made up of $4$ reflected copies of any member of the Calta--McMullen family, or a surface made up of $4$ reflected copies of an street-rational $k$-step L-staircase.

Note that there is no analogous results for optimal systems.
We are not aware of any construction using double-rational gluing that builds optimal systems from optimal components.

We conclude this section by considering surfaces of some rectangular boxes.
This is a generalization of Example~7.2.4 concerning the surface of the unit cube.

One such example is what we may call the \textit{golden brick}, or a brick with golden ratio, as shown in Figure~8.4.21.

\begin{displaymath}
\begin{array}{c}
\includegraphics[scale=0.8]{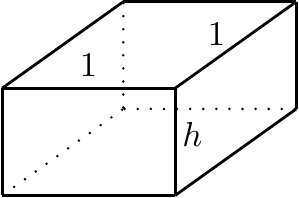}
\vspace{3pt}\\
\mbox{Figure 8.4.21: a brick with the golden ratio}
\end{array}
\end{displaymath}

This has two opposite faces that are unit squares, and the four parallel edges joining them have length $h=(\sqrt{5}-1)/2$. 
It is easy to see that this gives rise to a street-rational polyrectangle surface.
Indeed, it has $3$ streets, each of length~$4$.
One of these streets has size $h\times4$, while the other two have common size $1\times(2+2h)$.
Street-rationality is now a consequence of the observation that
\begin{displaymath}
\frac{1}{2+2h}=2\cdot\frac{h}{4},
\end{displaymath}
due to $h^2+h=1$.
We shall return to this example in the next section where we shall determine some irregularity exponents.

Clearly, this example can be generalized.
Starting with two opposite unit square faces, we now let $z$ denote the length of the four parallel edges joining them.
One of these streets has size $z\times4$, while the other two have common size $1\times(2+2z)$.
Thus we have street-rationality if we can find coprime integers $a,b\ge1$ such that
\begin{displaymath}
\frac{1}{2+2z}=\frac{a}{b}\cdot\frac{z}{4},
\end{displaymath}
precisely when $az^2+az-2b=0$.
Thus
\begin{displaymath}
z=\frac{\sqrt{a^2+8ab}-a}{2a}.
\end{displaymath}
Note that the choice $a=2$ and $b=1$ gives $z=h$.

Thus Theorem~\ref{thm8.4.3} can be extended to all of these special box-surfaces. 

We can further generalize by starting with two opposite rectangle faces.
We leave the details to the reader.
Unfortunately the analogous problem for an \textit{arbitrary} box-surface remains wide open.

%
%

\subsection{Computing more irregularity exponents}\label{sec8.5}

In this section, we first return to the double-pentagon surface as shown in Figure~8.4.2.
Our purpose is to find the eigenvalues of its $2$-step transition matrix~$\bfA$.

This surface can be viewed as a polyparallelogram translation surface, and we shall use an analogy with a corresponding polyrectangle translation surface as shown in Figure~8.5.1.

\begin{displaymath}
\begin{array}{c}
\includegraphics[scale=0.8]{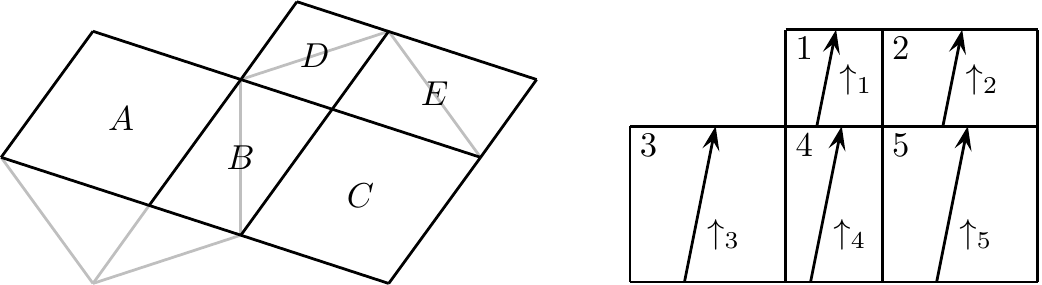}
\vspace{3pt}\\
\mbox{Figure 8.5.1: analogy of the double-pentagon surface}
\\
\mbox{with a polyrectangle translation surface}
\end{array}
\end{displaymath}

The rhombi $A,C,D$ in the picture on the left become the squares $3,5,1$ respectively, while the parallelograms $B,E$ become the rectangles $4,2$ respectively.
Here the horizontal streets are $1,2$ and $3,4,5$, while the vertical streets are $1,4$ and $2,5,3$.
It is not difficult to check that both horizontal streets and both vertical streets have normalized lengths $(3+\sqrt{5})/2$.
Note here that the two distinct rectangle faces $3,5$ can fall into the same horizontal street and the same vertical street, so we shall adapt our notation from Section~\ref{sec7.2} for the determination of the street-spreading matrix.

We also show in the picture on the right in Figure~8.5.1 the almost vertical units of type $\uparrow$ in the polyrectangle translation surface.
We have not shown that almost vertical units of type $\nuparrow$ here.

Let
\begin{displaymath}
J_1=\{1,2\}
\quad\mbox{and}\quad
J_2=\{3,4,5\}
\end{displaymath}
denote the horizontal streets, and let
\begin{displaymath}
I_1=I_4=\{1,4\}
\quad\mbox{and}\quad
I_2=I_3=I_5=\{2,3,5\}
\end{displaymath}
denote the vertical streets.

For the polyrectangle translation surface, we thus consider slopes of the form
\begin{displaymath}
\alpha_k=\frac{3+\sqrt{5}}{2}k+\frac{1}{\frac{3+\sqrt{5}}{2}k+\frac{1}{\frac{3+\sqrt{5}}{2}k+\cdots}}.
\end{displaymath}
For simplicity, however, we consider only the special case with branching parameter $k=1$.

Corresponding to \eqref{eq7.2.14}, we define the column matrices
\begin{equation}\label{eq8.5.1}
\bfu_1=[\{\uparrow_s:j\in J_1^*,s\in I_j^*\}]
\quad\mbox{and}\quad
\bfu_2=[\{\uparrow_s:j\in J_2^*,s\in I_j^*\}].
\end{equation}
Here $J_1^*$, $J_2^*$ and $I_j^*$ denote that the edges are counted with multiplicity.
We also define the column matrices $\bfv_1$ and $\bfv_2$ analogous to \eqref{eq7.2.15}, but their details are not important.
Also, analogous to \eqref{eq7.2.17}, we have
\begin{equation}\label{eq8.5.2}
(\bfA-I)[\{\uparrow_s\}]=\left\{\begin{array}{ll}
\bfu_1+\bfv_1,&\mbox{if $s\in J_1$},\\
\bfu_2+\bfv_2,&\mbox{if $s\in J_2$}.
\end{array}\right.
\end{equation}

We now combine \eqref{eq8.5.1} and \eqref{eq8.5.2}.
For the horizontal street corresponding to~$\bfu_1$, as highlighted in the picture on the left in Figure~8.5.2, we have
\begin{align}\label{eq8.5.3}
(\bfA-I)\bfu_1
&
=(\bfA-I)[\{\uparrow_1,\uparrow_2\}]
+(\bfA-I)[\{\uparrow_3,\uparrow_4,\uparrow_5\}]
\nonumber
\\
&
=2(\bfu_1+\bfv_1)+3(\bfu_2+\bfv_2).
\end{align}
For the horizontal street corresponding to~$\bfu_2$, as highlighted in the picture on the right in Figure~8.5.2, we have
\begin{align}\label{eq8.5.4}
(\bfA-I)\bfu_2
&
=(\bfA-I)[\{\uparrow_1,2\uparrow_2\}]
+(\bfA-I)[\{2\uparrow_3,\uparrow_4,2\uparrow_5\}]
\nonumber
\\
&
=3(\bfu_1+\bfv_1)+5(\bfu_2+\bfv_2).
\end{align}
\begin{displaymath}
\begin{array}{c}
\includegraphics[scale=0.8]{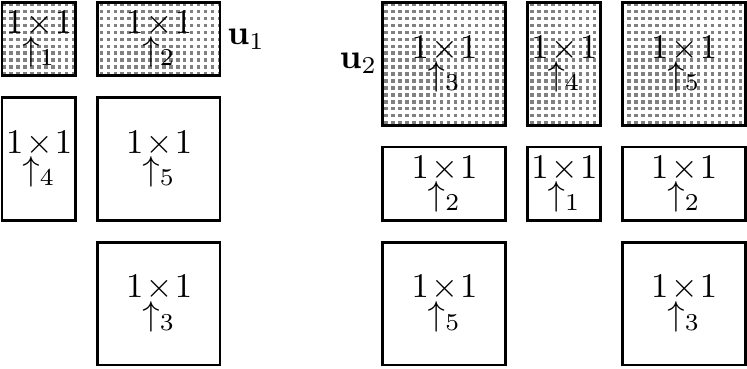}
\vspace{3pt}\\
\mbox{Figure 8.5.2: almost vertical units of type $\uparrow$ in $\bfu_1$ and $\bfu_2$}
\end{array}
\end{displaymath}

It follows from \eqref{eq8.5.3} and \eqref{eq8.5.4} that the street-spreading matrix is given by
\begin{displaymath}
\bfS=\begin{pmatrix}
2&3\\
3&5
\end{pmatrix},
\end{displaymath}
with eigenvalues
\begin{displaymath}
\tau_1=\frac{7+3\sqrt{5}}{2}
\quad\mbox{and}\quad
\tau_2=\frac{7-3\sqrt{5}}{2}.
\end{displaymath}
Using \eqref{eq7.2.38}, the corresponding eigenvalues of $\bfA$ are
\begin{displaymath}
\lambda\left(\frac{7+3\sqrt{5}}{2};\pm\right)=\frac{11+3\sqrt{5}}{4}\pm\frac{1}{4}\left(150+66\sqrt{5}\right)^{1/2}
\end{displaymath}
and
\begin{displaymath}
\lambda\left(\frac{7-3\sqrt{5}}{2};\pm\right)=\frac{11-3\sqrt{5}}{4}\pm\frac{1}{4}\left(150-66\sqrt{5}\right)^{1/2}.
\end{displaymath}
The two largest eigenvalues are therefore
\begin{displaymath}
\lambda_1=\frac{11+3\sqrt{5}}{4}+\frac{1}{4}\left(150+66\sqrt{5}\right)^{1/2}
\end{displaymath}
and
\begin{displaymath}
\lambda_2=\frac{11-3\sqrt{5}}{4}+\frac{1}{4}\left(150-66\sqrt{5}\right)^{1/2}.
\end{displaymath}

Next, as promised earlier, we return to Theorem~\ref{thm8.4.2}(iii) concerning the golden L-surface.

As in Section~\ref{sec8.2}, we shall apply the eigenvalue-based shortline method to compute the irregularity exponent for geodesic flow on this surface. 

What makes the golden L-surface particularly simple is that its two horizontal streets are similar rectangles, and its two vertical streets are also similar rectangles; see Figure~8.5.3.

\begin{displaymath}
\begin{array}{c}
\includegraphics[scale=0.8]{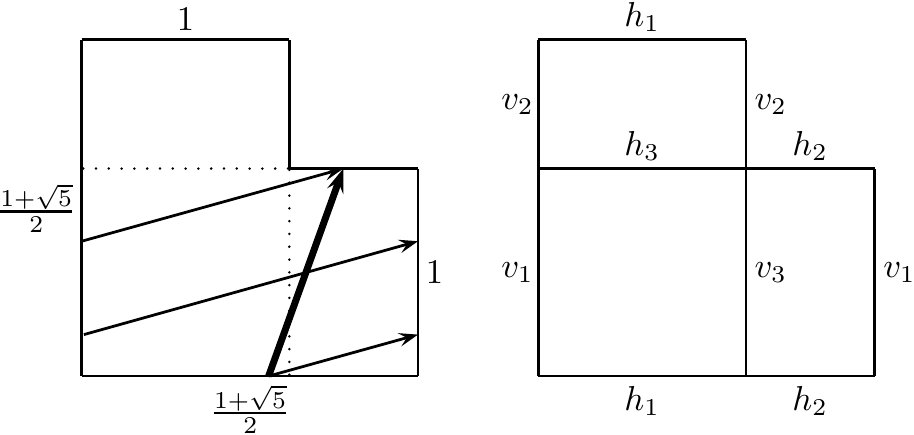}
\vspace{3pt}\\
\mbox{Figure 8.5.3: the golden L-surface, detour crossing and shortcut}
\end{array}
\end{displaymath}

To apply the surplus shortline method, we consider slopes of the special form
\begin{equation}\label{eq8.5.5}
\alpha=\frac{1+\sqrt{5}}{2}a_0+\frac{1}{\frac{1+\sqrt{5}}{2}a_1+\frac{1}{\frac{1+\sqrt{5}}{2}a_2+\cdots}},
\end{equation}
where $a_0,a_1,a_2,\ldots$ are positive integers.
Furthermore, to determine the irregularity exponent explicitly, we need eventual periodicity of the sequence $a_0,a_1,a_2,\ldots.$

The golden L-surface has $3$ faces.

The almost vertical units $h_1h_2,h_1h_3,h_2h_2,h_2h_3,h_3h_1,h_3h_1^*$ can be defined in the same way as shown in Figure~7.1.6.

The almost horizontal units $v_1v_2,v_1v_3,v_2v_2,v_2v_3,v_3v_1,v_3v_1^*$ can be defined in a similar way.

The picture on the left in Figure~8.5.3 illustrates an almost horizontal detour crossing of a horizontal street and its almost vertical shortcut~$h_1h_2$ in the special case when the branching parameter is equal to~$2$.
Using the delete end rule, for a general branching parameter $k\ge1$, the ancestor process can be summarized by
\begin{align}
h_1h_2
&\rightharpoonup v_2v_3,k\times v_3v_1,k\times v_1v_3,
\label{eq8.5.6}
\\
h_1h_3
&\rightharpoonup v_2v_3,k\times v_3v_1,(k-1)\times v_1v_3,
\label{eq8.5.7}
\\
h_2h_2
&\rightharpoonup v_3v_1^*,k\times v_1v_3,(k-1)\times v_3v_1,
\label{eq8.5.8}
\\
h_2h_3
&\rightharpoonup v_3v_1^*,k\times v_1v_3,k\times v_3v_1,
\label{eq8.5.9}
\\
h_3h_1
&\rightharpoonup v_1v_2,(k-1)\times v_2v_2,
\label{eq8.5.10}
\\
h_3h_1^*
&\rightharpoonup v_1v_2,k\times v_2v_2.
\label{eq8.5.11}
\end{align}
These lead to a $6\times6$ transition matrix
\begin{displaymath}
M(k)=\bordermatrix{
&v_1v_2&v_1v_3&v_2v_2&v_2v_3&v_3v_1&v_3v_1^*\cr
h_1h_2&0&k&0&1&k&0\cr 
h_1h_3&0&k-1&0&1&k&0\cr 
h_2h_2&0&k&0&0&k-1&1\cr
h_2h_3&0&k&0&0&k&1\cr
h_3h_1&1&0&k-1&0&0&0\cr
h_3h_1^*&1&0&k&0&0&0}.
\end{displaymath}

Similarly, we can study the ancestor relation of each of the almost horizontal units, again using the delete end rule, and obtain the analogs of
\eqref{eq8.5.6}--\eqref{eq8.5.11}.
These will lead to another $6\time6$ transition matrix.
Since we have listed the almost vertical units and almost horizontal units in lexicographical order, these two $6\times6$ transition matrices are the same.

The eigenvalues of $M(k)$ are two pairs of algebraic conjugates
\begin{align}
&\frac{1+\sqrt{5}}{4}k\pm\left(\left(\frac{1+\sqrt{5}}{4}k\right)^2+1\right)^{1/2},
\nonumber
\\
&\frac{1-\sqrt{5}}{4}k\pm\left(\left(\frac{1-\sqrt{5}}{4}k\right)^2+1\right)^{1/2},
\nonumber
\end{align}
and $(-1\pm\sqrt{3}\ii)/2$ having absolute value~$1$.
Of these, the eigenvalue with the largest absolute value is
\begin{displaymath}
\Lambda 
=\frac{1+\sqrt{5}}{4}k+\left(\left(\frac{1+\sqrt{5}}{4}k\right)^2+1\right)^{1/2}
=\frac{1+\sqrt{5}}{2}k+\frac{1}{\frac{1+\sqrt{5}}{2}k+\frac{1}{\frac{1+\sqrt{5}}{2}k+\cdots}},
\end{displaymath}
\textit{i.e.}, $\Lambda$ is equal to the slope $\alpha$ in \eqref{eq8.5.5} in the special case $a_i=k$ for all $i\ge0$.
The eigenvalue with the second largest absolute value is
\begin{displaymath}
\lambda=\frac{1-\sqrt{5}}{4}k-\left(\left(\frac{1-\sqrt{5}}{4}k\right)^2+1\right)^{1/2}.
\end{displaymath}
The remaining $4$ eigenvalues are irrelevant.

We shall show that the transition matrix $M(k)$ has a conjugate with the form
\begin{equation}\label{eq8.5.12}
P^{-1}M(k)P=\begin{pmatrix}
T&?\\
0&A(k)
\end{pmatrix},
\end{equation}
where
\begin{equation}\label{eq8.5.13}
T=\begin{pmatrix}
\frac{-1-\sqrt{3}\ii}{2}&?
\vspace{2pt}\\
0&\frac{-1+\sqrt{3}\ii}{2}
\end{pmatrix},
\end{equation}
and
\begin{equation}\label{eq8.5.14}
A(k)=\begin{pmatrix}
\frac{(1+\sqrt{5})k}{2}&1&?&?
\vspace{4pt}\\
1&0&?&?\\
0&0&\frac{(1-\sqrt{5})k}{2}&1
\vspace{4pt}\\
0&0&1&0
\end{pmatrix}.
\end{equation}
The description of the matrix $M(k)$ by \eqref{eq8.5.12}--\eqref{eq8.5.14} is extremely convenient.
It reduces the necessary eigenvalue computation of arbitrary products 
\begin{displaymath}
\prod_{i=1}^rM(k_i)
\end{displaymath}
of $6\times6$ matrices with different values of the branching parameter $k_i$ to the much simpler eigenvalue computation of products   
\begin{displaymath}
\prod_{i=1}^r
\begin{pmatrix}
\frac{(1+\sqrt{5})k_i}{2}&1
\vspace{4pt}\\
1&0
\end{pmatrix}
\quad\mbox{and}\quad
\prod_{i=1}^r
\begin{pmatrix}
\frac{(1-\sqrt{5})k_i}{2}&1
\vspace{4pt}\\
1&0
\end{pmatrix}
\end{displaymath}
of $2\times2$ matrices, as the remaining eigenvalues $(-1\pm\sqrt{3}\ii)/2$ are irrelevant.

We now outline the routine deduction of \eqref{eq8.5.12}--\eqref{eq8.5.14}.

We first make use of the fact that $M(k)$ has $2$ eigenvectors that are independent of the branching parameter~$k$.
Together with the eigenvalues, they are
\begin{align}
\lambda_1=\frac{-1-\sqrt{3}\ii}{2},
&\quad\bfv_1=\left(\frac{-1-\sqrt{3}\ii}{2},-1,0,\frac{-1+\sqrt{3}\ii}{2},1,1\right)^T,
\nonumber
\\
\lambda_2=\frac{-1+\sqrt{3}\ii}{2},
&\quad\bfv_2=\left(\frac{-1+\sqrt{3}\ii}{2},-1,0,\frac{-1-\sqrt{3}\ii}{2},1,1\right)^T.
\nonumber
\end{align}
We use a partial diagonalization trick first discussed in \cite[Lemma~4.1.1]{BDY2}.
Let $Q$ be a $6\times6$ invertible matrix such that the first $2$ columns are $\bfv_1,\bfv_2$, and where the remaining columns are
\begin{align}
\bfv_3&=(1,0,-1,-1,-1,0)^T,
\nonumber
\\
\bfv_4&=(0,1,1,-1,1,0)^T,
\nonumber
\\
\bfv_5&=(0,0,0,0,0,-1)^T,
\nonumber
\\
\bfv_6&=(-1,1,1,1,0,0)^T.
\nonumber
\end{align}
Then
\begin{equation}\label{eq8.5.15}
Q^{-1}M(k)Q=\begin{pmatrix}
T&?\\
0&M_4(k)
\end{pmatrix},
\end{equation}
where
\begin{equation}\label{eq8.5.16}
M_4(k)=\begin{pmatrix}
3k-1&2-5k&1&1-3k\\
-k&k-2&1&k-2\\
4k&3-6k&0&2-4k\\
3k&3-4k&-1&3-3k
\end{pmatrix}.
\end{equation}

Let $R$ be a $4\times4$ auxiliary matrix such that its first two columns are
\begin{align}
\bfw_1
&=\left(\frac{\sqrt{5}+1}{2},\frac{\sqrt{5}-3}{2},2,\frac{5-\sqrt{5}}{2}\right)^T,
\nonumber
\\
\bfw_2
&=\left(1,0,\frac{\sqrt{5}+1}{2},1\right)^T.
\nonumber
\end{align}
Routine calculation shows that if $R$ is invertible, then independently of the choice of its third and fourth columns, the conjugate $R^{-1}M_4(k)R$ has the simpler triangular form
\begin{equation}\label{eq8.5.17}
R^{-1}M_4(k)R=\begin{pmatrix}
A_1(k)&?\\
0&A_2(k)
\end{pmatrix},
\end{equation}
where
\begin{equation}\label{eq8.5.18}
A_1(k)=\begin{pmatrix}
\frac{(1+\sqrt{5})k}{2}&1
\vspace{4pt}\\
1&0
\end{pmatrix},
\end{equation}
and $A_2(k)$ is a $2\times2$ matrix that depends on the third and fourth columns of~$R$.
Again with some routine calculation we can find suitable third and fourth columns of $R$ which give
\begin{equation}\label{eq8.5.19}
A_2(k)=\begin{pmatrix}
\frac{(1-\sqrt{5})k}{2}&1
\vspace{4pt}\\
1&0
\end{pmatrix}.
\end{equation}

Clearly \eqref{eq8.5.12}--\eqref{eq8.5.14} follow on combining \eqref{eq8.5.15}--\eqref{eq8.5.19}.

Consider now a geodesic on the golden L-surface with slope $\alpha$ of the form \eqref{eq8.5.5}, where the sequence $a_0,a_1,a_2,\ldots$ of integers is eventually periodic.
Then the shortline of this geodesic has slope $\alpha_1^{-1}$, where
\begin{displaymath}
\alpha_1=\frac{1+\sqrt{5}}{2}a_1+\frac{1}{\frac{1+\sqrt{5}}{2}a_2+\frac{1}{\frac{1+\sqrt{5}}{2}a_3+\cdots}},
\end{displaymath}
and the shortline of this shortline has slope
\begin{displaymath}
\alpha_2=\frac{1+\sqrt{5}}{2}a_2+\frac{1}{\frac{1+\sqrt{5}}{2}a_3+\frac{1}{\frac{1+\sqrt{5}}{2}a_4+\cdots}},
\end{displaymath}
and so on.

\begin{thm}\label{thm8.5.1}
Consider geodesic flow on the golden L-surface with slope $\alpha$ given by \eqref{eq8.5.5}.
If the sequence $a_0,a_1,a_2,\ldots$ has period $k_1,k_2,\ldots,k_r$ eventually, then the irregularity exponent of the geodesic with slope $\alpha$ is equal to
\begin{displaymath}
\frac{\log\vert\lambda\vert}{\log\vert\Lambda\vert},
\end{displaymath}
where $\lambda$ is the eigenvalue with the larger absolute value of the product matrix
\begin{equation}\label{eq8.5.20}
\prod_{i=1}^r
\begin{pmatrix}
\frac{(1-\sqrt{5})k_i}{2}&1
\vspace{4pt}\\
1&0
\end{pmatrix},
\end{equation}
and $\Lambda$ is the eigenvalue with the larger absolute value of the product matrix
\begin{equation}\label{eq8.5.21}
\prod_{i=1}^r
\begin{pmatrix}
\frac{(1+\sqrt{5})k_i}{2}&1
\vspace{4pt}\\
1&0
\end{pmatrix}.
\end{equation}
Alternatively, we have the product formula
\begin{displaymath}
\Lambda=\prod_{i=1}^r\beta_i,
\end{displaymath}
where for every $i=1,\ldots,r$,
\begin{displaymath}
\beta_i=\frac{1+\sqrt{5}}{2}k_i+\frac{1}{\frac{1+\sqrt{5}}{2}k_{i+1}+\frac{1}{\frac{1+\sqrt{5}}{2}k_{i+2}+\cdots}},
\end{displaymath}
corresponding to the periodic sequence
\begin{displaymath}
k_i,k_{i+1},k_{i+2},\ldots,k_r,k_1,k_2,\ldots,k_r,\ldots.
\end{displaymath}
\end{thm}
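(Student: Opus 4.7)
The plan is to exploit the block-triangular normal form \eqref{eq8.5.12}--\eqref{eq8.5.14} established for the single-step transition matrix $M(k)$, so that the asymptotics of an arbitrary product $\prod_{i=1}^r M(k_i)$ reduce to the asymptotics of two uncoupled $2\times 2$ continued-fraction products. The shortline between the almost vertical geodesic of slope $\alpha$ in \eqref{eq8.5.5} and its shift of slope $\alpha_1^{-1}$, then iterated, means that the long-term behavior of edge-cutting numbers of a periodic slope is governed by powers of the matrix $N=\prod_{i=1}^r M(k_i)$, in exactly the same spirit as Section~\ref{sec7.1} and the L-surface computation in \cite{BDY1,BDY2}.

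First, I would conjugate the whole product using the same $P$ from \eqref{eq8.5.12}, so that
\begin{displaymath}
P^{-1}NP=\prod_{i=1}^r\begin{pmatrix}T&?\\0&A(k_i)\end{pmatrix}
=\begin{pmatrix}T^r&?\\0&\prod_{i=1}^r A(k_i)\end{pmatrix}.
\end{displaymath}
Since $T$ has eigenvalues of absolute value $1$, its contribution is irrelevant for the irregularity exponent, exactly as in the ``Important remark'' following \eqref{eq7.2.11}. The block-triangular structure of $A(k)$ in \eqref{eq8.5.14} is preserved under multiplication, and \eqref{eq8.5.17}--\eqref{eq8.5.19} give
\begin{displaymath}
\prod_{i=1}^r A(k_i)=\begin{pmatrix}\prod_{i=1}^r A_1(k_i)&?\\0&\prod_{i=1}^r A_2(k_i)\end{pmatrix},
\end{displaymath}
so the eigenvalues of $N$ (apart from the unit-modulus ones coming from $T$) are exactly the eigenvalues of the two $2\times 2$ products \eqref{eq8.5.20} and \eqref{eq8.5.21}. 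Since $(1+\sqrt 5)k_i>|1-\sqrt 5|k_i$ for every $i$, an easy cone-contraction argument on the positive quadrant shows that the dominant eigenvalue $\Lambda$ comes from \eqref{eq8.5.21} and the sub-dominant relevant eigenvalue $\lambda$ comes from \eqref{eq8.5.20}. The irregularity exponent is then $\log|\lambda|/\log|\Lambda|$ by the same argument used in Theorem~\ref{thm8.2.1} and in Section~\ref{sec7.1}: after $n$ complete periods the relevant edge-cutting numbers grow like $c_1\Lambda^n+c_2\lambda^n+O(1)$, and setting $T=\Lambda^n$ gives fluctuations of order $T^{\log|\lambda|/\log|\Lambda|}$.

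For the product formula $\Lambda=\prod_{i=1}^r\beta_i$, I would use the classical continued-fraction interpretation of the matrices $B_i=\left(\begin{smallmatrix}(1+\sqrt 5)k_i/2&1\\1&0\end{smallmatrix}\right)$. The Perron eigenvector of the cyclic product $B_iB_{i+1}\cdots B_{i+r-1}$ (indices mod $r$) is a positive vector $v_i=(\beta_i,1)^T$ where $\beta_i$ is the purely periodic value
\begin{displaymath}
\beta_i=\tfrac{1+\sqrt 5}{2}k_i+\cfrac{1}{\tfrac{1+\sqrt 5}{2}k_{i+1}+\cfrac{1}{\tfrac{1+\sqrt 5}{2}k_{i+2}+\cdots}},
\end{displaymath}
which is precisely the standard fixed point of the continued-fraction M\"obius action. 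A one-line computation gives $B_i v_{i+1}=\beta_i v_i$, so iterating around the cycle yields
\begin{displaymath}
B_1B_2\cdots B_r\,v_1=\Bigl(\prod_{i=1}^r\beta_i\Bigr)v_1,
\end{displaymath}
identifying $\Lambda$ with $\prod_{i=1}^r\beta_i$.

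The main obstacle I foresee is not the eigenvalue bookkeeping, which is essentially linear algebra once \eqref{eq8.5.12}--\eqref{eq8.5.14} are in hand, but rather justifying that the \emph{coefficient} of the $\lambda$-term in the expansion of the edge-cutting (or face-crossing) count is non-zero for a generic starting segment. This is the same ``good initial vector'' subtlety addressed in the Important remark after \eqref{eq7.2.11}: one must argue that for some choice of initial almost vertical units from the chosen vertex of the golden L-surface, the projection of the initial column vector $\bfw_0$ onto the eigenspace associated to $\lambda$ is non-zero, and this uses the fact that the six eigenvectors of $M(k)$ are linearly independent together with the freedom to append finitely many initial units. Once this non-degeneracy is in place, the asymptotic $\asymp\Lambda^n+|\lambda|^n$ follows and Theorem~\ref{thm8.5.1} is established.
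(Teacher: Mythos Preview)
Your proposal is correct and follows exactly the route the paper takes: the block-triangular conjugation \eqref{eq8.5.12}--\eqref{eq8.5.14} is established in the text immediately preceding the theorem, and the paper (as with Theorem~\ref{thm8.2.1}) leaves the routine deduction of the irregularity exponent and the product formula to the reader, which is precisely what you have supplied. One minor slip: the continued-fraction identity is $B_i v_{i+1}=\beta_{i+1}v_i$ rather than $\beta_i v_i$ (check the second component), but iterating around the cycle still yields $\prod_{i=1}^r\beta_i$, so the conclusion is unaffected.
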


For the golden L-surface, let us now calculate the eigenvalues of $2$-step transition matrix $\bfA$ by first finding the street-spreading matrix for the slope $\alpha$ given by \eqref{eq8.5.5} with $a_i=1$ for every $i\ge0$.
We shall follow the notation in Section~\ref{sec7.2}, and the details are shown in Figure~8.5.4.

\begin{displaymath}
\begin{array}{c}
\includegraphics[scale=0.8]{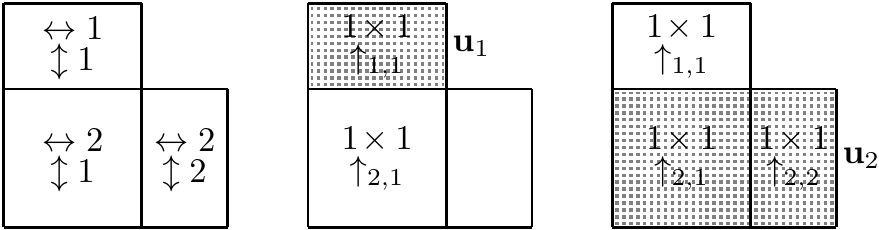}
\vspace{3pt}\\
\mbox{Figure 8.5.4: almost vertical units of type $\uparrow$ in $\bfu_1,\bfu_2$}
\end{array}
\end{displaymath}

Consider the horizontal street corresponding to~$\bfu_1$, as highlighted in the picture in the middle in Figure~8.5.4.
Using \eqref{eq7.2.17}, we have
\begin{align}\label{eq8.5.22}
(\bfA-I)\bfu_1
&
=(\bfA-I)[\{\uparrow_{1,1}\}]+(\bfA-I)[\{\uparrow_{2,1}\}]
\nonumber
\\
&
=(\bfu_1+\bfv_1)+(\bfu_2+\bfv_2).
\end{align}
For the horizontal street corresponding to~$\bfu_2$, as highlighted in the picture on the right in Figure~8.5.4, a similar argument gives
\begin{align}\label{eq8.5.23}
(\bfA-I)\bfu_2
&
=(\bfA-I)[\{\uparrow_{1,1}\}]+(\bfA-I)[\{\uparrow_{2,1},\uparrow_{2,2}\}]
\nonumber
\\
&
=(\bfu_1+\bfv_1)+2(\bfu_2+\bfv_2).
\end{align}

It follows from \eqref{eq8.5.22} and \eqref{eq8.5.23} that the street-spreading matrix is given by
\begin{displaymath}
\bfS=\begin{pmatrix}
1&1\\
1&2
\end{pmatrix},
\end{displaymath}
with eigenvalues
\begin{displaymath}
\tau_1=\frac{3+\sqrt{5}}{2}
\quad\mbox{and}\quad
\tau_2=\frac{3-\sqrt{5}}{2}.
\end{displaymath}
Using \eqref{eq7.2.38}, the corresponding eigenvalues of $\bfA$ are
\begin{displaymath}
\lambda\left(\frac{3+\sqrt{5}}{2};\pm\right)
=\frac{7+\sqrt{5}}{4}\pm\frac{1}{2}\left(\left(\frac{3+\sqrt{5}}{2}\right)^2+4\left(\frac{3+\sqrt{5}}{2}\right)\right)^{1/2}
\end{displaymath}
and
\begin{displaymath}
\lambda\left(\frac{3-\sqrt{5}}{2};\pm\right)
=\frac{7-\sqrt{5}}{4}\pm\frac{1}{2}\left(\left(\frac{3-\sqrt{5}}{2}\right)^2+4\left(\frac{3-\sqrt{5}}{2}\right)\right)^{1/2}.
\end{displaymath}
The two largest eigenvalues are therefore
\begin{displaymath}
\lambda_1=\frac{7+\sqrt{5}}{4}+\frac{1}{2}\left(\left(\frac{3+\sqrt{5}}{2}\right)^2+4\left(\frac{3+\sqrt{5}}{2}\right)\right)^{1/2}
\end{displaymath}
and
\begin{displaymath}
\lambda_2=\frac{7-\sqrt{5}}{4}+\frac{1}{2}\left(\left(\frac{3-\sqrt{5}}{2}\right)^2+4\left(\frac{3-\sqrt{5}}{2}\right)\right)^{1/2}.
\end{displaymath}

For the corresponding $1$-step transition matrix, we shall determine the eigenvalues $\Lambda$ and $\lambda$ by using \eqref{eq8.5.20} and \eqref{eq8.5.21} with $r=1$ and $k_1=1$.
The eigenvalue with the larger absolute value of the matrices
\begin{displaymath}
\begin{pmatrix}
\frac{1+\sqrt{5}}{2}&1
\vspace{4pt}\\
1&0
\end{pmatrix}
\quad\mbox{and}\quad
\begin{pmatrix}
\frac{1-\sqrt{5}}{2}&1
\vspace{4pt}\\
1&0
\end{pmatrix}
\end{displaymath}
are respectively
\begin{displaymath}
\Lambda=\frac{1+\sqrt{5}}{4}+\frac{1}{2}\left(\left(\frac{1+\sqrt{5}}{2}\right)^2+4\right)^{1/2}
\end{displaymath}
and
\begin{displaymath}
\lambda=\frac{1-\sqrt{5}}{4}+\frac{1}{2}\left(\left(\frac{1-\sqrt{5}}{2}\right)^2+4\right)^{1/2}.
\end{displaymath}
Note that $\lambda_1=\Lambda^2$ and $\lambda_2=\lambda^2$.

We complete this section by calculating the eigenvalues of the $2$-step transition matrix $\bfA$ for the golden brick, first introduced at the end of Section~\ref{sec8.4}, for a geodesic with a suitable slope.

For the $4$-copy translation surface of the golden brick, note that the streets of width $1$ has length $2+2h$, and so normalized length $2+2h$, while the streets of width $h$ has length~$4$, and so normalized length
\begin{displaymath}
\frac{4}{h}=2(2+2h),
\end{displaymath}
so that $h^*=v^*=2(2+2h)$.
To visualize the $4$-copy version of the golden brick, we refer the reader to Figures 7.2.12--7.2.14 which illustrate the $4$-copy version of the surface of the unit cube.
We obtain the $4$-copy version of the golden brick if we shorten the edges $a_1,a_2,a_3,a_4$, and those in between, in Figure~7.2.12 from length $1$ to length~$h$.
Then the horizontal streets $3,6$ and vertical streets $2,6$ in Figure~7.2.13 still have length $4$ but now have width $h$ instead of~$1$, and we can obtain an analog of Figure~7.2.14 with thinner rows $3,6$ and columns $2,6$.

For simplicity, we consider a geodesic with slope
\begin{displaymath}
\alpha=2(2+2h)+\frac{1}{2(2+2h)+\frac{1}{2(2+2h)+\frac{1}{2(2+2h)+\cdots}}}.
\end{displaymath}

As before, we define the matrices $\bfu_i,\bfv_i$, $i=1,\ldots,6$, according to \eqref{eq7.2.14} and \eqref{eq7.2.15}.
Then Figures 8.5.5--8.5.7 below are the analogs of Figures 7.2.15 and 7.2.16 for the surface of the cube.

\begin{displaymath}
\begin{array}{c}
\includegraphics[scale=0.8]{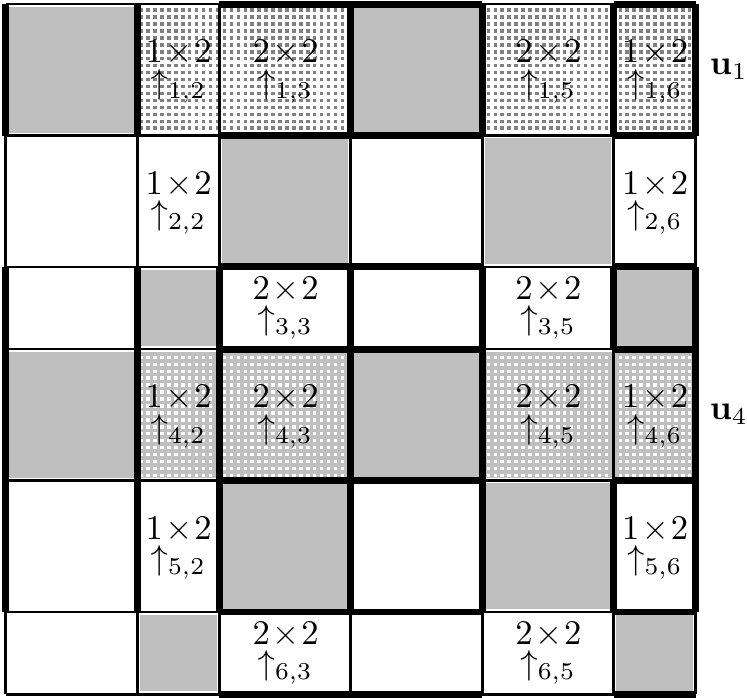}
\vspace{3pt}\\
\mbox{Figure 8.5.5: almost vertical units of type $\uparrow$ in $\bfu_1,\bfu_4$}
\end{array}
\end{displaymath}

For the horizontal streets corresponding to $\bfu_1,\bfu_4$, as highlighted in Figure~8.5.5, we have, applying \eqref{eq7.2.17},
\begin{align}\label{eq8.5.24}
(\bfA-I)\bfu_1
&
=(\bfA-I)\bfu_4
\nonumber
\\
&
=(\bfA-I)[\{2\uparrow_{1,2},4\uparrow_{1,3},4\uparrow_{1,5},2\uparrow_{1,6}\}]+(\bfA-I)[\{2\uparrow_{2,2},2\uparrow_{2,6}\}]
\nonumber
\\
&\qquad
+(\bfA-I)[\{4\uparrow_{3,3},4\uparrow_{3,5}\}]
+(\bfA-I)[\{2\uparrow_{4,2},4\uparrow_{4,3},4\uparrow_{4,5},2\uparrow_{4,6}\}]
\nonumber
\\
&\qquad
+(\bfA-I)[\{2\uparrow_{5,2},2\uparrow_{5,6}\}]
+(\bfA-I)[\{4\uparrow_{6,3},4\uparrow_{6,5}\}]
\nonumber
\\
&
=12(\bfu_1+\bfv_1)+4(\bfu_2+\bfv_2)+8(\bfu_3+\bfv_3)
\nonumber
\\
&\qquad
+12(\bfu_4+\bfv_4)+4(\bfu_5+\bfv_5)+8(\bfu_6+\bfv_6).
\end{align}
\begin{displaymath}
\begin{array}{c}
\includegraphics[scale=0.8]{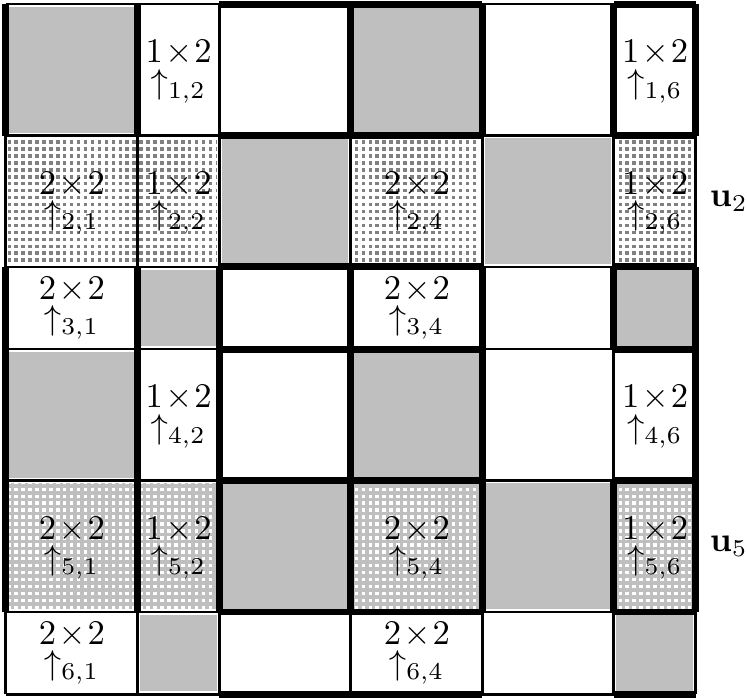}
\vspace{3pt}\\
\mbox{Figure 8.5.6: almost vertical units of type $\uparrow$ in $\bfu_2,\bfu_5$}
\end{array}
\end{displaymath}

For the horizontal streets corresponding to~$\bfu_2,\bfu_5$, as highlighted in Figure~8.5.6, a similar argument gives
\begin{align}\label{eq8.5.25}
(\bfA-I)\bfu_2
&
=(\bfA-I)\bfu_5
\nonumber
\\
&
=4(\bfu_1+\bfv_1)+12(\bfu_2+\bfv_2)+8(\bfu_3+\bfv_3)
\nonumber
\\
&\qquad
+4(\bfu_4+\bfv_4)+12(\bfu_5+\bfv_5)+8(\bfu_6+\bfv_6).
\end{align}
\begin{displaymath}
\begin{array}{c}
\includegraphics[scale=0.8]{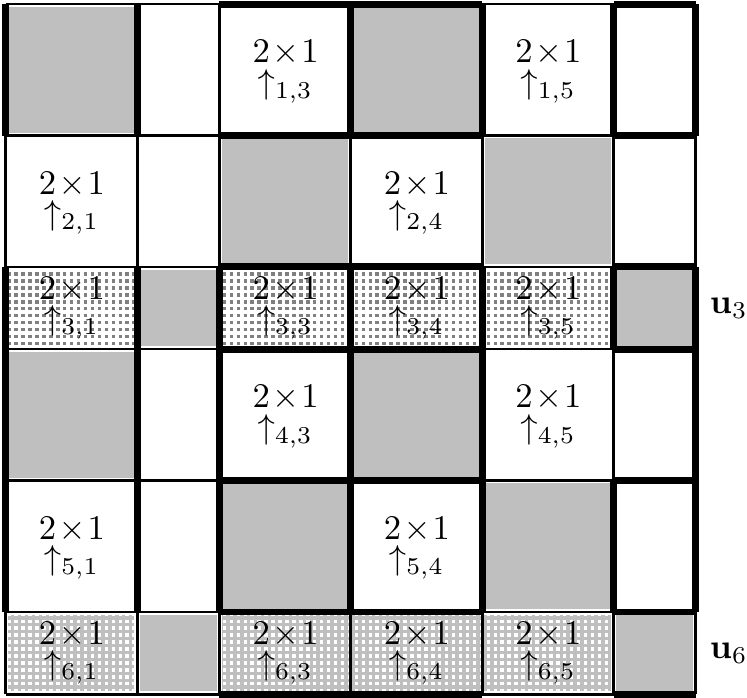}
\vspace{3pt}\\
\mbox{Figure 8.5.7: almost vertical units of type $\uparrow$ in $\bfu_3,\bfu_6$}
\end{array}
\end{displaymath}

For the horizontal streets corresponding to~$\bfu_3,\bfu_6$, as highlighted in Figure~8.5.7, a similar argument gives
\begin{align}\label{eq8.5.26}
(\bfA-I)\bfu_3
&
=(\bfA-I)\bfu_6
\nonumber
\\
&
=4(\bfu_1+\bfv_1)+4(\bfu_2+\bfv_2)+8(\bfu_3+\bfv_3)
\nonumber
\\
&\qquad
+4(\bfu_4+\bfv_4)+4(\bfu_5+\bfv_5)+8(\bfu_6+\bfv_6).
\end{align}

It follows from \eqref{eq8.5.24}--\eqref{eq8.5.26} that the street-spreading matrix is given by
\begin{displaymath}
\bfS=\begin{pmatrix}
12&4&4&12&4&4\\
4&12&4&4&12&4\\
8&8&8&8&8&8\\
12&4&4&12&4&4\\
4&12&4&4&12&4\\
8&8&8&8&8&8
\end{pmatrix},
\end{displaymath}
with non-zero eigenvalues $\tau_1=24+8\sqrt{5}$, $\tau_2=16$ and $\tau_3=24-8\sqrt{5}$.
Using \eqref{eq7.2.38}, the corresponding eigenvalues of $\bfA$ are
\begin{align}
\lambda(24+8\sqrt{5};\pm)
&=13+4\sqrt{5}\pm\left(248+104\sqrt{5}\right)^{1/2},
\nonumber
\\
\lambda(16;\pm)
&=9\pm4\sqrt{5},
\nonumber
\\
\lambda(24-8\sqrt{5};\pm)&
=13-4\sqrt{5}\pm\left(248-104\sqrt{5}\right)^{1/2}.
\nonumber
\end{align}

%
%

\subsection{Surfaces tiled with congruent equilateral triangles}\label{sec8.6}

We wish to complete our study of geodesic flow on the surface of each of the $5$ platonic solids.

We have already discussed in \cite{BDY1} the superdensity of geodesic flow on the regular tetrahedron surface which is integrable.
Earlier in this paper, we have also discussed the superdensity of geodesic flow on the cube surface and the regular dodacahedron surface which are non-integrable.

The last two examples of the $5$ platonic solids are the regular octahedron and the regular icosahedron.
Both of these surfaces belong to the large class of flat surfaces where the faces are congruent equilateral triangles.
We refer to such a surface as a \textit{polytriangle surface}.
Every polytriangle surface has the crucial property that the union of any two equilateral triangle faces sharing an edge forms a $60$-degree rhombus, which turns out to be a perfect substitute for squares.
Thus polytriangle surfaces are perfect analogs of polysquare translation surfaces, and both versions of the shortline method work for this class.

We give here a detailed discussion of geodesic flow on the regular octahedron surface, as shown in Figure~8.6.1.
Geodesic flow on the icosahedron surface goes in a similar way, and we omit it.

\begin{displaymath}
\begin{array}{c}
\includegraphics[scale=0.8]{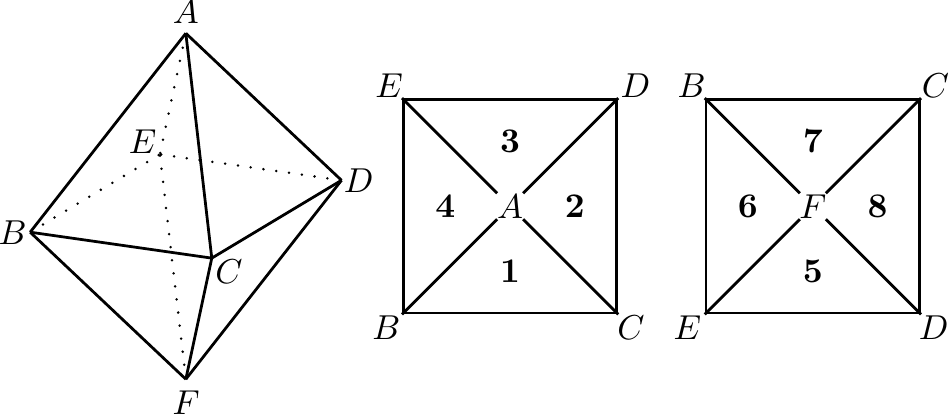}
\vspace{3pt}\\
\mbox{Figure 8.6.1: the regular octahedron, top view and bottom view}
\end{array}
\end{displaymath}

For our convenience, we label the faces as in Figure~8.6.1.
Thus we have face and vertex pairings given by
\begin{displaymath}
\begin{array}{llll}
(\mathbf{1},ABC),
&\quad(\mathbf{2},ACD),
&\quad(\mathbf{3},ADE),
&\quad(\mathbf{4},AEB),
\vspace{4pt}\\
(\mathbf{5},FED),
&\quad(\mathbf{6},FBE),
&\quad(\mathbf{7},FCB),
&\quad(\mathbf{8},FDC),
\end{array}
\end{displaymath}
where the three vertices of each triangle face are given in anticlockwise order.
Note that we have four pairs of vertex-disjoint and parallel faces, given by
\begin{displaymath}
(\mathbf{1},\mathbf{5}),
\quad(\mathbf{2},\mathbf{6}),
\quad(\mathbf{3},\mathbf{7}),
\quad(\mathbf{4},\mathbf{8}).
\end{displaymath}

Between any vertex-disjoint pair of faces we have a \textit{street} made up of $6$ triangle faces.
In Figure~8.6.2, we show $4$ different nets of the regular octahedron, together with a surplus detour crossing of a geodesic in a given direction.

\begin{displaymath}
\begin{array}{c}
\includegraphics[scale=0.8]{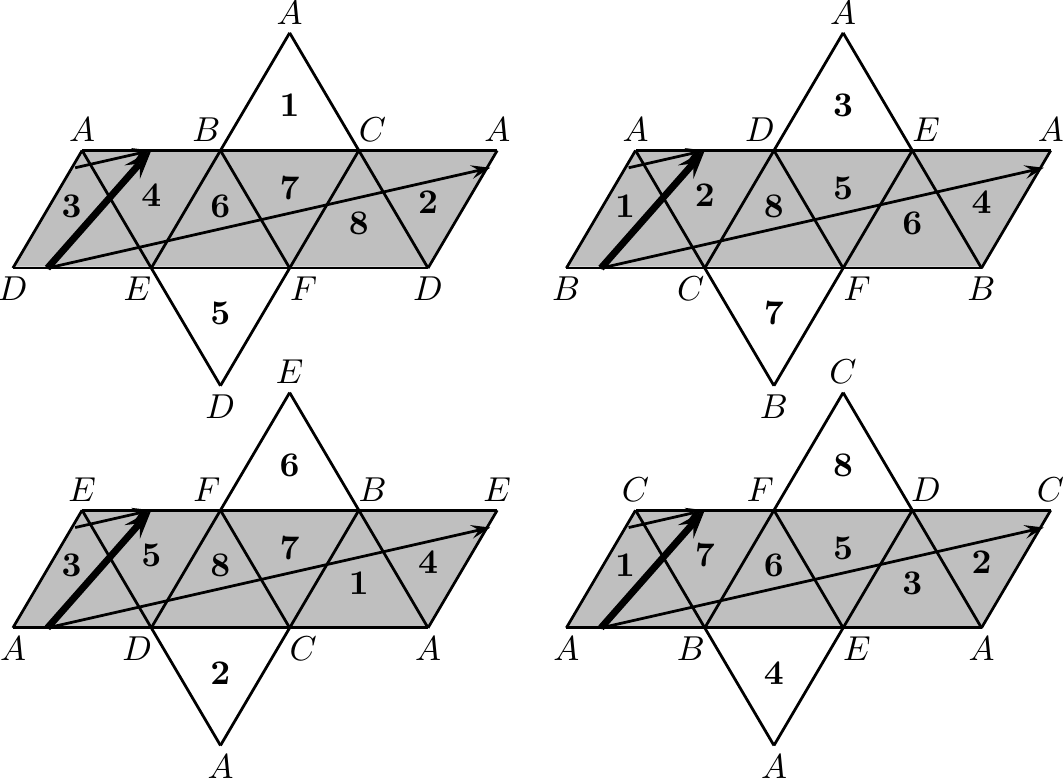}
\vspace{3pt}\\
\mbox{Figure 8.6.2: nets of the regular octahedron surface,}
\\
\mbox{detour crossings and shortcuts}
\end{array}
\end{displaymath}

The picture on the top left of Figure~8.6.2 is a net of the regular octahedron surface.
The long parallelogram in the middle is a street between faces $\mathbf{1}$ and~$\mathbf{5}$.

More precisely, the cycle of triangle faces $\mathbf{3},\mathbf{4},\mathbf{6},\mathbf{7},\mathbf{8},\mathbf{2}$ is the street, with the left and right edges of the parallelogram identified.
This street also has a natural decomposition into $60$-degree rhombi formed from the union of the pairs $(\mathbf{3},\mathbf{4})$, $(\mathbf{6},\mathbf{7})$ and
$(\mathbf{8},\mathbf{2})$.

The detour crossings in all $4$ different nets can be thought of as all going in the same direction.

The copies on the top left and bottom right have parallel edges~$AB$, while the copies on the top right and bottom left have parallel edges~$AD$.
On the other hand, the two copies on the left have parallel edges~$CA$, while the two copies on the right have parallel edges~$EA$.

Note that each shaded triangle in Figure~8.6.2 has $2$ more copies of itself rotated by $120$ degrees and $240$ degrees.
We therefore conclude that a geodesic crosses every triangle face in $3$ different directions, at $120$ degrees to each other.

Note also that every triangle face is complemented by $3$ other triangle faces to form $60$-degree rhombi.
For instance, for the triangle face $\mathbf{3}$, we have the pair $(\mathbf{3},\mathbf{4})$ in the top left, the pair $(\mathbf{3},\mathbf{5})$ in the bottom left, and the pair
$(\mathbf{3},\mathbf{2})$ in the bottom right.

Since a $60$-degree rhombus is just a tilted analog of a square, and can be viewed as one, it is straightforward to adapt the shortline method.
To obtain some special slopes, we consider the analogy illustrated in Figure~8.6.3.

\begin{displaymath}
\begin{array}{c}
\includegraphics[scale=0.8]{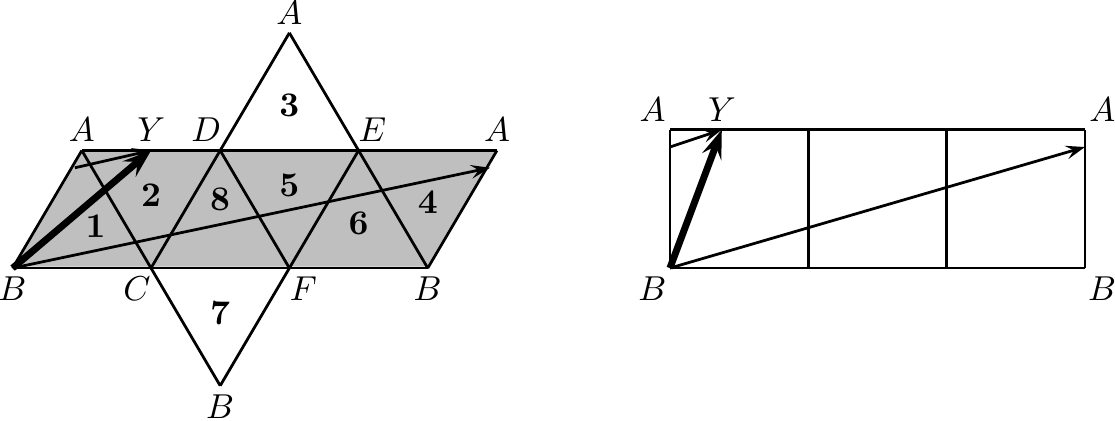}
\vspace{3pt}\\
\mbox{Figure 8.6.3: a particular geodesic and mutual shortlines}
\end{array}
\end{displaymath}

For the polysquare model in the picture on the right, it is clear that we consider an almost horizontal geodesic of slope $\alpha_k^{-1}$, where
\begin{equation}\label{eq8.6.1}
\alpha_k=[3k;3k,3k,3k,\ldots]=\frac{\sqrt{9k^2+4}+3k}{2}.
\end{equation}
Then the shortline has slope~$\alpha_k$.
Returning to the original polytriangle surface, a suitable angle corresponding to $\alpha_k$ can be determined.
In particular, if the length of the edge $DE$ is equal to~$1$, then the length of the segment $AY$ should be equal to~$\alpha_k^{-1}$.
Elementary calculation shows that the detour crossing in the picture on the left in Figure~8.6.3 must have slope
\begin{equation}\label{eq8.6.2}
\beta_k^{-1}=\frac{\sqrt{3}}{2(3+\frac{1}{2}+\alpha_k^{-1})}=\frac{\sqrt{3}}{7+2\alpha_k^{-1}},
\end{equation}
which is $\sqrt{3}$ times a quadratic irrational.

Indeed, the case of the regular octahedron surface illustrates very well the whole class of polytriangle surfaces.

We next turn our attention to $60$-degree rhombus billiard.
We use the same idea as for the regular octagon billiard and regular pentagon billiard.
We join up rhombi in such a way that neighboring rhombi are reflections of each other, and end up with a double ring of $6$ rhombi, as shown in Figure~8.6.4.
It is a double ring, as the third rhombus does not join up with the first, due to the edges $a_1$ and $b_2$ having different directions.
This is sometimes known as the \textit{translation surface for $60$-degree rhombus billiard}.
Note that the edge labellings in Figure~8.6.4 are obtained by following the convention discussed in the Remark after Figure~8.3.3.
Identifying the boundary edges $a_1,b_2,c_1,c_2,c_3,d_1,d_2,d_3$, we obtain a polytriangle surface, which is intuitively a \textit{double hexagon}.

\begin{displaymath}
\begin{array}{c}
\includegraphics[scale=0.8]{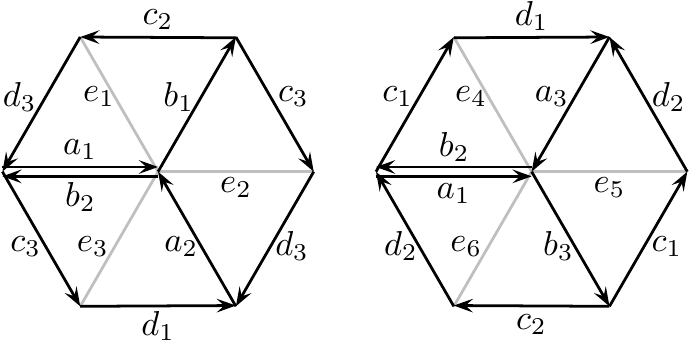}
\vspace{3pt}\\
\mbox{Figure 8.6.4: the translation surface of $60$-degree rhombus billiard}
\\
\mbox{with edge labellings}
\end{array}
\end{displaymath}

Furthermore, we label the $6$ rhombi as in Figure~8.6.5.

\begin{displaymath}
\begin{array}{c}
\includegraphics[scale=0.8]{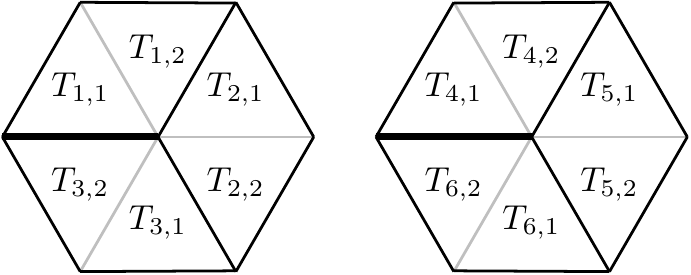}
\vspace{3pt}\\
\mbox{Figure 8.6.5: the faces of the translation surface of $60$-degree rhombus billiard}
\end{array}
\end{displaymath}

It is not difficult to see that $60$-degree rhombus billiard is equivalent to $1$-direction geodesic flow on the polytriangle surface defined by the double hexagon shown in Figures 8.6.4 and~8.6.5.
We refer to this polytriangle surface as the RB-surface, or the rhombus billiard surface.

We can list the streets of the RB-surface explicitly.

The long parallelogram in the middle of Figure~8.6.6 is a street of the RB-surface, comprising the cycle
\begin{equation}\label{eq8.6.3}
T_{1,1},T_{1,2},T_{2,1},T_{3,2},T_{3,1},T_{2,2}
\end{equation}
of $6$ equilateral triangles.
The key fact is that
this street has a natural decomposition into $60$-degree rhombi
\begin{displaymath}
T_{1,1}\cup T_{1,2},
\quad
T_{2,1}\cup T_{3,2},
\quad
T_{3,1}\cup T_{2,2}.
\end{displaymath}
\begin{displaymath}
\begin{array}{c}
\includegraphics[scale=0.8]{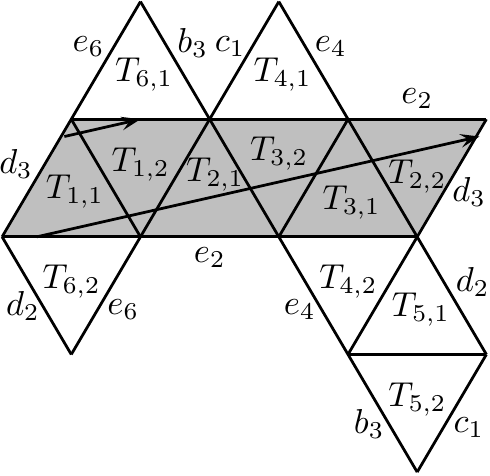}
\vspace{3pt}\\
\mbox{Figure 8.6.6: a net and a street with detour crossing of the RB-surface}
\end{array}
\end{displaymath}

Beside the street \eqref{eq8.6.3}, there are $5$ more streets of the RB-surface, made up of the cycles
\begin{align}
&
T_{1,2},T_{2,1},T_{2,2},T_{1,1},T_{6,2},T_{6,1},
\label{eq8.6.4}
\\
&
T_{2,1},T_{2,2},T_{3,1},T_{4,2},T_{4,1},T_{3,2},
\label{eq8.6.5}
\\
&
T_{3,1},T_{3,2},T_{4,1},T_{5,2},T_{5,1},T_{4,2},
\label{eq8.6.6}
\\
&
T_{4,1},T_{4,2},T_{5,1},T_{6,2},T_{6,1},T_{5,2},
\label{eq8.6.7}
\\
&
T_{5,1},T_{5,2},T_{6,1},T_{1,2},T_{1,1},T_{6,2}.
\label{eq8.6.8}
\end{align}
However, we only need some of these streets.

Figure~8.6.7 gives a description of the RB-surface as a polyrhombus translation surface.

\begin{displaymath}
\begin{array}{c}
\includegraphics[scale=0.8]{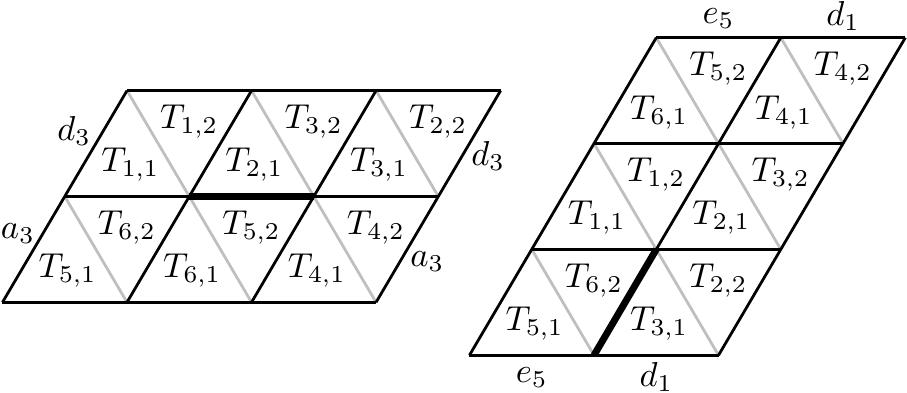}
\vspace{3pt}\\
\mbox{Figure 8.6.7: the RB-surface as a polyrhombus translation surface}
\end{array}
\end{displaymath}

In the picture on the left, we see $2$ horizontal streets.
The top horizontal street is \eqref{eq8.6.3}, while the bottom horizontal street is \eqref{eq8.6.7}.
In the picture on the right, we see $2$ tilted streets that are the analogs of vertical streets in polysquare translation surfaces.
The left tilted street is \eqref{eq8.6.8}, while the right tilted street is \eqref{eq8.6.5}.

Noting the similarity between Figure~8.6.3 and Figure~8.6.6, we conclude that for the shortline method to work, the detour crossing in Figure~8.6.6 must have slope
$\beta_k^{-1}$, given by \eqref{eq8.6.1} and \eqref{eq8.6.2}.

We have the following analog of Theorem~\ref{thm8.1.1}(iii)--(iv).

\begin{thm}\label{thm8.6.1}
\emph{(i)}
For any arbitrary polytriangle surface, there exist infinitely many explicit slopes, depending on the surface, such that any half-infinite geodesic with such a slope exhibits superdensity.

\emph{(ii)}
For any arbitrary table of polytriangle shape, there exist infinitely many explicit slopes, depending on the shape of the table, such that any half-infinite billiard orbit having such an initial slope exhibits superdensity.

\emph{(iii)}
For infinitely many of these slopes in parts \emph{(i)} and \emph{(ii)} that give rise to superdensity, we can explicitly compute the corresponding irregularity exponents which describe the time-quantitative aspects of equidistribution.
\end{thm}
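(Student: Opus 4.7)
The plan is to mimic the polysquare framework, exploiting the fact that every pair of equilateral triangle faces sharing an edge forms a $60$-degree rhombus, which plays the same structural role as a unit square. First I would fix any finite polytriangle surface $\PPP$, group its triangle faces into $60$-degree rhombi along a chosen edge direction, and then again along one of the other two edge directions, thereby presenting $\PPP$ as a polyrhombus translation surface with two transversal street directions, playing the roles of ``horizontal'' and ``vertical'' streets. The double hexagon of Figure~8.6.7 and the four nets of the regular octahedron in Figure~8.6.2 are the template; the same construction works for any polytriangle surface because of the $120$-degree rotational symmetry of the equilateral triangle, which guarantees that every face belongs to a street in each of the three natural directions, hence at least in the two we select. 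Once this is done, the $60$-degree rhombi constitute bases of \emph{almost horizontal} and \emph{almost vertical} units in exactly the sense of Section~\ref{sec7.2}, with the sole modification that the base vectors are the sides of a rhombus rather than of a square.

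Next I would introduce a tilted analog of the slopes \eqref{eq7.2.1} as motivated by \eqref{eq8.6.1}--\eqref{eq8.6.2} and Figure~8.6.3. Pick an infinite sequence $a_0,a_1,a_2,\ldots$ of positive integers, each a multiple of the street-LCM of $\PPP$ in both directions, and set $\alpha=[a_0;a_1,a_2,\ldots]$. The corresponding geometric direction on $\PPP$ is $\sqrt{3}$ times an elementary rational function of $\alpha$, as in \eqref{eq8.6.2}. For such a direction, the detour-shortcut dynamics between the two street families is combinatorially identical to the polysquare case: the delete end rule / keep end rule give the ancestor relations \eqref{eq7.2.3}--\eqref{eq7.2.6}, and the $2$-step transition matrix $\bfA$ is built exactly as in Section~\ref{sec7.2}.

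For part (i), I would then invoke the eigenvalue-free version of the shortline method from \cite[Theorems 6.1.1 and 6.4.1]{BCY}, applied to this polyrhombus incarnation of $\PPP$, under the boundedness (badly approximable) condition on the digits $a_i$. This yields infinitely many explicit slopes for which every half-infinite geodesic on $\PPP$ is superdense, giving (i). For part (ii), I would observe that every interior angle of a polytriangle billiard table is an integer multiple of $\pi/3$, so the standard unfolding, grouping six rotational copies around each vertex, converts any such billiard into a $1$-direction geodesic flow on an ambient polytriangle translation surface $\PPP^*$; part (i) applied to $\PPP^*$ then gives superdense orbits in the original table for infinitely many initial slopes. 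For part (iii), when the sequence $a_0,a_1,a_2,\ldots$ is eventually periodic, $\alpha$ is a quadratic irrational, and I would apply the eigenvalue-based machinery: the street-spreading matrix $\bfS$ of \eqref{eq7.2.36} and the formula \eqref{eq7.2.38} of Theorem~\ref{thm7.2.2}, in the generalized form indicated in the Remark following that theorem (to accommodate pairs of faces lying on the same horizontal and vertical street). The two largest eigenvalues of $\bfA$ that result, exactly as in Theorems~\ref{thm8.2.1} and~\ref{thm8.5.1}, determine the irregularity exponent, and zigzagging as in \cite[Section~3.3]{BDY1} then delivers the time-quantitative equidistribution statement.

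The main obstacle will be to verify that the invariant-subspace argument of Lemmas~\ref{lem7.2.1}--\ref{lem7.2.3} and the edge cutting lemma (Theorem~\ref{thm7.2.1}) carry through when two faces of $\PPP$ may share both a horizontal and a vertical street simultaneously, which is the generic situation for polytriangle surfaces such as the RB-surface and the octahedron. The key identities \eqref{eq7.2.17}, \eqref{eq7.2.18} and \eqref{eq7.2.21} have to be rewritten in the face-indexed form described in the Remark after Theorem~\ref{thm7.2.2}, using $\HS(S_\delta)$ and $\VS(S_\delta)$ in place of $\HS_i$ and $\VS_j$; once this is done, the proofs of Lemmas~\ref{lem7.2.2}--\ref{lem7.2.3} and Theorem~\ref{thm7.2.2} go through verbatim because they rely only on cyclic symmetry within a street and on counting multiplicities with the right $*$-convention, not on the shape of the faces. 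After this bookkeeping, the remaining work is the routine construction, for any particular $\PPP$ and any particular period, of the product of $2\times2$ tail matrices analogous to those of Theorems~\ref{thm8.2.1} and~\ref{thm8.5.1}, from which the explicit irregularity exponent is read off.
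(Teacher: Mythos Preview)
Your proposal is correct and follows essentially the same approach as the paper. The paper does not give a formal proof environment for Theorem~\ref{thm8.6.1}; the justification is the preceding discussion in Section~\ref{sec8.6}, which establishes exactly what you outline: the $60$-degree rhombus as a tilted substitute for the unit square (Figure~8.6.3), the polyrhombus street decomposition (streets \eqref{eq8.6.3}--\eqref{eq8.6.8} and Figure~8.6.7), the slope family via \eqref{eq8.6.1}--\eqref{eq8.6.2}, and the reduction of billiards to geodesic flow by unfolding (Figures~8.6.4--8.6.7 for the rhombus table). Your write-up is in fact more explicit than the paper on several points, notably the need for the face-indexed version of the invariant-subspace machinery from the Remark after Theorem~\ref{thm7.2.2}, which the paper invokes only by example in the RB-surface computation at the end of the section.
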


For the RB-surface, we now attempt to find the eigenvalues of its $2$-step transition matrix~$\bfA$.
This surface can be viewed as a polyrhombus translation surface, and we shall use an analogy with a corresponding polysquare translation surface as shown in Figure~8.6.8.

\begin{displaymath}
\begin{array}{c}
\includegraphics[scale=0.8]{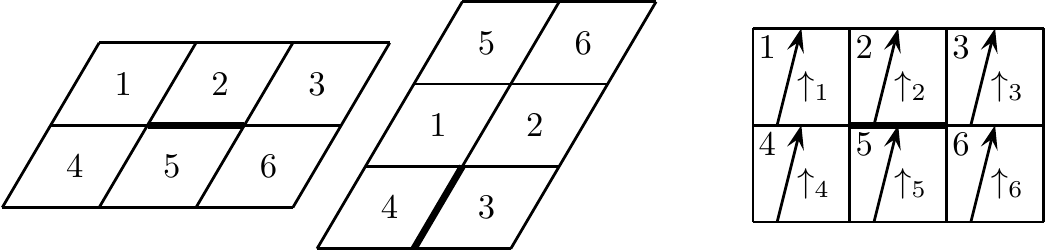}
\vspace{3pt}\\
\mbox{Figure 8.6.8: analogy of the RB-surface with a polysquare translation surface}
\end{array}
\end{displaymath}

For simplicity of notation, the rhombi are renumbered $1,2,3,4,5,6$, as shown in the picture on the left in Figure~8.6.8, and we consider the corresponding polysquare translation surface in the picture on the right.
Here the horizontal streets are $1,2,3$ and $4,5,6$, while the vertical streets are $5,1,4$ and $6,2,3$.
Both horizontal streets and both vertical streets have lengths~$3$.
Note here that distinct square faces can fall into the same horizontal street and the same vertical street, so we shall adapt our notation from Section~\ref{sec7.2} for the determination of the street-spreading matrix.

We also show in the picture on the right in Figure~8.6.8 the almost vertical units of type $\uparrow$ in the polysquare translation surface.
We have not shown the almost vertical units of type $\nuparrow$ here.

Let
\begin{displaymath}
J_1=\{1,2,3\}
\quad\mbox{and}\quad
J_2=\{4,5,6\}
\end{displaymath}
denote the horizontal streets, and let
\begin{displaymath}
I_1=I_4=I_5=\{1,4,5\}
\quad\mbox{and}\quad
I_2=I_3=I_6=\{2,3,6\}
\end{displaymath}
denote the vertical streets.

For the polysquare translation surface, we thus consider slopes of the form
\begin{displaymath}
\alpha_k=3k+\frac{1}{3k+\frac{1}{3k+\cdots}}.
\end{displaymath}
For simplicity, however, we consider only the special case with branching parameter $k=1$.

Corresponding to \eqref{eq7.2.14}, we define the column matrices
\begin{equation}\label{eq8.6.9}
\bfu_1=[\{\uparrow_s:j\in J_1^*,s\in I_j^*\}]
\quad\mbox{and}\quad
\bfu_2=[\{\uparrow_s:j\in J_2^*,s\in I_j^*\}].
\end{equation}
Here $J_1^*$, $J_2^*$ and $I_j^*$ denotes that the edges are counted with multiplicity.
We also define the column matrices $\bfv_1$ and $\bfv_2$ analogous to \eqref{eq7.2.15}, but their details are not important.
Also, analogous to \eqref{eq7.2.17}, we have
\begin{equation}\label{eq8.6.10}
(\bfA-I)[\{\uparrow_s\}]=\left\{\begin{array}{ll}
\bfu_1+\bfv_1,&\mbox{if $s\in J_1$},\\
\bfu_2+\bfv_2,&\mbox{if $s\in J_2$}.
\end{array}\right.
\end{equation}

We now combine \eqref{eq8.6.9} and \eqref{eq8.6.10}.
For the horizontal street corresponding to~$\bfu_1$, as highlighted in the picture on the left in Figure~8.6.9, we have
\begin{align}\label{eq8.6.11}
(\bfA-I)\bfu_1
&
=(\bfA-I)[\{\uparrow_1,2\uparrow_2,2\uparrow_3\}]
+(\bfA-I)[\{\uparrow_4,\uparrow_5,2\uparrow_6\}]
\nonumber
\\
&
=5(\bfu_1+\bfv_1)+4(\bfu_2+\bfv_2).
\end{align}
For the horizontal street corresponding to~$\bfu_2$, as highlighted in the picture on the right in Figure~8.6.9, we have
\begin{align}\label{eq8.6.12}
(\bfA-I)\bfu_2
&
=(\bfA-I)[\{2\uparrow_1,\uparrow_2,\uparrow_3\}]
+(\bfA-I)[\{2\uparrow_4,2\uparrow_5,\uparrow_6\}]
\nonumber
\\
&
=4(\bfu_1+\bfv_1)+5(\bfu_2+\bfv_2).
\end{align}
\begin{displaymath}
\begin{array}{c}
\includegraphics[scale=0.8]{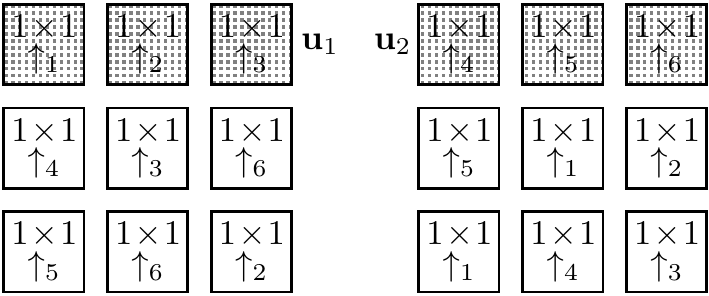}
\vspace{3pt}\\
\mbox{Figure 8.6.9: almost vertical units of type $\uparrow$ in $\bfu_1$ and $\bfu_2$}
\end{array}
\end{displaymath}

It follows from \eqref{eq8.6.11} and \eqref{eq8.6.12} that the street-spreading matrix is given by
\begin{displaymath}
\bfS=\begin{pmatrix}
5&4\\
4&5
\end{pmatrix}.
\end{displaymath}
The eigenvalues of $\bfS$ are $\tau_1=9$ and $\tau_2=1$.
Using \eqref{eq7.2.38}, the corresponding eigenvalues of $\bfA$ are
\begin{displaymath}
\lambda(9;\pm)=\frac{11}{2}\pm\frac{1}{2}\sqrt{117}
\quad\mbox{and}\quad
\lambda(1;\pm)=\frac{3}{2}\pm\frac{1}{2}\sqrt{5}.
\end{displaymath}
The two largest eigenvalues are therefore
\begin{displaymath}
\lambda_1=\frac{11}{2}+\frac{1}{2}\sqrt{117}
\quad\mbox{and}\quad
\lambda_2=\frac{3}{2}+\frac{1}{2}\sqrt{5}.
\end{displaymath}

We have studied geodesics and billiards on many flat translation surfaces, for which both versions of the shortline method work, and identified many explicit slopes that yield superdensity and for which we can compute the irregularity exponents.
Such surfaces include polysquare translation surfaces, street-rational polyrectangle translation surfaces and street-rational polyparallelogram translation surfaces.
They also include regular polygon surfaces, L-staircase surfaces, and polytriangle surfaces. 
We may refer to them under the general name of \textit{street-rational surfaces}.  

%
%

\subsection{The shortline method works for all Veech surfaces, and beyond}\label{sec8.7}

The title of this section summarizes Sections \ref{sec8.1}--\ref{sec8.6}, that the surplus shortline method works for all Veech surfaces, providing time-quantitative results like superdensity and uniformity with explicit values of the irregularity exponents.

Note that we have not defined here the concept of \textit{Veech surfaces}.

As motivation, consider an arbitrary polysquare translation surface~$\PPP$, together with $1$-direction geodesic flow on it.
We now consider the following two statements:

(1)
The surface $\PPP$ has a street-rational decomposition in any arbitrary direction with rational slope.

(2)
The surface $\PPP$ is optimal, \textit{i.e.}, $1$-direction geodesic flow on $\PPP$ exhibits uniform-periodic dichotomy.

Perhaps the best way to describe Veech surfaces is to say that they are translation surfaces that exhibit properties like (1) and~(2).
Unfortunately, the precise definition is a rather technical algebraic one, and we do not really need to know that here.
The interested reader is referred to the introductory article~\cite{HS}.

Instead, our wish is to have a list of known examples of Veech surfaces, and some of the key properties of such surfaces.
In particular, the following two properties established by Veech~\cite{V1} are analogous to (1) and (2) respectively.

\begin{theorema}
Suppose that $S$ is a Veech surface, and that $\bfv$ is an arbitrary direction such that there exists a non-empty finite geodesic segment on $S$ that goes between two not necessarily distinct singularities.
Then $S$ has a street-rational decomposition in this direction.
\end{theorema}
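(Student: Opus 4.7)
The plan is to combine Veech's dichotomy for lattice surfaces with the classical observation that the presence of a parabolic element in the Veech group forces commensurable cylinder moduli. After applying a rotation so that $\bfv$ becomes horizontal, the hypothesis provides a horizontal saddle connection on $S$. Since $S$ is a Veech surface, its Veech group $\Gamma(S)\subset\mathrm{SL}(2,\Rr)$ is a non-uniform lattice, and this is the only property of $S$ that I will use.

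The first main step would be to show that the horizontal direction on $S$ is \emph{completely periodic}, that is, $S$ decomposes as a disjoint union of horizontal cylinders (streets) $C_1,\ldots,C_n$ glued along horizontal saddle connections. To see this, consider the Teichm\"uller geodesic orbit $g_t\cdot S$ with $g_t=\mathrm{diag}(e^{-t},e^t)$. The horizontal saddle connection on $S$ has length decaying like $e^{-t}$ under $g_t$, so $g_t\cdot S$ leaves every compact subset of the ambient stratum as $t\to\infty$. Because $\Gamma(S)\backslash\mathrm{SL}(2,\Rr)$ has finite volume with only finitely many cusps, the projected orbit must converge to one of those cusps; its stabilizer in $\Gamma(S)$ therefore contains a nontrivial parabolic element $P$ whose invariant axis is the horizontal direction. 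Standard arguments (compactness of the thick part, control on short saddle connections) then yield that the horizontal flow is completely periodic and that $P$ maps each cylinder $C_i$ to itself.

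The second main step would translate this parabolic into street-rationality. In horizontal coordinates, $P$ has derivative $\bigl(\begin{smallmatrix}1 & s\\ 0 & 1\end{smallmatrix}\bigr)$ for some real $s>0$. On a horizontal cylinder $C_i$ of circumference $\ell_i$ and height $h_i$, this affine shear acts as a Dehn twist of power $k_i=sh_i/\ell_i$. For $P$ to descend to a well-defined affine homeomorphism of $S$ (and in particular to preserve the gluing pattern along the horizontal saddle connections), each $k_i$ must be a positive integer. Hence the normalized street lengths $\mu_i=\ell_i/h_i=s/k_i$ are all rational multiples of the same real number $s$, so $\mu_i/\mu_j=k_j/k_i\in\Qq$ for every $i,j$. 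This is precisely the condition that $S$ admits a street-rational decomposition in the direction $\bfv$.

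The main obstacle is the first step: upgrading ``this direction contains a saddle connection'' to ``this direction is stabilized by a parabolic element of $\Gamma(S)$''. This is the heart of Veech's original argument, and it rests essentially on the lattice property of $\Gamma(S)$ through the cusp geometry of the finite-volume quotient $\Gamma(S)\backslash\Hh^2$ (together with the fact that divergence of a Teichm\"uller orbit is detected by a shrinking saddle connection). Once this is in hand, the deduction of rationality of the moduli from the existence of the parabolic is a direct bookkeeping with Dehn twist powers, and it yields exactly the street-rational condition used throughout Sections~\ref{sec8.1}--\ref{sec8.6}.
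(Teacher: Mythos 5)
The paper does not prove Theorem~A; it is quoted as a known result of Veech with the citation to \cite{V1}, and no proof is offered in the text. Your outline is a correct reconstruction of Veech's argument, and it is the route the citation intends. The two hinges are exactly where you place them: (a) a saddle connection in direction $\bfv$ forces the Teichm\"uller geodesic $g_t\cdot S$ to diverge in the stratum, and since the $\mathrm{SL}(2,\Rr)$-orbit of $S$ is closed and projects to a finite-volume $\Gamma(S)\backslash\Hh$, that divergence must be into a cusp, whose stabilizer supplies the parabolic $P$ with neutral eigendirection $\bfv$; and (b) the parabolic acts on the resulting cylinder decomposition as a multi-twist, and integrality of the twist numbers forces commensurability of the moduli, i.e., rationality of all ratios $\mu_i/\mu_j$ of normalized street lengths, which is the street-rational condition.

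Two small points you might sharpen. First, a single parabolic $P$ with upper-triangular derivative need not fix each cylinder: it a priori permutes the finitely many horizontal cylinders and the finitely many horizontal saddle connections, so one should pass to a power $P^m$ that fixes every horizontal saddle connection setwise and orientation-preservingly before reading off the twist number on each cylinder. After doing so, one gets $k_i = m s h_i/\ell_i\in\Zz_{>0}$, and the conclusion $\mu_i/\mu_j = k_j/k_i\in\Qq$ is unchanged. Second, the step ``parabolic implies completely periodic'' that you compress into ``standard arguments'' is genuinely the crux of Veech's proof (it is not a formal consequence of having a parabolic derivative, one needs to rule out minimal components in that direction); it is fine to outsource it, but a one-line signal that this is a theorem and not bookkeeping would be appropriate. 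With those two caveats the argument is correct and faithful to \cite{V1}.
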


\begin{theoremb}
Every Veech surface $S$ is optimal, \textit{i.e.}, $1$-direction geodesic flow on $S$ exhibits uniform-periodic dichotomy.
\end{theoremb}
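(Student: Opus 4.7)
The plan is to split on whether the chosen direction $\bfv$ admits a saddle connection, i.e., a geodesic segment between (not necessarily distinct) singularities of $S$. These two cases produce respectively the periodic and uniform sides of the dichotomy.

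First I would handle the periodic case. Suppose $\bfv$ is a saddle-connection direction. Then by Theorem A above (which I am free to use), $S$ admits a street-rational decomposition in the direction $\bfv$: $S$ is the finite disjoint union (modulo boundary) of flat cylinders (streets) whose core curves are parallel to $\bfv$ and whose moduli are rationally commensurable. On each such cylinder every orbit of the $\bfv$-directional flow is closed, with period equal to the circumference of the cylinder. Since the moduli are commensurable and $\bfv$ has a common period across cylinders (an integer multiple of the reciprocal of each modulus times the width), the entire flow in direction $\bfv$ is periodic with a common period. This gives the ``periodic'' half.

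Next I would handle the uniform case. Suppose $\bfv$ admits no saddle connection. I would invoke the Veech dichotomy machinery: the Veech group $\Gamma(S) \subset SL(2,\Rr)$, i.e.\ the derivatives of affine self-maps of $S$, is by hypothesis a lattice. Rotating so that $\bfv$ is vertical, the one-parameter diagonal subgroup $g_t = \mathrm{diag}(e^t,e^{-t})$ acts on the Teichmüller disk $SL(2,\Rr) \cdot S$, which, since $\Gamma(S)$ is a lattice, has finite-volume quotient. The absence of a saddle connection in the vertical direction implies that the $g_t$-orbit of $S$ in this finite-volume quotient does not diverge to infinity in the cusp (divergence in the cusp corresponds, via the standard thick–thin / short-geodesic dictionary, to the existence of a short vertical saddle connection). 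Masur's criterion for unique ergodicity then yields that the vertical straight-line flow on $S$ is uniquely ergodic. Unique ergodicity, together with continuity of Birkhoff averages for continuous test functions, upgrades to uniform distribution of every individual $\bfv$-directional orbit in $S$, which is the ``uniform'' half of the dichotomy.

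The hard part will be the uniform case, and specifically the non-divergence step that feeds Masur's criterion. The street-rational / shortline apparatus developed earlier in the paper does not by itself deliver this: it requires the finite-covolume of $\Gamma(S)$, plus a nontrivial translation from ``no saddle connection in direction $\bfv$'' to ``the $g_t$-orbit of $S$ stays in a compact part of the moduli space''. This is precisely where the Veech assumption (lattice Veech group) is used essentially, rather than the weaker street-rationality used in Theorem A. Once non-divergence is in place, Masur's criterion and its consequence (unique ergodicity $\Rightarrow$ uniform distribution for each orbit, since the flow is minimal away from the finite singular set) close the argument. I would finally remark that the two cases are mutually exclusive and jointly exhaustive: a direction either contains a saddle connection or it does not, which is exactly the statement of the uniform-periodic dichotomy.
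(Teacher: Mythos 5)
The paper does not prove Theorem~B: it records the statement as a result established by Veech, cites~\cite{V1}, and then proceeds to use it as a black box (the point of Section~\ref{sec8.7} is only that Theorem~A already supplies all the street-rationality the shortline method needs). So there is no in-paper proof to compare against. Your proposal is the standard modern route to the Veech dichotomy, and as a sketch it is sound, with one refinement worth making explicit. The parenthetical in your uniform case --- that divergence in the cusp corresponds to the existence of a short vertical saddle connection --- is not a mere dictionary statement, and on a non-Veech surface it is actually false: the Teichm\"{u}ller geodesic $\{g_tS\}$ can diverge with the flat systole realized by different saddle connections at different times, none of which is vertical on~$S$. The correct chain, which you do then flag as the hard step, is genuinely a consequence of the lattice hypothesis: in a finite-covolume quotient a divergent geodesic must exit through a cusp; cusps of the Veech group $\Gamma(S)$ correspond to parabolic conjugacy classes; a parabolic affine automorphism fixing the vertical direction produces a vertical cylinder decomposition with commensurable moduli, whose boundary saddle connections are vertical. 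Hence no vertical saddle connection forces non-divergence, and Masur's criterion then yields unique ergodicity, which upgrades to uniform distribution of every non-singular orbit as you describe. A historical remark that does not affect correctness: Masur's recurrence criterion postdates~\cite{V1}, so Veech's own argument in that paper proceeds differently (through a more direct analysis of the lattice and its limit set); your version is the now-standard streamlined proof rather than a reconstruction of the original.
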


\begin{remark}
Note that Veech and other authors use the term \textit{cylinder} instead of the term \textit{street}.
\end{remark}

Classifying the Veech surfaces is a very difficult problem.
Early examples are the translation surfaces of regular polygon billiards discovered by Veech, together with a few related examples.
Progress has been slow, despite the effort of many.
The subject is on the borderline of many fields, including ergodic theory, topology and algebraic geometry.
The goal is to understand \textit{optimal dynamics}, a time-qualitative property of the long-term behavior of orbits, and the existing methods do not seem to say anything about the time-quantitative aspects.
Our goal here is to complement the time-qualitative work by making some time-quantitative statements on the long-term behavior of orbits.

Indeed, our surplus shortline method works for all Veech surfaces, and the reason is very simple.
For the success of our method, we need a translation surface with street-rational decomposition in only \textit{two} different directions, and Theorem~A gives far more than that.

The class of Veech surfaces include the following:

(i)
polysquare surfaces including the flat torus, and polytriangle surfaces;

(ii)
translation surfaces of regular polygon billiards and some related systems such as infinitely many special triangle billiards, including the regular decagon surface and the Ward system;

(iii)
Calta--McMullen L-staircases;

(iv)
the class of cathedral surfaces;

(v)
the Bouw--M\"{o}ller family;

(vi)
isolated triangle billiards with angles $(\pi/4,\pi/3,5\pi/12)$, $(2\pi/9,\pi/3,4\pi/9)$ and $(\pi/5,\pi/3,7\pi/15)$;

(vii)
and new Veech surfaces via the covering construction of some old ones.

This list is not exhaustive.
There are a few more countable families of quadrilateral billiards, and so on, discovered very recently.
Nevertheless, it does appear that the most elegant examples among these remain the regular polygon billiards discovered by Veech~\cite{V1} in 1989.

In view of (vii) above, it suffices to know all the primitive Veech surfaces.

Here we do not say anything about the Ward system or the Bouw--M\"{o}ller family.
However, for illustration, we elaborate a little on the class of \textit{cathedral surfaces}, first introduced by McMullen, Mukamel and Wright~\cite{MMW} in 2017.

Figure~8.7.1 shows two identical copies of the cathedral polygon $\cp(a,b)$.
The polygon is symmetric across the horizontal axis, and each boundary edge either is horizontal or vertical, or has slope~$\pm1$.
To convert the polygon into a translation surface, the identification of boundary horizontal and vertical edges comes from perpendicular translation, while the identification of edges of slope $\pm1$ are indicated by the decomposition into horizontal streets $H_1,\ldots,H_5$ and the decomposition into vertical streets $V_1,\ldots,V_5$.
Thus the cathedral surface is well defined, and has $5$ horizontal streets and $5$ vertical streets.

\begin{displaymath}
\begin{array}{c}
\includegraphics[scale=0.8]{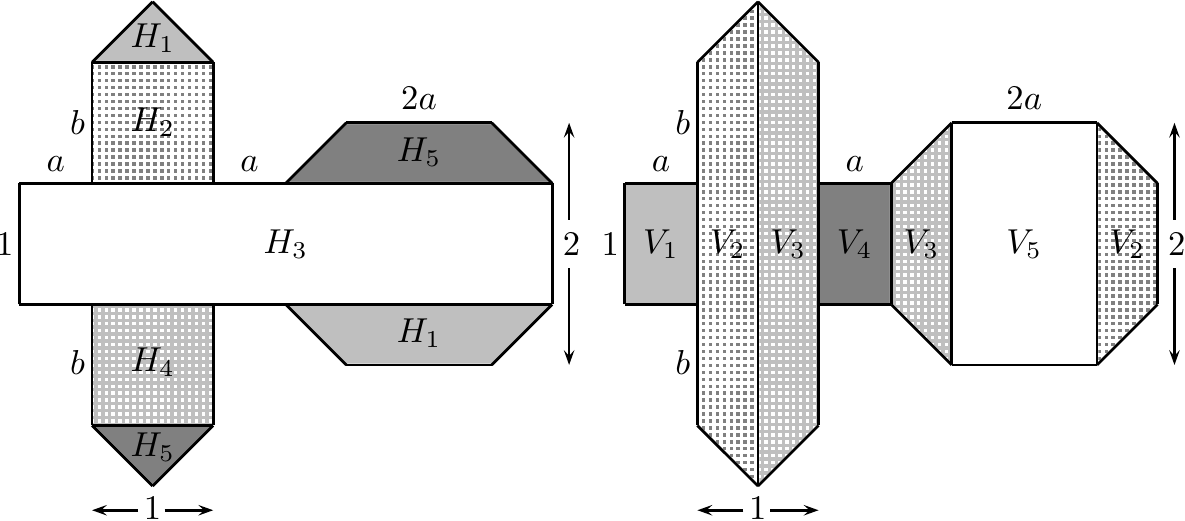}
\vspace{3pt}\\
\mbox{Figure 8.7.1: horizontal and vertical streets of $\cp(a,b)$}
\end{array}
\end{displaymath}

Let
\begin{equation}\label{eq8.7.1}
a=r_1\sqrt{d}+r_2>0
\quad\mbox{and}\quad
b=3r_1\sqrt{d}-3r_2-\frac{3}{2}>0,
\end{equation}
where $d\ge2$ is a square-free integer, and $r_1,r_2$ are rational numbers.
Remarkably, as long as \eqref{eq8.7.1} is satisfied, geodesic flow on the cathedral surface $\cp(a,b)$ is optimal, as shown by McMullen, Mukamel and Wright.

We now verify that under the condition \eqref{eq8.7.1}, the cathedral surface $\cp(a,b)$ is a street-rational polyrectangle translation surface.

Consider first the horizontal streets $H_1,\ldots,H_5$ as indicated in the picture on the left in Figure~8.7.1.
Here the ratio of the (horizontal) lengths and (vertical) widths of these streets are respectively
\begin{displaymath}
\frac{1+2a}{1/2},\quad
\frac{1}{b},\quad
\frac{2+4a}{1},\quad
\frac{1}{b},\quad
\frac{1+2a}{1/2},
\end{displaymath}
so these streets have only two different shapes, with ratio
\begin{displaymath}
2b(1+2a)=3(2r_1\sqrt{d}-2r_2-1)(2r_1\sqrt{d}+2r_2+1)=3(4r_1^2d-(2r_2+1)^2),
\end{displaymath}
which is clearly rational.

Consider next the vertical streets $V_1,\ldots,V_5$ as indicated in the picture on the right in Figure~8.7.1.
Here the ratio of the (vertical) lengths and (horizontal) widths of these streets are respectively
\begin{displaymath}
\frac{1}{a},\quad
\frac{3+2b}{1/2},\quad
\frac{3+2b}{1/2},\quad
\frac{1}{a},\quad
\frac{2}{2a},
\end{displaymath}
so these streets have only two different shapes, with ratio
\begin{displaymath}
2a(3+2b)=12(r_1\sqrt{d}+r_2)(r_1\sqrt{d}-r_2)=12(r_1^2d-r_2^2),
\end{displaymath}
which is also clearly rational.

Thus the surplus shortline method works for any cathedral surface $\cp(a,b)$ that satisfies \eqref{eq8.7.1}.
We therefore conclude that geodesic flow on such surfaces exhibits \textit{time-qualitative} optimality, and also exhibits \textit{time-quantitative} behavior in the sense that there are infinitely many explicit slopes such that geodesics with such slopes are superdense and we can also compute the irregularity exponents.

It is easy to see that the class of cathedral surfaces $\cp(a,b)$ depends only on two rational parameters $q_1=r_1^2d$ and~$r_2$, so this is a
$2$-parameter class.

While we remain very far from the complete classification of all primitive Veech surfaces, we now know all the primitive Veech surfaces that have genus~$2$, through the breakthrough of McMullen \cite{Mc2,Mc3}.

\begin{theoremc}
The affine-different primitive Veech surfaces in genus $2$ are precisely the following:

\emph{(i)}
the infinite class of Calta--McMullen L-staircase surfaces; and

\emph{(ii)}
the regular decagon surface with parallel edge identification.
\end{theoremc}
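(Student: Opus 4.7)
The plan is to follow McMullen's classification of Teichm\"{u}ller curves in genus~$2$, and to outline it here rather than reprove the deep parts. The starting reduction rests on three facts: every primitive Veech surface $(X,\omega)$ generates an algebraic Teichm\"{u}ller curve $V\hookrightarrow\MMM_2$; at every point of $V$ the Jacobian $\mathrm{Jac}(X)$ admits real multiplication by an order $\OOO_D$ in a real quadratic field $\Qq(\sqrt{D})$, with $\omega$ an eigenform for this action; and primitivity forces $D$ to be a non-square discriminant. First I would pin down this discrete invariant $D\in\{5,8,12,13,17,\ldots\}$ and split the argument according to the stratum of holomorphic differentials in which $(X,\omega)$ lies. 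Since the zeros of $\omega$ have orders summing to $2g-2=2$, the only two possibilities are $\HHH(2)$ (one double zero) and $\HHH(1,1)$ (two simple zeros).

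In the first case, $\omega\in\HHH(2)$, McMullen's theorem identifies $V$ with the Weierstrass curve $W_D\subset\HHH(2)$ parameterising $\OOO_D$-eigenforms with the double zero at a Weierstrass point, and this curve is non-empty for every admissible~$D$. To extract the L-staircase normal form in~(i), I would pick any periodic horizontal direction for $(X,\omega)$, which exists in abundance by Theorem~A. The cylinder decomposition in that direction has at most two cylinders, the eigenform condition translates into an $\OOO_D$-rational relation between their moduli, and one then checks that this relation rearranges $(X,\omega)$ into an L-shape with parameters $a,b$ of the form described in Section~\ref{sec8.4}, giving a Calta--McMullen L-staircase. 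Conversely, every such L-staircase with rational cylinder-moduli ratio is readily verified to be a Veech surface, for instance by applying the shortline method with appropriate $h^*,v^*$ as in Section~\ref{sec8.1} and checking the Veech dichotomy directly.

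The case $\omega\in\HHH(1,1)$ is where the real work lies, and is the main obstacle. Here McMullen's rigidity theorem asserts that for every discriminant $D\neq 5$, the eigenform locus for $\OOO_D$ in $\HHH(1,1)$ contains no closed $\mathrm{SL}_2(\Rr)$-orbit other than ones obtained by splitting the double zero of a form in $W_D\subset\HHH(2)$; proving it invokes the geometry of the Hilbert modular surface $X_D$ together with a spin-parity calculation, and I would treat these ingredients as a black box. What remains then is the exceptional discriminant $D=5$, where I would verify directly that the regular decagon surface with parallel edge identification sits in $\HHH(1,1)$ with two simple zeros of cone angle $4\pi$ each (straightforward Euler-characteristic and vertex-class counts), and that its Jacobian admits real multiplication by $\OOO_5=\Zz[(1+\sqrt{5})/2]$, so that its $\mathrm{SL}_2(\Rr)$-orbit closure is a Teichm\"{u}ller curve, necessarily the unique primitive one in $\HHH(1,1)$. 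The deep non-existence statement for $D\ne 5$ relies on techniques (Ratner-type rigidity on $X_D$, algebraic geometry of its arithmetic compactification, spin invariants) that lie well outside the combinatorial shortline/street-spreading toolkit of this paper, which is why I would import McMullen's conclusion rather than reconstruct it.
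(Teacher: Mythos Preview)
The paper does not prove Theorem~C at all: it is quoted as a known result, attributed to McMullen and cited to \cite{Mc2,Mc3}, with the sentence ``we now know all the primitive Veech surfaces that have genus~$2$, through the breakthrough of McMullen'' immediately preceding the statement. So there is no proof in the paper to compare against, and your proposal actually goes further than the paper by sketching the architecture of McMullen's argument (the discriminant invariant, the $\HHH(2)$/$\HHH(1,1)$ split, the Weierstrass curves, and the rigidity for $D\neq5$).

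That said, your outline is broadly faithful to the shape of McMullen's work, and you are right to flag the $\HHH(1,1)$ case as the hard step and to treat the Ratner-type rigidity and Hilbert-modular-surface input as a black box. One point worth tightening: you say that for $D\neq5$ the eigenform locus in $\HHH(1,1)$ contains no closed $\mathrm{SL}_2(\Rr)$-orbit ``other than ones obtained by splitting the double zero of a form in $W_D$''; but such a splitting does not produce a primitive Veech surface in $\HHH(1,1)$, so the cleaner statement is simply that there is no primitive Teichm\"{u}ller curve in $\HHH(1,1)$ for $D\neq5$. Also, your converse for the L-staircases (``applying the shortline method \ldots\ and checking the Veech dichotomy directly'') is not how Calta or McMullen establish the Veech property; the shortline method in this paper yields superdensity for specific slopes, not optimality in all directions, so you would need to appeal to the actual Calta--McMullen arguments there rather than to the machinery of Section~\ref{sec8.1}.
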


The following partial converse to Theorem~B is also due to McMullen~\cite{Mc2}.

\begin{theoremd}
Suppose that a translation surface $S$ in genus $2$ is not a Veech surface.
Then there exist geodesics on $S$ which are neither dense nor periodic.
\end{theoremd}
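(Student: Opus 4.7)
The plan is to invoke McMullen's classification of $SL_2(\Rr)$-orbit closures in genus $2$, an extension of Theorem~C which asserts that every orbit closure $\overline{SL_2(\Rr)\cdot S}$ in the moduli space $\HHH(2)\cup\HHH(1,1)$ is either a closed orbit, an eigenform (Hilbert modular) locus, or an entire stratum. Since by hypothesis $S$ is not a Veech surface, its $SL_2(\Rr)$-orbit is not closed, so exactly one of the latter two possibilities holds. I would split the argument into these two cases, producing non-dense non-periodic geodesics by different mechanisms in each.

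First I would treat the eigenform case. Here the Jacobian of $S$ carries real multiplication by a quadratic order $\OOO_D$, and the period map decomposes the absolute holonomy into two Galois-conjugate real-linear forms $\varphi_1,\varphi_2\colon H_1(S,\Zz)\to\Rr^2$. The crucial observation is that one can select a direction $\theta$ on $S$ which is rational with respect to $\varphi_1$ but irrational and sufficiently Liouville with respect to $\varphi_2$. In such a direction, the $\varphi_1$-rational structure provides a cylinder decomposition of a proper invariant subsurface, while the $\varphi_2$-Liouville behaviour prevents unique ergodicity on the complementary piece; a trajectory that is generic for the corresponding non-Lebesgue ergodic component then avoids a set of positive Lebesgue measure, and hence is not dense, yet cannot be periodic since periodicity would force a global cylinder decomposition incompatible with the non-unique ergodicity.

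Next I would handle the full-stratum case, where $\overline{SL_2(\Rr)\cdot S}$ equals all of $\HHH(2)$ or $\HHH(1,1)$. Here I would invoke Masur's criterion: a direction $\theta$ on $S$ fails to be uniquely ergodic whenever the Teichm\"{u}ller geodesic $g_t\cdot r_\theta S$ diverges in the stratum. Since the $SL_2(\Rr)$-orbit of $S$ is dense in the stratum but not closed, a standard Baire category argument combined with the known non-compactness of $\HHH(2)$ and $\HHH(1,1)$ produces a dense $G_\delta$ set of directions $\theta$ for which the Teichm\"{u}ller trajectory exits every compact set, giving non-uniquely ergodic directions on $S$; once again, orbits generic for a non-Lebesgue ergodic component are neither dense nor periodic.

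The main obstacle lies in the eigenform case: one must upgrade mere non-unique ergodicity to an \emph{explicit} orbit that is simultaneously non-dense and non-periodic. The subtlety is that non-unique ergodicity could in principle be compatible with every individual orbit being dense, or with every orbit being periodic, for instance if all ergodic invariant measures were supported on cylinders. Ruling this out requires a delicate analog of the Masur--Smillie non-ergodicity construction adapted to the eigenform structure: one uses the $\varphi_1$-rational cylinder decomposition to decouple $S$ into pieces on which the induced interval-exchange transformation inherits the Liouville behaviour of $\varphi_2$, and then verifies that generic orbits of this induced map accumulate on a proper closed subset, thereby yielding the required dichotomy-violating trajectories on $S$ itself.
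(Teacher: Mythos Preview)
The paper does not prove Theorem~D at all; it is quoted as a result of McMullen~\cite{Mc2} and used as a black box. So there is no ``paper's own proof'' to compare your proposal against, and your attempt should be judged on its own merits.

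There is a genuine gap in your argument, and it appears in both cases. You repeatedly pass from ``the direction $\theta$ is not uniquely ergodic'' to ``an orbit generic for a non-Lebesgue ergodic component is not dense.'' This inference is false: minimality and unique ergodicity are independent properties, and there exist minimal but non-uniquely-ergodic directional flows on translation surfaces (indeed, this is exactly the phenomenon in Theorem~E of Cheung--Masur quoted immediately after Theorem~D). In a minimal flow \emph{every} orbit is dense, regardless of which ergodic measure it equidistributes for. So Masur's criterion in your full-stratum case produces directions that violate unique ergodicity, but gives you nothing toward violating density; and the same confusion contaminates your eigenform case, where you again try to extract a non-dense orbit from a non-Lebesgue ergodic component.

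What McMullen actually does in~\cite{Mc2} is geometric rather than ergodic-theoretic: every genus~$2$ translation surface admits many splittings as a connected sum of two flat tori along a slit (a saddle connection). He shows that if $S$ is not a Veech surface, one can find such a splitting in which, in the direction of the slit, the flow on one torus is periodic while the flow on the other torus is minimal (because the two tori have incommensurable moduli in that direction). An orbit lying in the second torus is then dense in that torus but confined to it, hence neither periodic nor dense in~$S$. This bypasses ergodic theory entirely and produces the non-dense non-periodic orbit directly.
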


A different sort of partial converse to Theorem~B is due to Cheung and Masur~\cite{CM}, building on the work of McMullen.

\begin{theoreme}
Suppose that a translation surface $S$ in genus $2$ is not a Veech surface.
Then there exist geodesics on $S$ which are dense but not uniformly distributed.
\end{theoreme}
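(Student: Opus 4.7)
The plan is to combine McMullen's classification (Theorem~C) with a direct construction of minimal non-uniquely ergodic directions. If we can exhibit a direction $\theta$ on $S$ such that the straight-line flow in direction $\theta$ is minimal, yet admits at least two linearly independent invariant probability measures, then generic orbits for any such non-Lebesgue invariant measure are dense (by minimality) but fail to be uniformly distributed with respect to the area measure (by ergodic decomposition). The task therefore reduces to producing one such direction for every non-Veech $S$ in genus~$2$.

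First I would invoke McMullen's classification of $SL(2,\Rr)$-orbit closures in genus~$2$ to deduce that a non-Veech $S$ has orbit closure equal either to a Hilbert modular surface $\Omega E_D$ or to the entire stratum, so in particular $S$ admits cylinder decompositions in infinitely many directions whose moduli ratios vary continuously. This is the structural input: on a Veech surface the moduli within any completely periodic direction are pairwise commensurable, whereas on a non-Veech $S$ in genus~$2$ there is genuine flexibility. Second, I would build the bad direction by a Cantor-type procedure in the circle of directions, using Masur's criterion, which says that a direction $\theta$ fails to be uniquely ergodic provided the Teichm\"uller geodesic $g_t\cdot r_\theta S$ leaves every compact set of the moduli space of abelian differentials. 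One constructs nested intervals $I_1\supset I_2\supset I_3\supset\cdots$ of directions on the circle; at stage~$n$, select a completely periodic direction $\theta_n\in I_n$ whose cylinder decomposition contains two subfamilies of cylinders with very different moduli, and then apply a Dehn twist of carefully chosen parameter along one subfamily. The twist is tuned so that for every $\theta\in I_{n+1}\subset I_n$ the surface develops a very thin cylinder nearly in direction $\theta$, forcing divergence of $g_t\cdot r_\theta S$ in moduli space, while simultaneously the nesting is arranged so that no saddle connection persists in the limiting direction $\theta=\bigcap_n I_n$, which guarantees minimality.

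The main obstacle is enforcing both requirements simultaneously at the limit: the Teichm\"uller geodesic in direction $\theta$ must diverge in moduli space (to force non-unique ergodicity via Masur's criterion), yet no saddle connection may appear in direction $\theta$ (to force minimality, via the well-known dichotomy that in genus~$2$ a direction is either minimal or has a saddle connection that decomposes the surface into periodic cylinders and minimal components). The inductive construction succeeds for a non-Veech genus-$2$ surface precisely because the lattice of Veech directions is not rigid: one has enough room to shrink a cylinder in a direction slightly off $\theta_n$ without creating a persistent saddle connection in the limit. On a Veech surface this room is absent, and Theorem~B forces every direction to be either completely periodic or uniquely ergodic, which is exactly why Theorem~E must fail in the Veech case. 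Quantifying this flexibility using the Hilbert modular surface structure supplied by McMullen, and verifying divergence of $g_t\cdot r_\theta S$ by explicit modulus-degeneration estimates along the sequence $\theta_n$ of approximating periodic directions, constitutes the technical core of the proof.
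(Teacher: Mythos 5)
The paper does not prove Theorem~E; it states the result as a quotation from Cheung and Masur~\cite{CM} and uses it as a black box in the subsequent discussion of non-Veech genus-$2$ surfaces. There is therefore no in-paper proof against which to compare your argument.

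That said, your sketch does outline the broad strategy of the Cheung--Masur reference: combine McMullen's genus-$2$ classification with Masur's divergence criterion and a nested-interval construction of a direction $\theta$ that is minimal but not uniquely ergodic, then pass through the ergodic decomposition to produce a dense orbit that fails to equidistribute with respect to area. You have also identified the correct central tension, namely that the Teichm\"uller geodesic in direction~$\theta$ must diverge in moduli space while no saddle connection may survive in direction~$\theta$, and that the room to arrange both is exactly what a non-Veech surface in genus~$2$ affords. However, as written the argument is a plan, not a proof. The entire technical burden is named and then deferred: how the twist parameters are chosen so that a thin cylinder forces divergence without leaving a persistent saddle connection in the limit, why minimality plus divergence in fact yields at least two mutually singular ergodic invariant measures, and how the Hilbert modular surface (or full stratum) structure from McMullen supplies the needed quantitative flexibility. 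These steps are the actual mathematical content of~\cite{CM}, and until they are carried out the proposal establishes nothing that the statement itself does not already assert.
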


Recall that the surplus shortline method works for all Veech surfaces, as we need only street-rational decomposition in two different directions, while Veech surfaces exhibit street-rational decomposition in infinitely many different directions.
This makes it plausible to expect that there are many translation surfaces that are not Veech surfaces, or not optimal, but nevertheless have sufficient street-rationality for the surplus shortline method to work, giving rise to time-quantitative results.
Thus the surplus shortline method goes beyond the class of optimal systems.

We next give an example of a $3$-parameter family of non-optimal systems that have sufficient street-rationality for the surplus shortline method to work.
Figure~8.7.2 shows a family of decagons which are affine-different, due to the fixed height $2$ and fixed width~$2$.
These decagons are also symmetric across the horizontal axis and the vertical axis as shown.
Identifying the parallel edges of every such decagon, we obtain a translation surface $S_{10}(a,b,c)$ in genus~$2$.

\begin{displaymath}
\begin{array}{c}
\includegraphics[scale=0.8]{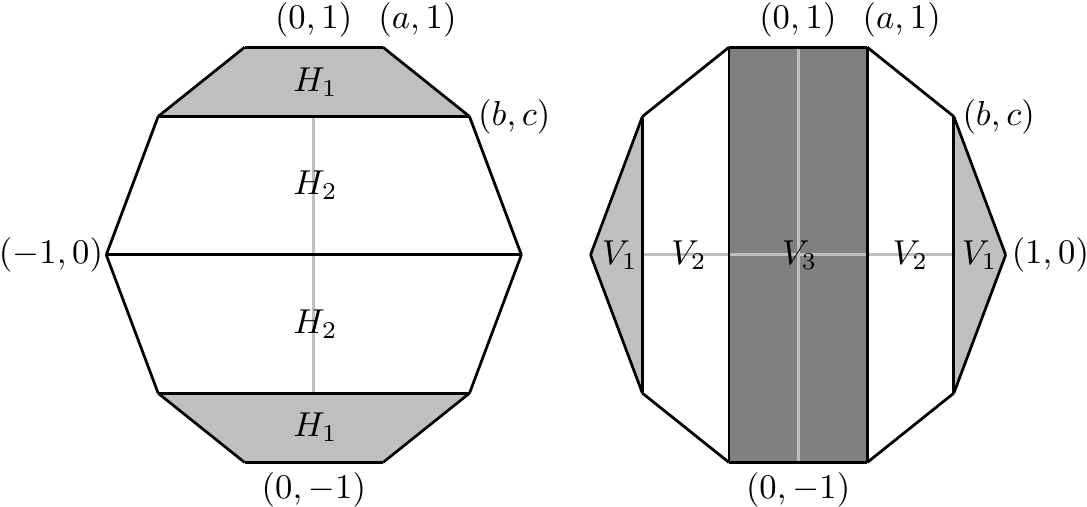}
\vspace{3pt}\\
\mbox{Figure 8.7.2: $S_{10}(a,b,c)$ and its horizontal and vertical streets}
\end{array}
\end{displaymath}

Note first that $S_{10}(a,b,c)$ has $2$ horizontal streets $H_1,H_2$, as shown in the picture on the left in Figure~8.7.2.
Here the ratio of the (horizontal) lengths and (vertical) widths of these streets are respectively
\begin{displaymath}
\frac{2a+2b}{1-c},\quad
\frac{2+2b}{c},
\end{displaymath}
so we have street rationality if there exists a positive rational number $r_1$ such that
\begin{equation}\label{eq8.7.2}
\frac{a+b}{1-c}=r_1\frac{1+b}{c}.
\end{equation}
Note next that $S_{10}(a,b,c)$ has $3$ vertical streets $V_1,V_2,V_3$, as shown in the picture on the right in Figure~8.7.2.
Here the ratio of the (vertical) lengths and (horizontal) widths of these streets are respectively
\begin{displaymath}
\frac{2c}{1-b},\quad
\frac{2+2c}{b-a},\quad
\frac{2}{2a},
\end{displaymath}
so we have street rationality if there exist positive rational numbers $r_2,r_3$ such that
\begin{equation}\label{eq8.7.3}
a=r_2\frac{1-b}{c}
\quad\mbox{and}\quad
a=r_3\frac{b-a}{1+c}.
\end{equation}

Given positive rational numbers $r_1,r_2,r_3$, we have $3$ equations \eqref{eq8.7.2} and \eqref{eq8.7.3} in the $3$ variables $a,b,c$.
It follows that there exist infinitely many affine-different non-regular decagons $S_{10}(a,b,c)$ that have street-rational decompositions in the horizontal and vertical directions.
Every one of these non-regular decagons is primitive and has genus~$2$, differs from the Veech surfaces listed in Theorem~C, and is therefore not optimal.
It then follows from Theorem~D that each has geodesics which are neither dense nor periodic.
It also follows from Theorem~E that each has geodesics which are dense but not uniformly distributed.
However, it also follows from the surplus shortline method that each has geodesics that exhibit superdensity and quantitative equidistribution with explicit irregularity exponents.
These vastly different types of orbits that can arise demonstrate the intriguing fact that we have a wide spectrum of possibilities.

Unfortunately, we currently do not have analogs of Theorems C and~D when the genus exceeds~$2$, and this prevents us from proving plausible conjectures about more infinite families of surfaces that are not optimal but have sufficient street-rationality for the surplus shortline method to work.
Nevertheless, we are of the opinion that the class of surfaces that have sufficient street-rationality for the surplus shortline method to work is \textit{much}, \textit{much} larger than the class of surfaces with optimal dynamics.

%
%

\subsection{Density in generalized mazes}\label{sec8.8}

In this section, we switch to infinite flat surfaces, and begin by generalizing the concept of square-maze translation surfaces first introduced in \cite[Section~6.5]{BCY}.

It is natural to start the discussion here with arguably the simplest example of generalized square-maze translation surfaces, the so-called \textit{infinite halving staircase surface} $S(\infty;1/2)$, arising from the \textit{infinite halving staircase region} $R(\infty;1/2)$ shown in Figure~8.8.1.

\begin{displaymath}
\begin{array}{c}
\includegraphics[scale=0.8]{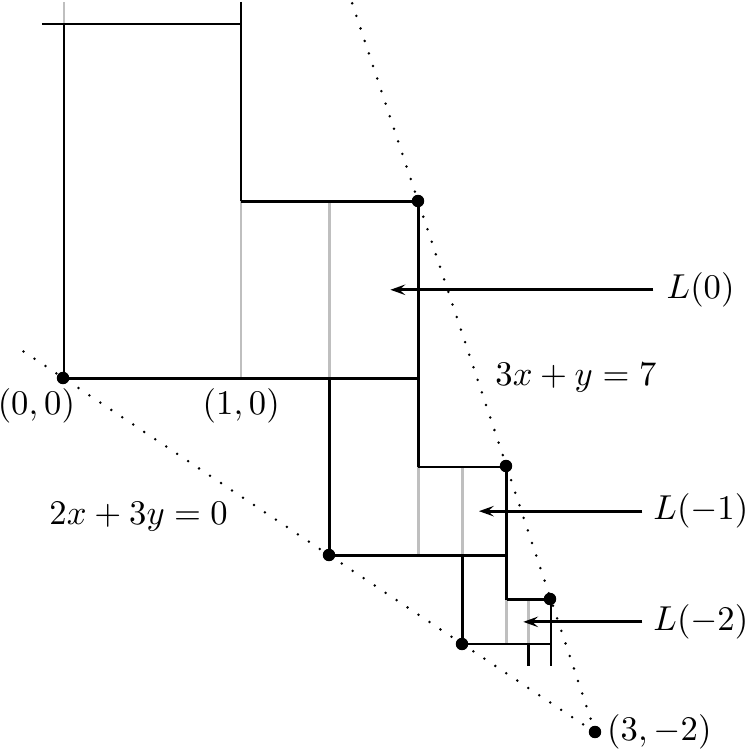}
\vspace{3pt}\\
\mbox{Figure 8.8.1: three halving L-shapes $L(0),L(-1),L(-2)$ of $R(\infty;1/2)$}
\end{array}
\end{displaymath}

The building blocks of $R(\infty;1/2)$ are L-shapes $L(n)$, $n\in\Zz$, that consist of three squares of the same size $2^n\times2^n$.
These L-shapes of different sizes are glued together as shown.
Thus
\begin{displaymath}
R(\infty;1/2)=\bigcup_{-\infty<n<\infty}L(n).
\end{displaymath}
For reference, we take the bottom left vertex of $L(0)$ to be the origin $(0,0)$.

Note that as we move from $L(i)$ to $L(i+1)$, we move up and there is doubling, and that as we move from $L(i)$ to $L(i-1)$, we move down and there is halving.

Moving towards the bottom right of the staircase region $R(\infty;1/2)$, we converge to the limit point $(3,-2)$.
On the other hand, moving towards the top left of the staircase region $R(\infty;1/2)$, we approach $(-\infty,\infty)$, and the region is bounded between the two lines $3x+y=7$ and $2x+3y=0$.

To obtain the desired closed surface $S(\infty;1/2)$ from the infinite region $R(\infty;1/2)$, we have to identify pairs of boundary edges of $R(\infty;1/2)$.
The edge identification comes from the simplest perpendicular translation, both horizontal and vertical, as shown in Figure~8.8.2.

We have indicated all the horizontal and vertical boundary edges of~$L(i)$.

The vertical boundary edge identification is simple, and comprises a pair of identified edges $v'_i$ on the top square of $L(i)$ and a pair of identified edges $v''_i$ on the two bottom squares of~$L(i)$.

The horizontal edge identification is a little less straightforward.
First of all, there is a pair of identified edges $h'_i$ on the left half of the right square of~$L(i)$.
Then there are two further horizontal boundary edges.
The edge $h''_i$ on the top of the right half of the right square of $L(i)$ is identified with a horizontal edge of the L-shape $L(i-1)$ below, while the edge $h''_{i+1}$ on the bottom of the bottom left square of $L(i)$ is identified with a horizontal edge of the L-shape $L(i+1)$ above.

\begin{displaymath}
\begin{array}{c}
\includegraphics[scale=0.8]{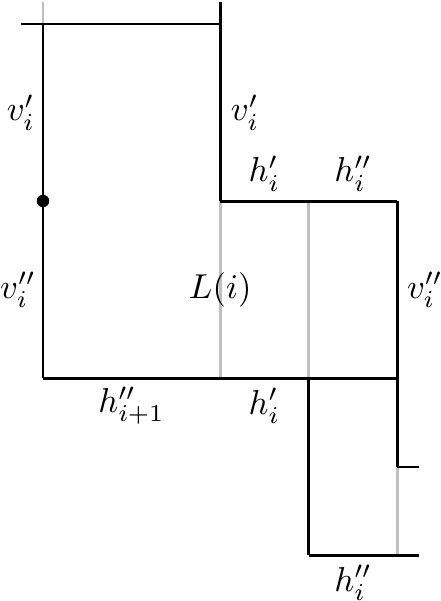}
\vspace{3pt}\\
\mbox{Figure 8.8.2: identifying pairs of boundary edges of $R(\infty;1/2)$}
\end{array}
\end{displaymath}

Carrying out this boundary edge identification pattern throughout on $R(\infty;1/2)$, we obtain a closed surface.
If it is equipped with a flat metric, \textit{i.e.}, every square has zero curvature, then it becomes the desired flat infinite closed surface $S(\infty;1/2)$.

Recall that in a finite or infinite polysquare translation surface, we have building blocks that are congruent unit size squares.
This property is clearly violated in $S(\infty;1/2)$, since the building blocks can be arbitrarily large and arbitrarily small.
This is why we refer to $S(\infty;1/2)$ as a \textit{generalized maze surface} -- more about this later.

The infinite surface $S(\infty;1/2)$ can be interpreted as a limit of a sequence of compact surfaces $S(n;1/2)$, $n=0,1,2,3,\ldots.$
For an arbitrary integer $n\ge0$, the surface $S(n;1/2)$ is defined by closing the open surface
\begin{displaymath}
\bigcup_{-n\le i\le n}L(i),
\end{displaymath}
which is a union of $2n+1$ L-shapes. 
Here \textit{closing} means setting up an appropriate pattern for identifying pairs of parallel boundary edges.

We illustrate the pattern in the special cases $n=0$ and $n=1$ in Figure~8.8.3.
The general case goes in a similar way.

\begin{displaymath}
\begin{array}{c}
\includegraphics[scale=0.8]{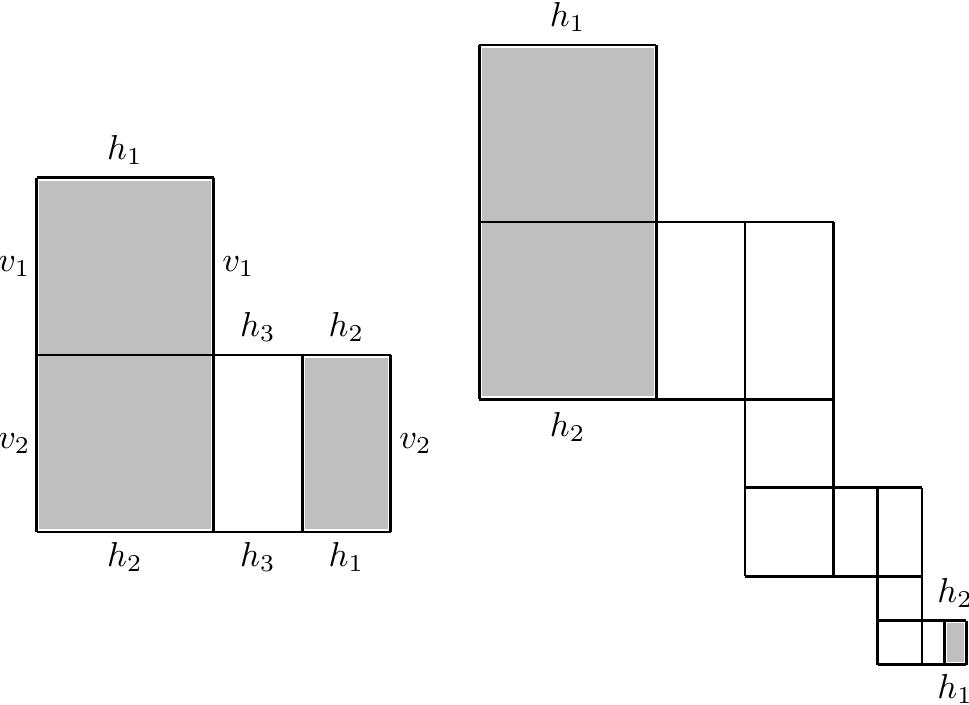}
\vspace{3pt}\\
\mbox{Figure 8.8.3: $S(0;1/2)$ and $S(1;1/2)$ (not to scale)}
\end{array}
\end{displaymath}

In the picture of $S(0;1/2)$ on the left, the edges $v_1$, $v_2$ and $h_3$ correspond respectively to $v'_0$, $v''_0$ and $h'_0$ in Figure~8.8.2.
The only novelty comes from the two copies of $h_1$ and the two copies of~$h_2$, as these identified edges do \textit{not} have the same length.
The edge $h_1$ on the left has twice the length of the edge $h_1$ on the right, and the edge $h_2$ on the left has twice the length of the edge $h_2$ on the right.
Thus the identification comes from a translation combined with a dilation or contraction with factor~$2$.

In the picture of $S(1;1/2)$ on the right, we have only indicated the edge pairings $h_1$ and~$h_2$.
The remaining edge pairings follow the pattern set in Figure~8.8.2.
Again, the pairs of edges $h_1$ and $h_2$ do not have the same length.
The edge $h_1$ on the left has $8$ times the length of the edge $h_1$ on the right, and the edge $h_2$ on the left has $8$ times the length of the edge $h_2$ on the right.
Thus the identification comes from a translation combined with a dilation or contraction with factor~$8$.

It is easy to see that for an arbitrary integer $n\ge0$, the identification of the special edges $h_1$ and $h_2$ comes from a translation combined with a dilation or contraction with factor~$2^{2n+1}$.

In view of this dilation or contraction, the bounded surfaces $S(n;1/2)$, $n\ge0$, fall outside of the class of polysquare surfaces.
We say that $S(n;1/2)$, $n\ge0$, belong to the class of \textit{d-c-polysquare surfaces}, where \textit{d-c} refers to the \textit{dilation} and \textit{contraction} of some boundary edges of the polysquare.

It is easy to see that $S(\infty;1/2)$ has infinite genus which, intuitively, represents \textit{infinite complexity of the geodesic flow}.
Indeed, all square-maze translation surfaces have infinite genus.
We have already mentioned this simple fact several times without proof.
For the sake of completeness, we include here a sketch of the proof in the special case of $S(\infty;1/2)$.
It is a routine application of Euler's formula, which gives
\begin{displaymath}
2-2g_i=V_i-E_i+R_i,
\end{displaymath}
where $g_i$ denotes the genus of $S(i;1/2)$, whereas $V_i,E_i,R_i$ denote respectively the numbers of vertices, edges and regions of $S(i;1/2)$ after boundary edge identification.
Using Figure~8.8.3, we see that
\begin{align}
2-2g_0
&
=V_0-E_0+R_0=2-8+4=-2,
\nonumber
\\
2-2g_1
&
=V_1-E_1+R_1=2-24+12=-10.
\nonumber
\end{align}
Hence the genus of $S(0;1/2)$ is~$2$, and the genus of $S(1;1/2)$ is~$6$.
Similarly, one can show that for an arbitrary integer $n\ge0$, the genus of $S(n;1/2)$ is~$4n+2$.
Since $S(\infty;1/2)$ is a limit of $S(n;1/2)$ as $n\to\infty$, we conclude that $S(\infty;1/2)$ has infinite genus.

Since $S(\infty;1/2)$ is a flat surface, its geodesics consist of parallel straight line segments, and we have a $1$-direction flow.
We may call  $S(0;1/2)$, shown in the picture on the left in Figure~8.8.3, the period-surface of $S(\infty;1/2)$.
Clearly it is a d-c-polysquare translation surface.

Figure~8.8.4 shows the horizontal and vertical streets of the surface $S(\infty;1/2)$.

\begin{displaymath}
\begin{array}{c}
\includegraphics[scale=0.8]{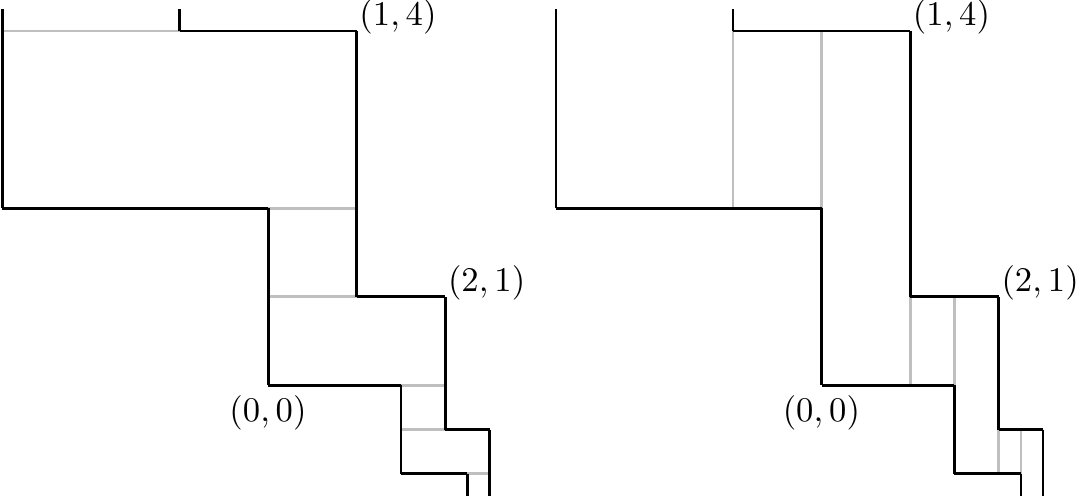}
\vspace{3pt}\\
\mbox{Figure 8.8.4: the horizontal and vertical streets of $S(\infty;1/2)$}
\end{array}
\end{displaymath}

It is clear that the normalized lengths of horizontal streets are $1$ and~$2$, while the normalized lengths of vertical streets are $2$ and~$4$.
Thus the normalized horizontal street-LCM is equal to~$2$, while the normalized vertical street-LCM is equal to~$4$.

With a view to using the surplus shortline method on $S(\infty;1/2)$, we now focus on two particular geodesics $V_*(t)$ and $H_*(t)$ on
$S(\infty;1/2)$, shown in Figure~8.8.5. 

\begin{displaymath}
\begin{array}{c}
\includegraphics[scale=0.8]{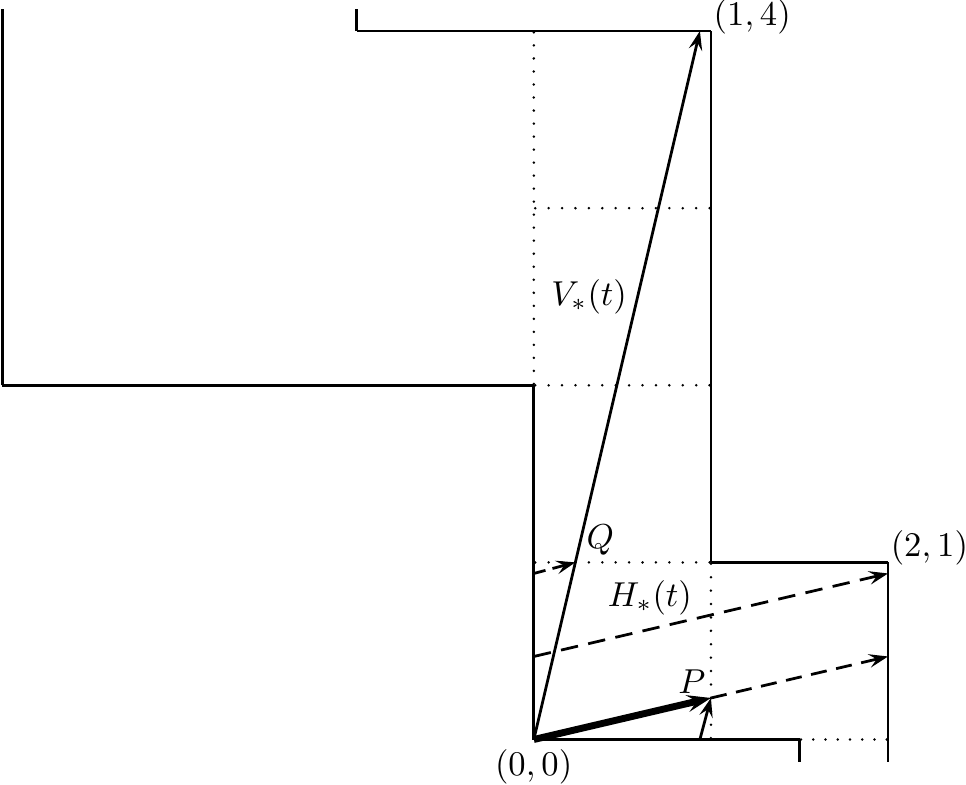}
\vspace{3pt}\\
\mbox{Figure 8.8.5: two particular geodesics of $S(\infty;1/2)$}
\end{array}
\end{displaymath}

The almost vertical geodesic $V_*(t)$, $t\ge0$, starts from the origin and has special slope
\begin{equation}\label{eq8.8.1}
\alpha=2+\sqrt{5}=4+\frac{1}{4+\frac{1}{4+\cdots}},
\end{equation}
where the continued fraction digits are all equal to~$4$.
The almost horizontal geodesic $H_*(t)$, $t\ge0$, starts from the origin and has the reciprocal slope $\alpha^{-1}=\sqrt{5}-2$.

Note that $V_*(0)=H_*(0)$ is the origin and, as usual, the parameter $t$ denotes time; in other words, for both $V_*(t)$ and $H_*(t)$, we use the
arc-length parametrization, which represents unit-speed motion of a particle.

The key fact is that $V_*(t)$ and $H_*(t)$ are shortlines of each other.

Indeed, the initial segment of $V_*(t)$, $t\ge0$, between the origin and the point $P$ is a detour crossing of a vertical street of $4$ building block squares, and the initial segment of $H_*(t)$, $t\ge0$, between the origin and the point $P$ is the shortcut.

Similarly, the initial segment of $H_*(t)$, $t\ge0$, between the origin and the point $Q$ is a detour crossing of a horizontal street of $2$ building block squares, and the initial segment of $V_*(t)$, $t\ge0$, between the origin and the point $Q$ is the shortcut.

The \textit{mutual shortcut} property of the special geodesics $V_*(t)$, $t>0$, and $H_*(t)$, $t>0$, means that for every vertical street of $S(\infty;1/2)$, the detour crossing points of $V_*(t)$ coincides with the shortcut crossing points of~$H_*(t)$.
Similarly, for every horizontal street of $S(\infty ;1/2)$, the detour crossing points of $H_*(t)$ coincides with the shortcut crossing points of~$V_*(t)$. 

As a consequence of an intrinsic symmetry within $S(\infty;1/2)$, we can easily list all the different types of shortcuts or units.

Although the surface $S(\infty;1/2)$ is infinite, this symmetry makes it sufficient to distinguish only $8$ different types of almost vertical shortcuts or units of slope~$\alpha$, where the prototypes are $a_i$, $1\le i\le8$, in the picture on the left in Figure~8.8.6.
Similarly, it is sufficient to distinguish only 8 different types of almost horizontal shortcuts of slope~$\alpha^{-1}$, where the prototypes are $b_j$, $1\le j\le8$, in the picture on the right in Figure~8.8.6.

\begin{displaymath}
\begin{array}{c}
\includegraphics[scale=0.8]{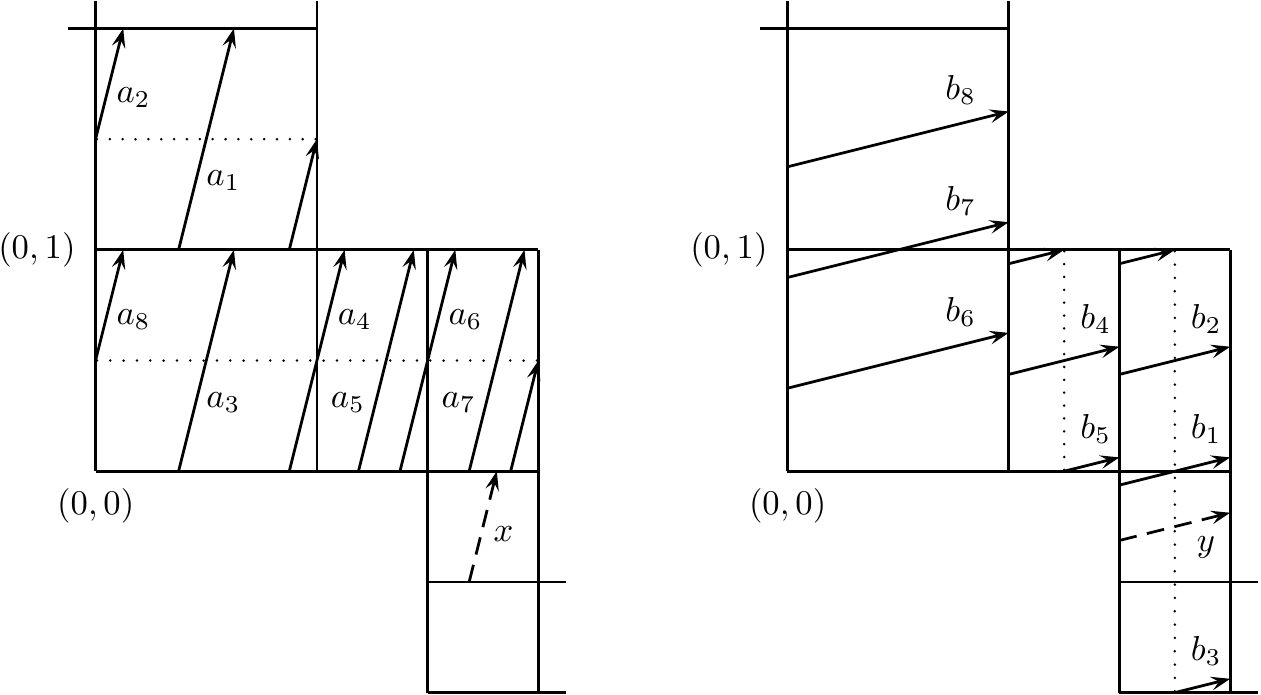}
\vspace{3pt}\\
\mbox{Figure 8.8.6: almost vertical and almost horizontal units of $S(\infty;1/2)$}
\end{array}
\end{displaymath}

We recall that the infinite halving staircase region $R(\infty;1/2)$ is built from similar L-shapes $L(i)$, $i\in\Zz$, where $L(i+1)$ and $L(i-1)$ are respectively obtained from $L(i)$ by doubling and halving.
This gives rise to a transformation~$\bfU$, which represents \textit{up and doubling}, as well as an inverse transformation~$\bfD$, which represents \textit{down and halving}.

Consider a prototype almost vertical unit $a_1$ and another almost vertical unit $x$ in the picture on the left in Figure~8.8.6, where $x$ is the same type of unit as~$a_1$.
The transformation $\bfU$ maps $x$ to~$a_1$, which we denote formally by $a_1=\bfU x$.
The transformation $\bfD$ maps $a_1$ to~$x$, which we denote formally by $x=\bfD a_1$.

Consider a prototype almost horizontal unit $b_8$ and another almost horizontal unit $y$ in the picture on the right in Figure~8.8.6, where $y$ is the same type of unit as~$b_8$.
The transformation $\bfU$ maps $y$ to~$b_8$, which we denote formally by $b_8=\bfU y$.
The transformation $\bfD$ maps $b_8$ to~$y$, which we denote formally by $y=\bfD b_8$.

There are analogs for the remaining prototypes $a_i$, $2\le i\le8$, and $b_j$, $1\le j\le7$.
Moreover, we can take any power of the transformation~$\bfU$, and the symmetry group of $S(\infty;1/2)$ contains the infinite cyclic group
$\{\bfU^k:k\in\Zz\}$.

It is straightforward to determine the ancestor units of the almost vertical units $a_i$, $1\le i\le8$.
We have
\begin{align}
a_1
&
\hookrightarrow\{b_7\},3b_8,\{\bfU b_1\},
\label{eq8.8.2}
\\
a_2
&
\hookrightarrow\{b_7\},4b_8,\{\bfU b_1\},
\label{eq8.8.3}
\\
a_3
&
\hookrightarrow\{\bfU b_3\},2b_2,2b_4,b_6,\{b_7\},
\label{eq8.8.4}
\\
a_4
&
\hookrightarrow\{\bfU b_3\},2b_2,2b_4,2b_6,\{b_5\},
\label{eq8.8.5}
\\
a_5
&
\hookrightarrow\{b_5\},2b_2,b_4,2b_6,\{b_5\},
\label{eq8.8.6}
\\
a_6
&
\hookrightarrow\{b_5\},2b_2,2b_4,2b_6,\{b_3\},
\label{eq8.8.7}
\\
a_7
&
\hookrightarrow\{b_1\},b_2,2b_4,2b_6,\{b_3\},
\label{eq8.8.8}
\\
a_8
&
\hookrightarrow\{b_1\},2b_2,2b_4,2b_6,\{b_7\},
\label{eq8.8.9}
\end{align}
These include a fractional unit at the beginning and a fractional unit at the end, with whole units in the middle.

Here and later, fractional units are indicated by $\{\ldots\}$, and the whole units in the middle are listed lexicographically.

We also determine the ancestor units of the almost horizontal units $b_j$, $1\le j\le8$.
We have
\begin{align}
b_1
&
\hookrightarrow\{\bfD a_2\},\bfD a_1,\bfD a_3,a_7,\{a_8\},
\label{eq8.8.10}
\\
b_2
&
\hookrightarrow\{a_6\},\bfD a_1,\bfD a_3,\{a_8\},
\label{eq8.8.11}
\\
b_3
&
\hookrightarrow\{a_6\},\bfD a_1,\bfD a_3,a_7,\{\bfD a_4\},
\label{eq8.8.12}
\\
b_4
&
\hookrightarrow\{a_4\},a_5,\{a_6\},
\label{eq8.8.13}
\\
b_5
&
\hookrightarrow\{a_4\},2a_5,\{a_6\},
\label{eq8.8.14}
\\
b_6
&
\hookrightarrow\{a_8\},a_1,\bfU a_7,\{a_4\},
\label{eq8.8.15}
\\
b_7
&
\hookrightarrow\{a_8\},a_1,a_3,\bfU a_7,\{a_2\},
\label{eq8.8.16}
\\
b_8
&
\hookrightarrow\{a_2\},a_3,\bfU a_7,\{a_2\}.
\label{eq8.8.17}
\end{align}

The symmetry group of $S(\infty;1/2)$ contains the infinite cyclic group $\{\bfU^k:k\in\Zz\}$.
It follows that by applying arbitrary integer powers of the transformation $\bfU$ to \eqref{eq8.8.2}--\eqref{eq8.8.17}, we obtain their analogs over the whole infinite surface.

The infinite halving staircase surface $S(\infty;1/2)$ is \textit{not} a square-maze translation surface.
However, as Figure~8.8.6, the relations \eqref{eq8.8.2}--\eqref{eq8.8.17} and their analogs illustrate, the key concept of shortline is well defined for some explicit slopes including the particular slope given by \eqref{eq8.8.1}.
We can therefore easily adapt the proof of \cite[Theorem~6.5.1]{BCY} and obtain density of  some explicit geodesics on this infinite surface.

We may therefore say, intuitively speaking, that $S(\infty;1/2)$ is an example of an infinite polysquare translation surface \textit{with bounded-ratio streets formed from square faces with side lengths equal to integer powers of~$2$}.
We now make this intuition precise with the following definition, which includes the infinite halving staircase surface as a special case. 

\begin{2maze}
An infinite closed flat translation surface $S$ is called a $2$-power square-maze translation surface if it satisfies the following four requirements:

(1) The building blocks of $S$ are axis-parallel squares with possibly different side lengths~$2^k$, where $k\in\Zz$.

(2) Any two building block squares that have a common edge have a common vertex, and either they have the same side length, or the side length of one is half the side length of the other.

(3) $S$ does not contain an infinite horizontal or vertical line.
Furthermore, there is an integer $r$ such that any horizontal or vertical line segment fully inside $S$ can be covered by at most $r$ congruent squares fully inside~$S$.

(4) We apply the simplest boundary identification via perpendicular translation, giving rise to $1$-direction geodesic flow in~$S$.
\end{2maze}

The requirement (3) expresses the analogous maze property that the surface $S$ has \textit{bounded-ratio streets}, where the ratio of the long side and the short side of any street is less than an absolute constant. 

If the minimum value of $r$ in (3) is~$r_0$, then we call this a \textit{$2$-power $r_0$-square-maze translation surface}. 
The infinite halving staircase surface $S(\infty;1/2)$ is a an example of a $2$-power $4$-square-maze translation surface.

In a similar way, we can define the class of \textit{$k$-power $r$-square-maze translation surfaces} for every fixed integers $k\ge2$ and $r\ge2$.

It is easy to see that the special symmetry of the infinite halving staircase surface $S(\infty;1/2)$ provided by $\bfU$ and $\bfD$ is not really important.
What is important is that the concept of shortline is well defined for any $k$-power $r$-square-maze translation surface at least for some explicit slopes.
Moreover, the requirement (3) on bounded-ratio streets guarantees that, despite the different sizes of the squares, the magnification process in the shortline method still works.
Thus we can adapt the proof of \cite[Theorem~6.5.1]{BCY} and obtain density of some explicit geodesics on these generalized maze translation surfaces.

Next we return to the ordinary square-maze translation surface of congruent unit square building blocks.
We can obtain a different kind of generalization by replacing the unit square in the definition of the square-maze translation surface with, say, any other fixed regular
$n$-gon, where $n\ge8$ is divisible by~$4$. 
From a supply of infinitely many congruent copies of such a fixed regular $n$-gon, we can form a horizontal-vertical $\Zz^2$-like grid.
Making infinitely many appropriate \textit{holes} in the grid, we can easily enforce the maze property that the lengths of the horizontal and vertical streets are uniformly bounded. 
The maze property then guarantees that we can adapt the proof of \cite[Theorem~6.5.1]{BCY} and obtain density of some explicit geodesics on these
\textit{$n$-gon-maze translation surfaces}.

%
%

\subsection{When the flow does not preserve the area}\label{sec8.9}

We now return to the infinite halving staircase surface $S(\infty;1/2)$, with the period-surface $S(0;1/2)$ shown in the picture on the left in Figure~8.8.3, and recall that $S(0;1/2)$ is not a polysquare translation surface, but belongs to the class of d-c-polysquare translation surfaces, where there is boundary edge identification involving dilation or contraction.

We study \textit{$1$-direction line-flow} on the period-surface $S(0;1/2)$, which comes from the \textit{projection} of $1$-direction geodesic flow on the infinite halving staircase surface $S(\infty;1/2)$.

In our investigation on flat systems in \cite{BDY1,BDY2,BCY} and up to Section~\ref{sec8.6} in this paper, we have been studying $2$-dimensional flow that preserves the $2$-dimensional Lebesgue measure.
Since area is homogeneous, it is natural to study whether or not an infinite orbit exhibits uniform distribution on the surface. 
What therefore makes the period-surface $S(0;1/2)$ particularly interesting is that it is our first example of a flat system with a natural $1$-direction
line-flow that is \textit{not} area-preserving, a consequence of the dilation or contraction at the boundary.
Thus we cannot expect uniform distribution of the orbits.
Our purpose in this section is therefore to examine what we can still say about the distribution of an orbit.

Let us return to the picture on the left in Figure~8.8.6.
Since every almost vertical unit $a_i$, $1\le i\le8$, has the same length, it is fairly easy to compute asymptotically the relative time a $1$-direction line-flow with slope
$\alpha$ spends in the top square of $S(0;1/2)$.

Indeed, the visiting time mainly depends on the ratio of the coordinates of an eigenvector corresponding to the largest eigenvalue $\Lambda$ of the $2$-step transition matrix $\bfA$ of the shortcut-ancestor process, assuming of course that the shortline method works for the slope~$\alpha$.

More precisely, the relative time a $1$-direction line-flow with slope $\alpha$ spends in the top square of $S(0;1/2)$ is asymptotically the ratio
\begin{equation}\label{eq8.9.1}
\frac{v(1)+v(2)}{v(1)+\ldots+v(8)},
\end{equation}
where $(v(1),\ldots,v(8))^T$ is an eigenvector corresponding to the largest eigenvalue $\Lambda$ of $\bfA$ where, for $j=1,\ldots,8$, $v(j)$ denotes the coefficient of the almost vertical unit~$a_j$.
Here we make the usual assumption that the $1$-direction line-flow moves with constant speed. 

We can generalize the slope $\alpha$ given by \eqref{eq8.8.1} to any slope of the form
\begin{equation}\label{eq8.9.2}
\alpha
=[n;m,n,m,\ldots]
=n+\frac{1}{m+\frac{1}{n+\frac{1}{m+\cdots}}}
=\frac{n}{2}\left(1+\sqrt{1+\frac{4}{mn}}\right),
\end{equation}
where both $n,m\ge4$ are divisible by~$4$.
This is motivated by the observation that for $S(\infty;1/2)$ and the period-surface $S(0;1/2)$, the normalized lengths of horizontal streets are $1$ and~$2$, while the normalized lengths of vertical streets are $2$ and~$4$.
Thus the surplus shortline method works for these slopes.

Indeed, it is clear from the picture on the left in Figure~8.8.3 that $S(0;1/2)$ has $2$ horizontal streets, where the top street consists of $1$ square, and the bottom street consists of $2$ squares.
It also has $2$ vertical streets.
One of these is the white rectangle with top and bottom edges~$h_3$.
The more complicated one is shaded, with the left part consisting of $2$ squares with side length~$1$, and the right part consisting of $2$ squares with side length~$1/2$.

The good news is that, despite the fact that $S(0;1/2)$ is not a polysquare surface, we can still apply Theorem~\ref{thm7.2.2}.
The pessimistic reader may verify that this result can indeed be extended to d-c-polysquare surfaces.

Using this, we now attempt to find the eigenvalues of the $2$-step transition matrix~$\bfA$.
For simplicity of notation, we number the various parts of $S(0;1/2)$ as in Figure~8.9.1.
Here the horizontal streets are $1$ and $2,3,4$, while the vertical streets are $1,2,4$ and~$3$.
We also show the almost vertical units of type~$\uparrow$, so that $\uparrow_1,\uparrow_2,\uparrow_3,\uparrow_4$ are respectively $a_1,a_3,a_5,a_7$ in Figure~8.8.6.
We have not shown the almost vertical units of type~$\nuparrow$, but $\nuparrow_1,\nuparrow_2,\nuparrow_3,\nuparrow_4$ are respectively $a_2,a_4,a_6,a_8$.
Let
\begin{displaymath}
J_1=\{1\}
\quad\mbox{and}\quad
J_2=\{2,3,4\}
\end{displaymath}
denote the horizontal streets, and let
\begin{displaymath}
I_1=I_2=I_4=\{1,2,4\}
\quad\mbox{and}\quad
I_3=\{3\}
\end{displaymath}
denote the vertical streets.

\begin{displaymath}
\begin{array}{c}
\includegraphics[scale=0.8]{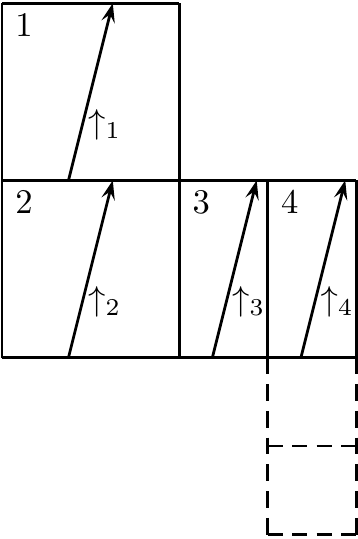}
\vspace{3pt}\\
\mbox{Figure 8.9.1: $S(0;1/2)$ and almost vertical units of type $\uparrow$}
\end{array}
\end{displaymath}

We consider slopes of the form \eqref{eq8.9.2}.

Corresponding to \eqref{eq7.2.14}, we define the column matrices
\begin{equation}\label{eq8.9.3}
\bfu_1=[\{\uparrow_s:j\in J_1^*,s\in I_j^*\}]
\quad\mbox{and}\quad
\bfu_2=[\{\uparrow_s:j\in J_2^*,s\in I_j^*\}].
\end{equation}
Here $J_1^*$, $J_2^*$ and $I_j^*$ denote that the edges are counted with multiplicity.
Corresponding to \eqref{eq7.2.15}, we also define the column matrices
\begin{align}
\bfv_1
&=[\{\nuparrow_j:j\in J_1^*\}]-[\{\uparrow_j:j\in J_1^*\}],
\label{eq8.9.4}
\\
\bfv_2
&=[\{\nuparrow_j:j\in J_2^*\}]-[\{\uparrow_j:j\in J_2^*\}].
\label{eq8.9.5}
\end{align}
Also, analogous to \eqref{eq7.2.17}, we have
\begin{equation}\label{eq8.9.6}
(\bfA-I)[\{\uparrow_s\}]=\left\{\begin{array}{ll}
\bfu_1+\bfv_1,&\mbox{if $s\in J_1$},\\
\bfu_2+\bfv_2,&\mbox{if $s\in J_2$}.
\end{array}\right.
\end{equation}

We now combine \eqref{eq8.9.3} and \eqref{eq8.9.6}.
For the horizontal street corresponding to~$\bfu_1$, as highlighted in the picture on the left in Figure~8.9.2, we have
\begin{align}\label{eq8.9.7}
(\bfA-I)\bfu_1
&
=(\bfA-I)\left[\left\{\frac{mn}{4}\uparrow_1\right\}\right]
+(\bfA-I)\left[\left\{\frac{mn}{4}\uparrow_2,\frac{mn}{4}\uparrow_4\right\}\right]
\nonumber
\\
&
=\frac{mn}{4}(\bfu_1+\bfv_1)+\frac{mn}{2}(\bfu_2+\bfv_2).
\end{align}
For the horizontal street corresponding to~$\bfu_2$, as highlighted in the picture on the right in Figure~8.9.2, we have
\begin{align}\label{eq8.9.8}
(\bfA-I)\bfu_2
&
=(\bfA-I)\left[\left\{\frac{mn}{4}\uparrow_1\right\}\right]
+(\bfA-I)\left[\left\{\frac{mn}{4}\uparrow_2,\frac{mn}{4}\uparrow_3,\frac{mn}{4}\uparrow_4\right\}\right]
\nonumber
\\
&
=\frac{mn}{4}(\bfu_1+\bfv_1)+\frac{3mn}{4}(\bfu_2+\bfv_2).
\end{align}
\begin{displaymath}
\begin{array}{c}
\includegraphics[scale=0.8]{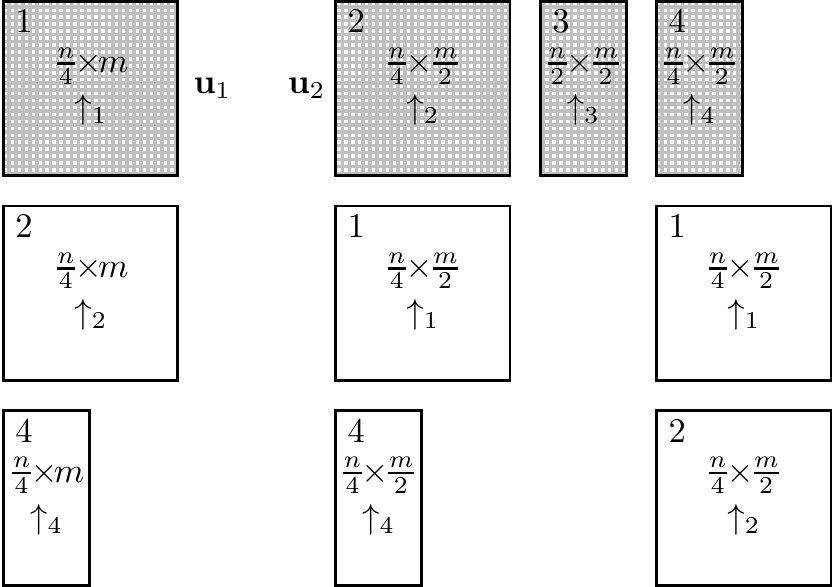}
\vspace{3pt}\\
\mbox{Figure 8.9.2: almost vertical units of type $\uparrow$ in $\bfu_1$ and $\bfu_2$}
\end{array}
\end{displaymath}

It follows from \eqref{eq8.9.7} and \eqref{eq8.9.8} that the street-spreading matrix is given by
\begin{displaymath}
\bfS=\frac{mn}{4}\begin{pmatrix}
1&1\\
2&3
\end{pmatrix},
\end{displaymath}
with eigenvalues
\begin{equation}\label{eq8.9.9}
\tau_1=\frac{(2+\sqrt{3})mn}{4}
\quad\mbox{and}\quad
\tau_2=\frac{(2-\sqrt{3})mn}{4},
\end{equation}
and eigenvector
\begin{displaymath}
\psi_1=\left(\frac{\sqrt{3}-1}{2},1\right)^T
\end{displaymath}
corresponding to~$\tau_1$.
By \eqref{eq7.2.38} and \eqref{eq7.2.39}, the largest eigenvalue of $\bfA\vert_\VVV$ is
\begin{displaymath}
\Lambda=1+\frac{\tau_1+\sqrt{\tau_1^2+4\tau_1}}{2},
\end{displaymath}
with eigenvector
\begin{displaymath}
\Psi=(y,1,xy,x)^T,
\end{displaymath}
where
\begin{equation}\label{eq8.9.10}
y=\frac{\sqrt{3}-1}{2}
\quad\mbox{and}\quad
x=\frac{-\tau_1+\sqrt{\tau_1^2+4\tau_1}}{2}.
\end{equation}
We know that $\Lambda$ is the largest eigenvalue of~$\bfA$.

Suppose that a corresponding eigenvector is given by
\begin{equation}\label{eq8.9.11}
\bfV=(v(1),\ldots,v(8))^T,
\end{equation}
where $v(i)$, $i=1,\ldots,8$, denote respectively the coefficients of $a_i$, $i=1,\ldots,8$, also represented in alternative form respectively by
\begin{displaymath}
\uparrow_1,\nuparrow_1,\uparrow_2,\nuparrow_2,\uparrow_3,\nuparrow_3,\uparrow_4,\nuparrow_4.
\end{displaymath}
Then
\begin{equation}\label{eq8.9.12}
\bfV=y\bfu_1+\bfu_2+xy\bfv_1+x\bfv_2.
\end{equation}
From \eqref{eq8.9.3}--\eqref{eq8.9.5}, we have
\begin{align}
\bfu_1
&=\left[\left\{\frac{mn}{4}\uparrow_1,\frac{mn}{4}\uparrow_2,\frac{mn}{4}\uparrow_4\right\}\right],
\nonumber
\\
\bfu_2
&=\left[\left\{\frac{mn}{4}\uparrow_1,\frac{mn}{4}\uparrow_2,\frac{mn}{4}\uparrow_3,\frac{mn}{4}\uparrow_4\right\}\right],
\nonumber
\\
\bfv_1
&=[\{m\,\nuparrow_1\}]-[\{m\uparrow_1\}],
\nonumber
\\
\bfv_2
&=\left[\left\{\frac{m}{2}\,\nuparrow_2,\frac{m}{2}\,\nuparrow_3,\frac{m}{2}\,\nuparrow_4\right\}\right]
-\left[\left\{\frac{m}{2}\uparrow_2,\frac{m}{2}\uparrow_3,\frac{m}{2}\uparrow_4\right\}\right],
\nonumber
\end{align}
so that
\begin{align}
\bfu_1
&=\left(\frac{mn}{4},0,\frac{mn}{4},0,0,0,\frac{mn}{4},0\right)^T,
\label{eq8.9.13}
\\
\bfu_2
&=\left(\frac{mn}{4},0,\frac{mn}{4},0,\frac{mn}{4},0,\frac{mn}{4},0\right)^T,
\label{eq8.9.14}
\\
\bfv_1
&=(-m,m,0,0,0,0,0,0)^T,
\label{eq8.9.15}
\\
\bfv_2
&=\left(0,0,-\frac{m}{2},\frac{m}{2},-\frac{m}{2},\frac{m}{2},-\frac{m}{2},\frac{m}{2}\right)^T.
\label{eq8.9.16}
\end{align}
Combining \eqref{eq8.9.11}--\eqref{eq8.9.16}, we conclude that
\begin{align}
&
v(1)=\frac{(y+1)mn}{4}-xym,
\quad
v(2)=xym,
\label{eq8.9.17}
\\
&
v(3)=v(7)=\frac{(y+1)mn}{4}-\frac{xm}{2},
\label{eq8.9.18}
\\
&
v(5)=\frac{mn}{4}-\frac{xm}{2},
\quad
v(4)=v(6)=v(8)=\frac{xm}{2}.
\label{eq8.9.19}
\end{align}
It follows from \eqref{eq8.9.17}--\eqref{eq8.9.19} that
\begin{equation}\label{eq8.9.20}
v(1)+v(2)=\frac{(y+1)mn}{4}
\quad\mbox{and}\quad
v(1)+\ldots+v(8)=\frac{(3y+4)mn}{4},
\end{equation}
so that the ratio \eqref{eq8.9.1} is equal to
\begin{equation}\label{eq8.9.21}
\frac{v(1)+v(2)}{v(1)+\ldots+v(8)}=\frac{y+1}{3y+4}=2-\sqrt{3},
\end{equation}
in view of \eqref{eq8.9.10}.

Recall that the ratio \eqref{eq8.9.21} is asymptotically the relative time a $1$-direction line-flow in $S(0;1/2)$ with slope $\alpha$ given by \eqref{eq8.9.2} spends in the top square.
Its value of $2-\sqrt{3}\approx0.268$ is independent of the choice of the integer parameters $m$ and~$n$, and crucially it is less than~$1/3$.
Thus any such $1$-direction line-flow cannot be uniform in $S(0;1/2)$, and the top square is \textit{under-visited}.

Consider next the left half-square of the right square of $S(0;1/2)$, denoted by the label $3$ in Figure~8.9.1.
We wish to study the relative time a $1$-direction line-flow in $S(0;1/2)$ with slope $\alpha$ given by \eqref{eq8.9.2} spends in this half-square.

Note first of all from the edge identification given in the picture on the left in Figure~8.8.3 that every time a $1$-direction line-flow enters this
half-square, it spends a constant time $t_1$ in it before leaving.
On the other hand, it is clear from the picture on the left in Figure~8.8.6 that the only way that such $1$-direction line-flow can enter this half-square is along an almost vertical unit of type~$a_4$.
Thus the total time the line-flow spends in this half-square is equal to~$t_1v(4)$.

To evaluate~$t_1$, we note that from the edge identification given in the picture on the left in Figure~8.8.3 that every time a $1$-direction line-flow enters the top square, it spends a constant time $t_2$ in it before leaving.
Furthermore, $t_2$ is equal to the length of an almost vertical unit.
Simple geometric consideration now shows that $t_1/t_2=\alpha/2$.

Finally, note that the total time of the line-flow is equal to $t_2(v(1)+\ldots+v(8))$, which is the product of the total number of almost vertical units and the length of such a unit.
Thus the relative time a $1$-direction line-flow in $S(0;1/2)$ with slope $\alpha$ given by \eqref{eq8.9.2} spends in the half-square under consideration is equal to
\begin{equation}\label{eq8.9.22}
\frac{t_1v(4)}{t_2(v(1)+\ldots+v(8))}
=\frac{\alpha x}{(3y+4)n}
=(3\sqrt{3}-5)\frac{1+\sqrt{1+\frac{4}{mn}}}{1+\sqrt{1+\frac{16}{(2+\sqrt{3})mn}}},
\end{equation}
in view of \eqref{eq8.9.2}, \eqref{eq8.9.9}, \eqref{eq8.9.10}, \eqref{eq8.9.19} and \eqref{eq8.9.20}.

Taking the limit $m,n\to\infty$ in \eqref{eq8.9.22}, the relative time a $1$-direction line-flow in $S(0;1/2)$ with slope $\alpha$ given by \eqref{eq8.9.2} spends in the half-square under consideration converges to $3\sqrt{3}-5\approx0.196$.
In fact, easy computation shows that the value of \eqref{eq8.9.22} is greater than $0.195$ for every choice $m,n\ge4$ of the parameters.
Crucially, it is greater than $1/6$.
Thus this half-square is \textit{over-visited}.

We summarize our observations in the form of a theorem.

\begin{thm}\label{thm8.9.1}
Let $m,n\ge4$ be arbitrary integers such that both are divisible by~$4$, and let $\alpha=\alpha(m,n)$ be the slope given by \eqref{eq8.9.2}. 
Let $L_\alpha(t)$, $t\ge0$, be any half-infinite line-flow with slope $\alpha$ on the d-c-polysquare surface $S(0;1/2)$. 

\emph{(i)}
The relative visiting time of $L_\alpha$ to the top square of $S(0;1/2)$ is asymptotically equal to $2-\sqrt{3}\approx0.268$, which is
independent of the choice of the integer parameters $m$ and~$n$.
In particular, the top square is under-visited.
Furthermore, in an arbitrary time interval $0\le t\le T$ with $T\ge1$, the actual visiting time of the line-flow to the top square is equal to
\begin{displaymath}
(2-\sqrt{3})T+O(T^{\kappa_0}),
\quad\mbox{where}\quad
\kappa_0=\frac{\log\vert\Lambda_2\vert}{\log\vert\Lambda_1\vert}.
\end{displaymath}
Here $\Lambda_1$ and $\Lambda_2$ are respectively the eigenvalues of the $2$-step transition matrix of the line-flow with the largest and second largest absolute value.

\emph{(ii)}
The relative visiting time of $L_\alpha$ to the left half of the right square of $S(0;1/2)$ is asymptotically equal to \eqref{eq8.9.22} which depends on the choice of the integer parameters $m$ and~$n$.
In particular, this part of $S(0;1/2)$ is over-visited.
The error term of the actual visiting time remains the same $O(T^{\kappa_0})$.
\end{thm}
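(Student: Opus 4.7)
The plan is to treat both parts as immediate consequences of the street-spreading computation already carried out above, together with the extension of Theorem~\ref{thm7.2.2} to the d-c-polysquare setting. The asymptotic ratios \eqref{eq8.9.21} and \eqref{eq8.9.22} are in place; what remains is to (a) justify that Theorem~\ref{thm7.2.2} applies to $S(0;1/2)$, (b) convert the $\bfu_i,\bfv_i$ coordinates of the leading eigenvector into a statement about the relative visiting time of the line-flow $L_\alpha$, and (c) control the error term via the second-largest eigenvalue.

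First, I would verify (a) by checking that the proof of Theorem~\ref{thm7.2.2} carries through verbatim for the d-c-polysquare $S(0;1/2)$. The only step that uses polysquare-ness as opposed to street-rationality is the spreading identity for $(\bfA-I)\bfu_j$, and this depends only on the existence of a well-defined shortline, which is guaranteed for slopes of the form \eqref{eq8.9.2} because the normalized street lengths of $S(0;1/2)$ divide $m$ and $n$. Applying the extended Theorem~\ref{thm7.2.2} to the $2\times2$ street-spreading matrix $\bfS$ gives the eigenvalue $\Lambda$ with the largest absolute value and identifies $\bfV$ in \eqref{eq8.9.11} as an eigenvector of the $2$-step transition matrix $\bfA$, with coordinates given by \eqref{eq8.9.17}--\eqref{eq8.9.19}. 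The second-largest eigenvalue comes either from $\tau_2$ via \eqref{eq7.2.38} or from the unit-eigenvalue block of $\bfA$, whichever has the larger absolute value; in either case its value is explicit.

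Next, for step (b), I would use the asymptotic representation \eqref{eq7.2.11} from Section~\ref{sec7.2}. After $r$ shortline steps, the initial vector $\bfw_r$ is dominated by $c_1\Lambda^r\bfV$, so the number of almost vertical units of type $a_i$ appearing in the initial segment is asymptotically proportional to $v(i)$. For part~(i), each visit to the top square is carried out by a single almost vertical unit of type $a_1$ or $a_2$, each contributing the same crossing time $t_2$ equal to the length of one almost vertical unit; the total time equals $t_2(v(1)+\cdots+v(8))$, yielding the ratio \eqref{eq8.9.21}. For part~(ii), the only units that enter the left half of the right square are those of type $a_4$, contributing the count $v(4)$; here I must be careful because the half-square has horizontal width $1/2$ rather than $1$, so the per-visit time is $t_1=(\alpha/2)t_2$ as established by the elementary geometric computation preceding the theorem, rather than $t_2$. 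Dividing $t_1v(4)$ by $t_2(v(1)+\cdots+v(8))$ and substituting \eqref{eq8.9.10}, \eqref{eq8.9.19}, \eqref{eq8.9.20} yields \eqref{eq8.9.22}; taking $m,n\to\infty$ gives the limit $3\sqrt{3}-5$, and a monotonicity check in $mn$ confirms the value exceeds $1/6$ for all admissible $m,n\ge 4$.

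Finally, for step (c), I would use the Jordan-block decomposition analysis following \eqref{eq7.2.11}: the initial segment of $L_\alpha$ on $[0,T]$ corresponds, up to a bounded additive error, to $r$ shortline steps with $\Lambda^r\asymp T$. The contribution of the leading eigenvalue to the unit counts is the main term $c_1\Lambda^rv(i)$, and every other eigenvector contributes at most $O(|\Lambda_2|^r)=O(T^{\kappa_0})$. Passing from unit counts to visiting time through the constants $t_1,t_2$ preserves this error term, giving the stated bound $(2-\sqrt{3})T+O(T^{\kappa_0})$ in part~(i) and the analogous statement in part~(ii). The main obstacle I anticipate is step~(a): one must verify carefully that the dilation-contraction edge pairings do not spoil the formal shortcut-ancestor identities \eqref{eq7.2.17}--\eqref{eq7.2.18}, because the physical lengths of identified edges differ. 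The resolution is that the shortcut-ancestor process counts \emph{combinatorial} crossings of edges rather than edge lengths, so the identities survive; what is new is only the per-visit time factor $t_1/t_2$ in part~(ii), which is where the dilation ultimately manifests and is responsible for the breakdown of area-preservation.
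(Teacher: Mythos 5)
Your proposal reproduces the paper's own argument: extend Theorem~\ref{thm7.2.2} to the d-c-polysquare surface $S(0;1/2)$, read off the asymptotic unit counts $v(1),\ldots,v(8)$ from the leading eigenvector of the $2\times2$ street-spreading matrix, convert counts to visiting times via $t_2$ for the top square and $t_1=(\alpha/2)t_2$ for the half-square, and bound the error through the second eigenvalue via the Jordan-block estimate around \eqref{eq7.2.11}. Your steps (a) and (c) simply spell out the two points the paper leaves to the reader (that Theorem~\ref{thm7.2.2} survives dilation-contraction identifications, and that the $O(T^{\kappa_0})$ error term follows from $\Lambda^r\asymp T$), but the underlying method is identical.
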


%
%

\end{document}